\begin{document}%%

\title{The Effects of Adding Reachability Predicates in Quantifier-Free Separation Logic}

\author{St{\'e}phane Demri}
%\orcid{}
\affiliation{%
  %% \institution{LSV, CNRS, ENS Paris-Saclay, Universit{\'e} Paris-Saclay}
  \institution{Universit\'e Paris-Saclay, ENS Paris-Saclay, CNRS, LSV, 91190, Gif-sur-Yvette}
  \country{France}
  }
\email{demri@lsv.fr}
\author{Etienne Lozes}
\affiliation{%
  \institution{I3S, Universit{\'e} C{\^o}te d'Azur}
  \country{France}
}
\email{elozes@i3s.unice.fr}
\author{Alessio Mansutti}
\affiliation{%
  %% SD: For Alessio, we should keep the affiliation where the work was done. 
  %% \institution{LSV, CNRS, ENS Paris-Saclay, Universit{\'e} Paris-Saclay}
  \institution{Universit\'e Paris-Saclay, ENS Paris-Saclay, CNRS, LSV, 91190, Gif-sur-Yvette}
  \country{France}
  }
\email{alessio.mansutti@lsv.fr}
\begin{abstract}
The list segment predicate $\ls$ used in separation logic for verifying programs with pointers
is well-suited to express properties on singly-linked lists. 
We study the effects of adding $\ls$ to the full quantifier-free
separation logic with the separating conjunction and implication, which is motivated
by the recent design of new fragments in which all these ingredients are used indifferently and verification tools
start to handle the magic wand connective. 
This is a very natural extension that has not been studied so far. 
We show that the restriction without the separating implication can be solved in polynomial space
by using an appropriate abstraction for memory states whereas the full extension is shown undecidable
by reduction from first-order separation logic. Many variants of  the logic and  fragments are also investigated
from the computational point of view when $\ls$ is added, providing numerous results
about adding reachability predicates to quantifier-free separation logic. 
\end{abstract}
%%
%
% The code below should be generated by the tool at
% http://dl.acm.org/ccs.cfm
% Please copy and paste the code instead of the example below.
%

\begin{CCSXML}
<ccs2012>
<concept>
<concept_id>10003752.10003790.10011742</concept_id>
<concept_desc>Theory of computation~Separation logic</concept_desc>
<concept_significance>500</concept_significance>
</concept>
</ccs2012>
\end{CCSXML}

\ccsdesc[500]{Theory of computation~Separation logic}

\keywords{separation logic, reachability, decision problems, complexity,
%% adjuncts and
quantifier elimination.}

\maketitle
\section{Introduction}%%
Separation logic~\cite{Ishtiaq&OHearn01,OHearn&Reynolds&Yang01,Reynolds02} is a well-known assertion logic
for reasoning about programs with dynamic data structures. Since the implementation of Smallfoot
and the evidence that the method is scalable~\cite{Berdine&Calcagno&OHearn05,Yangetal08},
many tools supporting separation
logic as an assertion language have been
developed~\cite{Berdine&Calcagno&OHearn05,Distefano&OHearn&Yang06,Yangetal08,Calcagno&DiStefano11,Calcagnoetal11,Haaseetal13}.
Even though the first tools could handle relatively limited fragments of separation logic, like symbolic heaps~\cite{Cooketal11}, there is a growing
interest and demand to consider extensions with richer expressive power.
We can point out three particular extensions of symbolic heaps (without list predicates) that have been proved decidable.
\begin{itemize}
\item Symbolic heaps with generalised inductive predicates, adding a fixpoint combinator to the language, is a convenient logic for
  specifying data structures that are more advanced than lists or trees. The entailment problem  is known to be decidable
  by means of tree automata techniques for the bounded tree-width fragment~\cite{Iosif&Rogalewicz&Simacek13,Antonopoulosetal14}, whereas satisfiability
  is \exptime-complete~\cite{Brotherstonetal14}. Other related results can be found in~\cite{Leetal17,Katelaan&Matheja&Zuleger19}.
\item List-free symbolic heaps with all classical Boolean connectives $\wedge$ and $\neg$ (and with the separating conjunction $\separate$),
      called herein $\seplogic{\separate}$,
     %% ,more usually called \emph{propositional} separation logic,
     is a convenient extension when combinations of results of various analysis need to be expressed, or when
  the analysis requires a complementation. This extension already is \pspace-complete~\cite{Calcagno&Yang&OHearn01}.
\item Quantifier-free separation logic with separating implication, a.k.a. magic wand $(\magicwand)$, is a convenient fragment (called herein
      $\seplogic{\separate, \magicwand}$) with decidable
  frame inference and abduction, two problems that play an important role in static analysers and provers built on top of separation logic.
  $\seplogic{\separate, \magicwand}$ can be decided in \pspace thanks to a small model
  property~\cite{Yang01}.
\end{itemize}

A natural question is how to combine these extensions, and which  separation
logic fragment
admitting Boolean connectives, magic wand and generalised recursive predicates can be decided with some adequate restrictions.
As already advocated in~\cite{Brotherston&Villard14,Thakur&Breck&Reps14bis,Schwerhoff&Summers15,Hou&Gore&Tiu15,Muller&Schwerhoff&Summers16},
dealing with
the separating implication $\magicwand$  is a desirable feature for program verification and several semi-automated or automated
verification tools support it in some way, see e.g.~\cite{Thakur&Breck&Reps14bis,Schwerhoff&Summers15,Muller&Schwerhoff&Summers16,Hou&Gore&Tiu15}.

Besides, allowing quantifications is another direction to extend the symbolic heap fragment:
in~\cite{Bozga&Iosif&Perarnau10}, an extension of the symbolic heap fragment with quantification over locations and over
arithmetic variables for list lengths is introduced and several fragments are shown decidable (the whole extension is
undecidable). Such an extension combines shape and arithmetic specifications (see also~\cite{David&Kroening&Lewis15}
for a theory of singly-linked lists with length combining such features) and the decidability results are obtained
by using so-called symbolic shape graphs that are finite representations
of sets of heaps. In the current paper, we consider only shape analysis (since herein, the heaps are restricted to a single record field)
but the separating implication is admitted.

\paragraph{Our contribution.}
In this paper, we address the question of combining the magic wand and inductive predicates in the extremely limited case where
the only inductive predicate is the gentle list segment predicate $\ls$.
The starting point of this work is this puzzling question:
what is the complexity/de\-ci\-da\-bi\-li\-ty status of quantifier-free separation logic $\seplogic{\separate,\magicwand}$ enriched with
the list segment predicate $\ls$ (herein called  $\seplogic{\separate,\magicwand,\ls}$)?
More precisely, we
study the decidability/complexity status of extensions of quantifier-free separation
logic $\seplogic{\separate, \magicwand}$ by adding one of the reachability predicates among $\ls$ (precise predicate as usual in separation logic),
$\reach$ (existence of a path, possibly empty) and $\reachplus$ (existence of a non-empty path).
At this point, it is worth noting  that in the presence of the separating conjunction $\separate$, $\ls$ and $\reach$ are
interdefinable, and $\reachplus$ can easily define   $\ls$  and $\reach$. Consequently, the complexity upper bounds will be stated
with $\reachplus$ and the complexity lower bounds or undecidability results are sharper with $\ls$ or $\reach$.

First, we establish that the satisfiability problem for the
quantifier-free separation logic $\seplogic{\separate, \magicwand, \ls}$
is undecidable. Our proof is by reduction
from the undecidability of first-order separation
logic $\foseplogic{\magicwand}$~\cite{Brochenin&Demri&Lozes12,DemriDeters16}, using an encoding of the
variables as heap cells (see Theorem~\ref{theorem-main-undecidability}).
As a consequence, we also establish that $\seplogic{\separate, \magicwand, \ls}$ is not finitely axiomatisable. Moreover, our reduction requires a rather
limited expressive power of the list segment predicate, and we can strengthen
our undecidability results to some fragments of $\seplogic{\separate, \magicwand, \ls}$. For instance, surprisingly,
the extension of $\seplogic{\separate, \magicwand}$ with the atomic formulae
of the form $\reach(\avariable,\avariablebis) = 2$
and $\reach(\avariable,\avariablebis) = 3$ (existence of a path between $\avariable$ and $\avariablebis$ of respective length 2 or 3)
is already undecidable, whereas the satisfiability problem for $\seplogic{\separate, \magicwand,\reach(\avariable,\avariablebis) = 2}$
is known to be in \pspace~\cite{Demrietal17}.

Second, we show that the satisfiability problem for
$\seplogic{\separate, \reachplus}$  is \pspace-complete, extending
the well-known result on $\seplogic{\separate}$. The \pspace upper bound relies
on a small heap property based on the techniques of test formulae,
see e.g.~\cite{Lozes04,Lozes04bis,Brochenin&Demri&Lozes09,Demrietal17,Mansutti18,Echenim&Iosif&Peltier20},
and the \pspace-hardness of $\seplogic{\separate}$ is inherited
from~\cite{Calcagno&Yang&OHearn01}. The \pspace upper bound can be extended to
the fragment of $\seplogic{\separate, \magicwand, \reachplus}$
made of Boolean combinations of formulae from  $\seplogic{\separate, \reachplus} \cup \seplogic{\separate, \magicwand}$
(see the developments
in Section~\ref{section-decidability-star}).
%%
%% SD 06/07/2018
%% a priori this result will not occur in the long version
%%
%% Even better, we show that the fragment of
%% $\seplogic{\separate, \magicwand, \reachplus}$
%5 in which $\reachplus$ is not in the scope of $\magicwand$ is decidable.
%% As far as we know, this is the largest fragment including full Boolean expressivity, $\magicwand$ and $\ls$
%% for which decidability is established.
As a by-product of our proof technique,
we obtain that the satisfiability problem for Boolean combinations of pure formulae and spatial formulae from the
symbolic heap fragment, $\mathsf{Bool(SHF)}$, is \np-complete via a proof different from the one in~\cite{Piskac&Wies&Zufferey13}.
Figure~\ref{figure-recap} presents a summary of the main results of the paper. An unlabelled arrow between two logics means
that there is a many-one reduction between the satisfiability problem of the first logic and the problem for  the second one (sometimes,
the reduction is the identity in the case of syntactic fragments).

\begin{figure}
\scalebox{0.9}{
\begin{tikzpicture}
\node (undec-sl2) at (0,0)
  {$\seplogic{\separate,\sepimp,\reach(\avariable,\avariablebis) = 2,\reach(\avariable,\avariablebis) = 3}$\\
  \emph{undecidable}\\
  (Corollary~\ref{corollary-undecidability-limited-reach})
  };
  \node (undec-sl) [right = 2cm of undec-sl2]
    {$\seplogic{\separate,\magicwand,\ls}$\\
    \emph{undecidable}\\
    (Theorem~\ref{theorem-main-undecidability})
    };
  \node (slr) [below = of undec-sl2]
    {$\seplogic{\separate,\reachplus}$\\
    \emph{\pspace-complete}\\
    (Theorem~\ref{theorem-pspace})
    };
  \node (bool-shf) [below = of slr]
    {$\mathsf{Bool(SHF)}$\\
    \emph{\np-complete}\\
    (Corollary~\ref{corollary-bool-shf} and \cite{Piskac&Wies&Zufferey13})
    };
  \node (undec-fosl) [left = 2.5cm of undec-sl2]
    {$\foseplogic{\magicwand}$\\
    \emph{undecidable}\\
    \cite{Brochenin&Demri&Lozes12,DemriDeters16}
    };
  \node (psl) [left = of slr]
    {$\seplogic{\separate}$\\
    \emph{\pspace-complete}\\
    \cite{Calcagno&Yang&OHearn01}
    };
  \node (shf) [left = of bool-shf]
    {$\mathsf{SHF}$\\
    \emph{\ptime}\\
    \cite{Cooketal11,Haaseetal13}
    };

  \draw[pto] (undec-fosl) --node[above]{} (undec-sl2);
  \draw[pto] (undec-sl2) --node[above]{} (undec-sl);
  \draw[pto] (psl) --node[above]{} (slr);
  \draw[pto] (shf) --node[above]{} (bool-shf);
  \draw[pto] (bool-shf) --node[right]{} (slr);
  \draw[pto] (slr) --node[below right]{} (undec-sl);
\end{tikzpicture}
}
\caption{Main contributions.}
\label{figure-recap}
\end{figure}

This paper is an extended and completed version of~\cite{Demri&Lozes&Mansutti18}.
\section{Preliminaries}
\label{section-preliminaries}

\subsection{Separation logic with the list segment predicate}

Let $\PVAR = \set{\avariable, \avariablebis, \ldots}$ be a countably infinite set of \defstyle{program variables}
and $\LOC = \set{\alocation_0,\alocation_1, \alocation_2, \ldots}$ be a countable infinite set of \defstyle{locations}.
%% (a.k.a \defstyle{addresses}).
A \defstyle{memory state} is a pair $\pair{\astore}{\aheap}$ such that
$\astore: \PVAR \rightarrow \LOC$ is a variable valuation (known as the \defstyle{store})
and $\aheap: \LOC \to_{\fin} \LOC$ is a partial function with finite domain, known as the \defstyle{heap}.
We write $\domain{\aheap}$ to denote its domain and $\range{\aheap}$ to denote its range.
Given a heap $\aheap$ with $\domain{\aheap} = \set{\alocation_1, \ldots, \alocation_n}$,
we also write $\set{\alocation_1 \mapsto \aheap(\alocation_1), \ldots,\alocation_n \mapsto \aheap(\alocation_n)}$
to denote $\aheap$. Each $\alocation_i \mapsto \aheap(\alocation_i)$ is understood as a \defstyle{memory cell} of $\aheap$.

Let $\aheap_1$ and $\aheap_2$ be two heaps.
$\aheap_1$ and $\aheap_2$ are said to be \defstyle{disjoint}, written $\aheap_1 {\perp} \aheap_2$,
whenever their domains are disjoint, i.e.\ $\domain{\aheap_1} \cap \domain{\aheap_2} = \emptyset$.
When $\aheap_1 {\perp} \aheap_2$, we write $\aheap_1 + \aheap_2$ to denote the heap
corresponding to the disjoint union of the graphs of $\aheap_1$ and $\aheap_2$, hence $\domain{\aheap_1 + \aheap_2} = \domain{\aheap_1} \uplus \domain{\aheap_2}$.
If the domains of $\aheap_1$ and $\aheap_2$  are not disjoint, then the union $\aheap_1 + \aheap_2$ is not defined.
We say that $\aheap_1$ is a \emph{subheap} of $\aheap_2$, written $\aheap_1 \sqsubseteq \aheap_2$, if $\domain{\aheap_1} \subseteq \domain{\aheap_2}$ and for all
$\alocation \in \domain{\aheap_1}$, we have $\aheap_1(\alocation) = \aheap_2(\alocation)$.
For instance, if $\aheap_1 {\perp} \aheap_2$ then $\aheap_1 \sqsubseteq \aheap_1 + \aheap_2$.
Given a heap $\aheap$, we write $\aheap^i$ for the (partial) function obtained from $i$ functional composition(s) of $\aheap$.
By definition, $\aheap^0$ is the identity function on $\LOC$,
$\aheap^1 \egdef \aheap$ and for all $\beta \geq 2$ and $\alocation \in \LOC$, we have
$\aheap^{\beta}(\alocation) \egdef \aheap(\aheap^{\beta-1}(\alocation))$, assuming that $\aheap^{\beta-1}(\alocation)$ is defined and belongs to $\domain{\aheap}$, otherwise
$\aheap^{\beta}(\alocation)$ is undefined.

The formulae $\aformula$ of the separation logic $\seplogic{\separate, \magicwand, \ls}$ and its atomic formulae
$\aatomicformula$ are built from  the grammars below (where $\avariable, \avariablebis \in \PVAR$ and the connectives $\Rightarrow$, $\Leftrightarrow$ and $\vee$ are defined as usually).
$$
\aatomicformula ::=
                    \emptyconstant \ \mid \
                    \avariable = \avariablebis \ \mid \
                    \avariable \hpto \avariablebis \ \mid \
                    \ls(\avariable, \avariablebis)
                    $$
%% \ \mid \  \top$$
$$\aformula ::= \aatomicformula \ \mid \  \neg \aformula \ \mid \ \aformula \wedge \aformula
                 \ \mid \ \aformula \separate \aformula  \ \mid \ \aformula \magicwand \aformula
$$
Models of $\seplogic{\separate, \magicwand, \ls}$ are
memory states and the satisfaction relation $\models$ is defined as follows:%%
\begin{center}
\begin{tabular}[t]{lll}
  $\pair{\astore}{\aheap} \models \emp$ & $\iff$ & $\domain{\aheap} = \emptyset$.\\
  $\pair{\astore}{\aheap} \models \avariable = \avariablebis$ & $\iff$ & $\astore(\avariable) = \astore(\avariablebis)$.\\
  $\pair{\astore}{\aheap} \models \avariable \hpto \avariablebis$ & $\iff$  & $\astore(\avariable) \in \domain{\aheap}$
  and $\aheap(\astore(\avariable)) = \astore(\avariablebis)$.\\
  %%
%% \pair{\astore}{\aheap} &\models \reach^+(x,y) &&\iff && \text{there exists } (\ell_0, \dots, \ell_n)
%% \in \LOC^{n+1} \text{ such that } n \geq 1,\\
%%                                  &&&&& s(x) = \ell_0,\ s(y) = \ell_n\text{ and } \forall i \in [0,n-1]\ h(\ell_i) = \ell_{i+1}\\
%%
  $\pair{\astore}{\aheap} \models \ls(\avariable,\avariablebis)$ & $\iff$ & either
  ($\domain{\aheap} = \emptyset$ and  $\astore(\avariable) = \astore(\avariablebis)$) or  \\
  & &  $\aheap = \set{\alocation_0 \mapsto \alocation_1, \alocation_1 \mapsto \alocation_2, \ldots,
         \alocation_{n-1}\mapsto \alocation_n}$ for some  $n \geq 1$,  \\
& &  $\alocation_0 = \astore(\avariable)$, $\alocation_n = \astore(\avariablebis)$   and  for all  $i \neq j \in \interval{0}{n}$, $\alocation_i \neq \alocation_j$.\\
  $\pair{\astore}{\aheap} \models \lnot \aformula$ & $\iff$ & $\pair{\astore}{\aheap} \not \models \aformula$.\\
  $\pair{\astore}{\aheap} \models \aformula_1 \land \aformula_2$  & $\iff$ & $\pair{\astore}{\aheap} \models \aformula_1$
  and $\pair{\astore}{\aheap}  \models \aformula_2$.\\
  $\pair{\astore}{\aheap} \models \aformula_1 \separate \aformula_2$ & $\iff$ & there are $\aheap_1$ and $\aheap_2$  such that
  $\aheap_1 \bot \aheap_2$, $\aheap_1 + \aheap_2 = \aheap$,  \\
  & &  $\pair{\astore}{\aheap_1}  \models \aformula_1$ and  $\pair{\astore}{\aheap_2}  \models \aformula_2$.\\
  $\pair{\astore}{\aheap} \models \aformula_1 \magicwand \aformula_2$ & $\iff$ & for all $\aheap_1$
  such that  $\aheap_1 \bot \aheap$ and  $\pair{\astore}{\aheap_1} \models \aformula_1$, we have
 $\pair{\astore}{\aheap + \aheap_1} \models \aformula_2$.
\end{tabular}
\end{center}
The semantics for $\separate$ (separating conjunction), $\magicwand$ (separating implication), $\hpto$ (points to), $\ls$ and for all other ingredients is the usual
one in separation logic, where $\ls(\avariable,\avariablebis)$ is the {\em precise list segment predicate} stating that $\aheap^i(\astore(\avariable)) = \astore(\avariablebis)$ for some $i \in \Nat$, but this property does not hold in any strict subheap of $\aheap$
(which excludes the presence of cycles).

In the sequel, we use the following
abbreviations: $\size \geq 0 \egdef \top$  and for all $\beta \geq 0$,
\begin{itemize}
\item $\size \geq \beta +1  \egdef (\size \geq \beta) \separate \neg \emptyconstant$,
\item $\size \leq \beta \egdef \neg (\size \geq \beta +1)$ and,
\item ${\size = \beta} \egdef (\size \leq \beta) \wedge (\size \geq \beta)$.
\end{itemize}
It is easy to see that $\pair{\astore}{\aheap} \models \size \geq \beta$ iff $\card{\domain{\aheap}} \geq \beta$,
where $\card{\aset}$ denotes the cardinality of a (finite) set $\aset$.
We introduce the \defstyle{septraction connective} $\septraction$,
defined as $\aformula_1 \septraction \aformula_2 \egdef \neg
(\aformula_1 \magicwand \neg \aformula_2)$. The connective  $\septraction$ can be viewed as a form of dual operator for the separating implication.
So,  $\pair{\astore}{\aheap} \models \aformula_1 \septraction \aformula_2$
 iff there is some heap $\aheap_1$ disjoint from $\aheap$
such that $\pair{\astore}{\aheap_1} \models \aformula_1$ and
$\pair{\astore}{\aheap + \aheap_1} \models \aformula_2$.
We introduce also the following standard abbreviations:
\begin{center}
  \hfill
  $\alloc{\avariable} \egdef (\avariable \hpto \avariable)\, \magicwand \perp$
  \hfill
  $\avariable \mapsto \avariablebis \egdef (\avariable \hpto \avariablebis) \wedge \size = 1.$
  \hfill\,
\end{center}
It holds that $\pair{\astore}{\aheap} \models \alloc{\avariable}$ iff $\astore(\avariable) \in \domain{\aheap}$, whereas $\pair{\astore}{\aheap} \models \avariable \mapsto \avariablebis$ iff $\domain{\aheap} = \{\astore(\avariable)\}$ and
$\aheap(\astore(\avariable)) = \astore(\avariablebis)$.
Without loss of generality, we assume that $\LOC = \Nat$ (see also Section~\ref{section-generalised-memory-states}).
%% since none of the developments depend on the elements of
%% $\LOC$ as the only predicate involving locations is the equality.
%%
%%
We write $\seplogic{\separate, \magicwand}$ to denote the restriction of $\seplogic{\separate, \magicwand, \ls}$
without $\ls$. Similarly, $\seplogic{\separate}$
denotes the restriction of $\seplogic{\separate, \magicwand}$ without $\magicwand$ and $\seplogic{\magicwand}$ denotes its restriction without $\separate$.
Given two formulae $\aformula, \aformula'$ (possibly from different logical languages), we write
$\aformula \equiv \aformula'$ whenever for all  memory states $\pair{\astore}{\aheap}$, we have
$\pair{\astore}{\aheap} \models \aformula$ iff $\pair{\astore}{\aheap} \models \aformula'$.
When $\aformula \equiv \aformula'$, the formulae $\aformula$ and $\aformula'$ are said to be \defstyle{equivalent}.

\subsection{Variants with other reachability predicates}

We  use two additional reachability predicates $\reach(\avariable,\avariablebis)$ and $\reachplus(\avariable,\avariablebis)$.
We write  $\seplogic{\separate, \magicwand, \reach}$ (resp. $\seplogic{\separate, \magicwand, \reachplus}$)
to denote the variant of $\seplogic{\separate, \magicwand, \ls}$ in which $\ls$ is replaced by $\reach$ (resp. by $\reachplus$).
The satisfaction relation $\models$ is extended as follows:
\begin{itemize}
\item  $\pair{\astore}{\aheap} \models \reach(\avariable,\avariablebis)$ holds when
  there is  $i \geq 0$  such that $\aheap^i(\astore(\avariable)) = \astore(\avariablebis)$,
\item $\pair{\astore}{\aheap} \models \reachplus(\avariable,\avariablebis)$  holds when
  there is  $i \geq 1$  such that $\aheap^i(\astore(\avariable)) = \astore(\avariablebis)$.
\end{itemize}
When the heap $\aheap$ is understood as a directed graph with a finite 
%% and weakly functional 
relation,
$\reach(\avariable,\avariablebis)$ corresponds to the standard reachability predicate (differently from $\ls$, which also imposes constraints on strict subheaps). $\reachplus(\avariable,\avariablebis)$ corresponds instead to the reachability predicate
in at least one step.
For instance, $\reach(\avariable, \avariable)$ always holds, whereas
$\ls(\avariable,\avariable)$ holds only on the empty heap and
$\reachplus(\avariable,\avariable)$ holds on heaps such that there is $i \geq 1$ with $\aheap^i(\astore(\avariable)) =
\astore(\avariable)$, i.e. there is a non-empty path (cycle) from $\astore(\avariable)$ to itself.

As $\ls(\avariable, \avariablebis) \equiv \reach(\avariable, \avariablebis) \wedge \neg (\neg \emptyconstant \separate
\reach(\avariable, \avariablebis))$ and $\reach(\avariable, \avariablebis) \equiv \top \separate \ls(\avariable, \avariablebis)$,
the logics $\seplogic{\separate, \magicwand, \reach}$ and $\seplogic{\separate, \magicwand,
\ls}$ have identical decidability status.
Besides, these two logics can be seen as fragments of $\seplogic{\separate, \magicwand, \reachplus}$, thanks to the equivalence below:
$$\reach(\avariable, \avariablebis) \equiv \avariable = \avariablebis \vee  \reachplus(\avariable, \avariablebis).$$
Notice that this analysis can be carried out as soon as $\separate$, $\neg$, $\wedge$ and $\emptyconstant$
are parts of the logic (none of the equivalences above uses the separating implication~$\magicwand$). 
More specifically, $\seplogic{\separate,  \reach}$ and $\seplogic{\separate, \ls}$ have the same decidability status, and can be viewed as fragments of
$\seplogic{\separate,  \reachplus}$. 
It is therefore stronger
to establish decidability or complexity upper bounds with $\reachplus$
and to show undecidability or complexity lower bounds with $\ls$ or $\reach$.
Herein, we provide the optimal results.

\subsection{Decision problems}

Let $\alogic$ be a logic defined above.
As usual, the \defstyle{satisfiability problem for $\alogic$} takes as an input a formula $\aformula$ from $\alogic$ and
asks whether there is a memory state $\pair{\astore}{\aheap}$ satisfying it, i.e.~$\pair{\astore}{\aheap} \models \aformula$.
The \defstyle{validity problem for $\alogic$} asks instead whether $\aformula$ is satisfied by every memory state.
If $\alogic$ is not closed under negation, then it is also worth considering the \defstyle{entailment problem}
that takes as inputs two formulae~$\aformula$ and $\aformula'$, and asks whether for all the memory states
 $\pair{\astore}{\aheap}$, we have $\pair{\astore}{\aheap} \models \aformula$ implies
$\pair{\astore}{\aheap} \models \aformula'$ (written $\aformula \models \aformula'$).

The \defstyle{model-checking problem for $\alogic$} takes as an input a formula $\aformula$ from $\alogic$ and
a finite representation of a memory state $\pair{\astore}{\aheap}$, 
%(all the program variables in $\aformula$ are interpreted by $\astore$) 
and asks whether $\pair{\astore}{\aheap} \models \aformula$.
Notice that, given a formula $\aformula$, it is easy to find a finite representation of $\pair{\astore}{\aheap}$: it is sufficient to restrict $\astore$ to the variables occurring in $\aformula$ and to encode $\aheap$ as a finite 
%% and weakly functional
directed graph such that each vertex has at most one outgoing vertex.
Unless otherwise specified, the \defstyle{size} of a formula $\aformula$ is understood as the size of its syntax tree.
%, i.e.approximately its number of symbols.  
Moreover, the size of the finite representation of a memory state
$\pair{\astore}{\aheap}$ is defined naturally considering a reasonably succinct encoding of
$\astore$ (only interpreting the program variables in $\aformula$) and the graph representation of $\aheap$.

Below, we recall a few complexity results about well-known strict fragments
of $\seplogic{\separate, \magicwand, \ls}$.
\begin{proposition} \

\begin{enumerate}[label=\normalfont{\textbf{(\Roman*)}}]
\itemsep 0 cm
\item The satisfiability problem is \pspace-complete for both $\seplogic{\separate}$ and $\seplogic{\separate, \magicwand}$~\cite{Calcagno&Yang&OHearn01}.

\item The satisfiability and entailment problems for the symbolic heap fragment
      are in \ptime~\cite{Cooketal11}.

\item  The satisfiability problem for the fragment of\, $\seplogic{\separate,\ls}$ restricted to
formulae
obtained by Boolean combinations of formulae from the symbolic heap fragment is
\np-complete~\cite{Cooketal11,Piskac&Wies&Zufferey13}.

\end{enumerate}
\end{proposition}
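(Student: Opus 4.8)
The three items are independent, and I would prove each by reconstructing the argument behind the cited work; since all bounds are classical, the plan is mostly to isolate the right normal forms and small-model arguments. For item \textbf{(I)}, membership in \pspace for $\seplogic{\separate, \magicwand}$ (hence a fortiori for $\seplogic{\separate}$) I would obtain from a \emph{bounded-universe} property: for a formula $\aformula$ of size $n$, there is a polynomial $p$ such that, writing $X$ for the at most $n$ values $\astore(\avariable)$ of the program variables occurring in $\aformula$ together with $p(n)$ additional fresh locations, satisfaction of every subformula of $\aformula$ on $\pair{\astore}{\aheap}$ is preserved when heaps are restricted to this pool. The point is that $\separate$ only decomposes the current heap, while $\magicwand$, although it universally quantifies over \emph{all} disjoint heaps $\aheap_1$, can be shown to be insensitive to which locations outside the pool are used: only the number of added cells matters, up to the threshold $p(n)$. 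Granting this, satisfiability is decided by guessing a heap over the bounded pool (in \pspace, since such a heap has polynomial size) and evaluating $\models$ by structural recursion, where the $\separate$ case ranges existentially over the polynomially many sub-heaps and the $\magicwand$ case ranges universally over the polynomially many admissible extensions; the recursion reuses space, so the whole procedure runs in polynomial space.

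For the matching \pspace lower bound I would reduce from \textsc{qbf}: a prefix $Q_1 p_1 \cdots Q_k p_k\,\psi$ is simulated by using $\separate$ to realise the existential choices and $\magicwand$ (equivalently, the septraction $\septraction$) to realise the universal ones, each propositional variable $p_i$ being encoded by the presence or absence of a designated cell. This is the step I expect to be the main obstacle, precisely because controlling the magic wand --- an unbounded, indeed infinite, universal quantification over disjoint heaps --- is delicate on both sides: the upper bound needs the bounded-universe lemma to cut the quantification down to a polynomial pool, and the lower bound needs the encoding to force exactly the intended quantifier alternation.

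Item \textbf{(II)} concerns symbolic heaps $\Pi \wedge \Sigma$, where $\Pi$ is a conjunction of equalities and disequalities and $\Sigma$ is a $\separate$-conjunction of spatial atoms $\avariable \hpto \avariablebis$, $\ls(\avariable,\avariablebis)$ and $\emp$. I would first normalise $\Pi$ by computing the equivalence classes induced by its equalities and checking that no disequality is violated, which is polynomial by union--find. For the spatial part I would run a saturation/consistency check: since every allocated cell is named by some atom and $\separate$ forbids sharing of allocated locations, $\Sigma$ is satisfiable iff no two atoms are forced to allocate the same location, which is again a polynomial unification check over the classes of $\Pi$. Entailment $\Pi \wedge \Sigma \models \Pi' \wedge \Sigma'$ is then decided by a terminating set of matching/``subtraction'' rewriting rules that consume the atoms of $\Sigma'$ against those of $\Sigma$, each rule application being polynomial and the number of applications bounded by the size of $\Sigma'$, yielding the \ptime bound.

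Finally, for item \textbf{(III)} the \np upper bound I would get by nondeterministically guessing, for each maximal symbolic-heap subformula occurring in the Boolean combination, whether it holds; this reduces the problem to deciding satisfiability of a conjunction of symbolic heaps and negated symbolic heaps. Because in a symbolic heap the spatial part describes the \emph{entire} heap, all the positively guessed conjuncts must describe one and the same heap, so it suffices to fix one of them, say $\Pi \wedge \Sigma$, verify its mutual consistency with the remaining positive conjuncts, and verify for each negative conjunct $\neg(\Pi' \wedge \Sigma')$ that $\Pi \wedge \Sigma \not\models \Pi' \wedge \Sigma'$ --- all using the polynomial procedures from item \textbf{(II)}. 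The guess is of polynomial size and the verification is polynomial, so the problem is in \np. For \np-hardness I would encode an \np-complete problem already in the pure, $\wedge/\neg$ layer over equalities $\avariable = \avariablebis$ (expressing, e.g., a satisfiability or colouring constraint), so that hardness does not even require the spatial atoms.
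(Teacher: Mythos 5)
First, note that the paper itself offers no proof of this proposition: it is a recollection of results from the cited works, so your reconstruction can only be judged on its own merits. Items \textbf{(I)} and \textbf{(II)} are broadly in the spirit of the classical arguments (the paper later recalls the relevant normal form for $\seplogic{\separate,\magicwand}$ as Proposition~\ref{proposition-prop-sl}), but there is a genuine flaw in your item \textbf{(III)}. After guessing which symbolic-heap subformulae are true, you must decide whether a \emph{single} memory state satisfies $\Pi \wedge \Sigma$, the other positive conjuncts, and the negation of \emph{every} negative conjunct simultaneously. Checking, for each negative conjunct $\neg(\Pi' \wedge \Sigma')$ separately, that $\Pi \wedge \Sigma \not\models \Pi' \wedge \Sigma'$ only produces one counter-model per negative conjunct; these counter-models need not coincide. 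A symbolic heap with $\ls$ atoms has many non-isomorphic models (list segments of varying length), so it is entirely possible that $\Pi\wedge\Sigma$ entails neither $\aformula'_1$ nor $\aformula'_2$ and yet every model of $\Pi\wedge\Sigma$ satisfies $\aformula'_1 \vee \aformula'_2$, in which case your procedure wrongly answers ``satisfiable''. The standard repair --- and the route the paper itself takes for its own $\mathsf{Bool(SHF)}$ result (Corollary~\ref{corollary-bool-shf}) --- is a polynomial small-model property: a satisfiable Boolean combination has a model with polynomially many cells, so one guesses that model outright and model-checks each conjunct in polynomial time.

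Two smaller points. In item \textbf{(I)}, the \pspace lower bound must be established for $\seplogic{\separate}$ alone (the weaker logic), but your \textsc{qbf} encoding uses $\magicwand$ (or $\septraction$) for the universal quantifiers, which would only give hardness for $\seplogic{\separate,\magicwand}$; the classical reduction handles both quantifier polarities within $\seplogic{\separate}$ using negation and separating conjunction over a pre-allocated pool of cells. In item \textbf{(II)}, the \ptime bound for entailment with $\ls$ is the genuinely hard part of the cited result: a naive subtraction calculus must deal with list segments of unknown (possibly zero) length, and bounding the search polynomially is the actual content of the reference, which your sketch assumes rather than supplies.
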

We refer the reader to~\cite{Cooketal11} for a complete description of the symbolic heap fragment, or to
Section~\ref{section:complexity-upper-bounds} for its definition.
The main purpose of this paper is to study the decidability/complexity status of $\seplogic{\separate, \magicwand, \ls}$,
as well as for its
fragments and variants.%%
\section{Undecidability of the Satisfiability Problem for $\seplogic{\separate, \magicwand, \ls}$}%%
\label{section-undecidability}

In this section, we show that $\seplogic{\separate, \magicwand,\ls}$
has an undecidable satisfiability problem
even though it does not admit \emph{explicitly} first-order quantification.
We stress the word ``explicitly'' as we show that $\seplogic{\separate, \magicwand,\ls}$ can simulate
the first-order quantification from $\foseplogic{\magicwand}$: the first-order extension of $\seplogic{\magicwand}$, studied in~\cite{Brochenin&Demri&Lozes12,DemriDeters16}.

In the versions of $\foseplogic{\magicwand}$  defined in~\cite{Brochenin&Demri&Lozes12,DemriDeters16},
one distinguishes program variables from quantified variables (used with the first-order quantification $\forall$).
This distinction made by the two sets of variables is not necessary herein and for the sake of simplicity, we adopt
a version of   $\foseplogic{\magicwand}$ with a unique type of variables.
The formulae $\aformula$ of $\foseplogic{\magicwand}$ are built from the grammars below:
$$
\aatomicformula ::= \avariable = \avariablebis \ \mid \
                    \avariable \hpto \avariablebis$$
$$\aformula ::= \aatomicformula \ \mid \  \neg \aformula \ \mid \ \aformula \land \aformula
               \ \mid \ \aformula \magicwand \aformula
               \ \mid \ \forall \avariable \ \aformula,$$
where $\avariable, \avariablebis \in \PVAR$.
Models of the logic $\foseplogic{\magicwand}$ are
memory states and the satisfaction relation $\models$ is defined as for $\seplogic{\magicwand}$
with the additional clause:
\[
\pair{\astore}{\aheap} \models \forall \avariable \ \aformula \iff
  \text{ for all } \alocation \in \LOC, \text{ we have } \pair{\astore[\avariable \leftarrow \alocation]}{\aheap} \models \aformula,
\]
where $\astore[\avariable \leftarrow \alocation]$ is defined from the store $\astore$ by only changing
that $\avariable$ takes the value $\alocation$.
Note that $\emptyconstant$ can be easily defined by $\neg \exists \ \avariable \ (\avariable \hpto \avariable \magicwand \perp)$.
%% Note that $\emptyconstant$ can be easily defined by $\forall \avariable\ \forall \avariablebis \ \neg (\avariable \hpto \avariablebis)$.
Without any loss of generality, we can assume that the satisfiability (resp. validity) problem for
$\foseplogic{\magicwand}$ is defined by taking as inputs closed formulae (i.e. without free occurrences of the variables).
\begin{proposition}~\cite{Brochenin&Demri&Lozes12,DemriDeters16}
The satisfiability problem for $\foseplogic{\magicwand}$ is undecidable
and the set of valid formulae for $\foseplogic{\magicwand}$ is not recursively enumerable.
\end{proposition}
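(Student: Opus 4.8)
The plan is to obtain undecidability by reducing a recursively enumerable-complete problem, say the halting problem for Minsky (two-counter) machines, to the satisfiability problem for $\foseplogic{\magicwand}$. Once a computable reduction $M \mapsto \aformula_M$ is in place, with $\aformula_M$ satisfiable iff $M$ halts, the first claim is immediate. The second claim then follows by a recursion-theoretic accounting. A satisfiable formula has a model whose heap is finite and which, up to the finitely many relevant locations, is determined by a finite amount of data; moreover model-checking over such a finite memory state is decidable by a routine locality argument that bounds the range of $\forall$ and of $\magicwand$ to the relevant locations plus one generic fresh element. Hence satisfiability is itself recursively enumerable, and being also r.e.-hard it is r.e.-complete, so its complement is not r.e.; since $\foseplogic{\magicwand}$ is closed under $\neg$ and $\aformula$ is valid iff $\neg\aformula$ is unsatisfiable, the set of valid formulae is not recursively enumerable. (Equivalently, one may reduce the finite satisfiability problem of classical first-order logic over a single binary relation and invoke Trakhtenbrot's theorem for the non-axiomatisability part.)

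The conceptual engine of the reduction is that the separating implication, together with first-order quantification over locations, already provides second-order expressive power over the functional heap: $\aformula \magicwand \aformula'$ quantifies universally over all heaps disjoint from the current one, while its dual septraction $\aformula \septraction \aformula'$ quantifies existentially, and since heaps are finite partial functions these are quantifiers over finite functional sub-heaps. Taking the definable predicates $\alloc{\avariable}$, $\emptyconstant$, $\size = 1$ and the atoms $\avariable = \avariablebis$, $\avariable \hpto \avariablebis$ as a basis, one can use septraction to witness a fresh single cell and the wand to assert a property of every extension. The idea is to represent a halting run of $M$ as a single memory state that lays out the whole sequence of configurations in the heap, the two counter values being recorded by the numbers of cells of designated shapes; then $\aformula_M$ asserts (i) well-formedness of the encoding, (ii) that the first configuration is initial, (iii) that consecutive configurations are linked by a legal instruction, and (iv) that some configuration is halting. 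Increment and decrement are captured by relating the numbers of unit cells of adjacent configurations --- using septraction to witness one extra cell and first-order quantification to match up the unchanged parts --- and a zero-test becomes the non-existence of a unit cell of the relevant shape.

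The main obstacle is that $\foseplogic{\magicwand}$ offers the magic wand but \emph{not} the separating conjunction, so the usual compositional reasoning --- splitting the heap as $\aheap_1 + \aheap_2$ and constraining the two parts independently --- is not directly available, and every piece of spatial bookkeeping must be simulated with $\magicwand$ and $\septraction$ alone. This is typically done by introducing quantified marker locations that pin down the sub-heap one wishes to isolate, and by phrasing heap decomposition as the existence of an extension restoring a previously isolated fragment. A second, orthogonal difficulty is that the heap is functional (out-degree at most one), so the sequencing of configurations and the comparison of adjacent counter values cannot be stored as raw edges and must instead be routed through auxiliary cells and short gadget chains whose correctness is enforced by first-order constraints. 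Getting these gadgets to interact correctly while keeping $\aformula_M$ of finite size and its truth equivalent to the halting of $M$ is where essentially all the technical work lies; the accounting above and the translation of the individual instruction types are routine once the encoding is fixed.
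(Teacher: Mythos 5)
The paper does not actually prove this proposition: it is imported from \cite{Brochenin&Demri&Lozes12,DemriDeters16} and stated without proof, so your proposal has to stand on its own merits. Its architecture is sensible, and your parenthetical remark (reduce finite satisfiability of first-order logic over a binary relation and invoke Trakhtenbrot) is close to how the cited works actually obtain the non-axiomatisability part. Still, there are two genuine problems. The first is that the reduction itself --- the only substantive content of the proof --- is entirely deferred. You correctly identify the two crux points: $\foseplogic{\magicwand}$ has no $\separate$, so every spatial decomposition must be simulated with $\magicwand$ and $\septraction$; and relating the counter values of adjacent configurations is a cardinality comparison between two families of cells, which is exactly where the second-order power of the wand has to be exploited (e.g.\ by asserting the existence of an added heap encoding a bijection). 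But no gadget is exhibited for either point, and declaring that ``the translation of the individual instruction types is routine once the encoding is fixed'' begs the question, because fixing the encoding is the theorem. As it stands this is a plan, not a proof.

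The second problem is a concretely wrong step in the recursion-theoretic accounting, although a dispensable one. You justify that satisfiability is recursively enumerable by claiming that model checking a finite memory state is decidable ``by a routine locality argument that bounds the range of $\forall$ and of $\magicwand$ to the relevant locations plus one generic fresh element''. For $\forall$ this kind of argument works; for $\magicwand$ it does not: $\aformula_1 \magicwand \aformula_2$ quantifies over all disjoint heaps, which may be arbitrarily large and involve arbitrarily many fresh locations (a left-hand side forcing $k$ pairwise distinct allocated locations already rules out a single fresh witness, and nested wands make this worse). The bound one needs depends on the formula, and establishing it is essentially a small-model or test-formula theorem --- in the first-order setting, a substantial result, not a routine locality argument. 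Fortunately you do not need the r.e.\ upper bound at all: once $M \mapsto \aformula_M$ is a computable reduction with $\aformula_M$ satisfiable iff $M$ halts, the map $M \mapsto \neg \aformula_M$ many-one reduces the complement of the halting problem, which is not recursively enumerable, to validity; hence validity is not recursively enumerable. I would delete the model-checking claim and conclude this way, which uses only the hardness direction you actually (intend to) prove.
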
%%
We recall that the undecidability proof of $\foseplogic{\magicwand}$ makes extensive use of the  operator $\magicwand$, whereas a similar result can be achieved without $\magicwand$ if we interpret the logic on heaps having at least two record fields
(i.e. $\aheap$ is of the form $\LOC \to_{\fin} \LOC^k$ with $k \geq 2$)~\cite{Calcagno&Yang&OHearn01}.
In a nutshell, we establish the undecidability of $\seplogic{\separate, \magicwand, \ls}$ by a reduction from
the satisfiability problem for  $\foseplogic{\magicwand}$. The reduction is nicely decomposed in two intermediate steps:
(1) the undecidability of $\seplogic{\separate, \magicwand}$ extended with a few atomic predicates, to be
defined soon, and (2) a \emph{tour de force} resulting in the encoding of these atomic predicates in $\seplogic{\separate, \magicwand, \ls}$.
In particular, Section~\ref{section-encoding} explains how the additional predicates can be used to encode stores as subheaps, and how this helps to mimic first-order quantification.
Section~\ref{section-the-translation} provides the formal presentation of the translation as well as its correctness.
Section~\ref{section-expressing-predicates} establishes how the additional predicates can be indeed expressed in
$\seplogic{\separate, \magicwand, \ls}$.
Finally, Section~\ref{section-undecidability-results-and-nfax} provides the concluding results
about undecidability and refines the results of the previous sections.
%%
%% collects and concludes the results of the previous sections.
%%

\subsection{Generalised memory states}
\label{section-generalised-memory-states}

For technical convenience, in order to reduce (the satisfiability problem of) 
$\foseplogic{\magicwand}$ to $\seplogic{\separate,\magicwand,\ls}$
we consider a slight alternative
for the semantics of these two logics,
which does not modify the notion of satisfiability/validity and such that
the set of formulae and
the definition of the satisfaction relation
$\models$ remain unchanged.
So far, the memory states are pairs of the form $\pair{\astore}{\aheap}$ with
${\astore: \PVAR \rightarrow \LOC}$  and $\aheap: \LOC \to_{\fin} \LOC$  for a {\em fixed} countably infinite set of locations $\LOC$, e.g.~$\Nat$.
Alternatively, the models for  $\foseplogic{\magicwand}$ and $\seplogic{\separate,\magicwand,\ls}$ can be defined
as triples $\triple{\LOC_1}{\astore_1}{\aheap_1}$ such that $\LOC_1$ is a countably infinite set,
$\astore_1: \PVAR \rightarrow \LOC_1$  
and ${\aheap_1: \LOC_1 \to_{\fin} \LOC_1}$. 
Most of the time, a generalised memory state $\triple{\LOC_1}{\astore_1}{\aheap_1}$ shall be written
$\pair{\astore_1}{\aheap_1}$ when no confusion is possible.
Given a bijection ${\amap: \LOC_1 \rightarrow \LOC_2}$ and a heap $\aheap_1: \LOC_1 \to_{\fin} \LOC_1$ 
that can be represented by 
${\set{\alocation_1 \mapsto \aheap_1(\alocation_1), \ldots,\alocation_n \mapsto \aheap_1(\alocation_n)}}$,
we write $\amap(\aheap_1)$ to denote the heap  $\aheap_2: \LOC_2 \to_{\fin} \LOC_2$ with
$\aheap_2 = {\set{\amap(\alocation_1) \mapsto \amap(\aheap_1(\alocation_1)), \ldots, \amap(\alocation_n) \mapsto \amap(\aheap_1(\alocation_n))}}$.

\begin{definition}\label{def:partial_memory_state_iso}
Let $\triple{\LOC_1}{\astore_1}{\aheap_1}$ and  $\triple{\LOC_2}{\astore_2}{\aheap_2}$ be generalised
memory sta\-tes and $\avarset \subseteq \PVAR$. An \defstyle{$\aset$-isomorphism}
%%  with respect to $\avarset$
from  $\triple{\LOC_1}{\astore_1}{\aheap_1}$ to  $\triple{\LOC_2}{\astore_2}{\aheap_2}$ is a bijection
$\amap: \LOC_1 \rightarrow \LOC_2$ such that
$\aheap_2 = \amap(\aheap_1)$ and for all $\avariable \in \avarset$,  $\amap(\astore_1(\avariable)) = \astore_2(\avariable)$.
%%
%% SD 22/01/20 -- it seems to me redundant in view of what follows.
%% 
%% (we write $\triple{\LOC_1}{\astore_1}{\aheap_1} \approx_{\avarset}  \triple{\LOC_2}{\astore_2}{\aheap_2}$).
%% 
%%
%% SD 12/10/2017
%% the conditions below hold:
%% \begin{itemize}
%% \itemsep 0 cm
%% \item for all $\avariable \in \aset$,  $\amap(\astore_1(\avariable)) = \astore_2(\avariable)$,
%% \item $\aheap_2 = \amap(\aheap_1)$.
%% %% \item for all $\alocation \in \LOC_1$,  $\alocation \in \domain{\aheap_1}$ iff $\amap(\alocation) \in \domain{\aheap_2}$
%% %% and if $\alocation \in \domain{\aheap_1}$, then $\amap(\aheap_1(\alocation)) = \aheap_2(\amap(\alocation))$,
%% %% \item for all $\alocation \in \LOC_2$, $\alocation \in \domain{\aheap_2}$ iff $\amap^{-1}(\alocation) \in \domain{\aheap_1}$
%% %% and if $\alocation \in \domain{\aheap_2}$, then $\amap^{-1}(\aheap_2(\alocation)) = \aheap_1(\amap^{-1}(\alocation))$.
%% \end{itemize}
\end{definition}%%

Note that if $\amap$ is an $\aset$-isomorphism from $\triple{\LOC_1}{\astore_1}{\aheap_1}$ to
$\triple{\LOC_2}{\astore_2}{\aheap_2}$, then $\amap^{-1}$ is also an $\aset$-isomorphism from
$\triple{\LOC_2}{\astore_2}{\aheap_2}$ to  $\triple{\LOC_1}{\astore_1}{\aheap_1}$.
Two generalised memory states $\triple{\LOC_1}{\astore_1}{\aheap_1}$ and  $\triple{\LOC_2}{\astore_2}{\aheap_2}$
are said to be \defstyle{isomorphic with respect to $\avarset$}, written $\triple{\LOC_1}{\astore_1}{\aheap_1} \approx_{\avarset}
\triple{\LOC_2}{\astore_2}{\aheap_2}$, if and only if there exists an $\aset$-isomorphism
%% with respect to $\avarset$
between them.

It is easy to check that $\approx_{\avarset}$ is an equivalence relation.
A folklore result states that isomorphic memory states satisfy the same formulae,
which implies that considering generalised memory states over (standard) memory states
does not
change the notion of satisfiability and validity.
Below, we state the precise result
we need in the sequel, the proof being by a standard  induction on the formula structure.

\begin{lemma}
\label{lemma-equivalence-isomorphism}
Let $\triple{\LOC_1}{\astore_1}{\aheap_1}$, $\triple{\LOC_2}{\astore_2}{\aheap_2}$ be generalised memory states
and $\avarset \subseteq \PVAR$ be a finite set of variables such that
$\triple{\LOC_1}{\astore_1}{\aheap_1} \approx_{\avarset}  \triple{\LOC_2}{\astore_2}{\aheap_2}$. Given $\aformula$ in $\seplogic{\separate,\magicwand,\ls}$ or $\foseplogic{\magicwand}$, with free variables from $\avarset$, 
$\triple{\LOC_1}{\astore_1}{\aheap_1} \models \aformula$ iff $\triple{\LOC_2}{\astore_2}{\aheap_2} \models \aformula$.
\end{lemma}

\begin{proof} The proof is by induction on the tree structure of $\aformula$.
Let $\avarset$ be a set of variables that includes the free variables from $\aformula$.
To be more concise, this is done on formulae from $\seplogic{\forall,\separate,\magicwand,\ls}$.
Let $\amap: \LOC_1 \to \LOC_2$ be a bijection defined as in Definition~\ref{def:partial_memory_state_iso}.
Since $\approx_{\aset}$ is an equivalence relation, 
showing one direction suffices to prove the result.
We start by considering the base case with $\ls(\avariable,\avariablebis)$, 
the cases for  $\avariable = \avariablebis$ and $\avariable \hpto \avariablebis$ being omitted as this poses no
difficulty. 
%%
%% We start by proving the base cases with the atomic formulae $\avariable = \avariablebis$, $\avariable \hpto \avariablebis$ and
%% $\ls(\avariable,\avariablebis)$.
\begin{description}
\itemsep 0 cm
%% SD 19/05/2019
%% Ref 1. suggested to remove the three base cases (because the proof is obvious).
%% Below, we still keep the one with ls.
%% 
% \cut{
%    \item[base case: $\avariable = \avariablebis$.] If $\triple{\LOC_1}{\astore_1}{\aheap_1} \models \avariable = \avariablebis$, then $\astore_1(\avariable) = \astore_1(\avariablebis)$.
%   It holds that $$\amap^{-1}(\astore_2(\avariable)) = \astore_1(\avariable) = \astore_1(\avariablebis) = \amap^{-1}(\astore_2(\avariablebis)).$$
% Therefore $\astore_2(\avariable) = \astore_2(\avariablebis)$  and equivalently, $\triple{\LOC_2}{\astore_2}{\aheap_2} \models \avariable = \avariablebis$.
%   \item[base case: $\avariable \hpto \avariablebis$.] If $\triple{\LOC_1}{\astore_1}{\aheap_1} \models \avariable \hpto \avariablebis$ then $\aheap_1(\astore_1(\avariable)) = \astore_1(\avariablebis)$.
%   It holds that $$\amap^{-1}(\aheap_2(\astore_2(\avariable))) = \aheap_1(\amap^{-1}(\astore_2(\avariable))) = \aheap_1(\astore_1(\avariable)) = \astore_1(\avariablebis) =
%  \amap^{-1}(\astore_2(\avariablebis)).$$
%   Therefore $\aheap_2(\astore_2(\avariable)) = \astore_2(\avariablebis)$  and equivalently, $\triple{\LOC_2}{\astore_2}{\aheap_2} \models \avariable \hpto \avariablebis$.
% }
  \item[base case: $\ls(\avariable,\avariablebis)$.] If $\triple{\LOC_1}{\astore_1}{\aheap_1} \models \ls(\avariable,\avariablebis)$, then either
  ($\domain{\aheap_1} = \emptyset$ and  $\astore_1(\avariable) = \astore_1(\avariablebis)$)
  or $\aheap_1 = \set{\alocation_0 \mapsto \alocation_1, \alocation_1 \mapsto \alocation_2, \ldots, \alocation_{n-1}\mapsto \alocation_n}$
  with  $n \geq 1$, $\alocation_0 = \astore_1(\avariable)$, $\alocation_n = \astore_1(\avariablebis)$ and for all $i \neq j \in \interval{0}{n}$, $\alocation_i \neq \alocation_j$.
  Notice that since for all $\avariable \in \avarset$, $\amap(\astore_1(\avariable)) = \astore_2(\avariable)$ (Definition~\ref{def:partial_memory_state_iso}), we have $\astore_1(\avariable) = \astore_1(\avariablebis)$ if and only if $\astore_2(\avariable) = \astore_2(\avariablebis)$.
  Moreover, again from the definition of $\avarset$-isomorphism, 
  $\domain{\aheap_1}$ and $\domain{\aheap_2}$ have the same cardinality.
  Thus, if $\domain{\aheap_1} = \emptyset$ and $\astore_1(\avariable) = \astore_1(\avariablebis)$, we conclude that $\domain{\aheap_2} = \emptyset$ and $\astore_2(\avariable) = \astore_2(\avariablebis)$.
  Otherwise, let us consider the case where $\aheap_1 = \set{\alocation_0 \mapsto \alocation_1, \alocation_1 \mapsto \alocation_2, \ldots, \alocation_{n-1}\mapsto \alocation_n}$
  with  $n \geq 1$, $\alocation_0 = \astore_1(\avariable)$, $\alocation_n = \astore_1(\avariablebis)$ and for all $i \neq j \in \interval{0}{n}$ $\alocation_i \neq \alocation_j$.
  Notice that $\card{\domain{\aheap_1}} = n$.
  By definition of $\avarset$-isomorphism, ${\astore_2(\avariable) = \amap(\astore_1(\avariable)) = \amap(\alocation_0)}$, $\astore_2(\avariablebis) = \amap(\astore_1(\avariablebis)) = \amap(\alocation_n)$ and 
  given~$i \in \interval{0}{n-1}$, 
  ${\aheap_2(\amap(\alocation_i)) = \amap(\aheap_1(\alocation_i)) = \amap(\alocation_{i+1})}$.
So, $\set{\astore_2(\avariable) \mapsto \amap(\alocation_1), \amap(\alocation_1) \mapsto \amap(\alocation_2), \ldots, \amap(\alocation_{n-1})\mapsto \astore_2(\avariablebis)} \subseteq \aheap_2$.
Moreover, from the fact that for all $i \neq j \in \interval{0}{n}$, $\alocation_i \neq \alocation_j$, we conclude that for all ${i \neq j \in \interval{0}{n}}$, $\amap(\alocation_i) \neq \amap(\alocation_j)$.
From $\card{\domain{\aheap_1}} = n$ we derive $\card{\domain{\aheap_2}} = n$,
and therefore 
$\aheap_2 = \set{\astore_2(\avariable) \mapsto \amap(\alocation_1), \amap(\alocation_1) \mapsto \amap(\alocation_2), \ldots, \amap(\alocation_{n-1})\mapsto \astore_2(\avariablebis)}$, where $\astore_2(\avariable),\amap(\alocation_1),\dots,\amap(\alocation_{n-1}),\astore_2(\avariablebis)$ are $n+1$ distinct locations.
We conclude that $\triple{\LOC_2}{\astore_2}{\aheap_2}$ satisfies $\ls(\avariable,\avariablebis)$.
\end{description}
Concerning the cases for the induction step, we omit the obvious cases
when the outermost connective is the conjunction or the negation. 
\cut{
For negation and conjunction, the proof simply applies the induction hypothesis:
\begin{description}
  \setcounter{enumi}{3}
  \item[induction step: $\land$.] If $\triple{\LOC_1}{\astore_1}{\aheap_1} \models \aformula_1 \land \aformula_2$,
        then $\triple{\LOC_1}{\astore_1}{\aheap_1}$ satisfies $\aformula_1$ and $\aformula_2$.
        By the induction hypothesis, $\triple{\LOC_2}{\astore_2}{\aheap_2}$ also satisfies $\aformula_1$ and $\aformula_2$.
        Equivalently, $\triple{\LOC_2}{\astore_2}{\aheap_2} \models \aformula_1 \land \aformula_2$.
  \item[induction step: $\lnot$.] If $\triple{\LOC_1}{\astore_1}{\aheap_1} \models \lnot \aformula_1$, then $\triple{\LOC_1}{\astore_1}{\aheap_1}$ do not satisfy $\aformula_1$.
        By the induction hypothesis, $\aformula_1$ cannot therefore be satisfied by $\triple{\LOC_2}{\astore_2}{\aheap_2}$.
\end{description}
}
We conclude the proof by considering
formulae of the form $\aformula_1 \sepcnj \aformula_2$, $\aformula_1 \magicwand \aformula_2$ and $\forall \avariable \ \aformula_1$.
%%
%%  SD 09/07/2018
%% this is defined properly much earlier.
%%
%% In the following, as earlier, we write $\amap(\aheap)$ to denote the heap
%%  $\set{\amap(\alocation) \pto \amap(\alocation') \ \mid \ \alocation \pto \alocation' \in \aheap}$.
%%
\begin{description}
\itemsep 0 cm
  \setcounter{enumi}{5}
  \item[induction step: case with $\sepcnj$.] If $\triple{\LOC_1}{\astore_1}{\aheap_1} \models \aformula_1 \sepcnj \aformula_2$, then there are
          two heaps $\aheap_1'$ and $\aheap_1''$ such that $\aheap_1 = \aheap_1' + \aheap_1''$,
          $\triple{\LOC_1}{\astore_1}{\aheap_1'} \models \aformula_1$ and $\triple{\LOC_1}{\astore_1}{\aheap_1''} \models \aformula_2$.
          Let $\aheap_2' = \amap(\aheap_1')$ and $\aheap_2'' = \amap(\aheap_1'')$ the images of $\aheap_1'$ and $\aheap_1''$ via $\amap$.
          Since $\amap$ is an $\aset$-isomorphism between $\aheap_1$ and $\aheap_2$ it holds that:
  \[
    \aheap_2 = \amap(\aheap_1) = \amap(\aheap_1' + \aheap_1'') = \amap(\aheap_1') + \amap(\aheap_1'') = \aheap_2' + \aheap_2''
  \]
  and moreover $\aheap_2' \bot \aheap_2''$.
  Lastly, $\triple{\LOC_1}{\astore_1}{\aheap_1'} \approx_\aset \triple{\LOC_2}{\astore_2}{\aheap_2'}$, since
  $\amap$ is an $\aset$-isomorphism also between these structures. The same holds for $\aheap_1''$ and $\aheap_2''$.
  Therefore, by the induction hypothesis, we get
  $\triple{\LOC_2}{\astore_2}{\aheap_2'} \models \aformula_1$ and $\triple{\LOC_2}{\astore_2}{\aheap_2''} \models \aformula_2$.
  We conclude that $\triple{\LOC_2}{\astore_2}{\aheap_2} \models \aformula_1 \sepcnj \aformula_2$.
  \item[induction step: case with $\magicwand$.] If $\triple{\LOC_1}{\astore_1}{\aheap_1} \models \aformula_1 \magicwand \aformula_2$, then for every heap $\aheap_1'$, if $\aheap_1' \bot \aheap_1$ and
  $\triple{\LOC_1}{\astore_1}{\aheap_1'} \models \aformula_1$ then
  $\triple{\LOC_1}{\astore_1}{\aheap_1 + \aheap_1'} \models \aformula_2$. We show that
  $\triple{\LOC_2}{\astore_2}{\aheap_2} \models \aformula_1 \magicwand \aformula_2$, which is true whenever for all $\aheap_2'$,
  if $\aheap_2' \bot \aheap_2$ and
  $\triple{\LOC_2}{\astore_2}{\aheap_2'} \models \aformula_1$ then
  ${\triple{\LOC_2}{\astore_2}{\aheap_2 + \aheap_2'} \models \aformula_2}$.
  Consider a heap~$\aheap_2'$ such that $\aheap_2' \bot \aheap_2$ and
  $\triple{\LOC_2}{\astore_2}{\aheap_2'} \models \aformula_1$.
  By recalling that $\amap$ is a bijection from $\LOC_1$ to $\LOC_2$,
  we construct the heap $\aheap_1' = \amap^{-1}(\aheap_2')$. 
  Directly from the fact that $\amap$ is a $\avarset$-isomorphism between 
  $\triple{\LOC_1}{\astore_1}{\aheap_1}$ and $\triple{\LOC_2}{\astore_2}{\aheap_2}$,
  it is easy to see that the following two properties hold:
  \begin{enumerate}
    \item $\triple{\LOC_1}{\astore_1}{\aheap_1'} \approx_\aset \triple{\LOC_2}{\astore_2}{\aheap_2'}$ and $\amap$ is an $\aset$-isomorphism between the two structures.
    \item $\aheap_1' \bot \aheap_1$, from (1) together with $\triple{\LOC_1}{\astore_1}{\aheap_1'} \approx_\aset \triple{\LOC_2}{\astore_2}{\aheap_2'}$, $\aheap_2' \bot \aheap_2$ and $\amap^{-1}(\aheap_2') = \aheap_1'$.
  \end{enumerate}

  By the induction hypothesis, from (1) we conclude that
  $\triple{\LOC_1}{\astore_1}{\aheap_1'} \models \aformula_1$.
  Then, from (2) and the initial hypothesis $\triple{\LOC_1}{\astore_1}{\aheap_1} \models \aformula_1 \magicwand \aformula_2$, we obtain that $\triple{\LOC_1}{\astore_1}{\aheap_1 + \aheap_1'}$ satisfies~$\aformula_2$.
  By definition $\amap(\aheap_1 + \aheap_1') = \amap(\aheap_1) + \amap(\amap^{-1}(\aheap_2')) = \aheap_2 + \aheap_2'$ and therefore by the induction hypothesis, we get $\triple{\LOC_2}{\astore_2}{\aheap_2 + \aheap_2'} \models \aformula_2$.
  Thus, $\triple{\LOC_2}{\astore_2}{\aheap_2} \models \aformula_1 \magicwand \aformula_2$.

  \item[induction step: case with $\forall$.] If $\triple{\LOC_1}{\astore_1}{\aheap_1} \models \forall \avariable \ \aformula_1$,
        then ${\triple{\LOC_1}{\astore_1[\avariable \gets \alocation]}{\aheap_1} \models \aformula_1}$ holds for all $\alocation \in \LOC_1$.
  We prove that $\triple{\LOC_2}{\astore_2}{\aheap_2} \models \forall \avariable \ \aformula_1$, which is true whenever
  for every $\alocation' \in \LOC_2$, $\triple{\LOC_2}{\astore_2[\avariable \gets \alocation']}{\aheap_2} \models \aformula_1$.
  Notice that $\avariable \not \in \aset$ since $\aset$ contains only the free variables in $\forall \avariable \ \aformula_1$.
   Let $\alocation'$ be a location in $\LOC_2$.
  We have $\amap^{-1}(\alocation') \in \LOC_1$ and
  ${\triple{\LOC_1}{\astore_1[\avariable \gets \amap^{-1}(\alocation')]}{\aheap_1} \models \aformula_1}$.
  From $\triple{\LOC_1}{\astore_1}{\aheap_1} \approx_\aset \triple{\LOC_2}{\astore_2}{\aheap_2}$ and $\avariable \not \in \avarset$,
  we derive $\amap(\aheap_1) = \aheap_2$ and 
  for every $\avariablebis \in \avarset$, $\amap(\astore_1[\avariable \gets \amap^{-1}(\alocation')](\avariablebis)) = \amap(\astore_1(\avariablebis)) = \astore_2(\avariablebis) =  
  \astore_2[\avariable \gets \alocation'](\avariablebis)$.
  Thus, $\triple{\LOC_1}{\astore_1[\avariable \gets \amap^{-1}(\alocation')]}{\aheap_1} \approx_{\aset \cup \{\avariable\}} \triple{\LOC_2}{\astore_2[\avariable \gets \alocation']}{\aheap_2}$.
  Using the induction hypothesis, we derive $\triple{\LOC_2}{\astore_2[\avariable \gets \alocation']}{\aheap_2} \models \aformula_1$.
  Consequently, $\triple{\LOC_2}{\astore_2}{\aheap_2} \models \forall \avariable \ \aformula_1$. \qedhere
%% \qed
\end{description}
\end{proof}

As a direct consequence, satisfiability in  $\seplogic{\separate,\magicwand,\ls}$  as defined in Section~\ref{section-preliminaries},
is equivalent to satisfiability with generalised memory states, the same holds for $\foseplogic{\magicwand}$.
Indeed, suppose that $\aformula$ is satisfiable in the memory state $\pair{\astore}{\aheap}$, then
 $\aformula$ is satisfiable with the generalised semantics by considering the model $\triple{\Nat}{\astore}{\aheap}$.
Similarly, suppose that $\aformula$ is satisfiable in the generalised memory state $\triple{\LOC_1}{\astore_1}{\aheap_1}$.
As $\LOC_1$ is countably infinite, there is a bijection $\amap: \LOC_1 \rightarrow \Nat$. Let $\pair{\astore}{\aheap}$
such that $\amap(\aheap_1) = \aheap$ and for every $\avariable \in \PVAR$, $\astore(\avariable) \egdef \amap(\astore_1(\avariable))$.
Let $\avarset$ be the set of free variables occurring in $\aformula$. By construction of
$\pair{\astore}{\aheap}$, we have
$\triple{\LOC_1}{\astore_1}{\aheap_1} \approx_{\avarset}  \triple{\Nat}{\astore}{\aheap}$.
By Lemma~\ref{lemma-equivalence-isomorphism}, we conclude $\triple{\Nat}{\astore}{\aheap} \models \aformula$,
which is equivalent to $\pair{\astore}{\aheap} \models \aformula$.

\subsection{Encoding quantified variables as cells in the heap}
\label{section-encoding}

In this section, we introduce three additional atomic predicates that allow us to encode the first-order quantification of~$\foseplogic{\separate,\magicwand}$ as memory updates of~$\seplogic{\separate,\magicwand,\ls}$:
$$
\allocback{\avariable},  \qquad n(\avariable) = n(\avariablebis), \qquad  n(\avariable) \hpto n(\avariablebis).$$
The characterisation of these predicates in terms of formulae in~$\seplogic{\separate,\magicwand,\ls}$ is given in~Section~\ref{section-expressing-predicates}. For the time being, we simply assume that they can be defined in the logic, and work following their semantics, given below:
\begin{center}
      \begin{tabular}[t]{lll}
        $\pair{\astore}{\aheap} \models \allocback{\avariable}$ & $\iff$ & $\astore(\avariable) \in \range{\aheap}$.\\
        $\pair{\astore}{\aheap} \models n(\avariable)=n(\avariablebis)$ & $\iff$ & $\{\astore(\avariable),\astore(\avariablebis)\} \subseteq \domain{\aheap}$ and $\aheap(\astore(\avariable))=\aheap(\astore(\avariablebis))$.\\
        $\pair{\astore}{\aheap} \models n(\avariable) \hpto n(\avariablebis)$ & $\iff$ & $\{\astore(\avariable),\astore(\avariablebis)\} \subseteq \domain{\aheap}$ and ${\aheap^2(\astore(\avariable))=\aheap(\astore(\avariablebis))}$.
      \end{tabular}
\end{center}
In other words, $\allocback{\avariable}$ holds in $\pair{\astore}{\aheap}$ whenever $\astore(\avariable)$ has a predecessor in $\aheap$.
The satisfaction of $n(\avariable)=n(\avariablebis)$ corresponds to the existence of the following pattern in the memory state
$\pair{\astore}{\aheap}$:
\begin{center}
\begin{tikzpicture}[baseline=-2pt,->,>=stealth',shorten >=1pt, node distance=1cm, thick,auto,bend angle=30]
        \tikzstyle{every state}=[fill=white,draw=black,text=black,minimum width=0.1cm]

        \node (i) at (0,0) {$\astore(\avariable)$};
        \node[aloc] (j) [right=1cm of i] {};
        \node (k) [right=1cm of j] {$\astore(\avariablebis)$};
        \draw[pto] (i) edge node {} (j);
        \draw[pto] (k) edge node {} (j);
      \end{tikzpicture}
\end{center}
whereas the satisfaction of $n(\avariable) \hpto n(\avariablebis)$ corresponds to the existence of the following pattern:
\begin{center}
\begin{tikzpicture}[scale=0.6,->,>=stealth',shorten >=1pt, node distance=1cm, thick,auto,bend angle=30]
        \tikzstyle{every state}=[fill=white,draw=black,text=black,minimum width=0.1cm]

       \node[aloc] (i) {};
       \node (j) [left = of i] {$\astore(\avariable)$};
       \node[aloc] (w) [below = 0.5cm of i] {};
       \node (k) [left =  of w] {$\astore(\avariablebis)$};

       \draw[pto] (j) edge node {} (i);
       \draw[pto] (i) edge node {} (w);
       \draw[pto] (k) edge node {} (w);
    \end{tikzpicture}
\end{center}
As a rule of thumb, `$n(\avariable)$' in  $n(\avariable)=n(\avariablebis)$ or in $n(\avariable)\!\hpto\!n(\avariablebis)$
refers to the ``next'' value of $\avariable$ understood as $\aheap(\astore(\avariable))$ -- if it exists.
Let us first intuitively explain how the two last predicates will help encoding $\seplogic{\forall , \magicwand}$.
By definition, the satisfaction of the quantified formula $\forall \avariable \ \aformulabis$
from $\foseplogic{\magicwand}$ requires the satisfaction of the formula $\aformulabis$ for all the values $\alocation$
in $\LOC$ assigned to $\avariable$. The principle of the encoding is to use a value $\astore(\avariable)$
not in the domain or range of the heap to mimic the store by modifying how $\astore(\avariable)$ is allocated:
typically `$\avariable$ takes the value $\alocation$'' is encoded by the heap $\set{\astore(\avariable) \mapsto
\alocation}$ where $\astore(\avariable)$ is not in the range and domain of the heap.
Figure~\ref{figure:undecidability:rough-idea:store-into-heaps} intuitively depicts this encoding, highlighting how the store of the memory state 
in~Figure~\ref{figure:undecidability:rough-idea:a-memory-state}
can be simulated directly by the heap.
\begin{figure}
\vspace{3pt}
{\hphantom{.}\hfill
\begin{minipage}{.45\textwidth}
  \centering
  \begin{tikzpicture}[baseline]
    \node[aloc] (aa) {};
    \node[aloc] (ab) [above = 0.6cm of aa] {};
    \node[aloc] (ac) [above = 0.6cm  of ab] {};
    \node[aloc, label={[label distance=-0.05cm]180:{$\avariable_2$}}] (ad) [above = 0.6cm  of ac] {};
    \node[aloc, label={[label distance=-0.05cm]180:{$\avariable_1$}}] (ae) [above = 0.6cm  of ad] {};
    \node[aloc] (af) [above = 0.6cm  of ae] {};

    \node[aloc] (ba) [right = 0.7cm of aa] {};
    \node[aloc] (bb) [above = 0.6cm of ba] {};
    \node[aloc, label={[label distance=-0.05cm]-135:{$\avariable_3$}}] (bc) [above = 0.6cm  of bb] {};
    \node[aloc] (bd) [above = 0.6cm  of bc] {};
    \node[aloc] (be) [above = 0.6cm  of bd] {};
    \node[aloc] (bf) [above = 0.6cm  of be] {};

    \node[aloc] (ca) [right = 0.7cm of ba] {};
    \node[aloc] (cb) [above = 0.6cm of ca] {};
    \node[aloc] (cc) [above = 0.6cm  of cb] {};
    \node[aloc] (cd) [above = 0.6cm  of cc] {};
    \node[aloc] (ce) [above = 0.6cm  of cd] {};
    \node[aloc] (cf) [above = 0.6cm  of ce] {};

    \draw[pto] (cc) -- (cd);
    \draw[pto] (cd) -- (ce);
    \draw[pto] (ce) -- (bf);
    \draw[pto] (bf) to [bend right=50] (ae);
    \draw[pto] (ae) -- (be);
    \draw[pto] (be) -- (bf);

    \draw[pto] (bc) to [bend left=35] (ad);
    \draw[pto] (ad) to [bend left=35] (bc);
    \draw[pto] (cb) -- (bc);
    \draw[pto] (ba) -- (cb);
    \draw[pto] (ab) -- (ba);

          \begin{pgfonlayer}{bg}
            \draw[white,very thick,dashed,rounded corners=5pt,fill=white] ($(bf.north west)+(0,0.3)$)  rectangle ($(ba.south east)+(0.5,-0.3)$);
          \end{pgfonlayer}

  \end{tikzpicture}
  \captionof{figure}[]{A memory state.}\label{figure:undecidability:rough-idea:a-memory-state}
\end{minipage}%
\hfill%
\begin{minipage}{.45\textwidth}
  \centering
  \begin{tikzpicture}[baseline]
    \node[aloc] (aa) {};
    \node[aloc] (ab) [above = 0.6cm of aa] {};
    \node[aloc] (ac) [above = 0.6cm  of ab] {};
    \node[aloc] (ad) [above = 0.6cm  of ac] {};
    \node[aloc] (ae) [above = 0.6cm  of ad] {};
    \node[aloc] (af) [above = 0.6cm  of ae] {};

    \node[aloc] (ba) [right = 0.7cm of aa] {};
    \node[aloc] (bb) [above = 0.6cm of ba] {};
    \node[aloc] (bc) [above = 0.6cm  of bb] {};
    \node[aloc] (bd) [above = 0.6cm  of bc] {};
    \node[aloc] (be) [above = 0.6cm  of bd] {};
    \node[aloc] (bf) [above = 0.6cm  of be] {};

    \node[aloc] (ca) [right = 0.7cm of ba] {};
    \node[aloc] (cb) [above = 0.6cm of ca] {};
    \node[aloc] (cc) [above = 0.6cm  of cb] {};
    \node[aloc] (cd) [above = 0.6cm  of cc] {};
    \node[aloc] (ce) [above = 0.6cm  of cd] {};
    \node[aloc] (cf) [above = 0.6cm  of ce] {};

    \node[aloc] (sa) [left = 1cm of aa] {};
    \node[aloc, label={[label distance=-0.05cm]180:{$\avariable_3$}}] (sb) [above = 0.6cm of sa] {};
    \node[aloc] (sc) [above = 0.6cm of sb] {};
    \node[aloc, label={[label distance=-0.05cm]180:{$\avariable_2$}}] (sd) [above = 0.6cm of sc] {};
    \node[aloc, label={[label distance=-0.05cm]180:{$\avariable_1$}}] (se) [above = 0.6cm of sd] {};
    \node[aloc] (sf) [above = 0.6cm of se] {};

    \draw[pto] (cc) -- (cd);
    \draw[pto] (cd) -- (ce);
    \draw[pto] (ce) -- (bf);
    \draw[pto] (bf) to [bend right=50] (ae);
    \draw[pto] (ae) -- (be);
    \draw[pto] (be) -- (bf);

    \draw[pto] (bc) to [bend left=35] (ad);
    \draw[pto] (ad) to [bend left=35] (bc);
    \draw[pto] (cb) -- (bc);
    \draw[pto] (ba) -- (cb);
    \draw[pto] (ab) -- (ba);

    \draw[bordeaupto] (se) -- (ae);
    \draw[bordeaupto] (sb) -- (bc);
    \draw[bordeaupto] (sd) -- (ad);

    \begin{pgfonlayer}{bg}
      \draw[bordeau,very thick,dashed,rounded corners=5pt,fill=bordeau!4] ($(sf.north west)+(-1.7,0.3)$)  rectangle ($(sa.south east)+(0.5,-0.3)$);
      \node[rotate=90,align=center,bordeau] (text) [above right = -1.55cm and -0.8cm of sc] {
        locations reserved to\\[-2pt] simulate the store
      };
    \end{pgfonlayer}

  \end{tikzpicture}
  \captionof{figure}[]{Simulating the store with the heap.}\label{figure:undecidability:rough-idea:store-into-heaps}
\end{minipage}%
\hfill%
\hfill\,}
\end{figure}
As showed in this picture,
the principle of the encoding is to use a set $L$ of locations  initially not in the domain or range of the heap to mimic the store by modifying how they are allocated. 
In this way, a variable will be interpreted by a location in the heap and,
instead of checking whether $\avariable \hpto \avariablebis$ (or $\avariable = \avariablebis$) holds, we will check if $n(\avariable) \hpto n(\avariablebis)$ (or $n(\avariable) = n(\avariablebis)$) holds, where $\avariable$ and $\avariablebis$  correspond, after the translation,
to the locations in $L$ that mimic the store for those variables.

Let $\aset$ be the finite set of variables needed for the translation.
To properly encode the store, each location in $L$ only mimics exactly one variable, i.e. there is a bijection between
$\aset$ and $L$, and cannot be reached by any location.
Afterwards, the universal quantification for the quantifier $\forall$ is simulated by means of the separating implication $\magicwand$:
the formula $\forall \avariable \ \aformulabis$ will be encoded by
a formula of the form
$$(\alloc{\avariable} \wedge \size = 1) \magicwand (\OK(\aset) \Rightarrow
\atranslation(\aformulabis)),$$
where $\OK(\aset)$ (formally defined below)  checks whether the locations in $L$ still satisfy the auxiliary conditions just described, 
i.e.~locations in $L$ are not reached by other locations,
whereas
$\atranslation(\aformulabis)$ is the translation of $\aformulabis$.
So, as hinted earlier, the above formula universally quantifies over the addition of
all subheaps of the form $\set{\astore(\avariable) \mapsto
\alocation}$. The locations in $L$ play a special role and this shall be specified with the formula
 $\OK(\aset)$.
%%  is dedicated to that final verification.

The formula $\aformulabis_1 \magicwand \aformulabis_2$ cannot simply be translated into
$\atranslation(\aformulabis_1) \magicwand (\Safe(\aset) \Rightarrow \atranslation(\aformulabis_2))$ because the evaluation of $\atranslation(\aformulabis_1)$
in a disjoint heap may need the values of free variables occurring in $\aformulabis_1$ but our encoding  of the variable valuations via the heap
does not allow to preserve these values through disjoint heaps. In order to solve this problem, for each variable $\avariable$ in the formula, $X$ will contain an auxiliary variable $\overline{\avariable}$, or alternatively we define on $X$ an involution $\overline{(.)}$. If the translated formula has $q$
variables then the set $X$ of variables needed for the translation will have cardinality $2q$. Roughly speaking, in the translation
of a formula whose outermost connective is the magic wand,
the locations corresponding to variables of the form $\overline{\avariable}$ will be allocated
on the left side of the magic wand, and checked to be equal to their non-bar versions on the right side of the magic wand.

Let us formalise the intuition of using part of the memory to mimic the store
depicted in 
Figure~\ref{figure:undecidability:rough-idea:store-into-heaps},
by defining a suitable encoding between generalised memory states.

\begin{definition}\label{def:rhd}
      Let $\aset \subseteq \asetbis$ be finite sets of program variables.
      Let
      $(\locations_1,\astore_1,\aheap_1)$ and $(\locations_2,\astore_2,\aheap_2)$ be two (generalised) memory states.
      We say that $(\locations_1,\astore_1,\aheap_1)$ is encoded by $(\locations_2,\astore_2,\aheap_2)$ with respect to $\aset$ and $\asetbis$,
      written ${(\locations_1,\astore_1,\aheap_1) \rhd^{\aset}_{\asetbis}(\locations_2,\astore_2,\aheap_2)}$, if the following conditions hold:
      \begin{enumerate}
      \item $\locations_1=\locations_2\setminus\set{\astore_2(\avariable)\mid\avariable \in \asetbis}$,
      \item for all $\avariable \neq \avariablebis \in \asetbis$, $\astore_2(\avariable)\neq\astore_2(\avariablebis)$,
      \item $\aheap_2=\aheap_1 + \set{\astore_2(\avariable)\mapsto\astore_1(\avariable)\mid\avariable\in\aset}$.
      \end{enumerate}
    
    \end{definition}
    
    Notice that the heap $\aheap_2$ is equal to the heap $\aheap_1$ augmented with
    the heap $\set{\astore_2(\avariable)\mapsto\astore_1(\avariable)\mid\avariable\in\aset}$. 
      From the encoding, we can retrieve the initial heap by removing the memory cells  corresponding to variables in $\avarset$.
      By way of example, notice how the memory state in Figure~\ref{figure:undecidability:rough-idea:a-memory-state} satisfies the formula $\avariable_3 \hpto \avariable_2$, whereas its encoding 
      in Figure~\ref{figure:undecidability:rough-idea:store-into-heaps}
      satisfies the formula ${n(\avariable_3) \hpto n(\avariable_2)}$.
      Furthermore, notice that the memory state in Figure~\ref{figure:undecidability:rough-idea:store-into-heaps} satisfies the formula
            $$
            \Safe(\set{\avariable_1,\avariable_2,\avariable_3}) = \avariable_1 \neq \avariable_2 \wedge
            \avariable_2 \neq \avariable_3 \wedge
            \avariable_1 \neq \avariable_3 \wedge
            \neg \mathtt{alloc}^{-1}(\avariable_1) \wedge
            \neg \mathtt{alloc}^{-1}(\avariable_2) \wedge
            \neg \mathtt{alloc}^{-1}(\avariable_3).
            $$
      This formula guarantees that the memory states are  encodings of  other memory states (in particular, the one in Figure~\ref{figure:undecidability:rough-idea:a-memory-state}).
      In general, we write $\Safe(\asetbis)$ for the following formula:
      $$
            \Safe(\asetbis) \egdef (\bigwedge_{\mathclap{{\rm distinct} \ \avariable,\avariablebis \in \asetbis}} \avariable \neq \avariablebis) \wedge
            (\bigwedge_{\mathclap{\avariable \in \asetbis}}  \neg \mathtt{alloc}^{-1}(\avariable))
            .
      $$    

\subsection{The translation}%%
\label{section-the-translation}
We are now ready to define the translation of a first-order formula in quantifier-free separation logic extended with the three predicates introduced at the
beginning of the section.
Let $\aformula$ be a closed formula of $\seplogic{\forall , \magicwand}$ with quantified variables $\aset = \{\avariable_1,\dots,\avariable_q\}$. Without loss of generality,
we can assume that~$\aformula$ is \defstyle{well-quantified}, i.e.~distinct quantifications involve distinct variables.
We consider a set $\asetbis = \{\avariable_1,\dots,\avariable_{2q}\}$ and $\overline{(.)}$ to be the (unique) involution on $\asetbis$ such that for all $i \in \interval{1}{q}$, $\overline{\avariable_i} \egdef \avariable_{i+q}$.
Notice that $\overline{\avariable_{i+q}} = \avariable_i$.
We extend the involution to arbitrary subsets of $\asetbis$, so that in particular $\overline{\aset} = \{\avariable_{q+1},\dots,\avariable_{2q}\}$ and $\overline{\overline{\aset}} = \aset$.

The translation function $\atranslation(\aformulabis, \asetbis)$ has two arguments: the formula~$\aformulabis$ in $\foseplogic{\magicwand}$ to be recursively translated,
and the total set of variables potentially appearing in the target formula, 
i.e.~$\asetbis$. The translation function is given below, 
assuming that the variables in $\aformulabis$ are  either included in $\aset$ or included $\overline{\aset}$.
\begin{align*}
\atranslation(\avariable_i = \avariable_j, \asetbis) & \egdef n(\avariable_i) = n(\avariable_j)  \\
\atranslation(\avariable_i \hpto \avariable_j, \asetbis) & \egdef n(\avariable_i) \hpto n(\avariable_j)  \\
\atranslation(\neg \aformulabis, \asetbis) & \egdef \neg \atranslation(\aformulabis, \asetbis)  \\
\atranslation(\aformulabis_1 \wedge \aformulabis_2, \asetbis) & \egdef \atranslation(\aformulabis_1, \asetbis) \wedge  \atranslation(\aformulabis_2, \asetbis) \\
\atranslation(\forall \avariable_i \ \aformulabis, \asetbis) & \egdef (\alloc{\avariable_i} \wedge \size = 1) \magicwand
(\Safe(\asetbis) \Rightarrow
\atranslation(\aformulabis, \asetbis))
%%
%% \atranslation(\aformulabis_1 \wedge \aformulabis_2, \aset) & \egdef   \atranslation(\aformulabis_1,\aset) \wedge
%% \atranslation(\aformulabis_2,\aset) \\
%% \atranslation(\neg \aformulabis, \aset) & \egdef   \neg \atranslation(\aformulabis,\aset) \\
%%
\end{align*}%%
Lastly, the translation $\atranslation(\aformulabis_1 \magicwand \aformulabis_2, \asetbis)$  is defined as
$$
\Big(\big(\bigwedge_{\mathclap{\avariableter \in \asetter}} \alloc{\overline{\avariableter}}) \wedge
(\bigwedge_{\mathclap{\avariableter \in \asetbis \setminus \asetter}} \neg \alloc{\overline{\avariableter}})
\wedge \Safe(\asetbis) \wedge \atranslation(\aformulabis_1[\avariable \leftarrow \overline{\avariable} \mid \avariable \in \asetbis],\asetbis)\big)
\magicwand
$$
$$ \big(((\bigwedge_{\mathclap{\avariableter \in \asetter}} n(\avariableter) = n(\overline{\avariableter})) \wedge \Safe(\asetbis))
   \Rightarrow
((\bigwedge_{\mathclap{\avariableter \in \asetter}} \alloc{\overline{\avariableter}} \wedge \size = \card{\asetter}) \separate
\atranslation(\aformulabis_2, \asetbis))\big)\Big),
$$
where $\asetter$ is the set of free variables in $\aformulabis_1$, and 
$\aformulabis_1[\avariable \leftarrow \overline{\avariable} \mid \avariable \in \asetbis]$ denotes
the formula obtained from $\aformulabis_1$  by replacing simultaneously
all the variables $\avariable \in \asetbis$
by $\overline{\avariable}$.

\begin{example} Assume that $q =2$, and therefore $\asetbis = \set{\avariable_1, \avariable_2, \avariable_3, \avariable_4}$. 
We have $\overline{\avariable_1} = \avariable_3$ and $\overline{\avariable_2} = \avariable_4$. Thus, the translation
$\atranslation(\avariable_1 \hpto \avariable_2 \magicwand \avariable_1 \hpto \avariable_2, \asetbis)$ is defined as the formula below:
$$
\Big((\alloc{\avariable_3} \wedge \alloc{\avariable_4} \wedge \neg \alloc{\avariable_1} \wedge \neg \alloc{\avariable_2} \wedge \Safe(\asetbis)
\wedge n(\avariable_3) \hpto n(\avariable_4))\ \magicwand
$$
$$
\big((n(\avariable_1) = n(\avariable_3) \wedge n(\avariable_2) = n(\avariable_4) \wedge \Safe(\asetbis)) \Rightarrow
((\alloc{\avariable_3} \wedge \alloc{\avariable_4} \wedge \size = 2) \separate  n(\avariable_1) \hpto n(\avariable_2))\big)
\Big).
$$
\end{example}

Let us informally analyse the translation $\atranslation(\aformulabis_1 \magicwand \aformulabis_2, \asetbis)$.
It is a formula of the form $\aformulabis_1' \magicwand \aformulabis_2'$ where $\aformulabis'_1$ expresses the four constraints below (which follow the four conjuncts of $\aformulabis_1'$, from left to right):
\begin{enumerate}
\itemsep 0 cm
\item All the variables $\overline{\avariableter}$ with $\avariableter \in \asetter$ are allocated.
      % which amounts to provide an assignment for the variables $\overline{\avariableter}$ with our encoding.
      % Obviously, $\bigwedge_{\avariableter \in \asetter} \alloc{\overline{\avariableter}})$ does the job.
\item None of the variables $\overline{\avariable}$ with $ \avariable \in \asetbis \setminus \asetter$ is allocated. 
% Notice that since we are assuming the formula $\aformulabis_1 \magicwand \aformulabis_2$ to be written with variables from $\aset$ and $\asetter \subseteq \aset$, this condition equivalently states that the variables in $\aset$ and $\overline{\aset} \setminus \overline{\asetter}$ are not allocated.
\item The environment around the variables $\avariable \in \asetbis$ is safe. In particular, none of the variables in $\asetbis$ corresponds to a location that is pointed by another location.
\item Finally, one needs to guarantee that the translated formula $\atranslation(\aformulabis_1,\asetbis)$ holds true when each variable
$\avariable \in \asetbis$ is replaced by its copy $\overline{\avariable}$, 
i.e.~that $\atranslation(\aformulabis_1[\avariable \gets \overline{\avariable} \mid \avariable \in \asetbis],\asetbis)$ holds.
\end{enumerate}
Essentially, the formula $\aformulabis_1'$ considers heaps satisfying $\atranslation(\aformulabis_1,\asetbis)$, but with respect to the copies $\overline{\asetter}$ of the free variables $\asetter$ in $\aformulabis_1$.
On these heaps, $\aformulabis'_2$ checks whether the following two conditions hold (antecedent of $\aformulabis'_2$):
\begin{enumerate}
\item[(A)] The values for $\avariableter$ and $\overline{\avariableter}$ are equal for all $\avariableter \in \asetter$.
\item[(B)] As in (3), the environment around the variables $\avariable \in \asetbis$ is safe.
\end{enumerate}
When these two conditions hold, $\aformulabis_2'$ removes the 
assignments corresponding to the variables in $\overline{\asetter}$,
see subformula $\bigwedge_{\avariableter \in \asetter} \alloc{\overline{\avariableter}} \wedge \size = \card{\asetter}$, and then check whether the formula $\atranslation(\aformulabis_2,\asetbis)$ holds.
Notice that freeing the variables in $\overline{\asetter}$ allows us to reuse them whenever a magic wand appears in $\atranslation(\aformulabis_2,\asetbis)$.

Here is the main result of this section, that is the correctness of the translation $\atranslation{(\aformulabis,\asetbis)}$.

\begin{lemma}
\label{lemma-correctness-translation}
Let $\asetbis = \set{\avariable_1, \ldots, \avariable_{2q}}$ and
$E$ be either $\{\avariable_1, \ldots, \avariable_q\}$ or $\overline{\{\avariable_1, \ldots, \avariable_q\}} = \{\avariable_{q+1},\dots,\avariable_{2q}\}$.
Let $\aset \subseteq E$.
Let $\aformulabis$
be a well-quantified formula in $\foseplogic{\magicwand}$ with free variables among $\aset$ and bound variables among $E \setminus \aset$.
If 
 $\triple{\LOC_1}{\astore_1}{\aheap_1} \rhd_\asetbis^{\aset} \ \triple{\LOC_2}{\astore_2}{\aheap_2}$ then
$\pair{\astore_1}{\aheap_1}  \models \aformulabis$ iff $\pair{\astore_2}{\aheap_2} \models \atranslation(\aformulabis,\asetbis)$.
\end{lemma}

The statement of Lemma~\ref{lemma-correctness-translation}, as well as its proof, refers to the involution $\overline{(.)}$ such that $\overline{\avariable_i} = \avariable_{i+q}$.

\begin{proof}
%% \SDnote{Improve the readability, add numbered labels, number equations.}
  The proof is by induction on the structure of $\aformulabis$. We start by proving the two base cases
with the atomic formulae $\avariable_i = \avariable_j$ and $\avariable_i \hpto \avariable_j$.
\begin{enumerate}
  \item Let $\aformulabis$ be $\avariable_i = \avariable_j$. We have $\{\avariable_i,\avariable_j\} \subseteq \aset \subseteq \asetbis$
  and $\atranslation(\aformulabis,\asetbis) = n(\avariable_i) = n(\avariable_j)$ by definition. The following equivalences hold.
  \begin{center}
    \begin{tabular}{rclr}
      $\pair{\astore_1}{\aheap_1} \models \avariable_i = \avariable_j$
      & $\iff$ & 
      $\astore_1(\avariable_i) = \astore_1(\avariable_j),$\\
      & $\iff$ & 
      $\aheap_2(\astore_2(\avariable_i)) = \aheap_2(\astore_2(\avariable_j)),$
        &\qquad(Definition~\ref{def:rhd})\\
      & $\iff$ & $\pair{\astore_2}{\aheap_2} \models n(\avariable_i) = n(\avariable_j)$.
    \end{tabular}
  \end{center}
  \item Let $\aformulabis$ be
   $\avariable_i \hpto \avariable_j$.  We have $\{\avariable_i,\avariable_j\} \subseteq \aset \subseteq \asetbis$ and
 $\atranslation(\aformulabis,\asetbis) = n(\avariable_i) \hpto n(\avariable_j)$ by definition. The following equivalences hold.
 \begin{center}
  \begin{tabular}{rclr}
    $\pair{\astore_1}{\aheap_1} \models \avariable_i \hpto \avariable_j$
    & $\iff$ & 
    $\aheap_1(\astore_1(\avariable_i)) = \astore_1(\avariable_j),$\\
    & $\iff$ & 
    $\aheap_2(\aheap_2(\astore_2(\avariable_i))) = \aheap_2(\astore_2(\avariable_j)),$
      &\qquad(Definition~\ref{def:rhd})\\
    & $\iff$ & $\pair{\astore_2}{\aheap_2} \models n(\avariable_i) \hpto n(\avariable_j)$.
  \end{tabular}
\end{center}
  \end{enumerate}
  We now consider the induction step and we distinguish the different cases depending on the outermost connective.
  To be precise, we assume the induction hypothesis to be the following:

  \vspace{3pt}
  \begin{minipage}{0.95\linewidth}
  Let $N \in \Nat$.
  Let $E$ be either $\{\avariable_1, \ldots, \avariable_q\}$ or $\overline{\{\avariable_1, \ldots, \avariable_q\}} = \{\avariable_{q+1},\dots,\avariable_{2q}\}$.
  Let $\aset \subseteq E$.
  Let $\aformula$
  be a well-quantified formula in $\foseplogic{\magicwand}$ with free variables among $\aset$, bound variables among $E \setminus \aset$, and size at most $N$.
  If 
   $\triple{\LOC_1}{\astore_1}{\aheap_1} \rhd_\asetbis^{\aset} \ \triple{\LOC_2}{\astore_2}{\aheap_2}$ then
  $\pair{\astore_1}{\aheap_1}  \models \aformula$ iff $\pair{\astore_2}{\aheap_2} \models \atranslation(\aformula,\asetbis)$.
  \end{minipage}
  \vspace{3pt}
  
  \noindent
  In the induction step, we take a formula $\aformulabis$ of size $N+1$. We omit the obvious cases for the
Boolean connectives. 
  We now prove the result for $\forall \avariable_i \ \aformula$ and $\aformula_1 \magicwand \aformula_2$.
  \begin{enumerate}
    \setcounter{enumi}{2}
    \item Let $\aformulabis = \forall \ \avariable_i \ \aformula$ with $\avariable_i \in E \setminus \aset$ and therefore by definition
    \[
      \atranslation(\forall \ \avariable_i \ \aformula,\asetbis) = (\alloc{\avariable_i} \wedge \size = 1) \magicwand (\Safe(\asetbis) \Rightarrow \atranslation(\aformula, \asetbis)).
    \]
    %% Moreover notice that $x_i \not\in \asetbis$ since $x_i$ is not a free variable and,
    By definition, $\pair{\astore_1}{\aheap_1} \models \forall \avariable_i \ \aformula$ if and only if for all locations $\alocation \in \LOC_1$,
    $\pair{\astore_1[\avariable_i \gets \alocation]}{\aheap_1} \models \aformula$.

    Let us now consider the generalised memory state $\triple{\LOC_2}{\astore_2}{\aheap_2 + \{\astore_2(\avariable_i) \pto \alocation\}}$.
    Since $\avariable_i \not\in \aset$, the heap $\aheap_2 + \{\astore_2(\avariable_i) \pto \alocation\}$ is well-defined.
    From $\triple{\LOC_1}{\astore_1}{\aheap_1} \rhd_\asetbis^{\aset} \ \triple{\LOC_2}{\astore_2}{\aheap_2}$ and by~Definition~\ref{def:rhd}, we obtain
    \[
      \triple{\LOC_1}{\astore_1[\avariable_i \gets \alocation]}{\aheap_1} \rhd_\asetbis^{\aset \cup \{\avariable_i\}} \ \triple{\LOC_2}{\astore_2}{\aheap_2+ \{\astore_2(\avariable_i) \pto \alocation\}}.
    \]
    We apply the induction hypothesis, and conclude that:
    \begin{center}
      \begin{tabular}{rl}
      & for all $\alocation \in \LOC_1$, $\pair{\astore_1[\avariable_i \gets \alocation]}{\aheap_1} \models \aformula$,\\
      iff &
      for all $\alocation \in \LOC_1$, $\pair{\astore_2}{\aheap_2 + \{\astore_2(\avariable_i) \pto \alocation\}} \models \atranslation(\aformula,\asetbis)$.
      \end{tabular}
    \end{center}
    Moreover, due to the fulfilment of the conditions in $\rhd_\asetbis^{\aset \cup \{\avariable_i\}}$, any 
    memory state $\pair{\astore_2}{\aheap_2 + \{\astore_2(\avariable_i) \pto \alocation'\}}$ satisfies
    $\Safe(\asetbis)$ if and only if $\alocation' \in \LOC_1$.
    Therefore, 
    \begin{center}
      \begin{tabular}{rl}
      & for all $\alocation \in \LOC_1$, $\pair{\astore_2}{\aheap_2 + \{\astore_2(\avariable_i) \pto \alocation\}} \models \atranslation(\aformula,\asetbis)$,\\
      iff &
      for all $\alocation \in \LOC_2$, $\pair{\astore_2}{\aheap_2 + \{\astore_2(\avariable_i) \pto \alocation\}} \models \Safe(\asetbis) \Rightarrow \atranslation(\aformula,\asetbis)$.
      \end{tabular}
    \end{center}
    Indeed, if $\alocation \in \LOC_1$ then $\Safe(\asetbis)$ and $\atranslation(\aformula,\asetbis)$ are both satisfied (for $\Safe(\asetbis)$, remember that all locations in $\set{\astore_2(\avariablebis) \ \mid \ \avariablebis \in \aset} \cap \domain{\aheap_2}$ already point to elements of $\LOC_1$, due to the fulfilment of the conditions in $\rhd_\asetbis^{\aset}$),
    otherwise whenever $\alocation \in \LOC_2 \setminus \LOC_1$ the premise of the implication  does not hold
    and therefore the formula is trivially satisfied. Observe that 
    a memory state $\pair{\astore_2}{\aheap'}$ satisfies $\alloc{\avariable_i} \wedge \size = 1$ if and only if $\aheap'$ is of the form $\{\astore_2(\avariable_i) \pto \alocation''\}$, for some $\alocation'' \in \LOC_2$.
    Then, we conclude that 
    \begin{center}
      \begin{tabular}{rl}
      & 
      for all $\alocation \in \LOC_2$, $\pair{\astore_2}{\aheap_2 + \{\astore_2(\avariable_i) \pto \alocation\}} \models \Safe(\asetbis) \Rightarrow \atranslation(\aformula,\asetbis)$,\\
      iff & 
      for all $\aheap'$, if $(\aheap' \bot \aheap_2 \text{ and }\pair{\astore_2}{\aheap'} \models \alloc{\avariable_i} \wedge \size = 1)$\\
      & \quad then $\pair{\astore_2}{\aheap_2 + \aheap'} \models \Safe(\aset) \Rightarrow \atranslation(\aformula,\asetbis)$ \\
      iff & 
      $\pair{\astore_2}{\aheap_2} \models (\alloc{\avariable_i} \wedge \size = 1) \magicwand 
      (\Safe(\asetbis) \Rightarrow \atranslation(\aformula,\asetbis))$.
      \end{tabular}
    \end{center}
    %% By definition of $\models$, the right hand side of this equivalence corresponds to
    %% \[\pair{\astore_2}{\aheap_2} \models (\alloc{\avariable_i} \wedge \size = 1) \magicwand 
    %% (\Safe(\asetbis) \Rightarrow \atranslation(\aformula,\asetbis)).\]
    %% 
  \item Let $\aformulabis = \aformula_1 \magicwand \aformula_2$.
  Let $\asetter$ be the set of free variables in $\aformula_1$.
  Below, we assume that 
  for every $\avariable \in \aset$, $\astore_1(\avariable) = \astore_1(\overline{\avariable})$. Notice that this assumption is without loss of generality, 
  as $\aformulabis$ is written with free variables among $\aset$, where $\aset \cap \overline{\aset} = \emptyset$, and the memory state 
  $\pair{\astore_1[\overline{\avariable} \gets \astore_1(\avariable) \mid \avariable \in \aset]}{\aheap_1}$ is $\aset$-isomorphic to $\pair{\astore_1}{\aheap_1}$, 
  where $\astore_1[\overline{\avariable} \gets \astore_1(\avariable) \mid \avariable \in \aset]$ is the store obtained from $\astore_1$ by assigning $\astore_1(\avariable)$ to every variable $\overline{\avariable} \in \overline{\aset}$. 
  By Lemma~\ref{lemma-equivalence-isomorphism}, 
  $\pair{\astore_1}{\aheap_1} \models \aformulabis$ if and only if
  $\pair{\astore_1[\overline{\avariable} \gets \astore_1(\avariable) \mid \avariable \in \aset]}{\aheap_1} \models \aformulabis$.

  ($\Rightarrow$):
  If $\pair{\astore_1}{\aheap_1} \models \aformula_1 \magicwand \aformula_2$, then by definition,
  for all heaps $\aheap_1'$, if $\aheap_1' \bot \aheap_1$ and $\pair{\astore_1}{\aheap_1'} \models \aformula_1$, then 
  we have $\pair{\astore_1}{\aheap_1+\aheap_1'} \models \aformula_2$.
  We show that $\pair{\astore_2}{\aheap_2} \models \atranslation(\aformula_1 \magicwand \aformula_2, \asetbis)$.
  By definition, 
  $\atranslation(\aformula_1 \magicwand \aformula_2, \asetbis)$,
  holds whenever for all heaps $\aheap_2'$, if 
  \begin{enumerate}[align=left]
    \item[(A1)] $\aheap_2' \bot \aheap_2$,
    \item[(A2)]  $\pair{\astore_2}{\aheap_2'} \, \models \, (\displaystyle\,\bigwedge_{\mathclap{\avariableter \in Z}}\, \alloc{\overline{\avariableter}}) \wedge
    (\displaystyle\,\bigwedge_{\mathclap{\avariableter \in \asetbis \setminus Z}}\, \neg \alloc{\overline{\avariableter}})
    \wedge \Safe(\asetbis) \wedge \atranslation(\aformula_1[\avariable \gets \overline{\avariable} \mid \avariable \in \asetbis],\asetbis)$,
    \item[(A3)] $\pair{\astore_2}{\aheap_2 + \aheap_2'} \, \models \, (\displaystyle\,\bigwedge_{\mathclap{\avariableter \in Z}}\, n(\avariableter) = n(\overline{\avariableter})) \wedge \Safe(\asetbis)$,
  \end{enumerate}
  then 
  \begin{enumerate}[align=left]
   \item[(C1)]
    $\pair{\astore_2}{\aheap_2 + \aheap_2'} \models (\displaystyle\,\bigwedge_{\mathclap{\avariableter \in \asetter}}\, \alloc{\overline{\avariableter}} \wedge \size = \card{\asetter}) \separate
    \atranslation(\aformula_2, \asetbis).$
  \end{enumerate}

  Let $\aheap_2'$ be some heap that satisfies the premises (A1)--(A3) 
  of the implication.
  From (A3) together with
   $\triple{\LOC_1}{\astore_1}{\aheap_1} \rhd_\asetbis^{\aset} \ \triple{\LOC_2}{\astore_2}{\aheap_2}$, which implies that
  for all $\avariable \in \aset
   \supseteq \asetter$, we have
  $\aheap_2(\astore_2(\avariable)) = \astore_1(\avariable)$, we conclude that  for all $\avariableter \in \asetter$,
  we have $\aheap_2'(\astore_2(\overline{\avariableter})) = \astore_1(\avariableter)$. Moreover, (A2) 
  entails that $\aheap_2'$ can be written as $\aheap_{\asetter} + \aheap'_1$
  where $\aheap_{\asetter} \egdef \{ \astore_2(\overline{\avariableter}) \pto \astore_1(\avariableter) \mid \avariableter
  \in \asetter \}$
  ($\aheap'_1$ is then defined as the unique subheap satisfying $\aheap_2' = \aheap_{\asetter} + \aheap'_1$,
  once $\aheap_{\asetter}$ is defined).
  Observe that the domain of $\aheap_{\asetter}$ is the set of locations in $\LOC_2$ that corresponds to variables in $\overline{\asetter}$, and its codomain is included in $\LOC_1$. For $\aheap'_1$ instead, both its domain and codomain are included in $\LOC_1$.
  By Definition~\ref{def:rhd} and the assumption that for all $\avariable \in \aset$, $\astore_1(\avariable) = \astore_1(\overline{\avariable})$, it holds that
  \[
  \triple{\LOC_1}{\astore_1}{\aheap_1'} \rhd_{\asetbis}^{\overline{\asetter}}
  \triple{\LOC_2}{\astore_2}{\aheap_2'}.
  \]
  Since $\asetter$ is the set of free variables in $\aformula_1$,
  $\overline{\asetter}$ is the set of free variables in~${\aformula_1[\avariable \gets \overline{\avariable} \mid \avariable \in \asetbis]}$.
  From
  $\pair{\astore_2}{\aheap_2'} \models \atranslation(\aformula_1[\avariable \gets \overline{\avariable} \mid \avariable \in \asetbis],\asetbis)$, 
  we apply the induction hypothesis and conclude that
  $\pair{\astore_1}{\aheap_1'} \models \aformula_1[\avariable \gets \overline{\avariable} \mid \avariable \in \asetbis]$.
  Again thanks to the assumption that for all $\avariable \in \aset$, $\astore_1(\avariable) = \astore_1(\overline{\avariable})$, 
  this allows us to conclude that $\pair{\astore_1}{\aheap_1'} \models \aformula_1$. 
  Indeed, this holds directly from the well-known axiom of separation logic $\avariable = \avariablebis \land \aformula \implies \aformula[\avariable \gets \avariablebis]$ (see e.g.~\cite{DemriLM20}).
  Since $\aheap_1' \sqsubseteq \aheap_2'$,
  $\aheap_1 \sqsubseteq \aheap_2$ and $\aheap_2' \bot \aheap_2$ (A1), the property $\aheap_1' \bot \aheap_1$ holds and 
  therefore, by the assumption
  $\pair{\astore_1}{\aheap_1} \models \aformula_1 \magicwand \aformula_2$,
  we get $\pair{\astore_1}{\aheap_1 + \aheap_1'} \models \aformula_2$.
  As such, since $\triple{\LOC_1}{\astore_1}{\aheap_1 + \aheap_1'} \rhd_\asetbis^{\aset} \triple{\LOC_2}{\astore_2}{\aheap_2 + \aheap_1'}$
  holds because $\aheap_1'$ has domain and codomain included in $\LOC_1$ and is a subset of $\aheap_2'$ which is disjoint from $\aheap_2$,
  we can use the induction hypothesis and obtain that $\pair{\astore_2}{\aheap_2 + \aheap_1'} \models \atranslation(\aformula_2,\asetbis)$.
  Together with $\aheap_2' = \aheap_1' + \aheap_{\asetter}$
  and
  $\pair{\astore_2}{\aheap_{\asetter}} \models \bigwedge_{\avariableter \in \asetter} \alloc{\overline{\avariableter}} \wedge
  \size = \card{\asetter}$, 
  this allows us to derive (C1).

  ($\Leftarrow$):
  For the other direction, suppose that $\pair{\astore_2}{\aheap_2} \models \atranslation(\aformula_1 \magicwand \aformula_2, \asetbis)$.
This means that for all heaps $\aheap_2'$ if (A1)-(A3) holds, then (C1) holds. 
\cut{  
This means that for all heaps $\aheap_2'$ if 
  \begin{enumerate}[align=left]
    \item[(A1)] $\aheap_2' \bot \aheap_2$,
    \item[(A2)]  $\pair{\astore_2}{\aheap_2'} \, \models \, (\bigwedge_{{\avariableter \in Z}} \alloc{\overline{\avariableter}}) \wedge
    (\bigwedge_{{\avariableter \in \asetbis \setminus Z}} \neg \alloc{\overline{\avariableter}})
    \wedge \Safe(\asetbis) \wedge \atranslation(\aformula_1[\avariable \gets \overline{\avariable} \mid \avariable \in \asetbis],\asetbis)$,
    \item[(A3)] $\pair{\astore_2}{\aheap_2 + \aheap_2'} \, \models \, (\bigwedge_{\avariableter \in Z} n(\avariableter) = n(\overline{\avariableter})) \wedge \Safe(\asetbis)$,
  \end{enumerate}
  then 
  \begin{enumerate}[align=left]
   \item[(C1)]
    $\pair{\astore_2}{\aheap_2 + \aheap_2'} \models (\bigwedge_{{\avariableter \in \asetter}} \alloc{\overline{\avariableter}} \wedge \size = \card{\asetter}) \separate
    \atranslation(\aformula_2, \asetbis).$
  \end{enumerate}
}
  Let us  prove that
  $\pair{\astore_1}{\aheap_1} \models \aformula_1 \magicwand \aformula_2$, which by definition holds whenever for all heaps
  $\aheap_1'$, if $\aheap_1' \bot \aheap_1$ and $\pair{\astore_1}{\aheap_1'} \models \aformula_1$ then
  $\pair{\astore_1}{\aheap_1+\aheap_1'} \models \aformula_2$.
  Let $\aheap_1'$ be a heap disjoint from $\aheap_1$ satisfying $\pair{\astore_1}{\aheap_1'} \models \aformula_1$.
  Let $\aheap_2' \egdef \aheap_1' + \{\astore_2(\overline{\avariableter}) \pto \astore_1(\avariableter) \mid
  \avariableter \in \asetter\}$.
  By definition of $\rhd^{\aset}_\asetbis$
  together with the assumption that for all $\avariable \in \aset$, 
  $\astore_1(\avariable) = \astore_1(\overline{\avariable})$, we have 
  \[
    \triple{\LOC_1}{\astore_1}{\aheap_1'} \rhd^{\overline{\asetter}}_\asetbis \triple{\LOC_2}{\astore_2}{\aheap_2'}.
  \]
  By the induction hypothesis, we get
  $\pair{\astore_2}{\aheap_2'} \models \atranslation(\aformula_1[\avariable \gets \overline{\avariable} \mid \avariable \in \asetbis],\asetbis)$,
  whereas from
  the definition of $\rhd^{\overline{\asetter}}_\asetbis$, it holds that $\pair{\astore_2}{\aheap_2'}$ satisfies
  \[
  (\bigwedge_{\mathclap{\avariableter \in \asetter}} \alloc{\overline{\avariableter}}) \wedge
  (\bigwedge_{\mathclap{\avariableter \in \aset \setminus \asetter}} \neg \alloc{\overline{\avariableter}})
  \wedge \Safe(\asetbis).
  \]
  Thus, (A2) holds.
  Besides this, $\aheap_2$ can be written as
  $\aheap_1 + \{ \astore_2(\avariablebis) \pto \astore_1(\avariablebis) \mid \avariablebis \in \aset\}$ 
  (condition (3) in~Definition~\ref{def:rhd}).
  Consequently, (A1) holds and the heap $\aheap_2 + \aheap_2'$ can be written as
  \[
    \aheap_1 + \aheap_1' + \{ \astore_2(\avariable) \pto \astore_1(\avariable) \mid \avariable \in \aset\} +
   \{\astore_2(\overline{\avariableter}) \pto \astore_1(\avariableter) \mid \avariableter \in \asetter\}.
  \]
  Therefore, it holds that for all $\avariableter \in \asetter$,
  $(\aheap_2 + \aheap_2')(\astore_2(\avariableter)) = (\aheap_2 + \aheap_2')(\astore_2(\overline{\avariableter}))$ or,
  equivalently, $\pair{\astore_2}{\aheap_2 + \aheap_2'} \models \bigwedge_{\avariableter \in \asetter} n(\avariableter) = n(\overline{\avariableter})$. Indeed, $\asetter \subseteq \aset$, $\avariableter \in \asetter$,
and, $\astore_2(\avariableter)$ and $\astore_2(\overline{\avariableter})$ point to the same location. 
 %% \SDnote{There is something here in my notes, that I cannot decipher but I should do it when I am back.}
  Furthermore, $\pair{\astore_2}{\aheap_2 + \aheap_2'}$ also satisfies $\Safe(\asetbis)$. (A3) holds and thus so does (C1), i.e.
  \[
  \pair{\astore_2}{\aheap_2 + \aheap_2'} \models (\bigwedge_{\avariableter \in \asetter}
   \alloc{\overline{\avariableter}} \wedge \size = \card{\asetter}) \separate
  \atranslation(\aformula_2, \asetbis).
  \]
  Again, $\aheap_2 + \aheap_2'$ can be written as
  $\aheap_1 + \aheap_1' + \{ \astore_2(\avariable) \pto \astore_1(\avariable) \mid \avariable \in \aset\} +
    \{\astore_2(\overline{\avariableter}) \pto \astore_1(\avariableter) \mid \avariableter \in \asetter\}$,
  and the formula $\bigwedge_{\avariableter \in \asetter} \alloc{\overline{\avariableter}} \wedge \size = \card{\asetter}$ can  be only  satisfied
  by  the memory state with the heap $\{\astore_2(\overline{\avariableter}) \pto \astore_1(\avariableter) \mid \avariableter \in
  \asetter\}$. Consequently, $\aheap_2 + \aheap_1' = \aheap_1 + \aheap_1' + \{ \astore_2(\avariable) \pto \astore_1(\avariable) \mid \avariablebis \in \aset\}$
  is such that $\pair{\astore_2}{\aheap_2 + \aheap_1'} \models \atranslation(\aformula_2, \asetbis)$.
  Lastly, since $\triple{\LOC_1}{\astore_1}{\aheap_1 + \aheap_1'} \rhd^{\aset}_\asetbis \triple{\LOC_2}{\astore_2}{\aheap_2 + \aheap_1'}$,
  we can use the induction hypothesis to conclude that $\pair{\astore}{\aheap_1 + \aheap_1'} \models \aformula_2$.
  \qedhere
%% \qed
  \end{enumerate}
\end{proof}

Let $\aformula$ be a formula of $\foseplogic{\magicwand}$ written with variables among 
$\set{\avariable_1,\dots,\avariable_q}$.
We define the translation $\mathcal{T}_{\rm SAT}(\aformula)$ into $\seplogic{\separate, \magicwand, \ls}$
augmented with the new predicates where
$\atranslation(\aformula, \asetbis)$ is defined recursively as before ( $\asetbis = \set{\avariable_1, \ldots, \avariable_{2q}}$).
$$
\mathcal{T}_{\rm SAT}(\aformula) \egdef
(\bigwedge_{\mathclap{i \in \interval{1}{2q}}} \neg \alloc{\avariable_i}) \wedge \Safe(\asetbis) \wedge \atranslation(\aformula, \asetbis).
$$%%
The first two conjuncts specify initial conditions, namely each variable $\avariable$ in $\asetbis$
is interpreted by a location that is unallocated, it is not in the  heap range and it is distinct from the interpretation of
all other variables; in other words, the value for $\avariable$ is isolated.
Similarly, let $\mathcal{T}_{\rm VAL}(\aformula)$ be the formula in $\seplogic{\separate, \magicwand, \ls}$
defined by
$$
\mathcal{T}_{\rm VAL}(\aformula) \egdef
((\bigwedge_{\mathclap{i \in \interval{1}{2q}}} \neg \alloc{\avariable_i}) \wedge \Safe(\asetbis)) \Rightarrow
\atranslation(\aformula, \asetbis).
$$
As a consequence of Lemma~\ref{lemma-correctness-translation},
$\aformula$ and $\mathcal{T}_{\rm SAT}(\aformula)$ are shown equisatisfiable, whereas $\aformula$ and $\mathcal{T}_{\rm VAL}(\aformula)$ are shown equivalid.
\begin{corollary}
\label{corollary-satisfiability-validity}
Let $\aformula$ be a closed formula in $\foseplogic{\magicwand}$ using quantified variables
among $\set{\avariable_1, \ldots, \avariable_q}$.
\begin{center}
  \hfill
  \textbf{(I)}\ \,$\aformula$ and $\mathcal{T}_{\rm SAT}(\aformula)$ are equisatisfiable.
  \hfill 
  \textbf{(II)}\ \,$\aformula$ and $\mathcal{T}_{\rm VAL}(\aformula)$ are equivalid.
  \hfill\,
\end{center}
\end{corollary}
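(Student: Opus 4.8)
The plan is to read off both items from Lemma~\ref{lemma-correctness-translation}, specialised to the empty set of free variables $\asetbis = \emptyset$ (legitimate since $\aformula$ is closed), and then to obtain the validity statement (II) from the satisfiability statement (I) by a purely syntactic duality rather than by re-running the argument. Throughout I rely on the convention, noted after Lemma~\ref{lemma-equivalence-isomorphism}, that satisfiability and validity over generalised memory states coincide with the standard notions, which lets me pass freely between different location sets $\LOC_1$ and $\LOC_2$.

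For (I) I would prove both directions by exhibiting a suitable instance of the encoding relation $\rhd^{\emptyset}_q$ of Definition~\ref{def:rhd}. For the left-to-right direction, assume $\pair{\astore_1}{\aheap_1} \models \aformula$ over some location set $\LOC_1$. I adjoin $2q$ fresh pairwise distinct locations $\alocation_1, \ldots, \alocation_{2q} \notin \LOC_1$, setting $\LOC_2 \egdef \LOC_1 \uplus \set{\alocation_1, \ldots, \alocation_{2q}}$, $\astore_2(\avariable_i) \egdef \alocation_i$ and $\aheap_2 \egdef \aheap_1$ (no store cells are added, since the third clause of Definition~\ref{def:rhd} is empty for $\asetbis = \emptyset$). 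Then $\triple{\LOC_1}{\astore_1}{\aheap_1} \rhd^{\emptyset}_q \triple{\LOC_2}{\astore_2}{\aheap_2}$, so Lemma~\ref{lemma-correctness-translation} yields $\pair{\astore_2}{\aheap_2} \models \atranslation(\aformula,\aset)$; and since each $\alocation_i$ is chosen outside $\domain{\aheap_2}$ and $\range{\aheap_2}$ with all the $\alocation_i$ distinct, the guards $\bigwedge_{i \in \interval{1}{2q}} \neg \alloc{\avariable_i}$ and $\Safe(\aset)$ hold as well, whence $\pair{\astore_2}{\aheap_2} \models \mathcal{T}_{\rm SAT}(\aformula)$. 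For the converse, given $\pair{\astore_2}{\aheap_2} \models \mathcal{T}_{\rm SAT}(\aformula)$, I set $\LOC_1 \egdef \LOC_2 \setminus \set{\astore_2(\avariable_i) \mid i \in \interval{1}{2q}}$ and $\aheap_1 \egdef \aheap_2$, fix any store $\astore_1$ ranging over $\LOC_1$, obtain $\triple{\LOC_1}{\astore_1}{\aheap_1} \rhd^{\emptyset}_q \triple{\LOC_2}{\astore_2}{\aheap_2}$, and apply Lemma~\ref{lemma-correctness-translation} to get $\pair{\astore_1}{\aheap_1} \models \aformula$; as $\aformula$ is closed the choice of $\astore_1$ is immaterial, so $\aformula$ is satisfiable.

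For (II) I would not repeat the construction but observe that $\atranslation$ is homomorphic for the Boolean connectives, so $\atranslation(\neg\aformula,\aset) = \neg\,\atranslation(\aformula,\aset)$; expanding the definitions of $\mathcal{T}_{\rm SAT}$ and $\mathcal{T}_{\rm VAL}$ then gives the identity $\mathcal{T}_{\rm VAL}(\aformula) \equiv \neg\,\mathcal{T}_{\rm SAT}(\neg\aformula)$. Since $\neg\aformula$ is again closed over the same quantified variables, part (I) applies to it, and the chain ``$\aformula$ valid iff $\neg\aformula$ unsatisfiable iff $\mathcal{T}_{\rm SAT}(\neg\aformula)$ unsatisfiable iff $\neg\,\mathcal{T}_{\rm SAT}(\neg\aformula)$ valid'' yields the claim.

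I expect the only delicate point to be the converse direction of (I): checking that the guard $\bigwedge_{i} \neg\alloc{\avariable_i} \wedge \Safe(\aset)$ matches Definition~\ref{def:rhd} for $\asetbis = \emptyset$ exactly. Concretely, $\neg\alloc{\avariable_i}$ forces $\astore_2(\avariable_i) \notin \domain{\aheap_2}$ and the predecessor-free clause of $\Safe(\aset)$ forces $\astore_2(\avariable_i) \notin \range{\aheap_2}$, so removing the $2q$ encoding locations disturbs neither the domain nor the range of $\aheap_2$ and the restriction to $\LOC_1$ really is the original heap; the distinctness clause of $\Safe(\aset)$ supplies the second condition of the relation. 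Everything else is routine bookkeeping over Lemma~\ref{lemma-correctness-translation} and the generalised-memory-state convention.
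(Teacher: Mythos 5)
Your proof is correct. Part (I) follows the paper's own argument almost verbatim: both directions go through an instance of $\rhd^{\emptyset}_q$ and Lemma~\ref{lemma-correctness-translation}, the only cosmetic difference being that you adjoin $2q$ fresh locations to form $\LOC_2$ where the paper shifts all values by $2q+1$ so as to reserve $\set{1,\ldots,2q}$ inside $\LOC = \Nat$; both moves are licensed by Lemma~\ref{lemma-equivalence-isomorphism} and the generalised-memory-state convention, and your closing check that the guard $\bigwedge_i \neg\alloc{\avariable_i} \wedge \Safe(\aset)$ is exactly what makes the restriction to $\LOC_1$ coincide with $\aheap_2$ is the same observation the paper makes when restricting $\aheap$ to $\LOC'$. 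Where you genuinely diverge is part (II): the paper simply re-runs the construction (``Similar to (I)''), whereas you derive it syntactically from (I) via the identity $\mathcal{T}_{\rm VAL}(\aformula) \equiv \neg\,\mathcal{T}_{\rm SAT}(\neg\aformula)$, which holds because $\atranslation$ is homomorphic for Boolean connectives and $\neg\aformula$ is again a closed formula over the same quantified variables, hence the same $\aset$ and $q$. This duality argument is sound and arguably preferable: it avoids duplicating the model-theoretic reasoning and makes explicit why the two translations were defined as a conjunction and an implication with the same guard. The paper's route costs a second (omitted) construction but keeps the two items independent of one another.
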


\begin{proof} (I) First, suppose that $\aformula$ is satisfiable, i.e. there is a memory state $\pair{\astore}{\aheap}$
such that $\pair{\astore}{\aheap} \models \aformula$. It is then easy to define a generalised memory state
$\triple{\LOC \setminus \set{1, \ldots, 2q}}{\astore'}{\aheap'}$ isomorphic
with respect to the empty set to $\pair{\astore}{\aheap}$ (i.e., using the equivalence relation $\approx_{\emptyset}$)
and satisfying $\aformula$ by Lemma~\ref{lemma-equivalence-isomorphism}.
Typically, to define $\pair{\astore'}{\aheap'}$ from $\pair{\astore}{\aheap}$, it is sufficient to shift all the values with an offset
equal to $2q+1$.
Now, let $\triple{\LOC}{\astore''}{\aheap''}$ be the generalised memory state such that $\aheap'' = \aheap'$,
and for all $i \in \interval{1}{2q}$, we have $\astore''(\avariable_i) = i$. One can check that
$\triple{\LOC}{\astore''}{\aheap''} \models
(\bigwedge_{i \in \interval{1}{2q}} \neg \alloc{\avariable_i}) \wedge \Safe(\asetbis)$
with $\asetbis = \set{\avariable_1, \ldots, \avariable_{2q}}$. Moreover,
$\triple{\LOC \setminus \set{1, \ldots, 2q}}{\astore'}{\aheap'} \rhd_\asetbis^{\emptyset} \triple{\LOC}{\astore''}{\aheap''}$.
By Lemma~\ref{lemma-correctness-translation}, we have $\pair{\astore''}{\aheap''} \models  \atranslation(\aformula, \asetbis)$.
So, $\pair{\astore''}{\aheap''} \models \mathcal{T}_{\rm SAT}(\aformula)$.

Conversely, suppose that  $\pair{\astore}{\aheap} \models \mathcal{T}_{\rm SAT}(\aformula)$.
Let $\triple{\LOC'}{\astore'}{\aheap'}$ be the generalised memory state defined as follows:
\begin{enumerate}
\itemsep 0 cm
\item $\LOC' = \LOC \setminus \set{\astore(\avariable) \ \mid \ \avariable \in \asetbis}$,
\item the heap $\aheap'$ is equal to the restriction of $\aheap$ to locations in $\LOC'$.
\end{enumerate}
Since $\pair{\astore}{\aheap} \models \Safe(\asetbis)$ and by construction of
$\triple{\LOC'}{\astore'}{\aheap'}$,
we have
$$
\triple{\LOC'}{\astore'}{\aheap'} \rhd_\asetbis^{\emptyset} \triple{\LOC}{\astore}{\aheap}.
$$
As $\pair{\astore}{\aheap} \models  \atranslation(\aformula, \asetbis)$, by Lemma~\ref{lemma-correctness-translation},
we get $\triple{\LOC'}{\astore'}{\aheap'} \models \aformula$. Now, note that $\LOC' \subseteq \LOC$ and therefore
we also have $\triple{\LOC}{\astore'}{\aheap'} \models \aformula$, where $\triple{\LOC}{\astore'}{\aheap'}$ is understood
as a standard memory state $\pair{\astore'}{\aheap'}$. \\
\noindent
(II) Similar to (I). 
%% \qed
\end{proof}

\subsection{Expressing the auxiliary atomic predicates}
\label{section-expressing-predicates}

To complete the reduction, we explain how to express the formulae $\allocback{\avariable}$,
$n(\avariable) =  n(\avariablebis)$ and $n(\avariable) \hpto  n(\avariablebis)$ within
$\seplogic{\separate, \magicwand, \ls}$.
%(see also the proofs in the technical appendix)
%%
Let us introduce a few macros that shall be helpful.%%
\begin{itemize}

\item Given $\aformula$ in $\seplogic{\separate, \magicwand, \reachplus}$ and $\gamma \geq 0$, we write
$[\aformula]_{\gamma}$ to denote the formula ${(\size = \gamma \land \aformula) \separate \true}$. It is easy to show that
for any memory state $\pair{\astore}{\aheap}$, we have $\pair{\astore}{\aheap} \models [\aformula]_{\gamma}$
iff there is  $\aheap' \sqsubseteq \aheap$ such that $\card{\domain{\aheap'}} = \gamma$ and
$\pair{\astore}{\aheap'} \models \aformula$.
Such formulae $[\aformula]_{\gamma}$ can be used to ensure that the minimum path between two locations interpreted by
program variables is of length $\gamma$.

\item We write $\reach(\avariable,\avariablebis) = \gamma$ to denote the formula
$[\ls(\avariable,\avariablebis)]_{\gamma}$, which is satisfied in any memory state $\pair{\astore}{\aheap}$ where
$\aheap^\gamma(\astore(\avariable)) = \astore(\avariablebis)$ and $\gamma$ is minimal (no cycles allowed).
Lastly, we write $\reach(\avariable,\avariablebis) \leq \gamma$ to denote the formula $\bigvee_{0 \leq \gamma' \leq \gamma} \reach(\avariable,\avariablebis) = \gamma'$.
\end{itemize}
In order to define the existence of a predecessor (i.e. $\allocback{\avariable}$) in  $\seplogic{\separate, \magicwand, \ls}$, we need to take advantage of
an auxiliary variable $\avariablebis$ whose value is different from the one for $\avariable$.
Let $\mathtt{alloc}_{\avariablebis}^{-1}(\avariable)$ be the formula
\[
\avariable \hpto \avariable \lor \avariablebis \hpto \avariable \lor (\true \separate ((\alloc{\avariablebis} \land \lnot
(\avariablebis \hpto \avariable)
\land \size = 1) \lollipop \reach(\avariablebis,\avariable) = 2))
\]

\begin{lemma}
\label{lemma-inverse-alloc}
Let $\avariable,\avariablebis \in \PVAR$.
\begin{enumerate}[label=\normalfont{\textbf{(\Roman*)}}]
\item For all memory states $\pair{\astore}{\aheap}$, if $\astore(\avariable) \neq \astore(\avariablebis)$,
then $\pair{\astore}{\aheap} \models \mathtt{alloc}_{\avariablebis}^{-1}(\avariable)$ iff $\astore(\avariable) \in \range{\aheap}$.
\item In the translation $\atranslation(\aformula, \asetbis)$,
$\mathtt{alloc}^{-1}(\avariable)$ (with $\avariable \in \asetbis$) can be replaced with $\mathtt{alloc}_{\overline{\avariable}}^{-1}(\avariable)$.
\end{enumerate}
\end{lemma}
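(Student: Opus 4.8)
The plan is to unfold the three disjuncts of $\mathtt{alloc}_{\avariablebis}^{-1}(\avariable)$ and argue that, under the standing hypothesis $\astore(\avariable) \neq \astore(\avariablebis)$, they jointly express that $\astore(\avariable)$ has a predecessor. The first two disjuncts $\avariable \hpto \avariable$ and $\avariablebis \hpto \avariable$ plainly cover the two degenerate predecessors $\astore(\avariable)$ (a self-loop) and $\astore(\avariablebis)$, so all the content lies in the third disjunct $[\psi]_1$, where $\psi \egdef (\alloc{\avariablebis} \land \lnot(\avariablebis \hpto \avariable) \land \size = 1) \septraction (\reach(\avariablebis,\avariable) = 2)$. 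The key step I would isolate first is the semantic reading of this disjunct: $\pair{\astore}{\aheap} \models [\psi]_1$ holds if and only if $\aheap$ contains a cell $d \mapsto \astore(\avariable)$ with $d \notin \set{\astore(\avariable), \astore(\avariablebis)}$. Granting this reading, part (I) follows since the three disjuncts then account for predecessors equal to $\astore(\avariable)$, equal to $\astore(\avariablebis)$, and distinct from both, i.e.\ exactly the condition $\astore(\avariable) \in \range{\aheap}$.

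For the direction ``$\astore(\avariable) \in \range{\aheap}$ implies satisfaction'', I would fix a predecessor $d \in \domain{\aheap}$ with $\aheap(d) = \astore(\avariable)$ and split on its identity: if $d = \astore(\avariable)$ the first disjunct holds, if $d = \astore(\avariablebis)$ the second holds, and otherwise I would exhibit witnesses for $[\psi]_1$, namely the singleton subheap $\aheap'' = \set{d \mapsto \astore(\avariable)} \sqsubseteq \aheap$ together with the extension $\aheap_1 = \set{\astore(\avariablebis) \mapsto d}$. Here I would check that $\aheap_1 \perp \aheap''$ (as $d \neq \astore(\avariablebis)$), that $\pair{\astore}{\aheap_1} \models \alloc{\avariablebis} \land \lnot(\avariablebis \hpto \avariable) \land \size = 1$ (the single cell sits at $\astore(\avariablebis)$ and points to $d \neq \astore(\avariable)$), and that $\aheap'' + \aheap_1 = \set{\astore(\avariablebis) \mapsto d, d \mapsto \astore(\avariable)}$ is a simple list segment of length $2$ from $\astore(\avariablebis)$ to $\astore(\avariable)$, so that $\reach(\avariablebis,\avariable) = 2$ holds. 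Precisely at this last point the hypothesis $\astore(\avariable) \neq \astore(\avariablebis)$ is needed, to guarantee that the three nodes $\astore(\avariablebis), d, \astore(\avariable)$ are pairwise distinct as $\ls$ demands.

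The converse direction of (I) is the main obstacle. After the trivial first two cases, I would start from $\pair{\astore}{\aheap} \models [\psi]_1$ and extract a singleton subheap $\aheap'' = \set{d \mapsto e} \sqsubseteq \aheap$ and an extension $\aheap_1 = \set{\astore(\avariablebis) \mapsto c}$ with $c \neq \astore(\avariable)$ and $d \neq \astore(\avariablebis)$, such that the two-cell heap $\aheap'' + \aheap_1$ satisfies $\reach(\avariablebis,\avariable) = 2$. Since this heap has exactly two cells, the witnessing list segment must be the whole heap; analysing the length-$2$ path out of $\astore(\avariablebis)$ and using the distinctness carried by $\ls$ inside $[\,\cdot\,]_2$, I would rule out $c = \astore(\avariablebis)$ (which would force a non-simple loop) and conclude $c = d$ and $e = \astore(\avariable)$. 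Hence $\aheap(d) = e = \astore(\avariable)$, so $\astore(\avariable) \in \range{\aheap}$. The delicate bookkeeping of disjointness together with the distinctness conditions imposed by $\ls$ is exactly where care is required.

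Finally, for (II) I would observe that $\mathtt{alloc}^{-1}$ occurs in the translation only inside $\Safe(\aset)$, and that $\Safe(\aset)$ has as a conjunct the pairwise inequality asserting $\astore(\avariable) \neq \astore(\avariablebis)$ for all distinct $\avariable,\avariablebis \in \aset$. Since $\overline{(.)}$ is a fixed-point-free involution on $\aset$, any memory state satisfying this conjunct has $\astore(\avariable) \neq \astore(\overline{\avariable})$ for every $\avariable \in \aset$. Therefore, if the inequality conjunct fails both versions of $\Safe(\aset)$ are false, whereas if it holds, part (I) applied with $\avariablebis = \overline{\avariable}$ shows that $\mathtt{alloc}^{-1}(\avariable)$ and $\mathtt{alloc}_{\overline{\avariable}}^{-1}(\avariable)$ agree on the current state. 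In either case replacing $\mathtt{alloc}^{-1}(\avariable)$ by $\mathtt{alloc}_{\overline{\avariable}}^{-1}(\avariable)$ leaves $\Safe(\aset)$, and hence the whole translation, equivalent, which establishes (II).
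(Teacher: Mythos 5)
Your proof is correct and follows essentially the same route as the paper: the forward direction uses the identical witnesses $\set{d \mapsto \astore(\avariable)}$ and $\set{\astore(\avariablebis) \mapsto d}$, and the converse likewise unfolds the septraction and the $\size = 1$ constraints, though you derive the forced shape directly from the $\ls$ semantics of the two-cell heap where the paper enumerates the candidate one-cell subheaps by exclusion. Your treatment of (II) is in fact slightly more careful than the paper's, since you explicitly handle the case where the inequality conjunct of $\Safe(\aset)$ fails, making both versions of the formula false simultaneously.
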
%%
As stated in Lemma~\ref{lemma-inverse-alloc}(II), we can exploit the fact that in the translation of a formula with variables in
$\set{\avariable_1,\dots,\avariable_q}$, we use $2q$ variables that correspond to $2q$ distinguished locations in the heap in order
to retain the soundness of the translation while using $\allocbacktwo{\overline{\avariable}}{\avariable}$ as $\allocback{\avariable}$.

\begin{proof}
  For (I), ($\Rightarrow$):
  Suppose that $\pair{\astore}{\aheap} \models \allocbackaux{\avariable}{\avariablebis}$.
  If $\pair{\astore}{\aheap}$ satisfies either $\avariable \hpto \avariable$ or $\avariablebis \hpto \avariable$, then obviously
  $\astore(\avariable) \in \range{\aheap}$.
  Otherwise, $\pair{\astore}{\aheap}$ must satisfy the third conjunct of $\allocbackaux{\avariable}{\avariablebis}$:
  \begin{center}
    $\pair{\astore}{\aheap} \models \true \separate ((\alloc{\avariablebis} \land \sizeeq{1}) \septraction \reach(\avariablebis,\avariable) = 2)$.
  \end{center}
  \noindent In this case,
   ${\pair{\astore}{\aheap'} \models (\alloc{\avariablebis} \land \sizeeq{1}) \septraction \reach(\avariablebis,\avariable) = 2}$
   holds for some heap $\aheap' \sqsubseteq
   \aheap$.
  From the semantics of the septraction operator, there is a heap $\aheap''$ disjoint from $\aheap'$ and such that
  \begin{center}
    \hfill
    (1) $\astore(\avariablebis) \in \domain{\aheap''}$,
    \hfill
    (2) $\card{\aheap''} = 1$,
    \hfill
    (3) $\pair{\astore}{\aheap' + \aheap''} \models \reach(\avariablebis,\avariable) = 2$.
    \hfill\,
  \end{center}
  \noindent From (1) and $\aheap'' \perp \aheap'$, we conclude that $\astore(\avariablebis) \not \in \domain{\aheap'}$.
  Now,
  (3) implies that there is a location~$\alocation$ such that
  $\set{\astore(\avariablebis) \pto \alocation \pto \astore(\avariable)} \sqsubseteq \aheap' + \aheap''$, and moreover
  $\alocation$ must be distinct from both $\astore(\avariable)$ and $\astore(\avariablebis)$ (which are also assumed to be distinct).
  From~(1) and (2), it must be that $\aheap'' = \{\astore(\avariablebis) \pto{\alocation}\}$ and therefore
  $\{{\alocation}\pto{\astore(\avariable)}\} \sqsubseteq \aheap'$.
  From $\aheap' \sqsubseteq \aheap$ we then conclude that $\astore(\avariable) \in \range{\aheap}$.

  ($\Leftarrow$):
  Suppose there is a location $\alocation \in \domain{\aheap}$ such that~$\aheap(\alocation) = \astore(\avariable)$ (i.e.~$\astore(\avariable) \in \range{\aheap}$).
  First, suppose that~$\alocation = \astore(\avariable)$ or~$\alocation = \astore(\avariablebis)$.
  In this case, we directly derive that
  $\pair{\astore}{\aheap} \models \avariable \hpto \avariable \lor \avariablebis \hpto \avariable$,
  which in turn shows that $\pair{\astore}{\aheap} \models \allocbackaux{\avariable}{\avariablebis}$.
  Otherwise, consider the case where~$\alocation \neq \astore(\avariable)$ and~$\alocation \neq \astore(\avariablebis)$.
  Let $\aheap' \sqsubseteq \aheap$ be the heap~$\set{{\alocation} \pto {\astore(\avariable)}}$.
  As $\alocation \neq \astore(\avariablebis)$, the location $\astore(\avariablebis)$ is not a memory cell of $\aheap'$.
  Let us consider the heap $\aheap'' = \set{{\astore(\avariablebis)} \pto {\alocation}}$, so that $\pair{\astore}{\aheap''} \models \alloc{\avariablebis} \land \sizeeq{1}$.
  The heaps $\aheap'$ and $\aheap''$ are disjoint, and from their definition we have $\aheap' + \aheap'' = \set{\astore(\avariablebis) \pto \alocation \pto \astore(\avariable)}$.
  As~$\astore(\avariable)$,~$\alocation$ and~$\astore(\avariablebis)$ are all distinct locations,~${\pair{\astore}{\aheap' + \aheap''} \models \reach(\avariablebis,\avariable) = 2}$ holds, which in turn allows us to conclude
  that~$\pair{\astore}{\aheap'} \models (\alloc{\avariablebis} \land \sizeeq{1}) \septraction \reach(\avariablebis,\avariable) = 2$.
  As~$\aheap' \sqsubseteq \aheap$, this implies that
  $\pair{\astore}{\aheap} \models \true \separate ((\alloc{\avariablebis} \land \sizeeq{1}) \septraction \reach(\avariablebis,\avariable) = 2)$,
  which implies $\pair{\astore}{\aheap} \models \allocbackaux{\avariable}{\avariablebis}$.

  For (II), since by definition, $\avariable$ and $\overline{\avariable}$ are interpreted by different locations, they satisfy
  the additional hypothesis $\astore(\avariable) \not= \astore(\avariablebis)$ (where $\avariablebis = \overline{\avariable}$). Therefore,  we can  use one of these two variables to check if the other is in $\range{\aheap}$. As such, instead of $\mathtt{alloc}^{-1}(\avariable)$ we can use $\mathtt{alloc}_{\overline{\avariable}}^{-1}(\avariable)$ in the translation.
\end{proof}

Moreover, $\allocbacktwo{\avariablebis}{\avariable}$ allows us to express in $\seplogic{\separate, \magicwand, \ls}$ whether 
a location corresponding to a program variable reaches itself in exactly two steps (we use this property in the definition of 
$n(\avariable) \hpto n(\avariablebis)$).
We write $\avariable \hpto_\avariablebis^2 \avariable$ to denote the formula
$$\lnot (\avariable \hpto \avariable) \land ({\avariable \hpto \avariablebis} \Rightarrow \avariablebis \hpto \avariable) \land
[\alloc{\avariable} \land \allocbacktwo{\avariablebis}{\avariable} \land (\top \magicwand \lnot \reach(\avariable,\avariablebis) = 2)]_2.
$$

\begin{lemma}
\label{lemma-two-steps}
For any memory state $\pair{\astore}{\aheap}$ such that $\astore(\avariable) \neq \astore(\avariablebis)$,
we have $\pair{\astore}{\aheap} \models \avariable \hpto_\avariablebis^2 \avariable$ if and only if
$\aheap^2(\astore(\avariable)) = \astore(\avariable)$ and $\aheap(\astore(\avariable)) \neq \astore(\avariable)$.
\end{lemma}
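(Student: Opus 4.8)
The plan is to prove the two directions of the biconditional separately, abbreviating $u \egdef \astore(\avariable)$ and $v \egdef \astore(\avariablebis)$, so that $u \neq v$ by hypothesis. I would first record the reading of the three conjuncts of $\avariable \hpto_\avariablebis^2 \avariable$: the conjunct $\lnot(\avariable \hpto \avariable)$ forbids a self-loop on $u$; by Lemma~\ref{lemma-inverse-alloc}, and using $u \neq v$, the subformula $\allocbacktwo{\avariablebis}{\avariable}$ expresses $u \in \range{\cdot}$; and, writing $\psi \egdef \alloc{\avariable} \land \allocbacktwo{\avariablebis}{\avariable} \land (\top \magicwand \lnot \reach(\avariable,\avariablebis) = 2)$, the last conjunct $[\psi]_2$ asserts the existence of a two-cell subheap $\aheap' \sqsubseteq \aheap$ in which $u$ is both allocated and in the range, and such that no disjoint extension of $\aheap'$ creates a length-two simple path from $u$ to $v$ (recall $\reach(\avariable,\avariablebis) = 2 = [\ls(\avariable,\avariablebis)]_2$).

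For the right-to-left direction, assume $\aheap^2(u) = u$ and $\aheap(u) \neq u$, and set $w \egdef \aheap(u)$, so that $w \neq u$, both $u, w \in \domain{\aheap}$, and $\aheap(w) = u$. The conjunct $\lnot(\avariable \hpto \avariable)$ is immediate. For the biconditional I would split on whether $w = v$: if $w = v$ then $\aheap(u) = v$ and $\aheap(v) = \aheap(w) = u$, so both $\avariable \hpto \avariablebis$ and $\avariablebis \hpto \avariable$ hold; if $w \neq v$ then $\avariable \hpto \avariablebis$ fails, and one checks that $\avariablebis \hpto \avariable$ fails as well (i.e. $\aheap(v) \neq u$). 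Finally, for $[\psi]_2$ I exhibit the witness $\aheap' \egdef \{u \mapsto w, w \mapsto u\} \sqsubseteq \aheap$: it has exactly two cells, $u \in \domain{\aheap'} \cap \range{\aheap'}$, and since $\aheap'(u) = w$ and $\aheap'(w) = u \neq v$ are frozen, every extension $\aheap' + \aheap_1$ maps $u$ to $w$ and $w$ to $u \neq v$, so $\reach(\avariable,\avariablebis) = 2$ can never hold; hence $\pair{\astore}{\aheap'} \models \psi$.

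For the left-to-right direction, assume $\avariable \hpto_\avariablebis^2 \avariable$ holds and fix a two-cell $\aheap' \sqsubseteq \aheap$ with $\pair{\astore}{\aheap'} \models \psi$. From $\alloc{\avariable}$ we get $u \in \domain{\aheap}$, and $\lnot(\avariable \hpto \avariable)$ gives $y \egdef \aheap(u) \neq u$; since $\allocbacktwo{\avariablebis}{\avariable}$ forces $u \in \range{\aheap'}$ and the self-loop $u \mapsto u$ is excluded, the two cells of $\aheap'$ are necessarily $u \mapsto y$ and $p \mapsto u$ with $p \neq u$. The heart of the argument is then the clause $\top \magicwand \lnot \reach(\avariable,\avariablebis) = 2$: every extension sends $u$ to $y$, so if $y \notin \{u,v\}$ and $y \notin \domain{\aheap'} = \{u,p\}$ one could adjoin the fresh cell $y \mapsto v$ and obtain the length-two simple path $u \to y \to v$ (three distinct locations), contradicting the clause. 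Hence either $y = p$, in which case $\aheap(u) = y = p$ and $\aheap(p) = u$ give $\aheap^2(u) = u$ with $\aheap(u) \neq u$; or $y = v$, in which case $\aheap(u) = v$ and the biconditional $\avariable \hpto \avariablebis \Leftrightarrow \avariablebis \hpto \avariable$ forces $\aheap(v) = u$, so again $\aheap^2(u) = \aheap(v) = u$ with $\aheap(u) = v \neq u$. Either way the target holds.

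I expect the main obstacle to be the extension analysis of $\top \magicwand \lnot \reach(\avariable,\avariablebis) = 2$ in the forward direction: one must argue precisely that the only configurations of the two-cell subheap immune to \emph{every} disjoint extension producing $u \to \aheap(u) \to v$ are those in which $\aheap(u)$ is already trapped inside $\aheap'$ (the case $y = p$, yielding the $2$-cycle directly) or $\aheap(u) = v$ (the case where the biconditional must be invoked). Making this rigorous requires checking, on one side, that the candidate extension $\{y \mapsto v\}$ really realises $[\ls(\avariable,\avariablebis)]_2$ --- i.e. that $u$, $y$, $v$ are pairwise distinct so that the two selected cells form a genuine acyclic list of length two --- and, on the other side, that the $2$-cycle witness admits no such extension because both of its cells are frozen. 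The delicate point in the converse direction is the second case of the biconditional, where one relies on $\astore(\avariablebis)$ not pointing back to $\astore(\avariable)$; the appeal to Lemma~\ref{lemma-inverse-alloc} and the remaining bookkeeping are routine.
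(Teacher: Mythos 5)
Your argument mirrors the paper's own proof step for step: the same unpacking of the three conjuncts via Lemma~\ref{lemma-inverse-alloc}, the same two-cell witness $\set{\astore(\avariable) \mapsto \alocation,\ \alocation \mapsto \astore(\avariable)}$ in the right-to-left direction, and the same exploitation of $\top \magicwand \lnot \reach(\avariable,\avariablebis) = 2$ in the converse. Your left-to-right direction is a slightly cleaner packaging of the paper's \emph{ad absurdum} argument (a direct case split on whether $\aheap(\astore(\avariable))$ equals the predecessor $p$ inside the two-cell subheap, equals $\astore(\avariablebis)$, or is fresh, in place of the paper's enumeration of pictures), and it is complete.

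There is, however, one step that does not go through, namely the parenthetical ``one checks that $\avariablebis \hpto \avariable$ fails as well (i.e.\ $\aheap(\astore(\avariablebis)) \neq \astore(\avariable)$)'' in the sub-case $\aheap(\astore(\avariable)) \neq \astore(\avariablebis)$ of the right-to-left direction. Nothing in the hypotheses rules out $\astore(\avariablebis)$ being a \emph{second} predecessor of $\astore(\avariable)$: the heap is functional but not injective. Concretely, with $u = \astore(\avariable)$, $v = \astore(\avariablebis)$ and $u, v, \alocation$ pairwise distinct, the heap $\aheap = \set{u \mapsto \alocation,\ \alocation \mapsto u,\ v \mapsto u}$ satisfies $\aheap^2(u) = u$ and $\aheap(u) \neq u$, yet $\avariable \hpto \avariablebis$ is false while $\avariablebis \hpto \avariable$ is true, so the biconditional conjunct of $\avariable \hpto_{\avariablebis}^2 \avariable$ fails. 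To be fair, the paper's proof asserts exactly the same unjustified claim at exactly the same point (its Case~1 of that direction states $\pair{\astore}{\aheap} \models \neg (\avariable \hpto \avariablebis) \wedge \neg (\avariablebis \hpto \avariable)$ with no argument for the second conjunct), so this is a defect you have inherited rather than introduced; repairing it requires either an additional hypothesis on $\astore(\avariablebis)$ or an appeal to the restricted contexts in which the translation actually uses $\avariable \hpto_{\avariablebis}^2 \avariable$.
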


\begin{proof} So, let $\pair{\astore}{\aheap}$ be a memory state such that $\astore(\avariable)$ is distinct from
$\astore(\avariablebis)$.

First, we assume that $\aheap^2(\astore(\avariable)) = \astore(\avariable)$ and 
$\aheap(\astore(\avariable)) \neq \astore(\avariable)$. So, there is a location $\alocation$ distinct from
$\astore(\avariable)$ such that $\aheap(\astore(\avariable)) = \alocation$ and $\aheap(\alocation) = \astore(\avariable)$.
Obviously, $\pair{\astore}{\aheap} \models \lnot (\avariable \hpto \avariable)$ as  
$\alocation$ is distinct from $\astore(\avariable)$. Below, we distinguish two cases.
\begin{description}
\item[Case 1: $\alocation \neq \astore(\avariablebis)$.] So, 
$\pair{\astore}{\aheap} \models  \neg \avariable \hpto \avariablebis$
%% $ \wedge \neg \avariablebis \hpto \avariable$
and therefore $\pair{\astore}{\aheap} \models {\avariable \hpto \avariablebis} \Rightarrow \avariablebis \hpto \avariable$.
Let $\aheap'$ be the subheap of $\aheap$ with $\domain{\aheap'} = \set{\astore(\avariable), \alocation}$. 
We have $\pair{\astore}{\aheap'} \models \alloc{\avariable}$ and $\pair{\astore}{\aheap'} \models \allocbacktwo{\avariablebis}{\avariable}$
by Lemma~\ref{lemma-inverse-alloc}(I). Moreover, for all heaps $\aheap''$ such that $\aheap' \sqsubseteq \aheap''$, we
have  $\pair{\astore}{\aheap''} \not \models \reach(\avariable,\avariablebis) = 2$, as $\alocation \in \domain{\aheap'}$ and 
$\aheap' \sqsubseteq \aheap''$. 

\item[Case 2: $\alocation = \astore(\avariablebis)$.] 
So, 
$\pair{\astore}{\aheap}  \models \avariable \hpto \avariablebis \wedge \avariablebis \hpto \avariable$
and therefore $\pair{\astore}{\aheap} \models {\avariable \hpto \avariablebis} \Rightarrow \avariablebis \hpto \avariable$.
Let $\aheap'$ be the subheap of $\aheap$ with $\domain{\aheap'} = \set{\astore(\avariable), \alocation}$. 
We have $\pair{\astore}{\aheap} \models \alloc{\avariable}$ and $\pair{\astore}{\aheap} \models \allocbacktwo{\avariablebis}{\avariable}$
by Lemma~\ref{lemma-inverse-alloc}(I). Moreover, for all heaps $\aheap''$ such that $\aheap' \sqsubseteq \aheap''$, we
have  $\pair{\astore}{\aheap''} \not \models \reach(\avariable,\avariablebis) = 2$ as
$\pair{\astore}{\aheap''} \models \avariable \hpto \avariablebis$. 
\end{description}

So, in both cases,  $\pair{\astore}{\aheap}  \models  
[\alloc{\avariable} \land \allocbacktwo{\avariablebis}{\avariable} \land (\top \magicwand \lnot \reach(\avariable,\avariablebis) = 2)]_2$
and therefore $\pair{\astore}{\aheap}  \models \avariable \hpto_\avariablebis^2 \avariable$. 

Conversely, assume that $\pair{\astore}{\aheap}  \models \avariable \hpto_\avariablebis^2 \avariable$ and let us show that $\astore(\avariable)$
can reach itself in two steps but not in one step.

\begin{description}
\item[Case 1: $\pair{\astore}{\aheap}  \models \avariable \hpto \avariablebis \wedge \avariablebis \hpto \avariable$.]
Let $\alocation = \astore(\avariablebis)$. So, $\aheap(\astore(\avariable)) = \alocation$, $\aheap(\alocation) = \astore(\avariable)$,
and $\alocation \neq \astore(\avariable)$. Hence, we are done. 

\item[Case 2: $\pair{\astore}{\aheap}  \models \neg \avariable \hpto \avariablebis$.]
%% $ \wedge \neg \avariablebis \hpto \avariable$.]
First, note that the case $ \avariable \hpto \avariablebis \wedge \neg (\avariablebis \hpto \avariable)$ 
is rule out because the memory state satisfies 
${\avariable \hpto \avariablebis} \Rightarrow \avariablebis \hpto \avariable$. 
As $\pair{\astore}{\aheap}  \models  
[\alloc{\avariable} \land \allocbacktwo{\avariablebis}{\avariable} \land (\top \magicwand \lnot \reach(\avariable,\avariablebis) = 2)]_2$,
we conclude that $\pair{\astore}{\aheap} \models \alloc{\avariable}$ and therefore $\astore(\avariable) \in \domain{\aheap}$. 
Similarly, we can conclude that $\pair{\astore}{\aheap} \models \allocbacktwo{\avariablebis}{\avariable}$. 
Let $\alocation = \aheap(\astore(\avariable))$ and by assumption $\alocation$ is distinct from 
$\astore(\avariablebis)$ and $\alocation$ is different from $\astore(\avariable)$, whence $\aheap(\astore(\avariable)) \neq \astore(\avariable)$. 
{\em Ad absurdum}, suppose that either $\alocation \not \in \domain{\aheap}$ or $\aheap(\alocation) \neq
\astore(\avariable)$. So, there is a location $\alocation'$ distinct from $\alocation$ such that 
$\aheap(\alocation') = \astore(\avariable)$ (indeed, we have $\pair{\astore}{\aheap} \models \allocbacktwo{\avariablebis}{\avariable}$
and then use Lemma~\ref{lemma-inverse-alloc}(I)).
As $\pair{\astore}{\aheap}  \models  
[\alloc{\avariable} \land \allocbacktwo{\avariablebis}{\avariable} \land (\top \magicwand \lnot \reach(\avariable,\avariablebis) = 2)]_2$,
the only heap $\aheap'$ with two memory cells satisfying 
$\pair{\astore}{\aheap'}  \models  
\alloc{\avariable} \land \allocbacktwo{\avariablebis}{\avariable}$ is the one with
$\domain{\aheap'} = \set{\astore(\avariable), \alocation'}$ and therefore $\alocation \not \in 
\domain{\aheap'}$, which is important at this point.
Let $\aheap''$ be any heap disjoint from $\aheap'$ such that $\aheap''(\alocation) = \astore(\avariablebis)$. 
We know such a heap exists as $\alocation  \not \in \domain{\aheap'}$. 
As $\astore(\avariable)$, $\alocation$ and $\astore(\avariablebis)$ are pairwise distinct, we get that 
$\pair{\astore}{\aheap''} \models \reach(\avariable,\avariablebis) = 2$ and therefore,
$\pair{\astore}{\aheap'} \not \models (\top \magicwand \lnot \reach(\avariable,\avariablebis) = 2)$, which leads to a contradiction.
Consequently, $\alocation \in \domain{\aheap}$ and $\aheap(\alocation) = \astore(\avariable)$, so $\aheap^2(\astore(\avariable)) = \astore(\avariable)$.
%% \SDnote{Confusing, See Remark Ref 1 + see the proof from Ref. 1}
\qedhere
\end{description}
\end{proof}

The predicate $n(\avariable) = n(\avariablebis)$ can be defined in $\seplogic{\separate, \magicwand, \ls}$ as
the formula $\mathsf{ext}(n(\avariable) = n(\avariablebis))$ ('ext' stands for 'extension')
\begin{gather*}
(\avariable \neq \avariablebis \Rightarrow [\alloc{\avariable} \land \alloc{\avariablebis} \land ((\avariable \hpto \avariablebis \land \avariablebis \hpto \avariablebis)
\lor
 (\avariablebis \hpto \avariable \land \avariable \hpto \avariable) \lor\\
((\bigwedge_{\mathclap{\avariableter,\avariableter' \in \{\avariable,\avariablebis\}}} \lnot (\avariableter \hpto \avariableter')) \land (\true \sepimp \lnot(\reach(\avariable,\avariablebis)=2 \land \reach(\avariablebis,\avariable)=2))))]_2) \land \alloc{\avariable}
\end{gather*}
%%
%%And the following lemma holds.
%%
\begin{lemma}
\label{lemma-next-equality}
Let $\avariable,\avariablebis \in \PVAR$. For all memory states $\pair{\astore}{\aheap}$,
we have $\pair{\astore}{\aheap} \models \mathsf{ext}(n(\avariable) = n(\avariablebis))$ iff $\aheap(\astore(\avariable)) = \aheap(\astore(\avariablebis))$.
\end{lemma}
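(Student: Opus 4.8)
The plan is to unfold the definition of $n(\avariable)=n(\avariablebis)$ and to split on whether $\astore(\avariable)$ and $\astore(\avariablebis)$ coincide; write $u \egdef \astore(\avariable)$ and $w \egdef \astore(\avariablebis)$. The trailing conjunct $\alloc{\avariable}$ forces $u \in \domain{\aheap}$ in any model, hence makes $\aheap(u)$ defined. When $u = w$ the guard $\avariable \neq \avariablebis$ is false, the implication is vacuously true, and the formula collapses to $\alloc{\avariable}$; it therefore holds iff $u \in \domain{\aheap}$, which is exactly the condition $\aheap(u) = \aheap(w)$ in this case. So the work lies in the case $u \neq w$, where the formula reduces to $[\alloc{\avariable} \land \alloc{\avariablebis} \land (C_1 \lor C_2 \lor C_3)]_2$, with $C_1, C_2, C_3$ the three displayed disjuncts.

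First I would note that, because $u \neq w$, requiring $\alloc{\avariable} \land \alloc{\avariablebis}$ inside a size-$2$ subheap pins that subheap uniquely: by the semantics of $[\cdot]_{\gamma}$ recalled above, $[\alloc{\avariable} \land \alloc{\avariablebis} \land \psi]_2$ holds iff $u, w \in \domain{\aheap}$ and the two-cell restriction $\aheap_0 \egdef \set{u \mapsto \aheap(u), w \mapsto \aheap(w)}$ satisfies $\psi$. It then suffices to characterise when $\pair{\astore}{\aheap_0} \models C_1 \lor C_2 \lor C_3$, and here the three disjuncts partition according to the common successor: the disjunct $C_1$, namely $\avariable \hpto \avariablebis \land \avariablebis \hpto \avariablebis$, holds on $\aheap_0$ exactly when $\aheap(u) = \aheap(w) = w$; $C_2$ symmetrically when $\aheap(u) = \aheap(w) = u$; and $C_3$, whose first conjunct forbids either cell from pointing into $\set{u, w}$, can hold only when $\aheap(u), \aheap(w) \notin \set{u, w}$. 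Granting the claim that on $\aheap_0$ the disjunct $C_3$ is equivalent to $\aheap(u) = \aheap(w) \notin \set{u, w}$, the union of the three cases is precisely $\aheap(u) = \aheap(w)$, which combined with $u, w \in \domain{\aheap}$ yields the right-hand side.

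The crux, and the step I expect to be the main obstacle, is the analysis of the subformula $\true \magicwand \lnot(\reach(\avariable,\avariablebis) = 2 \land \reach(\avariablebis,\avariable) = 2)$ of $C_3$, evaluated on $\aheap_0$ with $c \egdef \aheap(u)$ and $d \egdef \aheap(w)$ both outside $\set{u, w}$. By the magic-wand semantics it holds iff \emph{no} disjoint extension of $\aheap_0$ validates both $\reach(\avariable,\avariablebis) = 2$ and $\reach(\avariablebis,\avariable) = 2$. Reading $\reach(\cdot,\cdot) = 2$ as the existence of a length-$2$ simple path (Section~\ref{section-expressing-predicates}) and using that $\aheap_0(u) = c$ and $\aheap_0(w) = d$ are already fixed, the only way to realise $\reach(\avariable,\avariablebis) = 2$ in an extension is to add the cell $c \mapsto w$ (the path $u \to c \to w$ has pairwise distinct nodes), and symmetrically $\reach(\avariablebis,\avariable) = 2$ forces adding $d \mapsto u$. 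If $c = d$ the two demands conflict, since $c$ would have to point both to $w$ and to $u$, so no extension satisfies both conjuncts and the magic wand is true; if $c \neq d$ the heap $\aheap_0 + \set{c \mapsto w, d \mapsto u}$ is well defined and satisfies both, so the magic wand is false. Thus $C_3$ holds on $\aheap_0$ iff $c = d$, which is the claim used above. It then remains only to read the equivalences in both directions and recombine the cases $u = w$ and $u \neq w$.
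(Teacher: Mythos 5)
Your proof is correct and follows essentially the same route as the paper's: pin down the unique two-cell subheap $\{\astore(\avariable)\mapsto\aheap(\astore(\avariable)),\astore(\avariablebis)\mapsto\aheap(\astore(\avariablebis))\}$, let the three disjuncts cover the cases where the common successor is $\astore(\avariablebis)$, $\astore(\avariable)$, or external, and use the $\true\magicwand\lnot(\reach(\avariable,\avariablebis)=2\land\reach(\avariablebis,\avariable)=2)$ conjunct to separate a shared external successor from two distinct ones. Your symmetric ``any witnessing extension must add $c\mapsto\astore(\avariablebis)$ and $d\mapsto\astore(\avariable)$, which conflict iff $c=d$'' phrasing is a slightly tidier packaging of the same argument the paper carries out by enumerating heap shapes in both directions.
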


\begin{proof}
%% SD 17/07/2018
%%
%% The reminder is not needed in the long version.
%%
\cut{
We recall the definition of $n(\avariable) = n(\avariablebis)$ in $\seplogic{\separate, \magicwand, \ls}$:
\begin{gather*}
(\avariable \neq \avariablebis \Rightarrow [\alloc{\avariable} \land \alloc{\avariablebis} \land ((\avariable \hpto \avariablebis \land \avariablebis \hpto \avariablebis)
\lor
 (\avariablebis \hpto \avariable \land \avariable \hpto \avariable) \lor\\
((\bigwedge_{\mathclap{\avariableter,\avariableter' \in \{\avariable,\avariablebis\}}} \lnot \avariableter \hpto \avariableter') \land (\true \sepimp \lnot(\reach(\avariable,\avariablebis)=2 \land \reach(\avariablebis,\avariable)=2))))]_2) \land \alloc{\avariable}
\end{gather*}
}
First, suppose $\aheap(\astore(\avariable)) = \aheap(\astore(\avariablebis))$. Then obviously $(\astore,\aheap) \models \alloc{\avariable} \land \alloc{\avariablebis}$. Suppose $\astore(\avariable) \neq \astore(\avariablebis)$.
We need to show that $\pair{\astore}{\aheap}$ satisfies the formula
\begin{gather*}
[\alloc{\avariable} \land \alloc{\avariablebis} \land ((\avariable \hpto \avariablebis \land \avariablebis \hpto \avariablebis)
\lor
 (\avariablebis \hpto \avariable \land \avariable \hpto \avariable) \lor\\
((\bigwedge_{\mathclap{\avariableter,\avariableter' \in \{\avariable,\avariablebis\}}} \lnot \avariableter \hpto \avariableter') \land (\true \sepimp \lnot(\reach(\avariable,\avariablebis)=2 \land \reach(\avariablebis,\avariable)=2))))]_2
\end{gather*}
Let $\aheap' \sqsubseteq \aheap$ be the two-memory-cells heap such that $\aheap'(\astore(\avariable)) = \aheap'(\astore(\avariablebis))$. In particular,
$\domain{\aheap'} = \set{\astore(\avariable),\astore(\avariablebis)}$
%% the two memory cells of $\aheap'$ must necessary be $\astore(\avariable)$ and $\astore(\avariablebis)$
and therefore $\pair{\astore}{\aheap'} \models \alloc{\avariable} \land \alloc{\avariablebis}$.
Moreover, $\aheap'$ is represented by one of the following memory states.

\begin{center}
  \hfill
  \begin{tikzpicture}[baseline]
    \node[dot,label=above:$\avariable$] (l1) at (0,0) {};
    \node[dot,label=left:$\avariablebis$] (l2) [below of=l1] {};

    \draw[pto] (l1) -- (l2);
    \draw[pto] (l2) edge [loop below, in=-30, out=-150,looseness=20] node {} (l2);
  \end{tikzpicture}
  \hfill
  \begin{tikzpicture}[baseline]
    \node[dot,label=above:$\avariablebis$] (l1) at (0,0) {};
    \node[dot,label=left:$\avariable$] (l2) [below of=l1] {};

    \draw[pto] (l1) -- (l2);
    \draw[pto] (l2) edge [loop below, in=-30, out=-150,looseness=20] node {} (l2);
  \end{tikzpicture}
  \hfill
  \begin{tikzpicture}[baseline]
    \node[dot,label=above:$\avariable$] (l1) at (0,0) {};
    \node[dot] (l2) [below right = 1.5cm and 0.5cm of l1] {};
    \node[dot,label=above:$\avariablebis$] (l3) [above right = 1.5cm and 0.5cm of l2] {};

    \draw[pto] (l1) -- (l2);
    \draw[pto] (l3) -- (l2);
  \end{tikzpicture}
  \hfill\,
\end{center}
Each memory state above satisfies one of the three disjuncts from the third conjunct of the above formula.
%% These three memory states satisfy respectively (from left to right) the three disjunct of the previous formula, i.e.
The first memory state satisfies
$\avariable \hpto \avariablebis \land \avariablebis \hpto \avariablebis$, the second one satisfies
${\avariablebis \hpto \avariable \land \avariable \hpto \avariable}$ and the third one satisfies
$$(\bigwedge_{\mathclap{\avariableter,\avariableter' \in \{\avariable,\avariablebis\}}} \lnot \avariableter \hpto \avariableter') \land (\true \sepimp \lnot(\reach(\avariable,\avariablebis)=2 \land \reach(\avariablebis,\avariable)=2)).$$

The first two cases are trivial. The last one represents a memory state with a location $\alocation$ such that
 $\alocation = \aheap'(\astore(\avariable)) = \aheap'(\astore(\avariablebis))$ and $\astore(\avariable) \neq \alocation \neq \astore(\avariablebis)$.
As such, this memory state trivially satisfies $\bigwedge_{\avariableter,\avariableter' \in \{\avariable,\avariablebis\}} \lnot \avariableter \hpto \avariableter'$.
Now, consider $\aheap''$ disjoint from $\aheap'$ and  $\pair{\astore}{\aheap' + \aheap''} \models \reach(\avariable,\avariablebis)=2$.
In particular, it must hold that $\aheap''(\alocation) = \astore(\avariablebis)$. As such, $(\aheap'+\aheap'')^2(\astore(\avariablebis)) = \astore(\avariablebis)$ and so $(\astore, \aheap' + \aheap'') \not\models \reach(\avariablebis,\avariable)=2$.
It follows that the last memory state of the picture satisfies
${(\bigwedge_{\avariableter,\avariableter' \in \{\avariable,\avariablebis\}} \lnot \avariableter \hpto \avariableter')} \land (\true \sepimp \lnot(\reach(\avariable,\avariablebis)=2 \land \reach(\avariablebis,\avariable)=2))$.
Thus, $\aheap(\astore(\avariable)) = \aheap(\astore(\avariablebis))$ implies $\pair{\astore}{\aheap} \models \mathsf{ext}(n(\avariable) = n(\avariablebis))$.

Conversely, suppose $\pair{\astore}{\aheap} \models 
                    \mathsf{ext}(n(\avariable) = n(\avariablebis))$. 
Then $\pair{\astore}{\aheap} \models \alloc{\avariable}$. If $\astore(\avariable) = \astore(\avariablebis)$, $\aheap(\astore(\avariable)) = \aheap(\astore(\avariablebis))$ follows trivially.
Instead, if $\astore(\avariable) \neq \astore(\avariablebis)$, there must exist a two-memory-cells heap $\aheap' \sqsubseteq \aheap$ such that $\pair{\astore}{\aheap'}$ satisfies
$\alloc{\avariable} \land \alloc{\avariablebis} \land ((\avariable \hpto \avariablebis \land \avariablebis \hpto \avariablebis)
\lor
 (\avariablebis \hpto \avariable \land \avariable \hpto \avariable) \lor
((\bigwedge_{\avariableter,\avariableter' \in \{\avariable,\avariablebis\}} \lnot \avariableter \hpto \avariableter') \land (\true \sepimp \lnot(\reach(\avariable,\avariablebis)=2 \land \reach(\avariablebis,\avariable)=2))))$. As such, $\astore(\avariable)$ and $\astore(\avariablebis)$ are both in $\domain{\aheap'}$.
Trivially, if
$\pair{\astore}{\aheap'} \models (\avariable \hpto \avariablebis \land \avariablebis \hpto \avariablebis)
\lor
 (\avariablebis \hpto \avariable \land \avariable \hpto \avariable)$ then $\aheap'(\astore(\avariable)) = \aheap'(\astore(\avariablebis))$.
 The same holds true if
 $\pair{\astore}{\aheap'} \models (\bigwedge_{\avariableter,\avariableter' \in \{\avariable,\avariablebis\}} \lnot \avariableter \hpto \avariableter') \land (\true \sepimp \lnot(\reach(\avariable,\avariablebis)=2 \land \reach(\avariablebis,\avariable)=2))$.
 Indeed, $\pair{\astore}{\aheap'} \models \bigwedge_{\avariableter,\avariableter' \in \{\avariable,\avariablebis\}} \lnot \avariableter \hpto \avariableter'$
 leaves open only two possible memory states, that are represented below.
\begin{center}
  \hfill
  \begin{tikzpicture}[baseline]
    \node[dot,label=above:$\avariable$] (l1) at (0,0) {};
    \node[dot] (l2) [below of=l1] {};
    \node[dot,label=above:$\avariablebis$] (l3) [right=0.5cm of l1] {};
    \node[dot] (l4) [below of=l3] {};

    \draw[pto] (l1) -- (l2);
    \draw[pto] (l3) -- (l4);
  \end{tikzpicture}
  \hfill
  \begin{tikzpicture}[baseline]
    \node[dot,label=above:$\avariable$] (l1) at (0,0) {};
    \node[dot] (l2) [below right = 1.5cm and 0.5cm of l1] {};
    \node[dot,label=above:$\avariablebis$] (l3) [above right = 1.5cm and 0.5cm of l2] {};

    \draw[pto] (l1) -- (l2);
    \draw[pto] (l3) -- (l2);
  \end{tikzpicture}
  \hfill\,
\end{center}
The last part of the formula, $\true \sepimp \lnot(\reach(\avariable,\avariablebis)=2 \land \reach(\avariablebis,\avariable)=2)$
allows to  differentiate  these two cases by excluding the left heap.
Indeed, that very formula holds  on $\pair{\astore}{\aheap'}$ if and only if there is no heap $\aheap''$ such that  $\aheap'' \bot \aheap'$ and
$\pair{\astore}{\aheap' + \aheap''} \models \reach(\avariable,\avariablebis)=2 \land \reach(\avariablebis,\avariable)=2$.
This property holds on the right heap, as already discussed in the first part of the proof, but not on the left one, as can be shown by defining $\aheap''$ as $\{\aheap'(\astore(\avariable)) \pto \astore(\avariablebis), \aheap'(\astore(\avariablebis)) \pto \astore(\avariable)\}$.
%% \qed
\end{proof}

Similarly to $\allocback{\avariable}$, we can show that $n(\avariable) \hpto n(\avariablebis)$ is definable in $\seplogic{\separate, \magicwand, \ls}$ by using one additional variable $\avariableter$ whose value is different from both $\avariable$ and $\avariablebis$. Let $\aformula_{\hpto}(\avariable,\avariablebis, \avariableter)$ be $(\mathsf{ext}(n(\avariable) = n(\avariablebis)) \land \aformula^{=}_{\hpto}(\avariable,\avariablebis,\avariableter)) \lor (
\neg \mathsf{ext}(n(\avariable) = n(\avariablebis)) \land \aformula^{\neq}_{\hpto}(\avariable,\avariablebis))$ where $\aformula^{=}_{\hpto}(\avariable,\avariablebis,\avariableter)$ is defined as
\begin{align*}
  \aformula^{=}_{\hpto}(\avariable,\avariablebis,\avariableter)\ \egdef &\ (\avariable \hpto \avariable \land \avariablebis \hpto \avariable) \lor
	(\avariablebis \hpto \avariablebis \land \avariable \hpto \avariablebis) \lor
	(\avariable \hpto \avariableter \land \avariableter \hpto \avariableter)\\
  & \lor [\alloc{\avariable}\land \lnot\allocbacktwo{\avariableter}{\avariable} \land (\top \magicwand \lnot \reach(\avariable,\avariableter)\leq 3)]_2
\end{align*}
whereas $\aformula^{\neq}_{\hpto}(\avariable,\avariablebis)$ is defined as
\begin{align*}
  \aformula^{\neq}_{\hpto}(\avariable,\avariablebis)\ \egdef &\ (\avariable \hpto \avariablebis \land \alloc{\avariablebis}) \lor
(\avariablebis \hpto \avariablebis \land \reach(\avariable,\avariablebis) = 2) \lor
(\avariablebis \hpto \avariable \land \avariable \hpto^2_\avariablebis \avariable) \,\lor\\
& [\alloc{\avariable} \land \alloc{\avariablebis} \land \textstyle (\bigwedge_{\avariableter,\avariableter' \in \{\avariable,\avariablebis\}} \lnot \avariableter
\hpto \avariableter') \land \lnot\reach(\avariable,\avariablebis) \leq 3\\ 
& \land ((\size=1 \land \allocbacktwo{\avariable}{\avariablebis}) \lollipop (\reach(\avariable,\avariablebis)=3 \land \avariablebis \hpto^2_\avariable \avariablebis))
 ]_3
\end{align*}

\begin{lemma}
 \label{lemma-three-in-one}
 Let $\avariable, \avariablebis, \avariableter \in \PVAR$.
\begin{enumerate}[label=\normalfont{\textbf{(\Roman*)}}]
\item For all memory states $(\astore,\aheap)$ such that ${\astore(\avariable) \neq \astore(\avariableter)}$ and $\astore(\avariablebis) \neq \astore(\avariableter)$,
we have $(\astore,\aheap) \models \aformula_{\hpto}(\avariable,\avariablebis, \avariableter)$ iff $\{\astore(\avariable),\astore(\avariablebis)\} \subseteq \domain{\aheap}$
and $\aheap(\aheap(\astore(\avariable))) = \aheap(\astore(\avariablebis))$.
\item In the translation $\atranslation(\aformula,\aset)$, $n(\avariable) \hpto  n(\avariablebis)$ can be replaced by $\aformula_{\hpto}(\avariable,\avariablebis,\overline{\avariable})$.
\end{enumerate}
\end{lemma}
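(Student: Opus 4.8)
The plan is to prove (I) by fixing $a \egdef \astore(\avariable)$, $b \egdef \astore(\avariablebis)$ and $c \egdef \astore(\avariableter)$ with $a \neq c$ and $b \neq c$, and showing that $\aformula_{\hpto}(\avariable,\avariablebis,\avariableter)$ holds exactly when $\{a,b\} \subseteq \domain{\aheap}$ and $\aheap^2(a) = \aheap(b)$ (the semantics of $n(\avariable) \hpto n(\avariablebis)$). Since the two top-level disjuncts of $\aformula_{\hpto}$ are guarded by the mutually exclusive conditions $n(\avariable) = n(\avariablebis)$ and $n(\avariable) \neq n(\avariablebis)$, which by Lemma~\ref{lemma-next-equality} express respectively $\aheap(a) = \aheap(b)$ (forcing $a,b \in \domain{\aheap}$) and its negation, I would split the argument into these two cases and treat $\aformula^{=}_{\hpto}$ and $\aformula^{\neq}_{\hpto}$ separately, each time proving both directions of the equivalence; the matching branch is selected from the actual heap in the forward direction and read off from the guard in the converse.

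For the equality case, write $d \egdef \aheap(a) = \aheap(b)$; then $a,b \in \domain{\aheap}$ already holds and the target reduces to the single condition $\aheap(d) = d$, a self-loop at $d$. I would dispatch the easy subcases $d = a$, $d = b$, $d = c$ to the first three disjuncts of $\aformula^{=}_{\hpto}$, each of which directly asserts the relevant one-cell edges. The interesting subcase $d \notin \{a,b,c\}$ is handled by the fourth disjunct $[\alloc{\avariable} \land \lnot \allocbacktwo{\avariableter}{\avariable} \land (\top \magicwand \lnot \reach(\avariable,\avariableter) \leq 3)]_2$: for the forward direction I would exhibit the size-two subheap $\{a \mapsto d, d \mapsto d\}$ and observe that, since the only locations reachable from $a$ are $a$ and $d$ (both distinct from $c$), no disjoint extension can let $a$ reach $c$, so the $\magicwand$-conjunct holds; for the converse I would argue that any size-two subheap $\aheap' \sqsubseteq \aheap$ meeting the guards must, on pain of an extension of the form $\{\,\cdot \mapsto c\,\}$ reaching $c$, trap the path out of $a$ inside $\domain{\aheap'}$, which forces the second cell to be exactly $d \mapsto d$ and hence $\aheap(d) = d$.

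For the inequality case $\aheap(a) \neq \aheap(b)$, I would match the disjuncts of $\aformula^{\neq}_{\hpto}$ to the short configurations: $\aheap(a) = b$ with $b$ allocated (first disjunct, where $\aheap^2(a) = \aheap(b)$ is immediate); $\aheap(b) = b$ together with a length-two path $a \to m \to b$ (second disjunct, via $\reach(\avariable,\avariablebis) = 2$); and $\aheap(b) = a$ with a genuine two-step loop (third disjunct, reading $\avariable \hpto^2_{\avariablebis} \avariable$ through Lemma~\ref{lemma-two-steps}, whose hypothesis $a \neq b$ is available since $a=b$ with both allocated would fall in the equality case). The generic subcase, where $a,b$ carry no short edges among themselves and $a$ does not reach $b$ within three steps, is captured by the size-three disjunct: for the forward direction I would select $\{a \mapsto m, m \mapsto n, b \mapsto n\}$ with $m \egdef \aheap(a)$ and $n \egdef \aheap(b)$ (checking $a,m,b,n$ pairwise distinct using the guards and the equality-case exclusion), and realise the septraction by the single cell $n \mapsto b$, which simultaneously makes $a \to m \to n \to b$ a path of length exactly three and $b \to n \to b$ a two-step loop; for the converse, the conjunct $\avariablebis \hpto^2_{\avariable} \avariablebis$ forces the septracted cell to be precisely $\aheap(b) \mapsto b$, whereupon $\reach(\avariable,\avariablebis) = 3$ forces $\aheap^2(a) = \aheap(b)$.

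The main obstacle is exactly these two ``fourth disjuncts'': the \emph{completeness} (no-spurious-subheap) half, where one must enumerate all size-two (resp. size-three) subheaps consistent with the $\alloc{\cdot}$, $\lnot\allocbacktwo{\cdot}{\cdot}$ and $\lnot\reach$ guards and use the nested $\magicwand$ (resp.\ $\lollipop$) to rule out every configuration except the intended one; the auxiliary characterisations of Lemma~\ref{lemma-inverse-alloc} and Lemma~\ref{lemma-two-steps} are invoked repeatedly, and their side-conditions $a \neq c$, $b \neq c$, $a \neq b$ must be tracked throughout. Finally, part (II) follows from (I) exactly as Lemma~\ref{lemma-inverse-alloc}(II): in $\atranslation$ the atom $n(\avariable) \hpto n(\avariablebis)$ only ever arises with $\avariable,\avariablebis$ simultaneously barred or simultaneously unbarred (original indices in $\interval{1}{q}$), so the chosen third argument $\overline{\avariable}$ is a variable of $\aset$ distinct from both $\avariable$ and $\avariablebis$; since $\Safe(\aset)$ guarantees that the variables of $\aset$ receive pairwise distinct values, the hypotheses $a \neq c$ and $b \neq c$ of (I) are met and the replacement is sound.
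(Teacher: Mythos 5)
Your proposal is correct and follows essentially the same route as the paper's proof: the same case split driven by the guards $n(\avariable) = n(\avariablebis)$ versus $n(\avariable) \neq n(\avariablebis)$ (via Lemma~\ref{lemma-next-equality}), the same matching of the first three disjuncts of each of $\aformula^{=}_{\hpto}$ and $\aformula^{\neq}_{\hpto}$ to the degenerate configurations, the same witness subheaps $\set{a \mapsto d, d \mapsto d}$ and $\set{a \mapsto m, m \mapsto n, b \mapsto n}$ plus septracted cell $\set{n \mapsto b}$ for the two bracketed disjuncts, and the same appeal to Lemmas~\ref{lemma-inverse-alloc} and~\ref{lemma-two-steps} together with the barred/unbarred observation for part (II). The only difference is presentational: where the paper enumerates the admissible subheap shapes pictorially, you carry out the equivalent algebraic case analysis on the coincidences among $a$, $b$, $c$, $\aheap(a)$, $\aheap(b)$, and you correctly identify the completeness half of the two bracketed disjuncts as the place where that enumeration must be done in full.
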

%%
%% SD 15/10/2017
%%
%% One lemma for three.
%%
%% SD 08/07/2018
%%
%% \OnlyForVL{
%% \begin{lemma}
%% %\label{lemma-next-equality}
%% Let $\avariable, \avariablebis \in \PVAR$. There is a formula $\aformula_{=}(\avariable,\avariablebis)$
%% in $\seplogic{\separate, \magicwand, \ls}$ (or equivalently in  $\seplogic{\separate, \magicwand, \reach}$)
%% such that $n(\avariable) =  n(\avariablebis) \equiv \aformula_{=}(\avariable,\avariablebis)$.
%% \end{lemma}%%
%% %%
%% \OnlyForVL{\input{proof-lemma-next-equality}}%%
%% %%
%% \begin{lemma}
%% \label{lemma-next-pointsto}
%% Let $\avariable, \avariablebis \in \PVAR$. There is a formula $\aformula_{\hpto}^{+}(\avariable,\avariablebis)$
%% in $\seplogic{\separate, \magicwand, \reachplus}$
%% such that
%% $n(\avariable) \hpto  n(\avariablebis) \equiv
%% \aformula_{\hpto}^{+}(\avariable,\avariablebis)$.
%% \end{lemma}
%%
\begin{proof} (I)
%%
%% SD 17/07/18
%% This reminder does not seem here in the long version.
%%
\cut{
We recall the definition of $\aformula_{\hpto}(\avariable,\avariablebis, \avariableter)$: $(n(\avariable) = n(\avariablebis) \land \aformula^{=}_{\hpto}(\avariable,\avariablebis,\avariableter)) \lor (\lnot n(\avariable) = n(\avariablebis) \land \aformula^{\neq}_{\hpto}(\avariable,\avariablebis))$ where $\aformula^{=}_{\hpto}(\avariable,\avariablebis,\avariableter)$ is defined as
\begin{gather*}
(\avariable \hpto \avariable \land \avariablebis \hpto \avariable) \lor
	(\avariablebis \hpto \avariablebis \land \avariable \hpto \avariablebis) \lor
	(\avariable \hpto \avariableter \land \avariableter \hpto \avariableter)\\
\lor\ [\alloc{\avariable}\land \lnot\allocbacktwo{\avariableter}{\avariable} \land (\top \magicwand \lnot \reach(\avariable,\avariableter)\leq 3)]_2
\end{gather*}
whereas $\aformula^{\neq}_{\hpto}(\avariable,\avariablebis)$ is defined as
\begin{gather*}
(\avariable \hpto \avariablebis \land \alloc{\avariablebis}) \lor
(\avariablebis \hpto \avariablebis \land \reach(\avariable,\avariablebis) = 2) \lor
(\avariablebis \hpto \avariable \land \avariable \hpto^2_\avariablebis \avariable) \lor\\
[\alloc{\avariable} \land \alloc{\avariablebis} \land \textstyle (\bigwedge_{\avariableter,\avariableter' \in \{\avariable,\avariablebis\}} \lnot \avariableter
\hpto \avariableter') \land \lnot\reach(\avariable,\avariablebis) \leq 3\\ \land ((\size=1 \land \allocbacktwo{\avariable}{\avariablebis}) \lollipop (\reach(\avariable,\avariablebis)=3 \land \avariablebis \hpto^2_\avariable \avariablebis))
 ]_3
\end{gather*}
}
First, suppose $\{\astore(\avariable),\astore(\avariablebis)\} \subseteq \domain{\aheap}$, $\aheap(\aheap(\astore(\avariable))) = \aheap(\astore(\avariablebis))$, $\astore(\avariable) \neq \astore(\avariableter)$ and $\astore(\avariablebis) \neq \astore(\avariableter)$.
Let us distinguish two cases.

\begin{description}
\itemsep=0pt
\item[Case 1:]
$\aheap(\astore(\avariable)) = \aheap(\astore(\avariablebis))$. Equivalently, $\mathsf{ext}(n(\avariable) = n(\avariablebis))$ 
holds from Lemma~\ref{lemma-next-equality} and we must prove that $\pair{\astore}{\aheap}$ satisfies $\aformula^{=}_{\hpto}(\avariable,\avariablebis,\avariableter)$. 
% which is recalled below:
% \begin{gather*}
% (\avariable \hpto \avariable \land \avariablebis \hpto \avariable) \lor
% 	(\avariablebis \hpto \avariablebis \land \avariable \hpto \avariablebis) \lor
% 	(\avariable \hpto \avariableter \land \avariableter \hpto \avariableter)\\
% \lor\ [\alloc{\avariable}\land \lnot\allocbacktwo{\avariableter}{\avariable} \land (\top \magicwand \lnot \reach(\avariable,\avariableter)\leq 3)]_2
% \end{gather*}
Consider the subheap $\aheap' \sqsubseteq \aheap$ such that $\domain{\aheap'} = \{\astore(\avariable), \astore(\avariablebis), \aheap(\astore(\avariable)) \}$.
There are only a bounded number of subheaps of this kind (up to isomorphism with respect to $\{\avariable,\avariablebis,\avariableter\}$).
Remember that $\aheap(\astore(\avariable)) = \aheap(\astore(\avariablebis))$, $\aheap(\aheap(\astore(\avariable))) = \aheap(\astore(\avariablebis))$, $\astore(\avariable) \neq \astore(\avariableter)$
and $\astore(\avariablebis) \neq \astore(\avariableter)$. 
 \begin{center}
    \begin{tikzpicture}[baseline]
      \node[dot,label=above:${\avariable=\avariablebis}$] (l1) at (0,0) {};
      \node[dot,label=below:$\avariableter$] (l3) [below of=l1] {};
      \draw[pto] (l1) edge [loop below, in=-30, out=-150,looseness=20] (l1);
    \end{tikzpicture}
    \hfill
		\begin{tikzpicture}[baseline]
      \node[dot,label=above:${\avariable=\avariablebis}$] (l1) at (0,0) {};
      \node[dot] (l2) [below of=l1] {};
      \node[dot,label=above:$\avariableter$] (l3) [right=0.5cm of l2] {};

      \draw[pto] (l1) -- (l2);
      \draw[pto] (l2) edge [loop below, in=-30, out=-150,looseness=20] (l2);
    \end{tikzpicture}
    \hfill
    \begin{tikzpicture}[baseline]
      \node[dot,label=above:${\avariable=\avariablebis}$] (l1) at (0,0) {};
      \node[dot,label=left:$\avariableter$] (l2) [below of=l1] {};

      \draw[pto] (l1) -- (l2);
      \draw[pto] (l2) edge [loop below, in=-30, out=-150,looseness=20] (l2);
    \end{tikzpicture}
    \hfill
    \begin{tikzpicture}[baseline]
      \node[dot,label=above:$\avariable$] (l1) at (0,0) {};
      \node[dot,label=left:$\avariablebis$] (l2) [below of=l1] {};
      \node[dot,label=above:$\avariableter$] (l3) [right=0.5cm of l2] {};

      \draw[pto] (l1) -- (l2);

      \draw[pto] (l2) edge [loop below, in=-30, out=-150,looseness=20] (l2);
    \end{tikzpicture}
\end{center}
\vspace{0.6cm}
\begin{center}
		\hfill
    \begin{tikzpicture}[baseline]
      \node[dot,label=above:$\avariablebis$] (l1) at (0,0) {};
      \node[dot,label=left:$\avariable$] (l2) [below of=l1] {};
      \node[dot,label=above:$\avariableter$] (l3) [right=0.5cm of l2] {};

      \draw[pto] (l1) -- (l2);
      \draw[pto] (l2) edge [loop below, in=-30, out=-150,looseness=20] (l2);
    \end{tikzpicture}
    \hfill
    \begin{tikzpicture}[baseline]
      \node[dot,label=above:$\avariable$] (l1) at (0,0) {};
      \node[dot] (l2) [below right = 1.5cm and 0.5cm of l1] {};
      \node[dot,label=above:$\avariablebis$] (l3) [above right = 1.5cm and 0.5cm of l2] {};
      \node[dot,label=above:$\avariableter$] (l4) [right=0.5cm of l2] {};

      \draw[pto] (l1) -- (l2);
      \draw[pto] (l3) -- (l2);
      \draw[pto] (l2) edge [loop below, in=-30, out=-150,looseness=20] (l2);
    \end{tikzpicture}
		\hfill
    \begin{tikzpicture}[baseline]
      \node[dot,label=above:$\avariable$] (l1) at (0,0) {};
      \node[dot,label=left:$\avariableter$] (l2) [below right = 1.5cm and 0.5cm of l1] {};
      \node[dot,label=above:$\avariablebis$] (l3) [above right = 1.5cm and 0.5cm of l2] {};

      \draw[pto] (l1) -- (l2);
      \draw[pto] (l3) -- (l2);
      \draw[pto] (l2) edge [loop below, in=-30, out=-150,looseness=20] (l2);
    \end{tikzpicture}
		\hfill\,
\end{center}
We need to check that for each case, the memory state satisfies one of the disjuncts of the formula $\aformula^{=}_{\hpto}(\avariable,\avariablebis,\avariableter)$.
Indeed, it is easy to see that the first and fifth memory states (enumerated from the top left to the bottom right) satisfy $\avariable \hpto \avariable \land \avariablebis \hpto \avariable$, the third and seventh memory states satisfy $\avariable \hpto \avariableter \land \avariableter \hpto \avariableter$ and the fourth memory state satisfies $\avariablebis \hpto \avariablebis \land \avariable \hpto \avariablebis$.
Lastly, the second and sixth memory states satisfy
$ [\alloc{\avariable}\land \lnot\allocbacktwo{\avariableter}{\avariable} \land (\top \magicwand \lnot {\reach(\avariable,\avariableter)\leq 3)]_{2}}$.
For these last two cases, consider the two-memory-cells heap $\aheap'' = \{\astore(\avariable) \pto \aheap'(\astore(\avariable)), \aheap'(\astore(\avariable)) \pto \aheap'(\astore(\avariable))\} \sqsubseteq \aheap'$. Trivially, $\pair{\astore}{\aheap''} \models \alloc{\avariable}\land \lnot\allocbacktwo{\avariableter}{\avariable}$.
Moreover, since $\aheap''(\aheap'(\astore(\avariable))) = \aheap'(\astore(\avariable))$, for any heap $\aheap'''$ such that $\aheap''' \bot \aheap''$, $\aheap''(\aheap'(\astore(\avariable)))$ cannot be in $\domain{\aheap'''}$ and therefore $\pair{\astore}{\aheap'' + \aheap'''}$ does not satisfy $\reach(\avariable,\avariableter)\leq 3$.
It follows that $\pair{\astore}{\aheap''}$ satisfies $(\top \magicwand \lnot \reach(\avariable,\avariableter)\leq 3)$.
%\begin{center}
%\scalebox{0.8}{
%   \begin{tikzpicture}[baseline]
%      \node[dot,label=above:$\avariable$] (l1) at (0,0) {};
%      \node[dot,label=below:$\avariablebis$] (l2) [below of=l1] {};
%
%      \draw[pto] (l1) edge [bend right] (l2);
%      \draw[pto] (l2) edge [bend right] (l1);
%    \end{tikzpicture}
%    \qquad\qquad
%    \begin{tikzpicture}[baseline]
%      \node[dot,label=above:$\avariable$] (l1) at (0,0) {};
%      \node[dot] (l2) [below of=l1] {};
%      \node[dot,label=above:$\avariablebis$] (l3) [right=0.5cm of l1] {};
%
%      \draw[pto] (l1) -- (l2);
%      \draw[pto] (l2) -- (l3);
%      \draw[pto] (l3) edge [loop right, in=60, out=-60,looseness=20] node {} (l3);
%    \end{tikzpicture}
%    \qquad\qquad
%    \begin{tikzpicture}[baseline]
%      \node[dot,label=above:$\avariable$] (l1) at (0,0) {};
%      \node[dot] (l2) [below of=l1] {};
%      \node[dot,label=above:$\avariablebis$] (l3) [right=0.5cm of l1] {};
%      \node[dot] (l4) [below of=l3] {};
%
%      \draw[pto] (l1) -- (l2);
%      \draw[pto] (l3) -- (l4);
%      \draw[pto] (l2) -- (l4);
%    \end{tikzpicture}
%}
%\end{center}

\item[Case 2:]
$\aheap(\astore(\avariable)) \neq \aheap(\astore(\avariablebis))$. Equivalently, from Lemma~\ref{lemma-next-equality}, 
$\mathsf{ext}(n(\avariable) = n(\avariablebis))$ does not hold and we must prove that $\pair{\astore}{\aheap} \models \aformula^{\neq}_{\hpto}(\avariable,\avariablebis)$.
% The formula $\aformula^{\neq}_{\hpto}(\avariable,\avariablebis)$ is recalled below:
% \begin{gather*}
% (\avariable \hpto \avariablebis \land \alloc{\avariablebis}) \lor
% (\avariablebis \hpto \avariablebis \land \reach(\avariable,\avariablebis) = 2) \lor
% (\avariablebis \hpto \avariable \land \avariable \hpto^2_\avariablebis \avariable) \lor\\
% [\alloc{\avariable} \land \alloc{\avariablebis} \land \textstyle (\bigwedge_{\avariableter,\avariableter' \in \{\avariable,\avariablebis\}} \lnot \avariableter
% \hpto \avariableter') \land \lnot\reach(\avariable,\avariablebis) \leq 3\\ \land ((\size=1 \land \allocbacktwo{\avariable}{\avariablebis}) \lollipop (\reach(\avariable,\avariablebis)=3 \land \avariablebis \hpto^2_\avariable \avariablebis))
%  ]_3
% \end{gather*}

Moreover, $\astore(\avariable) \neq \astore(\avariablebis)$.
Consider the subheap $\aheap' \sqsubseteq \aheap$ such that $\domain{\aheap'} = \{\astore(\avariable), \astore(\avariablebis), \aheap(\astore(\avariable)) \}$.
There are only a bounded number of subheaps of this kind (up to isomorphism with respect to $\{\avariable,\avariablebis\}$).
Remember that $\aheap(\aheap(\astore(\avariable))) = \aheap(\astore(\avariablebis))$ and $\aheap(\astore(\avariable)) \neq \aheap(\astore(\avariablebis))$.
\begin{center}
\begin{tikzpicture}[baseline]
  \node[dot,label=above:$\avariable$] (l1) at (0,0) {};
  \node[dot,label=below:$\avariablebis$] (l2) [below of=l1] {};

  \draw[pto] (l1) edge[bend right] (l2);
  \draw[pto] (l2) edge[bend right] (l1);
\end{tikzpicture}
\hfill
\begin{tikzpicture}[baseline]
  \node[dot,label=above:$\avariable$] (l1) at (0,0) {};
  \node[dot] (l2) [below of=l1] {};
\node[dot,label=above:$\avariablebis$] (l3) [right=0.5cm of l1] {};

  \draw[pto] (l1) edge[bend right] (l2);
  \draw[pto] (l2) edge[bend right] (l1);
  \draw[pto] (l3) -- (l1);
\end{tikzpicture}
\hfill
\begin{tikzpicture}[baseline]
  \node[dot,label=above:$\avariable$] (l1) at (0,0) {};
  \node[dot,label=above:$\avariablebis$] (l3) [right=0.5cm of l1] {};
  \node[dot] (l4) [below of=l3] {};

  \draw[pto] (l1) -- (l3);
  \draw[pto] (l3) -- (l4);
\end{tikzpicture}
\hfill
\begin{tikzpicture}[baseline]
  \node[dot,label=above:$\avariable$] (l1) at (0,0) {};
  \node[dot] (l2) [below of=l1] {};
  \node[dot,label=above:$\avariablebis$] (l3) [right=0.5cm of l1] {};

  \draw[pto] (l1) -- (l2);
  \draw[pto] (l3) edge [loop right, in=60, out=-60,looseness=20]  (l3);
  \draw[pto] (l2) -- (l3);
\end{tikzpicture}
\hfill
\begin{tikzpicture}[baseline]
  \node[dot,label=above:$\avariable$] (l1) at (0,0) {};
  \node[dot] (l2) [below of=l1] {};
  \node[dot,label=above:$\avariablebis$] (l3) [right=0.5cm of l1] {};
  \node[dot] (l4) [below of=l3] {};

  \draw[pto] (l1) -- (l2);
  \draw[pto] (l3) -- (l4);
  \draw[pto] (l2) -- (l4);
\end{tikzpicture}
\end{center}

Let us check that for each of the five cases, one of the disjuncts of $\aformula^{\neq}_{\hpto}(\avariable,\avariablebis)$ holds. Trivially, the first and third memory states satisfy the first disjunct ${\avariable \hpto \avariablebis} \land \alloc{\avariablebis}$, the second one satisfies $\avariablebis \hpto \avariable \land \avariable \hpto^2_\avariablebis \avariable$, whereas the fourth one satisfies $\avariablebis \hpto \avariablebis \land \reach(\avariable,\avariablebis) = 2$. Lastly, the fifth one satisfies the disjunct
\begin{gather*}
[\alloc{\avariable} \land \alloc{\avariablebis} \land \textstyle (\bigwedge_{\avariableter,\avariableter' \in \{\avariable,\avariablebis\}} \lnot \avariableter
\hpto \avariableter') \land \lnot\reach(\avariable,\avariablebis) \leq 3\\ \land ((\size=1 \land \allocbacktwo{\avariable}{\avariablebis}) \lollipop (\reach(\avariable,\avariablebis)=3 \land \avariablebis \hpto^2_\avariable \avariablebis))
 ]_3
\end{gather*}
For the fifth memory state, trivially $\pair{\astore}{\aheap'} \models \alloc{\avariable} \land \alloc{\avariablebis} \land \textstyle {(\bigwedge_{\avariableter,\avariableter' \in \{\avariable,\avariablebis\}} \lnot \avariableter
\hpto \avariableter')} \land \lnot\reach(\avariable,\avariablebis) \leq 3$.
Let $\aheap'' = \{\aheap'(\aheap'(\astore(\avariable))) \pto \astore(\avariablebis)\}$. As $\aheap'(\aheap'(\astore(\avariable))) \not\in \domain{\aheap'}$, it holds that $\aheap'' \bot \aheap'$.
Moreover, $\pair{\astore}{\aheap''}$ satisfies $\size=1 \land \allocbacktwo{\avariable}{\avariablebis}$.
Consider now $\pair{\astore}{\aheap' + \aheap''}$. Since $\aheap''(\aheap'(\aheap'(\astore(\avariable)))) = \astore(\avariablebis)$, this memory state satisfies $\reach(\avariable,\avariablebis)=3$, whereas $\avariablebis \hpto^2_\avariable \avariablebis$ is satisfied from the hypothesis
$\astore(\avariable) \neq \astore(\avariablebis)$ and $\aheap(\aheap(\astore(\avariable))) = \aheap(\astore(\avariablebis))$.
Indeed, $\aheap''(\aheap'(\astore(\avariablebis))) = \aheap''(\aheap'(\aheap'(\astore(\avariable)))) = \astore(\avariablebis)$.
We derive $\pair{\astore}{\aheap'} \models ((\size=1 \land \allocbacktwo{\avariable}{\avariablebis}) \lollipop ({\reach(\avariable,\avariablebis)=3} \land {\avariablebis \hpto^2_\avariable \avariablebis}))$.
\end{description}

Conversely, let us suppose that  $\pair{\astore}{\aheap} \models \aformula_{\hpto}(\avariable,\avariablebis, \avariableter)$,
$\astore(\avariable) \neq \astore(\avariableter)$ and  $\astore(\avariablebis) \neq \astore(\avariableter)$. Two cases
are considered.
\begin{description}
\itemsep 0 cm
\item[Case 1:]
$\pair{\astore}{\aheap} \models \mathsf{ext}(n(\avariable) = n(\avariablebis)) \land \aformula^{=}_{\hpto}(\avariable,\avariablebis,\avariableter)$. If one of the three first
disjuncts of $\aformula^{=}_{\hpto}(\avariable,\avariablebis,\avariableter)$ holds, we have immediately $\{\astore(\avariable),\astore(\avariablebis)\} \subseteq \domain{\aheap}$ and
$\aheap(\aheap(\astore(\avariable)))$ $= \aheap(\astore(\avariablebis))$. Indeed, let us consider the three possible situations.

\begin{enumerate}

\item  In the case $\pair{\astore}{\aheap} \models \mathsf{ext}(n(\avariable) = n(\avariablebis)) \wedge \avariable \hpto \avariable \wedge \avariablebis \hpto \avariable$,
       we get $\aheap(\aheap(\astore(\avariable))) = \aheap(\astore(\avariablebis)) = \astore(\avariable)$ and therefore $\pair{\astore}{\aheap} \models n(\avariable) \hpto n(\avariablebis)$. 

\item In the case $\pair{\astore}{\aheap} \models \mathsf{ext}(n(\avariable) = n(\avariablebis)) \wedge \avariablebis \hpto \avariablebis \wedge \avariable \hpto \avariablebis$,
       we get $\aheap(\aheap(\astore(\avariable))) = \aheap(\astore(\avariablebis)) = \astore(\avariablebis)$  and therefore $\pair{\astore}{\aheap} \models n(\avariable) \hpto n(\avariablebis)$. 

\item In the case $\pair{\astore}{\aheap} \models \mathsf{ext}(n(\avariable) = n(\avariablebis)) \wedge \avariable \hpto \avariableter \wedge \avariableter \hpto \avariableter$,
       we get $\aheap(\aheap(\astore(\avariable))) = \aheap(\astore(\avariablebis)) = \astore(\avariableter)$  and therefore $\pair{\astore}{\aheap} \models n(\avariable) \hpto n(\avariablebis)$. 

\end{enumerate}

The remaining case is when
none of the three first disjuncts holds and therefore $\pair{\astore}{\aheap} \models [\alloc{\avariable}\land \lnot\allocbacktwo{\avariableter}{\avariable} \land (\top \magicwand \lnot \reach(\avariable,\avariableter)\leq 3)]_2$.
Let $\aheap' \sqsubseteq \aheap$ be some heap with $\card{\domain{\aheap'}} = 2$  and
$\pair{\astore}{\aheap'} \models \alloc{\avariable} \land \lnot\allocbacktwo{\avariableter}{\avariable} \land
(\top \magicwand \lnot \reach(\avariable,\avariableter)\leq 3)$. Obviously, $\astore(\avariable) \in \domain{\aheap'}$, say $\aheap'(\astore(\avariable))
= \alocation$ and $\alocation \in \domain{\aheap'}$ (otherwise $\pair{\astore}{\aheap' + \set{\alocation \mapsto \astore(\avariableter)}}
\models \reach(\avariable,\avariableter) =  2$, which leads to a contradiction). We can assume that $\alocation$ is distinct from $\astore(\avariable)$
as the first disjunct of $\aformula^{=}_{\hpto}(\avariable,\avariablebis,\avariableter)$ does not hold.
Similarly, $\pair{\astore}{\aheap'} \models \neg (\avariable \hpto \avariableter) \wedge \neg ({\reach(\avariable,\avariableter) =  2})$,
otherwise we reach a contradiction with $\pair{\astore}{\aheap'} \models \top \magicwand \lnot \reach(\avariable,\avariableter)\leq 3$.
To sump up, the heap $\aheap'$ satisfies the following properties:
\begin{itemize}
\item $\card{\domain{\aheap'}} = 2$, $\astore(\avariable) \in \domain{\aheap'}$ and $\aheap'(\astore(\avariable)) = \alocation$ with $\alocation \neq \astore(\avariable)$ and $\alocation \in \domain{\aheap'}$. 
\item $\pair{\astore}{\aheap'} \models \lnot\allocbacktwo{\avariableter}{\avariable} \land
(\top \magicwand \lnot \reach(\avariable,\avariableter)\leq 3) \land \neg (\avariable \hpto \avariableter) \wedge \neg (\reach(\avariable,\avariableter) =  2)$.
\end{itemize}
If $\aheap(\aheap(\astore(\avariable))) \neq\aheap(\astore(\avariable))$
or $\aheap(\aheap(\astore(\avariable)))  = \astore(\avariable)$, then we would have $\pair{\astore}{\aheap'} \models 
\allocbacktwo{\avariableter}{\avariable} \vee (\top \septraction  (\reach(\avariable,\avariableter)\leq 3))$, which is precisely excluded from above.
Consequently, we conclude that necessarily $\aheap(\aheap(\astore(\avariable))) = \aheap(\astore(\avariable)) \neq \astore(\avariable)$.
\cut{
%% \SDnote{Elaborate why these heaps are selected. Add the missing cases, see my notes.}
\begin{center}
\begin{tikzpicture}[baseline]
  \node[dot,label=above:$\avariable$] (l1) at (0,0) {};
  \node[dot,label=above:$\avariablebis$] (l2) [right=0.5cm of l1] {};
  \node[dot,label=above:$\avariableter$] (l5) [left=0.5cm of l1] {};
  \node[dot] (l3) [below of=l1] {};
  \node[dot] (l4) [below of=l3] {};

  \draw[pto] (l1) edge (l3);
  \draw[pto] (l3) edge (l4);
\end{tikzpicture}
\hfill
\begin{tikzpicture}[baseline]
  \node[dot,label=above:$\avariable$] (l1) at (0,0) {};
  \node[dot,label=above:$\avariablebis$] (l2) [right=0.5cm of l1] {};
  \node[dot] (l3) [below of=l1] {};
  \node[dot,label=above:$\avariableter$] (l5) [left=0.5cm of l1] {};

  \draw[pto] (l1) edge (l3);
  \draw[pto] (l3) edge [loop left, in=60, out=-60,looseness=20]  (l3);
\end{tikzpicture}
\hfill
\begin{tikzpicture}[baseline]
  \node[dot,label=above:$\avariable$] (l1) at (0,0) {};
  \node[dot,label=above:$\avariablebis$] (l2) [right=0.5cm of l1] {};
  \node[dot] (l3) [below of=l1] {};
  \node[dot] (l4) [below of=l2] {};
  \node[dot,label=above:$\avariableter$] (l5) [left=0.5cm of l1] {};

  \draw[pto] (l1) edge (l2);
  \draw[pto] (l3) edge (l4);
\end{tikzpicture}
\hfill
\begin{tikzpicture}[baseline]
  \node[dot,label=above:$\avariable$] (l1) at (0,0) {};
  \node[dot,label=above:$\avariablebis$] (l2) [right=0.5cm of l1] {};
  \node[dot] (l3) [below of=l1] {};
  \node[dot,label=above:$\avariableter$] (l4) [below of=l2] {};

  \draw[pto] (l1) edge (l2);
  \draw[pto] (l3) edge (l4);

\end{tikzpicture}
\hfill
\begin{tikzpicture}[baseline]
  \node[dot,label=above:$\avariable$] (l1) at (0,0) {};
  \node[dot,label=above:$\avariablebis$] (l2) [right=0.5cm of l1] {};
  \node[dot] (l4) [below of=l2] {};
  \node[dot,label=above:$\avariableter$] (l3) [below of=l1] {};

  \draw[pto] (l1) edge (l2);
  \draw[pto] (l3) edge (l4);
\end{tikzpicture}
\hfill
\begin{tikzpicture}[baseline]
  \node[dot,label=above:${\avariable=\avariablebis}$] (l1) at (0,0) {};
  \node[dot] (l2) [below of=l1] {};  
  \draw[pto] (l1) edge (l2);
  \draw[pto] (l2) edge [loop left, in=60, out=-60,looseness=20]  (l2);
\end{tikzpicture}
%% \hfill
%% \begin{tikzpicture}[baseline]
%%   \node[dot,label=above:${\avariable=\avariablebis}$] (l1) at (0,0) {};
%%   \draw[pto] (l1) edge [loop left, in=30, out=-30,looseness=20]  (l1);
%% \end{tikzpicture}
\end{center}
Only the second and sixth memory states satisfy 
$\top \magicwand \lnot \reach(\avariable,\avariableter)\leq 3$. For the four other memory states,
it is always possible to add memory cells so that there is a path of length less than three between $\astore(\avariable)$
and $\astore(\avariableter)$. Consequently, $\aheap(\aheap(\astore(\avariable))) = \aheap(\astore(\avariable))$.
}
Moreover, by Lemma~\ref{lemma-next-equality}, 
$\pair{\astore}{\aheap} \models \mathsf{ext}(n(\avariable) = n(\avariablebis))$ implies that $\aheap(\astore(\avariable)) = \aheap(\astore(\avariablebis))$. $\aheap(\aheap(\astore(\avariable))) = \aheap(\astore(\avariablebis))$ and $\{\astore(\avariable),\astore(\avariablebis)\} \subseteq \domain{\aheap}$ follows.

\item[Case 2:] $\pair{\astore}{\aheap} \models \lnot \mathsf{ext}(n(\avariable) = n(\avariablebis)) \land \aformula^{\neq}_{\hpto}(\avariable,\avariablebis)$. This implies $\astore(\avariable) \neq \astore(\avariablebis)$.
If one of the three first disjuncts of $\aformula^{\neq}_{\hpto}(\avariable,\avariablebis)$ holds,  we have immediately $\{\astore(\avariable),\astore(\avariablebis)\} \subseteq \domain{\aheap}$ and
$\aheap(\aheap(\astore(\avariable)))$ $= \aheap(\astore(\avariablebis))$. The remaining case is when
none of the three first disjuncts holds and $\pair{\astore}{\aheap}$ satisfies the fourth one.
So, there is a heap $\aheap' \sqsubseteq \aheap$ with $\card{\domain{\aheap'}} = 3$  such that $\pair{\astore}{\aheap'}$ satisfies
\begin{gather*}
\alloc{\avariable} \land \alloc{\avariablebis} \land \textstyle (\bigwedge_{\avariableter,\avariableter' \in \{\avariable,\avariablebis\}} \lnot \avariableter
\hpto \avariableter') \land \lnot\reach(\avariable,\avariablebis) \leq 3\\ \land ((\size=1 \land \allocbacktwo{\avariable}{\avariablebis}) \lollipop (\reach(\avariable,\avariablebis)=3 \land \avariablebis \hpto^2_\avariable \avariablebis)).
\end{gather*}
Since $\pair{\astore}{\aheap} \models \lnot \mathsf{ext}(n(\avariable) = n(\avariablebis))$, the only memory states
with three memory cells
satisfying $\alloc{\avariable} \land \alloc{\avariablebis} \land \textstyle (\bigwedge_{\avariableter,\avariableter' \in \{\avariable,\avariablebis\}} \lnot \avariableter
\hpto \avariableter')$ are the following (up to isomorphism with respect to $\{\avariable,\avariablebis\}$):
\begin{center}
\setlength{\tabcolsep}{25pt}
\begin{longtable}{c c c c}
  \begin{tikzpicture}[baseline]
    \node[dot,label=above:$\avariable$] (l1) at (0,0) {};
    \node[dot] (l2) [below of=l1] {};
    \node[dot,label=above:$\avariablebis$] (l3) [right=0.5cm of l1] {};
    \node[dot] (l4) [below of=l3] {};
    \node[dot] (l5) [right=0.5cm of l3] {};
    \node[dot] (l6) [below of=l5] {};

    \draw[pto] (l1) -- (l2);
    \draw[pto] (l3) -- (l4);
    \draw[pto] (l5) -- (l6);
  \end{tikzpicture}
&
  \begin{tikzpicture}[baseline]
    \node[dot,label=above:$\avariable$] (l1) at (0,0) {};
    \node[dot] (l2) [below of=l1] {};
    \node[dot,label=above:$\avariablebis$] (l3) [right=0.5cm of l1] {};
    \node[dot] (l4) [below of=l3] {};
    \node[dot] (l5) [right=0.5cm of l3] {};

    \draw[pto] (l1) -- (l2);
    \draw[pto] (l3) -- (l4);
    \draw[pto] (l5) edge [loop below, in=-30, out=-150,looseness=20] node {} (l5);
  \end{tikzpicture}
&
  \begin{tikzpicture}[baseline]
    \node[dot,label=above:$\avariable$] (l1) at (0,0) {};
    \node[dot] (l2) [below of=l1] {};
    \node[dot,label=above:$\avariablebis$] (l3) [right=0.5cm of l1] {};
    \node[dot] (l4) [below of=l3] {};
    \node[dot] (l5) [right=0.5cm of l3] {};

    \draw[pto] (l1) -- (l2);
    \draw[pto] (l3) -- (l4);
    \draw[pto] (l5) -- (l3);
  \end{tikzpicture}
&
  \begin{tikzpicture}[baseline]
    \node[dot,label=above:$\avariable$] (l1) at (0,0) {};
    \node[dot] (l2) [below of=l1] {};
    \node[dot,label=above:$\avariablebis$] (l3) [right=0.5cm of l1] {};
    \node[dot] (l4) [below of=l3] {};
    \node[dot] (l5) [right=0.5cm of l3] {};

    \draw[pto] (l1) -- (l2);
    \draw[pto] (l3) -- (l4);
    \draw[pto] (l5) -- (l4);
  \end{tikzpicture}
\\ \\
  \begin{tikzpicture}[baseline]
    \node[dot,label=above:$\avariable$] (l1) at (0,0) {};
    \node[dot] (l2) [below of=l1] {};
    \node[dot,label=above:$\avariablebis$] (l3) [right=0.5cm of l1] {};
    \node[dot] (l4) [below of=l3] {};
    \node[dot] (l5) [left=0.5cm of l1] {};

    \draw[pto] (l1) -- (l2);
    \draw[pto] (l3) -- (l4);
    \draw[pto] (l5) -- (l1);
  \end{tikzpicture}
&
  \begin{tikzpicture}[baseline]
    \node[dot,label=above:$\avariable$] (l1) at (0,0) {};
    \node[dot] (l2) [below of=l1] {};
    \node[dot,label=above:$\avariablebis$] (l3) [right=0.5cm of l1] {};
    \node[dot] (l4) [below of=l3] {};
    \node[dot] (l5) [left=0.5cm of l1] {};

    \draw[pto] (l1) -- (l2);
    \draw[pto] (l3) -- (l4);
    \draw[pto] (l5) -- (l2);
  \end{tikzpicture}
&
  \begin{tikzpicture}[baseline]
    \node[dot,label=above:$\avariable$] (l1) at (0,0) {};
    \node[dot] (l2) [below of=l1] {};
    \node[dot,label=above:$\avariablebis$] (l3) [right=0.5cm of l1] {};
    \node[dot] (l4) [below of=l3] {};

    \draw[pto] (l1) -- (l2);
    \draw[pto] (l3) -- (l4);
    \draw[pto] (l4) edge [loop below, in=-30, out=-150,looseness=20] node {} (l4);
  \end{tikzpicture}
&
  \begin{tikzpicture}[baseline]
    \node[dot,label=above:$\avariable$] (l1) at (0,0) {};
    \node[dot] (l2) [below of=l1] {};
    \node[dot,label=above:$\avariablebis$] (l3) [right=0.5cm of l1] {};
    \node[dot] (l4) [below of=l3] {};
    \node[dot] (l5) [right=0.5 of l4] {};

    \draw[pto] (l1) -- (l2);
    \draw[pto] (l3) -- (l4);
    \draw[pto] (l4) -- (l5);
  \end{tikzpicture}
\\ \\
  \begin{tikzpicture}[baseline]
    \node[dot,label=above:$\avariable$] (l1) at (0,0) {};
    \node[dot] (l2) [below of=l1] {};
    \node[dot,label=above:$\avariablebis$] (l3) [right=0.5cm of l1] {};
    \node[dot] (l4) [below of=l3] {};

    \draw[pto] (l1) -- (l2);
    \draw[pto] (l3) edge [bend right] (l4);
    \draw[pto] (l4) edge [bend right] (l3);
  \end{tikzpicture}
&
  \begin{tikzpicture}[baseline]
    \node[dot,label=above:$\avariable$] (l1) at (0,0) {};
    \node[dot] (l2) [below of=l1] {};
    \node[dot,label=above:$\avariablebis$] (l3) [right=0.5cm of l1] {};
    \node[dot] (l4) [below of=l3] {};

    \draw[pto] (l1) -- (l2);
    \draw[pto] (l3) -- (l4);
    \draw[pto] (l4) -- (l1);
  \end{tikzpicture}
&
  \begin{tikzpicture}[baseline]
    \node[dot,label=above:$\avariable$] (l1) at (0,0) {};
    \node[dot] (l2) [below of=l1] {};
    \node[dot,label=above:$\avariablebis$] (l3) [right=0.5cm of l1] {};
    \node[dot] (l4) [below of=l3] {};

    \draw[pto] (l1) -- (l2);
    \draw[pto] (l3) -- (l4);
    \draw[pto] (l4) -- (l2);
  \end{tikzpicture}
&
  \begin{tikzpicture}[baseline]
    \node[dot,label=above:$\avariable$] (l1) at (0,0) {};
    \node[dot] (l2) [below of=l1] {};
    \node[dot,label=above:$\avariablebis$] (l3) [right=0.5cm of l1] {};
    \node[dot] (l4) [below of=l3] {};

    \draw[pto] (l1) -- (l2);
    \draw[pto] (l3) -- (l4);
    \draw[pto] (l2) edge [loop below, in=-30, out=-150,looseness=20] node {} (l2);
  \end{tikzpicture}
\\ \\
  \begin{tikzpicture}[baseline]
    \node[dot,label=above:$\avariable$] (l1) at (0,0) {};
    \node[dot] (l2) [below of=l1] {};
    \node[dot,label=above:$\avariablebis$] (l3) [right=0.5cm of l1] {};
    \node[dot] (l4) [below of=l3] {};
    \node[dot] (l5) [left=0.5 of l2] {};

    \draw[pto] (l1) -- (l2);
    \draw[pto] (l3) -- (l4);
    \draw[pto] (l2) -- (l5);
  \end{tikzpicture}
&
  \begin{tikzpicture}[baseline]
    \node[dot,label=above:$\avariable$] (l1) at (0,0) {};
    \node[dot] (l2) [below of=l1] {};
    \node[dot,label=above:$\avariablebis$] (l3) [right=0.5cm of l1] {};
    \node[dot] (l4) [below of=l3] {};

    \draw[pto] (l1) edge [bend right] (l2);
    \draw[pto] (l3) -- (l4);
    \draw[pto] (l2) edge [bend right] (l1);
  \end{tikzpicture}
&
  \begin{tikzpicture}[baseline]
    \node[dot,label=above:$\avariable$] (l1) at (0,0) {};
    \node[dot] (l2) [below of=l1] {};
    \node[dot,label=above:$\avariablebis$] (l3) [right=0.5cm of l1] {};
    \node[dot] (l4) [below of=l3] {};

    \draw[pto] (l1) -- (l2);
    \draw[pto] (l3) -- (l4);
    \draw[pto] (l2) -- (l3);
  \end{tikzpicture}
&
  \begin{tikzpicture}[baseline]
    \node[dot,label=above:$\avariable$] (l1) at (0,0) {};
    \node[dot] (l2) [below of=l1] {};
    \node[dot,label=above:$\avariablebis$] (l3) [right=0.5cm of l1] {};
    \node[dot] (l4) [below of=l3] {};

    \draw[pto] (l1) -- (l2);
    \draw[pto] (l3) -- (l4);
    \draw[pto] (l2) -- (l4);
  \end{tikzpicture}
\end{longtable}
\end{center}

The formula $(\size=1 \land \allocbacktwo{\avariable}{\avariablebis}) \lollipop (\reach(\avariable,\avariablebis)=3 \land \avariablebis \hpto^2_\avariable \avariablebis)$
can only be satisfied on heaps $\aheap'$ where there exists a way of adding a one-memory-cell heap $\aheap''$
with $\astore(\avariablebis) \in \range{\aheap''}$
and $(\aheap' + \aheap'')^3 (\astore(\avariable)) = \astore(\avariablebis)$.
This rules out all the memory states of the figure but the first and last one of the last row:
\begin{center}
\hfill
\begin{tikzpicture}[baseline]
  \node[dot,label=above:$\avariable$] (l1) at (0,0) {};
  \node[dot] (l2) [below of=l1] {};
  \node[dot,label=above:$\avariablebis$] (l3) [right=0.5cm of l1] {};
  \node[dot] (l4) [below of=l3] {};
  \node[dot] (l5) [left=0.5 of l2] {};

  \draw[pto] (l1) -- (l2);
  \draw[pto] (l3) -- (l4);
  \draw[pto] (l2) -- (l5);
\end{tikzpicture}
\hfill
\begin{tikzpicture}[baseline]
  \node[dot,label=above:$\avariable$] (l1) at (0,0) {};
  \node[dot] (l2) [below of=l1] {};
  \node[dot,label=above:$\avariablebis$] (l3) [right=0.5cm of l1] {};
  \node[dot] (l4) [below of=l3] {};

  \draw[pto] (l1) -- (l2);
  \draw[pto] (l3) -- (l4);
  \draw[pto] (l2) -- (l4);
\end{tikzpicture}
\hfill\,
\end{center}
Typically, $\aheap''$ can only take the value $\{ \aheap'(\aheap'(\astore(\avariable))) \pto \astore(\avariablebis)\}$.
However, only the second memory state is able to verify the condition $\pair{\astore}{\aheap' + \aheap''} \models \avariablebis \hpto^2_\avariable \avariablebis$. Then $\aheap'(\astore(\avariablebis)) = \aheap'(\aheap'(\astore(\avariable)))$ and we conclude that $\{\astore(\avariable),\astore(\avariablebis)\} \subseteq \domain{\aheap}$ and
$\aheap(\aheap(\astore(\avariable)))$ $= \aheap(\astore(\avariablebis))$.
\end{description}

(II) It is easy to show that $n(\avariable) \hpto n(\overline{\avariable})$ will never occur in the translation. Indeed, by always translating formulae with variables in $\{\avariable_1,\dots,\avariable_q\}$ we can only have $n(\avariable_i) \hpto n(\avariable_j)$ or ${n(\overline{\avariable_i}) \hpto n(\overline{\avariable_j})}$ (the latter case due to the renaming in the translation of the left side of the magic wand).
As such, $\aformula_{\hpto}(\avariable,\avariablebis, \overline{\avariable})$ can be used to check whenever $n(\avariable) \hpto n(\avariablebis)$.
%% \qed
\end{proof}

%%
%% \begin{lemma}
%% \label{lemma-next-pointstobis}
%% Let $\avariable, \avariablebis, \avariableter \in \PVAR$. There is a formula $\aformula_{\hpto}(\avariable,\avariablebis, \avariableter)$
%% n $\seplogic{\separate, \magicwand, \ls}$
%% such that
%% $
%% \avariable \neq \avariableter \wedge \avariablebis \neq \avariableter \wedge n(\avariable) \hpto  n(\avariablebis) \equiv
%% \avariable \neq \avariableter \wedge \avariablebis \neq \avariableter  \wedge \aformula_{\hpto}(\avariable,\avariablebis,\avariableter)
%% $.
%% \end{lemma}
%% }%% End of OnlyForVL
%% %%
%% \OnlyForVL{\input{proof-lemma-next-pointstobis}}%%
%%
As for $\allocbacktwo{\avariablebis}{\avariable}$, the properties of the translation imply the equivalence between $n(\avariable) \hpto  n(\avariablebis)$ and $\aformula_{\hpto}(\avariable,\avariablebis,\overline{\avariable})$ (as stated in Lemma~\ref{lemma-three-in-one}(II)).
By looking at the formulae herein defined, the predicate $\reach$ only appears bounded, i.e. in the form of $\reach(\avariable,\avariablebis) = 2$ and $\reach(\avariable,\avariablebis)=3$. The three new predicates  can therefore be defined in $\seplogic{\separate, \magicwand}$ enriched
with $\reach(\avariable,\avariablebis) = 2$ and $\reach(\avariable,\avariablebis) = 3$.
Observe that  $\reach(\avariable,\avariablebis) = 0$ and $\reach(\avariable,\avariablebis) = 1$ are already definable in
the logic $\seplogic{\separate, \magicwand}$.

\subsection{Undecidability results and non-finite axiomatization}%%
\label{section-undecidability-results-and-nfax}
It is time to collect the fruits of all our efforts and to conclude this part about undecidability.
As a direct consequence of Corollary~\ref{corollary-satisfiability-validity} and the undecidability of
$\foseplogic{\magicwand}$, here is one of the main results of the paper.%%
\begin{theorem}
\label{theorem-main-undecidability}
The satisfiability problem for $\seplogic{\separate, \magicwand, \ls}$ is undecidable.
\end{theorem}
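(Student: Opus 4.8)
The plan is to assemble the reduction machinery built in the previous subsections into a single effective many-one reduction from the satisfiability problem for $\foseplogic{\magicwand}$, whose undecidability was recalled above. The backbone is Corollary~\ref{corollary-satisfiability-validity}(I): for every closed formula $\aformula$ of $\foseplogic{\magicwand}$, the formula $\mathcal{T}_{\rm SAT}(\aformula)$ is equisatisfiable with $\aformula$. Since $\mathcal{T}_{\rm SAT}(\aformula)$ is obtained from $\aformula$ by a syntax-directed recursion (homomorphic on Boolean connectives, with explicit clauses for $\forall$ and $\magicwand$), it is computable from $\aformula$; so the only remaining task is to observe that $\mathcal{T}_{\rm SAT}(\aformula)$ can be regarded as a genuine $\seplogic{\separate, \magicwand, \ls}$ formula.

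First I would note that $\mathcal{T}_{\rm SAT}(\aformula)$ is a priori written using the three auxiliary predicates $\allocback{\avariable}$, $n(\avariable) = n(\avariablebis)$ and $n(\avariable) \hpto n(\avariablebis)$, which do not belong to the signature of $\seplogic{\separate, \magicwand, \ls}$. The task is therefore to substitute, for each occurrence, the $\seplogic{\separate, \magicwand, \ls}$-definable formulae supplied by Lemma~\ref{lemma-next-equality} (for $n(\avariable) = n(\avariablebis)$) and by the definability statements Lemma~\ref{lemma-inverse-alloc}(II) and Lemma~\ref{lemma-three-in-one}(II) (for $\allocback{\avariable}$ and $n(\avariable) \hpto n(\avariablebis)$, respectively). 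Crucially, the last two definitions are correct only under a distinctness side condition on the auxiliary variable; the translation was set up precisely so that each variable $\avariable \in \aset$ is paired with its involution image $\overline{\avariable}$, and $\Safe(\aset)$ guarantees that all variables in $\aset$ are interpreted by pairwise distinct locations. Hence replacing $\allocback{\avariable}$ by $\allocbacktwo{\overline{\avariable}}{\avariable}$ and $n(\avariable) \hpto n(\avariablebis)$ by $\aformula_{\hpto}(\avariable, \avariablebis, \overline{\avariable})$ is sound, as asserted by the parts~(II) of the two lemmas.

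The one point that requires care, and which I expect to be the only genuine obstacle, is checking that the distinctness hypotheses of the lemmas hold at every occurrence inside $\mathcal{T}_{\rm SAT}(\aformula)$. Concretely, whenever $n(\avariable) \hpto n(\avariablebis)$ is emitted, the auxiliary variable $\overline{\avariable}$ must be distinct from both $\avariable$ and $\avariablebis$. By construction of the involution $\overline{(.)}$ and the use of the renaming $[\avariableter \leftarrow \overline{\avariableter}]$ on the left-hand side of each $\magicwand$, the only patterns produced by the recursion are $n(\avariable_i) \hpto n(\avariable_j)$ and $n(\overline{\avariable_i}) \hpto n(\overline{\avariable_j})$, never $n(\avariable) \hpto n(\overline{\avariable})$; so the side condition $\astore(\overline{\avariable}) \notin \{\astore(\avariable), \astore(\avariablebis)\}$ is forced by $\Safe(\aset)$. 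This is exactly the observation carried out in the proof of Lemma~\ref{lemma-three-in-one}(II).

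Finally I would conclude by assembling the pieces: the map $\aformula \mapsto \mathcal{T}_{\rm SAT}(\aformula)$, with the three auxiliary predicates replaced as above, is an effective reduction from satisfiability of $\foseplogic{\magicwand}$ to satisfiability of $\seplogic{\separate, \magicwand, \ls}$, and it preserves satisfiability by Corollary~\ref{corollary-satisfiability-validity}(I) together with the correctness of the predicate encodings. Since the source problem is undecidable, so is the target, which establishes Theorem~\ref{theorem-main-undecidability}.
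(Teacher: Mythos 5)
Your proposal is correct and follows exactly the paper's route: the reduction from $\foseplogic{\magicwand}$ via $\mathcal{T}_{\rm SAT}$, justified by Corollary~\ref{corollary-satisfiability-validity}(I), with the auxiliary predicates eliminated using Lemmas~\ref{lemma-inverse-alloc}(II), \ref{lemma-next-equality} and~\ref{lemma-three-in-one}(II). Your remark that the distinctness side conditions are guaranteed because the translation only ever emits $n(\avariable_i) \hpto n(\avariable_j)$ or $n(\overline{\avariable_i}) \hpto n(\overline{\avariable_j})$ is precisely the observation made in the paper's proof of Lemma~\ref{lemma-three-in-one}(II).
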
%%
As a by-product, we get the following (negative) result.

\begin{theorem}
The set of valid formulae for  $\seplogic{\separate, \magicwand, \ls}$ is not
recursively enumerable.
\end{theorem}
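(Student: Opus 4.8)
The plan is to transfer the non-recursive-enumerability of the valid formulae of $\foseplogic{\magicwand}$ — asserted in the Proposition recalled above stating that the valid formulae of $\foseplogic{\magicwand}$ are not recursively enumerable — to $\seplogic{\separate, \magicwand, \ls}$ through the translation $\mathcal{T}_{\rm VAL}$ already at our disposal. Crucially, I would rely on the stronger half of that Proposition (non-enumerability of the valid set), and not merely on undecidability of satisfiability: undecidability alone would be insufficient, since a set may be recursively enumerable yet undecidable, so Theorem~\ref{theorem-main-undecidability} does not by itself settle the question.

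First, I would observe that $\mathcal{T}_{\rm VAL}$ is an effective syntactic transformation. Given a closed formula $\aformula$ of $\foseplogic{\magicwand}$ with quantified variables among $\set{\avariable_1, \ldots, \avariable_q}$, the formula $\mathcal{T}_{\rm VAL}(\aformula)$ of $\seplogic{\separate, \magicwand, \ls}$ is built by a terminating recursion on $\aformula$ (homomorphic for the Boolean connectives, with the prescribed clauses for $=$, $\hpto$, $\forall$ and $\magicwand$, together with the surrounding $\Safe(\aset)$ guards), and every auxiliary predicate $\allocback{\avariable}$, $n(\avariable) = n(\avariablebis)$, $n(\avariable) \hpto n(\avariablebis)$ has been expanded into genuine $\seplogic{\separate, \magicwand, \ls}$ syntax in Section~\ref{section-expressing-predicates}. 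Hence the map $\aformula \mapsto \mathcal{T}_{\rm VAL}(\aformula)$ is computable.

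Second, by Corollary~\ref{corollary-satisfiability-validity}(II), $\aformula$ and $\mathcal{T}_{\rm VAL}(\aformula)$ are equivalid, i.e.\ $\aformula$ is valid in $\foseplogic{\magicwand}$ if and only if $\mathcal{T}_{\rm VAL}(\aformula)$ is valid in $\seplogic{\separate, \magicwand, \ls}$. Thus $\mathcal{T}_{\rm VAL}$ witnesses a many-one reduction from the set of valid $\foseplogic{\magicwand}$ formulae to the set of valid $\seplogic{\separate, \magicwand, \ls}$ formulae.

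Finally, I would invoke the standard closure property that the recursively enumerable sets are closed under inverse images along computable functions: if the target set were recursively enumerable, then precomposing its semi-decision procedure with the computable map $\mathcal{T}_{\rm VAL}$ would semi-decide membership in the source set, making the valid $\foseplogic{\magicwand}$ formulae recursively enumerable — contradicting the cited Proposition. Therefore the set of valid $\seplogic{\separate, \magicwand, \ls}$ formulae is not recursively enumerable. The argument carries no real obstacle: all the substantive work resides in the construction and correctness of $\mathcal{T}_{\rm VAL}$ (Lemma~\ref{lemma-correctness-translation} and Corollary~\ref{corollary-satisfiability-validity}); the only point demanding care is to feed in non-enumerability rather than undecidability, as stressed above.
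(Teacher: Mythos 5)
Your proof is correct and follows essentially the same route as the paper: both reduce from the non-recursively-enumerable set of valid $\foseplogic{\magicwand}$ formulae via the computable, equivalidity-preserving map $\mathcal{T}_{\rm VAL}$ (Corollary~\ref{corollary-satisfiability-validity}(II)). The only cosmetic difference is that you invoke closure of recursively enumerable sets under computable preimages, whereas the paper enumerates the target set and filters for formulae syntactically of the form $\mathcal{T}_{\rm VAL}(\aformula)$ (a test it notes is decidable in polynomial time) before reading off $\aformula$ --- the same reduction in a different phrasing.
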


Indeed, suppose that the set of valid formulae for $\seplogic{\separate, \magicwand, \ls}$
were recursively enumerable, then one can enumerate the valid formulae of the form $\mathcal{T}_{\rm VAL}(\aformula)$
as it is decidable in \ptime whether $\aformulabis$ in $\seplogic{\separate, \magicwand, \ls}$
is syntactically equal to  $\mathcal{T}_{\rm VAL}(\aformula)$ for some $\foseplogic{\magicwand}$ formula $\aformula$.
This leads to a contradiction since this would allow the enumeration of valid formulae in $\foseplogic{\magicwand}$.
Note also that the set of satisfiable formulae for $\seplogic{\separate, \magicwand, \ls}$  is not
recursively enumerable.
Indeed, suppose $\aformula$ is a formula built over $V = \set{\avariable_1, \ldots, \avariable_n}$.
Given a partition $\aset$ of $V$, we write $\aformulabis_{\aset}$ to denote conjunctions of equalities and
inequalities that state that two variables (resp. not) in the same element of the partition are equal (resp. are different).
One can show that $\aformula$ is valid iff $\emp \wedge \aformulabis_{\aset} \wedge (\top \magicwand \aformula)$
is satisfiable, for all partitions $\aset$. As the set of valid formulae is not r.e., the same applies to
the set of satisfiable formulae.

Below, the essential ingredients to establish the undecidability of $\seplogic{\separate, \magicwand, \ls}$
allow us to propose several refinements. For instance, the key property rests on the fact
that the following properties $n(\avariable) = n(\avariablebis)$, $n(\avariable) \hpto n(\avariablebis)$
and $\allocback{\avariable}$ are expressible in the logic.

%%
%% SD 08/07/2018
%% \begin{corollary}
%% \label{corollary-two-in-one}
%% $\seplogic{\separate, \magicwand}$ augmented with
%% built-in  formulae
%% of the form $n(\avariable) = n(\avariablebis)$, $n(\avariable) \hpto n(\avariablebis)$ and
%% $\allocback{\avariable}$ (resp. of the form $\reach(\avariable,\avariablebis) = 2$ and $\reach(\avariable,\avariablebis) = 3$)
%% admits an undecidable satisfiability problem.
%% \end{corollary}%%
%% %%
%% %%

\begin{corollary}
\label{corollary-undecidability-next}
$\seplogic{\separate, \magicwand}$ augmented with
formulae
of the form $n(\avariable) = n(\avariablebis)$, ${n(\avariable) \hpto n(\avariablebis)}$ and
$\allocback{\avariable}$
admits an undecidable satisfiability problem.
\end{corollary}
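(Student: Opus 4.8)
The plan is to observe that the reduction already used for Theorem~\ref{theorem-main-undecidability} factors through precisely this extended logic, so that the corollary requires nothing beyond inspecting the \emph{target} of the translation $\mathcal{T}_{\rm SAT}$. The heavy lifting of the second stage of the reduction — expressing $n(\avariable) = n(\avariablebis)$, $n(\avariable) \hpto n(\avariablebis)$ and $\allocback{\avariable}$ inside $\seplogic{\separate, \magicwand, \ls}$, carried out in Lemmas~\ref{lemma-inverse-alloc}--\ref{lemma-three-in-one} — is simply discarded here, since these three predicates are now taken as primitive.

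First I would recall that, by Corollary~\ref{corollary-satisfiability-validity}(I), every closed formula $\aformula$ of $\foseplogic{\magicwand}$ is equisatisfiable with $\mathcal{T}_{\rm SAT}(\aformula)$. The essential point is then to verify that $\mathcal{T}_{\rm SAT}(\aformula)$ lives in $\seplogic{\separate, \magicwand}$ augmented \emph{only} with the three built-in predicates $n(\avariable) = n(\avariablebis)$, $n(\avariable) \hpto n(\avariablebis)$ and $\allocback{\avariable}$. Inspecting the definition of $\atranslation$ and of $\mathcal{T}_{\rm SAT}$, one checks that the atomic formulae $\avariable = \avariablebis$ are sent to $n(\avariable) = n(\avariablebis)$, the atomic formulae $\avariable \hpto \avariablebis$ to $n(\avariable) \hpto n(\avariablebis)$, and that $\allocback{\avariable}$ (written $\mathtt{alloc}^{-1}(\avariable)$) occurs only inside $\Safe(\aset)$. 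Every remaining construct used by the translation — the Boolean connectives, $\separate$, $\magicwand$, the inequalities $\avariable \neq \avariablebis$, the abbreviation $\alloc{\avariable}$, and the size constraints $\size = \beta$ — is already definable within plain $\seplogic{\separate, \magicwand}$.

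Next, since $\atranslation$ is defined by a straightforward structural recursion on $\aformula$, the map $\aformula \mapsto \mathcal{T}_{\rm SAT}(\aformula)$ is manifestly computable and hence a genuine many-one reduction. Composing it with the undecidability of the satisfiability problem for $\foseplogic{\magicwand}$ (the Proposition recalled at the start of Section~\ref{section-undecidability}) yields at once the undecidability of the satisfiability problem for $\seplogic{\separate, \magicwand}$ extended with the three built-in predicates.

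I do not expect any genuine obstacle: the whole argument reduces to a bookkeeping check that the target language of the first stage of the reduction is exactly the claimed extension. The one spot that deserves care is confirming that nothing in $\mathcal{T}_{\rm SAT}$ secretly relies on $\ls$; since $\alloc{\avariable}$ and $\size = \beta$ are the only non-trivial abbreviations in play and both are available already in $\seplogic{\separate, \magicwand}$, this is immediate.
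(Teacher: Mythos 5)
Your proposal is correct and follows exactly the route the paper intends: Corollary~\ref{corollary-undecidability-next} is presented there as an immediate consequence of the fact that the translation $\mathcal{T}_{\rm SAT}$ of Section~\ref{section-the-translation} targets precisely $\seplogic{\separate, \magicwand}$ (with $\emptyconstant$, hence $\size = \beta$, and $\alloc{\avariable}$ all definable) extended with the three predicates taken as primitive, so that Corollary~\ref{corollary-satisfiability-validity}(I) plus the undecidability of $\foseplogic{\magicwand}$ finishes the argument. Your bookkeeping check, including the observation that Lemmas~\ref{lemma-inverse-alloc}--\ref{lemma-three-in-one} become unnecessary once the predicates are built in, is exactly what the paper leaves implicit.
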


Corollary~\ref{corollary-undecidability-next} can be refined a bit further by noting in which context
the reachability predicates
$\ls$ and $\reach$ are used in the formulae. This allows us to get the result below.

\begin{corollary}
\label{corollary-undecidability-limited-reach}
$\seplogic{\separate, \magicwand}$ augmented with
built-in formulae
of the form $\reach(\avariable,\avariablebis) = 2$ and $\reach(\avariable,\avariablebis) = 3$
admits an undecidable satisfiability problem.
\end{corollary}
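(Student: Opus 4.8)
The plan is to derive this refinement directly from Corollary~\ref{corollary-undecidability-next} by showing that each of the three built-in predicates $\allocback{\avariable}$, $n(\avariable) = n(\avariablebis)$ and $n(\avariable) \hpto n(\avariablebis)$ used there is already definable in $\seplogic{\separate, \magicwand}$ enriched \emph{only} with the two atoms $\reach(\avariable,\avariablebis) = 2$ and $\reach(\avariable,\avariablebis) = 3$. Lemmas~\ref{lemma-inverse-alloc}, \ref{lemma-next-equality} and \ref{lemma-three-in-one} (together with the auxiliary Lemma~\ref{lemma-two-steps}) already provide such definitions inside $\seplogic{\separate, \magicwand, \ls}$, so the whole argument reduces to auditing exactly which reachability tests those definitions invoke. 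Since the two target atoms then suffice to express everything Corollary~\ref{corollary-undecidability-next} needs, satisfiability for its logic reduces to satisfiability for the logic of the present statement, and undecidability transfers.

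First I would note that in every one of the defining formulae---$\allocbacktwo{\overline{\avariable}}{\avariable}$, $\avariable \hpto^2_\avariablebis \avariable$, the body of $n(\avariable) = n(\avariablebis)$, and $\aformula_{\hpto}(\avariable,\avariablebis,\overline{\avariable})$---the list-segment predicate never appears on its own but always through the bounded shorthand $\reach(\avariable,\avariablebis) = \gamma$ (or its disjunctive form $\reach(\avariable,\avariablebis) \leq \gamma$). Reading off the bounds that actually occur, one finds only $\gamma \in \set{0,1,2,3}$.

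Next I would eliminate the two trivial bounds. By the semantics of $[\cdot]_\gamma$, the atom $\reach(\avariable,\avariablebis) = 0$ is witnessed by an empty subheap and is thus equivalent to $\avariable = \avariablebis$, while $\reach(\avariable,\avariablebis) = 1$ is witnessed by a single-cell subheap and is thus expressible using only $\hpto$ and $=$; both therefore already belong to $\seplogic{\separate, \magicwand}$. Consequently $\reach(\avariable,\avariablebis) \leq 3$ rewrites as the Boolean combination of $\reach(\avariable,\avariablebis) = 0$, $\reach(\avariable,\avariablebis) = 1$, $\reach(\avariable,\avariablebis) = 2$ and $\reach(\avariable,\avariablebis) = 3$, leaving $\reach(\avariable,\avariablebis) = 2$ and $\reach(\avariable,\avariablebis) = 3$ as the only genuinely new atoms. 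Thus the three predicates of Corollary~\ref{corollary-undecidability-next}, and therefore the entire translation $\mathcal{T}_{\rm SAT}$ it relies on, can be expressed inside $\seplogic{\separate, \magicwand}$ augmented with $\reach(\avariable,\avariablebis) = 2$ and $\reach(\avariable,\avariablebis) = 3$.

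The step I expect to require the most care---rather than genuine difficulty---is the bookkeeping around the auxiliary variables. The correctness of Lemmas~\ref{lemma-inverse-alloc} and \ref{lemma-three-in-one} rests on side-conditions such as $\astore(\avariable) \neq \astore(\avariablebis)$, and $\astore(\avariable) \neq \astore(\avariableter)$ together with $\astore(\avariablebis) \neq \astore(\avariableter)$, which the translation already secures by using the distinguished barred copy $\overline{\avariable}$ and by maintaining the $\Safe(\aset)$ invariant throughout (exactly the instantiations licensed by parts \textbf{(II)} of those lemmas). I would check that this invariant is untouched by the purely propositional rewriting of the small bounds described above, after which the reduction of Corollary~\ref{corollary-undecidability-next} transfers verbatim.
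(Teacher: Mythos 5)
Your proposal is correct and follows essentially the same route as the paper: the paper obtains this corollary precisely by observing (at the end of Section~\ref{section-expressing-predicates}) that in the defining formulae for $\allocback{\avariable}$, $n(\avariable) = n(\avariablebis)$ and $n(\avariable) \hpto n(\avariablebis)$ the predicate $\reach$ only occurs in the bounded forms $\reach(\avariable,\avariablebis) = 2$ and $\reach(\avariable,\avariablebis) = 3$, and then invoking Corollary~\ref{corollary-undecidability-next}. Your additional bookkeeping (discharging the $\gamma \in \set{0,1}$ cases via $\avariable = \avariablebis$ and $\avariable \hpto \avariablebis \wedge \avariable \neq \avariablebis$, and checking the distinctness side-conditions guaranteed by $\Safe(\aset)$) is exactly the auditing the paper leaves implicit.
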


It is the addition of $\reach(\avariable,\avariablebis) = 3$ that is crucial for undecidability since
the satisfiability problem for $\seplogic{\separate, \magicwand,\reach(\avariable,\avariablebis) = 2}$
is in \pspace~\cite{Demrietal17}.%%
%%
%% SD 08/07/2018
%% The new formulation allows to get everything ready
%%
%% \input{proof-corollary-undecidability-limited-reach}
%%
%%
%% SD 15/10/2017
%%
%% I believe that the result holds but we need a refine our translation (because of
%% the current assumption that distinct quantifications should ue distinct variables.
%% We should speak about it in the long version but not here.
%%
%%
%% \begin{theorem}
%% The satisfiability problem for $\seplogic{\separate, \magicwand, \ls}$ restricted to four program
%% variables is undecidable.
%% \end{theorem}
%%
%%
%% \begin{proof} (sketch) TBC
%% \end{proof}
%%

Following a similar analysis, let SL1($\forall,\separate,\magicwand$) (resp. SL2($\forall,\separate,\magicwand$)) be the
restriction of $\foseplogic{\separate, \magicwand}$ (i.e.  $\foseplogic{\magicwand}$ plus $\separate$)
to formulae of the form
$
\exists \avariable_1 \ \cdots \  \exists \avariable_q \ \aformula
$,
where $q \geq 1$, the variables in $\aformula$ are among $\set{\avariable_1, \ldots, \avariable_{q+1}}$
(resp. are among $\set{\avariable_1, \ldots, \avariable_{q+2}}$)
and the only quantified variable in $\aformula$ is $\avariable_{q+1}$ (resp. and the only quantified variables in $\aformula$ are $\avariable_{q+1}$ and $\avariable_{q+2}$).
Note that SL2($\forall,\separate,\magicwand$) should not be confused with $\foseplogic{\magicwand}$ restricted to two quantified variables.
The satisfiability problem for SL1($\forall,\separate,\magicwand$) is \pspace-complete~\cite{Demrietal17}.
Observe that SL1($\forall,\separate,\magicwand$) can easily express $n(\avariable) = n(\avariablebis)$
and $\allocback{\avariable}$.
The distance  between the decidability for SL1($\forall,\separate,\magicwand$)
and the undecidability for $\seplogic{\separate, \magicwand, \ls}$, is best witnessed by Corollary~\ref{corollary-final-undecidability}(I) below,
which solves an open problem~\cite[Section 6]{Demrietal17}.
Below, we state several undecidability results for the record, but we do not claim that all these variants
happen to be interesting in practice.
\begin{corollary}
\label{corollary-final-undecidability}
The satisfiability problem of the following logics is undecidable:
\begin{enumerate}[label=\normalfont{\textbf{(\Roman*)}}]
\item SL1($\forall,\separate,\magicwand$) augmented with
$n(\avariable) \hpto n(\avariablebis)$,
\item SL1($\forall,\separate,\magicwand$) augmented with $\ls$,
\item SL2($\forall,\separate,\magicwand$) ,
\item $\seplogic{\separate, \magicwand, \ls}$ restricted to four program
variables,
\item $\seplogic{\magicwand,n(\avariable) = n(\avariablebis), n(\avariable) \hpto n(\avariablebis),\allocback{\avariable}}$.
\end{enumerate}
\end{corollary}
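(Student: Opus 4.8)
The plan is to derive all five statements from the machinery already built in this section, whose distilled form is Corollary~\ref{corollary-undecidability-next}: $\seplogic{\separate,\magicwand}$ becomes undecidable as soon as the three predicates $n(\avariable)=n(\avariablebis)$, $n(\avariable)\hpto n(\avariablebis)$ and $\allocback{\avariable}$ are available. For each fragment I would (a) show that it can \emph{define} (or is directly given) these three predicates, and then (b) reshape the resulting instance $\mathcal{T}_{\rm SAT}(\aformula)$ so that it syntactically lives in the fragment while remaining equisatisfiable.

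For (I) and (II), the single universally quantified variable of SL1 already defines two of the predicates: $n(\avariable)=n(\avariablebis)\equiv\neg\forall z\,\neg(\avariable\hpto z\wedge\avariablebis\hpto z)$ and $\allocback{\avariable}\equiv\neg\forall z\,\neg(z\hpto\avariable)$. Taking $n(\avariable)\hpto n(\avariablebis)$ as a built-in atom gives (I) at once; for (II) I would instead write $n(\avariable)\hpto n(\avariablebis)\equiv\neg\forall z\,\neg(\avariablebis\hpto z\wedge\reach(\avariable,z)=2)$, using $\reach(\avariable,z)=2$, i.e. $[\ls(\avariable,z)]_2$, which is legal since SL1 retains $\separate$ and we now have $\ls$. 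In both cases one bound variable (always reused as $\avariable_{q+1}$) suffices, and prefixing $\mathcal{T}_{\rm SAT}(\aformula)$ with existential quantifiers over its $2q$ program variables puts the formula into SL1 shape without affecting satisfiability. Item (III) is identical, except that the \emph{second} universal variable lets SL2 define $n(\avariable)\hpto n(\avariablebis)\equiv\neg\forall z_1\forall z_2\,\neg(\avariable\hpto z_1\wedge z_1\hpto z_2\wedge\avariablebis\hpto z_2)$, so that all three predicates are internal to SL2 and the same prefixing argument applies.

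Item (IV) is harder, since $\mathcal{T}_{\rm SAT}$ consumes $2q$ program variables whereas only four are allowed. Here I would not name the $q$ logical variables individually but store their current valuation inside the heap as a list segment (via $\ls$), reserving a fixed pool of program variables as navigation pointers: four variables then suffice to designate the cell currently encoding the quantified variable, its bar-copy, and the two pointers needed to walk and update the list. The delicate point is preserving the invariants of $\rhd_q^{\asetbis}$ through every $\magicwand$-step while the valuation list is reused, so that Lemma~\ref{lemma-correctness-translation} still goes through in the list-based representation.

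The main obstacle is (V), where $\separate$ is forbidden. The helpful observation is that the source logic $\foseplogic{\magicwand}$ is itself $\separate$-free and the three predicates are now \emph{primitive}, so the only occurrences of $\separate$ that must be eliminated are the size tests $\size=1$ and $\size=\card{\asetter}$ and the explicit separating conjunction in the magic-wand clause of $\atranslation$ (note that $\alloc{\avariable}$ and $\Safe(\aset)$ are already $\separate$-free once $\allocback{\avariable}$ is primitive). I would replace these by septraction $\septraction$, which is $\magicwand$-definable, together with the primitive predicates, re-expressing the ``add exactly one cell for $\avariable_i$'' step and the ``isolate the bar-cells before evaluating $\atranslation(\aformulabis_2)$'' step purely in terms of which locations become allocated or acquire a predecessor. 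The crux, and the step I expect to fight with, is ruling out spurious cells added inside the ``real'' part $\LOC_1$ of the heap by the universal reading of $\magicwand$, since without $\separate$ one cannot count or carve out sub-heaps; I would handle this by strengthening the $\Safe$/$\OK$ guard so that any extra cell is forced to violate a predicate-expressible pattern, and then re-prove the analogue of Lemma~\ref{lemma-correctness-translation} for this $\separate$-free translation.
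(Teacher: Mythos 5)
Your treatment of items (I)--(III) is essentially the paper's: define $n(\avariable)=n(\avariablebis)$ and $\allocback{\avariable}$ with the single quantified variable, obtain $n(\avariable)\hpto n(\avariablebis)$ as a built-in atom, via $\ls$, or via the second quantified variable, and existentially close over the program variables of $\mathcal{T}_{\rm SAT}(\aformula)$ so as to land in SL1/SL2; this reduces everything to Corollary~\ref{corollary-undecidability-next}, exactly as in the paper (your one-line definition of $n(\avariable)\hpto n(\avariablebis)$ from $\reach(\avariable,z)=2$ needs care on cyclic corner cases, but that is fixable).

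Item (IV) contains a genuine gap. The paper does not compress $2q$ program variables into four by serialising the store as a list; it invokes the result of \cite{DemriDeters16} that $\foseplogic{\magicwand}$ restricted to \emph{two} quantified variables is already undecidable, so that $q=2$ and the translation natively uses $2q=4$ program variables. The only adaptation then needed is to drop the assumption that distinct quantifications bind distinct variables, which is done by modifying the clause for $\forall\avariable_i$ so that the cell of $\avariable_i$ is first removed before being re-allocated, namely $((\alloc{\avariable_i}\wedge\size=1)\vee\emptyconstant)\separate(\neg\alloc{\avariable_i}\wedge((\alloc{\avariable_i}\wedge\size=1)\magicwand(\Safe(\aset)\Rightarrow\atranslation(\aformulabis,\aset))))$. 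Without the two-variable undecidability result, your list-based encoding must manage unboundedly many variable slots with four pointers: you give no mechanism for addressing the $i$-th slot (there is no arithmetic available), and the valuation list would not survive the sub-heap decompositions forced by $\separate$ and $\magicwand$ --- precisely the invariant you flag as ``delicate'' without establishing it. As written, (IV) is not proved. For (V) you also take a different and harder route: the paper observes that, once the three predicates are primitive, $\separate$ occurs only in the prefix $(\bigwedge_{\avariableter\in\asetter}\alloc{\overline{\avariableter}}\wedge\size=\card{\asetter})\separate$ in front of $\atranslation(\aformulabis_2,\aset)$, and simply deletes it, giving up the recycling of the bar-variables and re-proving Lemma~\ref{lemma-correctness-translation} for the weakened translation. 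Your plan keeps the recycling and tries to simulate the carving-out of the bar-cells with septraction; the unresolved crux you identify (excluding spurious cells inside $\LOC_1$) is exactly what the paper's deletion sidesteps, so your sketch of (V) remains incomplete where the paper's argument is not.
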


\begin{proof}  (I) Consequence of Corollary~\ref{corollary-undecidability-next} by observing that
$\emptyconstant$, 
$n(\avariable) = n(\avariablebis)$ and $\allocback{\avariable}$ can be expressed with one quantified variable.

(II) Consequence of (I), as $n(\avariable) \hpto n(\avariablebis)$ can be expressed with $\ls$.

(III) Consequence of (I), as $n(\avariable) \hpto n(\avariablebis)$ can be expressed with two quantified variables.

(IV) It is shown in~\cite{DemriDeters16} that $\foseplogic{\magicwand}$ restricted to two quantified variables is undecidable.
The translation provided in Section~\ref{section-the-translation} assumes that distinct quantifications involve distinct variables.
In order to translate $\foseplogic{\magicwand}$ restricted to two quantified variables, it is necessary to give up that assumption
and to update the definition of $\atranslation$. Actually, only the clause for formulae of the form
$\forall \avariable_i \ \aformulabis$ requires a change ($i \in \set{1,2}$ and the formulae to be translated
contains at most two variables). Here is the new value for $\atranslation(\forall \avariable_i \ \aformulabis, \aset)$:
$$
((\alloc{\avariable_i} \wedge \size =1) \vee \emptyconstant) \separate
(\neg \alloc{\avariable_i} \wedge
(\alloc{\avariable_i} \wedge \size = 1) \magicwand
(\Safe(\aset) \Rightarrow
\atranslation(\aformulabis, \aset))).
$$
The proof of Lemma~\ref{lemma-correctness-translation} can be updated accordingly.

(V) In the translation $\atranslation$ to  $\seplogic{\separate, \magicwand, \ls}$ formulae, the separating connective $\separate$
appears only in the definition of $\atranslation(\aformulabis_1 \magicwand \aformulabis_2, \aset)$, assuming that
$n(\avariable) = n(\avariablebis)$, $n(\avariable) \hpto n(\avariablebis)$ and $\allocback{\avariable}$ are built-in predicates.
It is easy to check that if we remove
$$``(\bigwedge_{\mathclap{\avariableter \in \asetter}} \alloc{\overline{\avariableter}} \wedge \size = \card{\asetter}) \separate"$$
from $\atranslation(\aformulabis_1 \magicwand \aformulabis_2, \aset)$ (i.e., we disallow the recycling of variables),
the proof of Lemma~\ref{lemma-correctness-translation} can be updated accordingly.
\end{proof}

%%
%% \section{$\seplogic{\separate, \reachplus}$ and other \pspace variants}%%
\section{Decision Procedures in \pspace for $\seplogic{\separate, \reachplus}$ and  Variants}%%
\label{section-decidability-star}%%
We show that the satisfiability problem for $\seplogic{\separate, \reachplus}$ can be solved in
polynomial space.
Refining the arguments used in our proof,
we also show that it is possible to push further the \pspace upper bound to the formulae expressible as a
Boolean combination of formulae from the two fragments $\seplogic{\separate, \reachplus}$ and $\seplogic{\separate, \sepimp}$.
The latter logic is shown to be \pspace-complete in~\cite{Yang01,Lozes04bis}.
Moreover, as shown in Section~\ref{section-preliminaries},
$\seplogic{\separate, \ls}$
can be understood as a fragment of $\seplogic{\separate, \reachplus}$ and therefore the \pspace upper bound for $\seplogic{\separate, \reachplus}$ established
here also holds for $\seplogic{\separate, \ls}$.

Our proof relies on a small heap property: a formula $\aformula$ is satisfiable
if and only if it admits a model with
a polynomial amount of memory cells.
The \pspace upper
bound then follows by establishing that the model-checking problem for $\seplogic{\separate, \reachplus}$
is in \pspace too.
To establish the small heap property,
an equivalence relation on memory states with finite index is designed,
following the standard approach
in~\cite{Yang01,Calcagno&Gardner&Hague05} and using test formulae as
in~\cite{Lozes04,Lozes04bis,Brochenin&Demri&Lozes09,Demrietal17,Echenim&Iosif&Peltier20}.
\subsection{Introduction to test formulae}%%
%%
%Before presenting the test formulae for $\seplogic{\separate, \reachplus}$, let us recall the standard
%result for $\seplogic{\separate, \magicwand}$, that will also be used later on (see Theorem~\ref{theorem-boolean-combination}).%%
In order to show the \pspace upper bound for $\seplogic{\separate, \reachplus}$, we would like to adapt the technique introduced in~\cite{Lozes04bis} for $\seplogic{\separate, \magicwand}$, which relies on defining a set of \emph{test formulae} capturing the essential properties expressible in $\seplogic{\separate, \magicwand}$, as depicted by the following result.

\begin{proposition} \label{proposition-prop-sl} \cite{Lozes04bis,DD-esslli15}
Any formula $\aformula$ in $\seplogic{\separate, \magicwand}$ built over  variables
in $\avariable_1$, \ldots,$\avariable_q$ is logically equivalent to a Boolean combination
of test formulae, i.e. formulae among $\size \geq \beta$, $\alloc{\avariable_i}$,
$\avariable_i \pointsl{} \avariable_j$ and $\avariable_i = \avariable_j$,
with $\beta \geq 0$ and $i,j \in \interval{1}{q}$.
%%
%% ($\beta \leq \length{\aformula}$, $i,j \in \set{1,\ldots,q}$).
%%
\end{proposition}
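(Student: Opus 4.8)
The plan is to establish the stronger statement that the listed atoms form a set of \emph{test formulae} whose Boolean combinations generate a finite-index equivalence on memory states that is invariant under the whole semantics of $\seplogic{\separate,\magicwand}$. Fix the variables $\avariable_1,\dots,\avariable_q$ and, for a threshold $\alpha \geq 0$, declare two memory states $\pair{\astore}{\aheap}$ and $\pair{\astore}{\aheap'}$ (agreeing on the store for these variables) to be \emph{$\alpha$-equivalent}, written $\pair{\astore}{\aheap} \sim_{\alpha} \pair{\astore}{\aheap'}$, when they satisfy exactly the same test formulae among $\avariable_i = \avariable_j$, $\alloc{\avariable_i}$, $\avariable_i \pointsl{} \avariable_j$ (for $i,j \in \interval{1}{q}$) and $\size \geq \beta$ for $\beta \leq \alpha$. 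For fixed $q$ and $\alpha$ there are only finitely many such atoms, so $\sim_{\alpha}$ has finite index, and each of its classes is defined by a single Boolean combination of test formulae, namely the conjunction of the atoms it satisfies with the negations of the atoms it falsifies.

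The core is to show, by induction on $\aformula$, that if $\aformula \in \seplogic{\separate,\magicwand}$ has a suitable complexity measure bounded by $\alpha$ (essentially the maximal $\size$-threshold it can force, controlled by its number of nested $\separate$ and $\magicwand$ connectives weighted by $q$), then $\pair{\astore}{\aheap} \sim_{\alpha} \pair{\astore}{\aheap'}$ implies $\pair{\astore}{\aheap} \models \aformula$ iff $\pair{\astore}{\aheap'} \models \aformula$. The base cases are immediate, since $\avariable_i = \avariable_j$ and $\avariable_i \pointsl{} \avariable_j$ are themselves test formulae and $\emptyconstant$ is $\neg(\size \geq 1)$; the Boolean cases follow directly from the induction hypothesis.

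The crux, and the step I expect to be the main obstacle, is the treatment of $\separate$ and $\magicwand$, which rests on two dual composition lemmas with careful threshold bookkeeping. For $\separate$ I would prove a \emph{splitting lemma}: if $\pair{\astore}{\aheap} \sim_{\alpha} \pair{\astore}{\aheap'}$ and $\aheap = \aheap_1 + \aheap_2$, then $\aheap'$ decomposes as $\aheap' = \aheap_1' + \aheap_2'$ with $\pair{\astore}{\aheap_k} \sim_{\alpha_k} \pair{\astore}{\aheap_k'}$ for suitable $\alpha_1 + \alpha_2 = \alpha$. The construction proceeds cell by cell: each cell of $\aheap'$ whose source or target is some $\astore(\avariable_i)$ is forced into the same side as the matching cell of $\aheap$, while the anonymous cells of $\aheap'$ are distributed so as to match the capped sizes demanded on each side. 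Dually, for $\magicwand$ I would prove an \emph{extension lemma}: given $\pair{\astore}{\aheap} \sim_{\alpha} \pair{\astore}{\aheap'}$ and any $\aheap_1 \perp \aheap$, one builds $\aheap_1' \perp \aheap'$ with $\pair{\astore}{\aheap_1} \sim_{\alpha'} \pair{\astore}{\aheap_1'}$ and $\pair{\astore}{\aheap + \aheap_1} \sim_{\alpha'} \pair{\astore}{\aheap' + \aheap_1'}$. Here the delicate point is to add or reuse anonymous cells in the complement of $\domain{\aheap'}$ so as to realise the same pointer pattern relative to the $\avariable_i$ while matching heap sizes up to the threshold; this is exactly where the finiteness of the store image $\set{\astore(\avariable_i) \mid i \in \interval{1}{q}}$ and the capping of sizes are essential.

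Granting these two lemmas, the induction goes through, so every $\aformula$ of complexity $\leq \alpha$ is invariant under $\sim_{\alpha}$. Consequently $\aformula$ is equivalent to the finite disjunction of the defining Boolean combinations of those $\sim_{\alpha}$-classes that satisfy it, which is itself a Boolean combination of test formulae. A final check that the thresholds $\beta$ occurring in this disjunction are bounded by the complexity measure of $\aformula$ guarantees that only finitely many atoms $\size \geq \beta$ appear, completing the argument.
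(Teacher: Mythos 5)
Your proposal is correct and follows essentially the same route as the proofs in the cited sources and as the paper's own analogous development for $\seplogic{\separate,\reachplus}$: a finite family of test formulae inducing a finite-index indistinguishability relation, an induction on formulae with a memory-size measure whose key steps are a splitting lemma for $\separate$ (the counterpart of Lemma~\ref{lemma-star}) and an extension lemma for $\magicwand$, followed by the disjunction-over-equivalence-classes argument (as in Theorem~\ref{theorem-quantifier-elimination}). The paper itself does not reprove this proposition but imports it from \cite{Yang01,Lozes04bis}, whose proofs proceed exactly as you outline; the only cosmetic deviation is your restriction to pairs of states sharing the same store, which is harmlessly removed by first normalising stores via an isomorphism as in Lemma~\ref{lemma-equivalence-isomorphism}.
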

The formulae of the form $\size \geq \beta$ and $\alloc{\avariable_i}$ are introduced in
Section~\ref{section-preliminaries} and $\alloc{\avariable_i}$ holds when $\astore(\avariable_i)$
belongs to the heap domain and $\size \geq \beta$ holds when the heap has at least $\beta$ memory cells.
By way of example, $(\neg \emptyconstant \separate ((\avariable_1 \pointsl{} \avariable_1) \magicwand \perp))$
is equivalent to $\size \geq 2 \wedge \alloc{\avariable_1}$.  As a corollary of its proof,
Proposition~\ref{proposition-prop-sl} can be refined so that in the formulae $\size \geq \beta$, we can enforce that
$\beta \leq 2 \times \length{\aformula}$ (rough upper
bound)
where $\length{\aformula}$ is the size of $\aformula$ (seen as a tree).
Consequently,  for any satisfiable formula $\aformula$ in
$\seplogic{\separate, \magicwand}$, there is a memory state with less than $2 \times \length{\aformula}$
memory cells that satisfies it.
Similar results will be shown for $\seplogic{\separate, \reachplus}$
and for some of its extensions.

In order to define a set of test formulae that captures the expressive power of $\seplogic{\separate, \reachplus}$,
we need to study which basic properties on memory states can be expressed by formulae in $\seplogic{\separate, \reachplus}$.
For example, consider the memory states from Figure~\ref{figure-fourms}.
\begin{figure}
\begin{center}
    \newlength{\R}\setlength{\R}{1.2cm}
     \begin{tikzpicture}[baseline]

        \node[dot,label=above right:$\avariable_i$] (i) at ( 60:\R) {};
        \node[dot] (km) at (120:\R) {};
        \node[dot,label=left:$\avariable_k$] (k) at (180:\R) {};
        \node[dot] (jm) at (240:\R) {};
        \node[dot,label=below right:$\avariable_j$] (j) at (300:\R) {};
        \node[dot] (im) at (360:\R) {};

       \draw[pto] (i) edge [bend left = 20] node {} (im);
        \draw[pto] (im) edge [bend left = 20] node {} (j);
       \draw[pto] (j) edge [bend left = 20] node {} (jm);
      \draw[pto] (jm) edge [bend left = 20] node {} (k);
       \draw[pto] (k) edge [bend left = 20] node {} (km);
        \draw[pto] (km) edge [bend left = 20] node {} (i);
     \end{tikzpicture}
  \hfill
     \begin{tikzpicture}[baseline]

        \node[dot,label=above right:$\avariable_i$] (i) at ( 60:\R) {};
        \node[dot] (km) at (120:\R) {};
        \node[dot,label=left:$\avariable_k$] (k) at (180:\R) {};
        \node[dot] (jm) at (240:\R) {};
        \node[dot,label=below right:$\avariable_j$] (j) at (300:\R) {};
        \node[dot] (im) at (360:\R) {};

       \draw[pto] (im) edge [bend right = 20] node {} (i);
        \draw[pto] (j) edge [bend right = 20] node {} (im);
       \draw[pto] (jm) edge [bend right = 20] node {} (j);
      \draw[pto] (k) edge [bend right = 20] node {} (jm);
       \draw[pto] (km) edge [bend right = 20] node {} (k);
        \draw[pto] (i) edge [bend right = 20] node {} (km);
     \end{tikzpicture}
    \hfill
     \begin{tikzpicture}[baseline]

        \node[dot,label=above right:$\avariable_j$] (j) at ( 60:\R) {};
        \node[dot,label=above left:$\avariable_i$] (i) at (120:\R) {};
        \node[dot,label=left:$\alocation$] (m1) [below of=i] {};
        \node[dot,label=right:$\alocation'$] (m2) [below of=j] {};
        \node[dot, label=below:$\avariable_k$] (k) at (270:\R) {};

        \draw[pto] (i) edge node {} (m1);
        \draw[pto] (j) edge node {} (m2);
        \draw[pto] (m1) edge [bend left = 20] node {} (m2);
        \draw[pto] (m2) edge [bend left = 20] node {} (k);
        \draw[pto] (k) edge [bend left = 20] node {} (m1);
     \end{tikzpicture}
  \hfill
     \begin{tikzpicture}[baseline]

        \node[dot,label=above right:$\avariable_j$] (j) at ( 60:\R) {};
        \node[dot,label=above left:$\avariable_i$] (i) at (120:\R) {};
        \node[dot,label=left:$\alocation$] (m1) [below of=i] {};
        \node[dot,label=right:$\alocation'$] (m2) [below of=j] {};
        \node[dot, label=below:$\avariable_k$] (k) at (270:\R) {};

        \draw[pto] (i) edge node {} (m1);
        \draw[pto] (j) edge node {} (m2);
        \draw[pto] (m1) edge [bend right = 20] node {} (k);
        \draw[pto] (k) edge [bend right = 20] node {} (m2);
        \draw[pto] (m2) edge [bend right = 20] node {} (m1);
     \end{tikzpicture}
\end{center}%%
\caption{Memory states $\pair{\astore_1}{\aheap_1}$, \ldots, $\pair{\astore_4}{\aheap_4}$ (from left to right)}
\label{figure-fourms}
\end{figure}
The memory states $\pair{\astore_1}{\aheap_1}$ and $\pair{\astore_2}{\aheap_2}$ can be distinguished by the formula
$$\true \sepcnj (\reach^+(\avariable_i,\avariable_j) \land \reach^+(\avariable_j,\avariable_k) \land \lnot \reach^+(\avariable_k,\avariable_i)).$$
Indeed,~$\pair{\astore_1}{\aheap_1}$ satisfies this formula (a witness is obtained by removing the only edge that leads to $\astore_1(\avariable_i)$),
whereas $\pair{\astore_2}{\aheap_2}$ does not, as every subheap of $\aheap_2$
that retains the path from $\astore(\avariable_i)$ to $\astore(\avariable_j)$
and
the one from $\astore(\avariable_j)$ to $\astore(\avariable_k)$ necessarily has a path from $\astore(\avariable_k)$ to $\astore(\avariable_i)$. This suggests that $\seplogic{\separate, \reachplus}$ can express whether, for example, any path from $\astore(\avariable_i)$ to $\astore(\avariable_j)$ also contains $\astore(\avariable_k)$.
We will introduce the test formula $\sees_q(\avariable_i,\avariable_j) \geq \beta$ to capture this property.

Similarly, the memory states $\pair{\astore_3}{\aheap_3}$ and $\pair{\astore_4}{\aheap_4}$ can be distinguished by the formula
\[(\size = 1) \sepcnj \big(\reach^+(\avariable_j,\avariable_k) \land \lnot \reach^+(\avariable_i,\avariable_k)
\land \lnot \reach^+(\avariable_k,\avariable_k)\big).\]
The memory state $\pair{\astore_3}{\aheap_3}$ satisfies this formula by separating $\{ \alocation \mapsto \alocation'\}$ from the rest of the heap,
whereas the formula is not satisfied by $\pair{\astore_4}{\aheap_4}$.
Indeed, there is no way to break the loop from $\astore(\avariable_k)$ to itself by removing just one location from the heap while retaining the path from
$\astore(\avariable_j)$ to $\astore(\avariable_k)$ and loosing the path from $\astore(\avariable_i)$ to $\astore(\avariable_k)$. This suggests
that the two locations $\alocation$ and $\alocation'$ are particularly interesting since they are reachable from several locations corresponding
to program variables. Therefore by separating them from the rest of the heap, several paths are lost. In order to capture this, we introduce
the notion of \defstyle{meet-points}.

Informally, given a memory state $\pair{\astore}{\aheap}$,
a meet-point between $\astore({\avariable})$ and $\astore({\avariablebis})$ leading to $\astore({\avariableter})$  is a location
$\alocation$ such that $\alocation$  reaches $\astore({\avariableter})$, both locations
$\astore(\avariable)$ and $\astore({\avariablebis})$ reach $\alocation$,
and there is no location $\alocation'$ satisfying these properties and reachable from $\astore({\avariable})$ in strictly
less steps.
Let us formalise this notion.
Let $\Terms{q}$ be the set $\set{\avariable_1, \ldots, \avariable_q} \cup \set{m_q(\avariable_i,\avariable_j) \ \mid \ i,j \in \interval{1}{q}}$
understood as the set of \defstyle{terms} that are either variables or expressions denoting a meet-point.
%%
%% SD 28/09/20
%% ^q is actually not needed in the notation as `q' already appears in the meet-point
%% notation and for program variables `q' is not needed for the interpretation.
%% I have not performed the change to avoid last-minut typos.  
%%
We write $\sem{\avariable_i}^q_{\astore,\aheap}$ to denote $\astore(\avariable_i)$
and $\sem{m_q(\avariable_i,\avariable_j)}^q_{\astore,\aheap}$ to denote
(if it exists)
the first location reachable from
$\astore(\avariable_i)$ that is also reachable from $\astore(\avariable_j)$.
Moreover we require that this location can reach another location corresponding to a program variable.
Formally, $\sem{m_q(\avariable_i,\avariable_j)}^q_{\astore,\aheap}$ is defined as the unique location $\alocation$ (if it exists) such that
\begin{itemize}
\itemsep 0 cm
\item
there are $L_1, L_2\geq 0$ such that  $\aheap^{L_1}(\astore(\avariable_i)) = \aheap^{L_2}(\astore(\avariable_j)) = \alocation$, and
\item for all $L_1' < L_1$ and for all $L_2'\geq 0$, $\aheap^{L_1'}\big(\astore(\avariable_i)\big) \not= \aheap^{L_2'}\big(\astore(\avariable_j)\big)$, and
\item there exist  $k \in \interval{1}{q}$ and $L\geq 0$ such that  $\aheap^L(\alocation) = \astore(\avariable_k)$.
\end{itemize}
One can easily show that the notion $\sem{m_q(\avariable_i,\avariable_j)}^q_{\astore,\aheap}$ is well-defined.
\begin{lemma}
\label{lemma-meet-unique}
  Let $i,j \in \interval{1}{q}$. At most one location satisfies the conditions of $\sem{m_q(\avariable_i,\avariable_j)}^q_{\astore,\aheap}$.
\end{lemma}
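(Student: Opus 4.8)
The plan is to exploit that $\aheap$ is a (partial) \emph{function}, so that the forward orbit $\astore(\avariable_i), \aheap(\astore(\avariable_i)), \aheap^2(\astore(\avariable_i)), \ldots$ is a deterministic sequence assigning at most one location to each index. The second clause in the definition of $\sem{m_q(\avariable_i,\avariable_j)}^q_{\astore,\aheap}$ says precisely that the witnessing exponent $L_1$ on the $\avariable_i$-side is \emph{minimal} among all indices at which the $\avariable_i$-orbit meets the $\avariable_j$-orbit. Since a deterministic sequence assigns a unique location to each index, pinning down this minimal index pins down the location. The third clause (reachability of some $\astore(\avariable_k)$) plays no role in uniqueness: it only further restricts \emph{whether} such a location exists, which is entirely compatible with an ``at most one'' statement.

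Concretely, I would argue by contradiction. Suppose two locations $\alocation$ and $\alocation'$ both satisfy the three conditions, with witnesses $L_1, L_2 \geq 0$ for $\alocation$ and $L_1^{\star}, L_2^{\star} \geq 0$ for $\alocation'$, so that $\alocation = \aheap^{L_1}(\astore(\avariable_i)) = \aheap^{L_2}(\astore(\avariable_j))$ and $\alocation' = \aheap^{L_1^{\star}}(\astore(\avariable_i)) = \aheap^{L_2^{\star}}(\astore(\avariable_j))$. Without loss of generality assume $L_1 \leq L_1^{\star}$. If $L_1 = L_1^{\star}$, then $\alocation = \aheap^{L_1}(\astore(\avariable_i)) = \aheap^{L_1^{\star}}(\astore(\avariable_i)) = \alocation'$ and we are done. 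If instead $L_1 < L_1^{\star}$, I would apply the minimality clause of $\alocation'$ with the indices $L_1 < L_1^{\star}$ and $L_2$: it asserts $\aheap^{L_1}(\astore(\avariable_i)) \neq \aheap^{L_2}(\astore(\avariable_j))$, directly contradicting the first clause for $\alocation$, namely $\aheap^{L_1}(\astore(\avariable_i)) = \aheap^{L_2}(\astore(\avariable_j)) = \alocation$. Hence $L_1 = L_1^{\star}$ and $\alocation = \alocation'$, as required.

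I expect this to be essentially routine: determinism of $\aheap$ does all the work, and the only point that deserves care is the correct reading of the nested quantifiers in the minimality clause, i.e.\ that it constrains the minimal meeting index on the $\avariable_i$-orbit taken over \emph{all} meeting indices on the $\avariable_j$-orbit. Note in particular that the argument does not require the orbits to be simple (loops are harmless), since minimality of an index along a deterministic sequence is well-defined regardless of whether locations are revisited. The symmetric ``without loss of generality'' step is then the only remaining bookkeeping, and there is no genuine obstacle beyond it.
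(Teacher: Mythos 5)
Your proof is correct and follows essentially the same route as the paper's: both exploit that $\aheap$ is functional and derive a contradiction by applying the minimality (second) clause of the location reached later along the orbit of $\astore(\avariable_i)$ to the meeting point witnessed by the other location. The paper phrases this via the minimum index at which the orbit from $\astore(\avariable_i)$ hits either candidate, whereas you compare the two witnesses $L_1$ and $L_1^{\star}$ directly, but the underlying argument is the same.
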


\begin{proof}
Ad absurdum, assume that there are two different locations $\alocation_1$ and $\alocation_2$ that satisfy the conditions of $\sem{m_q(\avariable_i,\avariable_j)}^q_{\astore,\aheap}$.
In particular, there are $L_1,L_2,L_3,L_4 \geq 0$ such that 
\begin{itemize}
  \item[(A)] $\aheap^{L_1}(\astore(\avariable_i)) = \aheap^{L_2}(\astore(\avariable_j)) = \alocation_1$ and $\aheap^{L_3}(\astore(\avariable_i)) = \aheap^{L_4}(\astore(\avariable_j)) = \alocation_2$,
  \item[(B)] for all $L_1' < L_1$ and for all $L_2'\geq 0$, $\aheap^{L_1'}\big(\astore(\avariable_i)\big) \not= \aheap^{L_2'}\big(\astore(\avariable_j)\big)$.
  \item[(C)] for all $L_3' < L_3$ and for all $L_4'\geq 0$, $\aheap^{L_3'}\big(\astore(\avariable_i)\big) \not= \aheap^{L_4'}\big(\astore(\avariable_j)\big)$.
\end{itemize}
Notice that (A) corresponds to the first condition of $\sem{m_q(\avariable_i,\avariable_j)}^q_{\astore,\aheap}$, with respect to $\alocation_1$ and $\alocation_2$, whereas (B) and (C) correspond to the second condition for $\alocation_1$ and $\alocation_2$, respectively.
Let $L_{\rm min} = \min \set{n \ \mid \ \aheap^n(\astore(\avariable_i)) \in \set{\alocation_1,\alocation_2}}$. 
Let us assume that $\aheap^{L_{\rm min}}(\astore(\avariable_i)) = \alocation_1$ and so, as $\alocation_1 \neq \alocation_2$, for all $L \geq 0$, $\aheap^{L}(\astore(\avariable_i)) = 
\alocation_2$ implies $L_{\rm min} < L$.
From (A) this implies that $L_{\rm min} < L_1'$ and $\aheap^{L_{\rm min}}(\astore(\avariable_i)) = \heap^{L_2}{\astore(\aheap_j)}$, 
which however contradicts (C).
The case where $\aheap^{L_{\rm min}}(\astore(\avariable_i)) = \alocation_2$ is analogous, the contradiction being reached with (B).
\end{proof}

The notion of meet-point is quite natural for studying fragments of separation logic with $\ls$. For instance, a similar notion
of a {\em cut point}, although satisfying
different conditions, is considered for the fragments studied in~\cite{Bozga&Iosif&Perarnau10}. %%
Figure~\ref{figure-taxonomy} provides a taxonomy of meet-points, where arrows labelled by `$+$' represent paths of non-zero length
and zig-zag arrows any path (possibly of length zero).
Symmetrical cases, obtained by swapping $\avariable_i$ and $\avariable_j$, are omitted.
\begin{figure}
\begin{center}
  \begin{tabular*}{\textwidth}{c @{\extracolsep{\fill}}  cccc}
    &
     \begin{tikzpicture}[baseline]
       \node[dot,label=above:$\avariable_i$] (i) at (0,0) {};
       \node[dot,label=left:{$m_q$($\avariable_i$,$\avariable_j$)\\$m_q$($\avariable_j$,$\avariable_i$)}] (m) [below right = 1.5cm and 0.5cm of i] {};
       \node[dot,label=above:$\avariable_j$] (j) [above right=1.5cm and 0.5cm of m] {};
       \node[dot,label=below:{$\avariable_k$\\[8pt]$\avariable_k$ not inside a loop,\\but can reach one}] (k) [below of=m] {};

       \draw[reach] (i) -- (m);
       \draw[reach] (j) -- (m);
       \draw[reach] (m) -- (k);
     \end{tikzpicture}
     &
     \begin{tikzpicture}[baseline]
       \node[dot,label=above:$\avariable_i$] (i) at (0,0) {};
       \node[dot,label=left:{$m_q$($\avariable_i$,$\avariable_j$)\\$m_q$($\avariable_j$,$\avariable_i$)}] (m) [below right = 1.5cm and 0.5cm of i] {};
       \node[dot,label=above:$\avariable_j$] (j) [above right=1.5cm and 0.5cm of m] {};
       \node[dot] (mid) [below=0.8cm of m] {};
       \node[dot,label=below:{$\avariable_k$}] (k) [below=1.65cm of m] {};

       \node (zz) [right = 1.2cm of m] {};

       \draw[reach] (i) -- (m);
       \draw[reach] (j) -- (m);
       \draw[reach] (m) -- (mid);
       \draw[reach] (mid) to [out=-35,in=35] (k);
       \draw[pto] (k) edge [out=155,in=-155] node [left] {$+$} (mid);
     \end{tikzpicture}
     &
     \begin{tikzpicture}[baseline]
       \node[dot,label=above:$\avariable_i$] (i) at (0,0) {};
       \node[dot,label=left:{$m_q$($\avariable_i$,$\avariable_j$)}] (m1) [below right = 1.5cm and 0.4cm of i] {};
       \node[dot,label=right:{$m_q$($\avariable_j$,$\avariable_i$)}] (m2) [below right = 1.7cm and 0.4cm of m1] {};
       \node[dot,label=above:$\avariable_j$] (j) [above right=1.5cm and 0.5cm of m2] {};
       \node[dot,label=left:{$\avariable_k$}] (k) [below right= 2.7cm and 0.1 of i] {};

       \draw[reach] (i) -- (m1);
       \draw[pto] (m1) edge node [above right] {$+$} (m2);
       \draw[reach] (j) -- (m2);
       \draw[reach] (m2) to [out=180,in=-75] (k);
       \draw[pto] (k) edge [bend left=30] node [left] {$+$} (m1);

     \end{tikzpicture}
     &
     \\
     &
     (i)
     &
     (ii)
     &
     (iii)
     &
  \end{tabular*}
\end{center}
\caption{A taxonomy of meet-points}
\label{figure-taxonomy}
\end{figure}
Following Figure~\ref{figure-taxonomy},
the taxonomy with three distinct shapes depends on the following conditions
on $\sem{m_q(\avariable_i,\avariable_j)}^q_{\astore,\aheap}$  and $\astore(\avariable_k)$:
%%  as in the definition of meet-points, the following schema holds:
\begin{itemize}
  \item $\astore(\avariable_k)$ is not inside a loop iff $\aheap$ witnesses the shape (i) of the taxonomy.
  \item $\astore(\avariable_k)$ is inside a loop and $\sem{m_q(\avariable_i,\avariable_j)}^q_{\astore,\aheap} = \sem{m_q(\avariable_j,\avariable_i)}^q_{\astore,\aheap}$ iff $\aheap$ witnesses the shape (ii).
  \item $\astore(\avariable_k)$ is inside a loop and $\sem{m_q(\avariable_i,\avariable_j)}^q_{\astore,\aheap}
\neq \sem{m_q(\avariable_j,\avariable_i)}^q_{\astore,\aheap}$ iff $\aheap$ witnesses the shape (iii).
\end{itemize}
Notice that, in the case (i), $\astore(\avariable_k)$ can a priori be in~$\domain{\aheap}$, which also means that it can reach a loop, even though it is not inside one.
The asymmetrical definition of meet-points is captured by the case (iii).
Consider the memory states from Figure~\ref{figure-fourms},
$\pair{\astore_3}{\aheap_3}$ and $\pair{\astore_4}{\aheap_4}$ can be
seen as an instance of this case of the taxonomy and, as such, it holds that
$\sem{m_q(\avariable_i,\avariable_j)}^q_{\astore_3,\aheap_3} = \alocation$ and $\sem{m_q(\avariable_j,\avariable_i)}^q_{\astore_3,\aheap_3} = \alocation'$.

\cut{
Instead, the last two pictures capture the case when both $\avariable_i$ and $\avariable_j$ reach a loop that contains a third variable
$\avariable_k$. Moreover, the first location of the loop reached by $\avariable_i$ is different from the first location
of the loop reached by $\avariable_j$.
These two locations satisfy the conditions of meet-points: the first location corresponds to $\sem{m_q(\avariable_i,\avariable_j)}^q_{\astore,\aheap}$
whereas the second one corresponds to $\sem{m_q(\avariable_j,\avariable_i)}^q_{\astore,\aheap}$.
%% As explained a bit further,
This asymmetrical definition for meet-points is needed since
$\seplogic{\separate,\reachplus}$
is expressive enough to capture the differences between the structures in the picture.%
}

We identify as special locations
the $\astore(\avariable_i)$'s and the meet-points of the form $\sem{m_q(\avariable_i,\avariable_j)}^q_{\astore,\aheap}$
when it exists ($i,j \in \interval{1}{q}$).  We call such locations, \defstyle{labelled} locations and denote by
$\Labels{q}{\astore,\aheap}$ the set of labelled locations.
The following lemma states an important property of labelled locations: taking subheaps can only reduce their set.

\begin{lemma}
\label{lemma-labels}
Let $\pair{\astore}{\aheap}$ be a memory state, $\aheap' \sqsubseteq \aheap$ and $q \geq 1$. We have $\Labels{q}{\astore, \aheap'} \subseteq \Labels{q}{\astore,\aheap}$.
\end{lemma}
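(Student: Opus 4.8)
The plan is to reduce the inclusion to a statement about meet-points and then to track how the unique location denoted by a meet-point term can move when edges are added. Since the store is shared by both memory states, every labelled location of the form $\astore(\avariable_i)$ lies in both $\Labels{q}{\astore,\aheap'}$ and $\Labels{q}{\astore,\aheap}$, so only the meet-points require work. I would therefore fix $\alocation = \sem{m_q(\avariable_i,\avariable_j)}^q_{\astore,\aheap'}$, assume it is not a store location, and show $\alocation \in \Labels{q}{\astore,\aheap}$.

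First I would isolate the elementary monotonicity fact underlying everything: since $\aheap' \sqsubseteq \aheap$ means $\aheap'(\alocation)=\aheap(\alocation)$ for $\alocation \in \domain{\aheap'}$, an easy induction on $n$ gives that whenever $\aheap'^n(\alocation_0)$ is defined it equals $\aheap^n(\alocation_0)$; hence any path living inside $\domain{\aheap'}$ is preserved in $\aheap$. Applying this to the three defining clauses of $\alocation$ shows at once that in $\aheap$ the location $\alocation$ is still reachable from both $\astore(\avariable_i)$ and $\astore(\avariable_j)$ and still reaches some $\astore(\avariable_k)$. Thus the first and third conditions of the meet-point definition transfer verbatim, and the only clause that can break is the minimality condition, because the richer heap $\aheap$ may make the two paths merge strictly earlier.

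The core of the argument is then the claim that this is the \emph{only} thing that can go wrong, namely
\[\alocation \in \set{\sem{m_q(\avariable_i,\avariable_j)}^q_{\astore,\aheap},\ \sem{m_q(\avariable_j,\avariable_i)}^q_{\astore,\aheap}}.\]
Here I would let $u$ (resp. $v$) be the first location at which, in $\aheap$, the path from $\astore(\avariable_i)$ (resp. from $\astore(\avariable_j)$) meets the other path; since both precede $\alocation$ along their respective paths they reach $\alocation$, hence a variable, so $u=\sem{m_q(\avariable_i,\avariable_j)}^q_{\astore,\aheap}$ and $v=\sem{m_q(\avariable_j,\avariable_i)}^q_{\astore,\aheap}$ genuinely exist. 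If $u$ sits at the same position as $\alocation$ we are done with $\alocation=u$; otherwise I argue $\alocation=v$.

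The hard part will be ruling out that $\alocation$ is neither $u$ nor $v$, and the decisive structural observation is that once the two paths have merged at $\alocation$ in $\aheap'$ they coincide forever after, so they share a single tail and hence a single free end $e$ (the unique location of that tail lying outside $\domain{\aheap'}$); moreover the edges of $\aheap$ absent from $\aheap'$ can only be attached at such free ends. I would show that making a location $u$ strictly before $\alocation$ on the $\avariable_i$-path into a merge forces the $\avariable_j$-path to reach $u$ through new edges, which can only issue from the common end $e$. The new edges form a single functional chain out of $e$, and once that chain hits an early point of one of the two paths it is dragged along that path's fixed successors back to $\alocation$ and then to $e$, closing a loop without ever visiting the early points of the other path (these are kept off by the minimality clause, which keeps the initial segment of the $\avariable_i$-path disjoint from the $\avariable_j$-path). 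Hence at most one of the two first-merges can shift while the other stays pinned at $\alocation$, so $\alocation\in\set{u,v}\subseteq\Labels{q}{\astore,\aheap}$. A short case split on whether the shift occurs, together with the trivial store case, then yields $\Labels{q}{\astore,\aheap'} \subseteq \Labels{q}{\astore,\aheap}$.
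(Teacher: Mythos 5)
Your proposal is correct and follows essentially the same route as the paper's proof: reduce to the meet-point case, note that every path of $\aheap'$ persists in $\aheap$ so only the minimality clause can fail, and then argue via the free end of the common tail that the single functional continuation can re-enter at most one of the two initial segments, so that $\alocation$ remains either $\sem{m_q(\avariable_i,\avariable_j)}^q_{\astore,\aheap}$ or $\sem{m_q(\avariable_j,\avariable_i)}^q_{\astore,\aheap}$. The only point to make explicit is the case where the common tail closes into a loop inside $\domain{\aheap'}$ (cases (ii)--(iii) of the paper's taxonomy), where no free end exists but the meet-point is then trivially unchanged; conversely, your justification of why at most one of the two first-merges can shift is spelled out in more detail than in the paper's own case analysis.
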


\begin{proof}
Let $\alocation \in \Labels{q}{\astore,\aheap'}$. We want to prove that $\alocation \in \Labels{q}{\astore,\aheap}$.
The only interesting case is when
$\alocation \in \set{\sem{m_q(\avariable_i,\avariable_j)}^q_{\astore,\aheap'}  \ \mid \
\sem{m_q(\avariable_i,\avariable_j)}^q_{\astore,\aheap'} \ {\rm is \ defined}, i,j \in \interval{1}{q}} \setminus
\set{\astore(\avariable_i) \ \mid \ i \in \interval{1}{q}}$
as $\pair{\astore}{\aheap}$ and $\pair{\astore}{\aheap'}$ share the same store and therefore the result is trivial
for locations that correspond to the interpretation of program variables.
So, suppose that $\alocation = \sem{m_q(\avariable_i,\avariable_j)}^q_{\astore,\aheap'}$ for some $i,j$.
We just need to prove that
  there exist $i^{\star},j^{\star} \in \interval{1}{q}$ such that
$\sem{m_q(\avariable_{i^{\star}},\avariable_{j^{\star}})}^q_{\astore,\aheap} = \alocation$.

By definition,
  $\sem{m_q(\avariable_i,\avariable_j)}^q_{\astore,\aheap'} = \alocation$ if and only if there exist $k \in \interval{1}{q}$ and $L$ such that ${\aheap'}^L(\alocation) = \astore(\avariable_k)$ and there are
  $L_1, L_2$ such that ${\aheap'}^{L_1}(\astore(\avariable_i)) = {\aheap'}^{L_2}(\astore(\avariable_j)) = \alocation$
  and for all $L_1' < L_1$ and $L_2'$ it holds ${\aheap'}^{L_1'}(\astore(\avariable_i)) \not= {\aheap'}^{L_2'}(\astore(\avariable_j))$.
  As $\aheap' \sqsubseteq \aheap$, each path of $\aheap'$ is also a path in $\aheap$. In particular, ${\aheap}^L(\alocation) = \astore(\avariable_k)$.
As such, we only need to prove that that there exist $i^{\star},j^{\star} \in \interval{1}{q}$
  such that there are
  $L_1, L_2$ such that ${\aheap}^{L_1}(\astore(\avariable_{i^{\star}})) = {\aheap}^{L_2}(\astore(\avariable_{j^{\star}})) = \alocation$
  and for all $L_1' < L_1$ and $L_2'$ it holds ${\aheap}^{L_1'}(\astore(\avariable_{i^{\star}})) \not= {\aheap}^{L_2'}(\astore(\avariable_{j^{\star}}))$.
  We consider the taxonomy of $m_q(\avariable_i,\avariable_j)$ shown in Figure~\ref{figure-taxonomy}.
  Based on the classification, $\sem{m_q(\avariable_i,\avariable_j)}^q_{\astore,\aheap'} = \alocation$
  entails that the heap $\aheap'$ witnesses one of the situations among (i)--(iii).

  If the heap $\aheap'$ belongs to (ii) or (iii),  the locations $\astore(\avariable_i)$ and $\astore(\avariable_j)$ eventually reach a location~$\tilde{\alocation}$ belonging to a loop, i.e.~${\aheap'}^L(\tilde{\alocation}) = \tilde{\alocation}$ for some $L \geq 1$,
  and since $\aheap' \sqsubseteq \aheap$, 
  the same holds true in $\aheap$. 
  Hence, in $\aheap$, the location $\alocation$ is still
  the first location reachable from $\avariable_i$ that is also reachable from $\avariable_j$ and therefore $i^{\star} = i$ and $j^{\star} = j$.

  Alternatively, if $\aheap'$ belongs to (i), the heap $\aheap$ may belong to any of the three situations (i)--(iii) as far
  as $\alocation$ is concerned.
  %%
  %% if $\aheap'$ present one of these structures then any extension of $\aheap'$ (e.g. $\aheap$) will still
  %% have $\alocation$ correspondent to $m_q(\avariable_i,\avariable_j)$, since they will still contain
  %% that exact structure. Instead if the first case holds in $\aheap'$, the heap can be extended to contain
  %%  any of the four structure.
  %%
  %% Moreover, in this case, $\alocation$ is also correspondent to $m_q(\avariable_i,\avariable_j)$.
  %%
  It is  easy to see that in $\aheap$, either $\alocation = \sem{m_q(\avariable_i,\avariable_j)}^q_{\astore,\aheap}$ or
  $\alocation = \sem{m_q(\avariable_j,\avariable_i)}^q_{\astore,\aheap}$, which guarantees that $\alocation$ is a labelled location
  in $\aheap$. Let us justify this claim.
  As (i) holds, 
  either (a) $\astore(\avariable_k)$ can reach a location that is not in the domain of the heap $\aheap'$
  or (b) $\astore(\avariable_k)$ does not reach such a location, that is, there is a cycle  after 
  $\avariable_k$. If (b) holds, then  a reasoning analogous to the above cases for (ii) and (iii) works.
  Otherwise ((a) holds),  
in $\aheap'$ there is a location $\alocation'$ reachable from $\astore(\avariable_k)$ that is not 
in $\domain{\aheap'}$. We now consider how $\aheap$ extends $\aheap'$.
  \begin{itemize}
  \item If $\aheap$ belongs to (i), then $\alocation = \sem{m_q(\avariable_i,\avariable_j)}^q_{\astore,\aheap} =
        \sem{m_q(\avariable_i,\avariable_j)}^q_{\astore,\aheap'}$. The same holds true if $\aheap$
        extends $\aheap'$ by introducing a loop so that $\aheap$ belongs to (ii).
  \item If $\aheap$ extends $\aheap'$ by adding a path from $\alocation'$ to a location in the path between $\astore(\avariable_j)$ to $\alocation$ ($\alocation$ excluded), then $\alocation = \sem{m_q(\avariable_i,\avariable_j)}^q_{\astore,\aheap} =
 \sem{m_q(\avariable_i,\avariable_j)}^q_{\astore,\aheap'}$  and $\aheap$ belongs to (iii).
  \item Lastly, if $\aheap$ extends $\aheap'$ by adding a path from $\alocation'$ to a location in the path between $\astore(\avariable_i)$ to $\alocation$ ($\alocation$ excluded), then
$\aheap$ belongs also to (iii). Unlike the previous case however, we have
$\alocation = \sem{m_q(\avariable_j,\avariable_i)}^q_{\astore,\aheap} = \sem{m_q(\avariable_i,\avariable_j)}^q_{\astore,\aheap'}$,
so $\alocation$ is again in $\Labels{q}{\astore,\aheap}$. \qedhere
  \end{itemize}
\end{proof}

As shown in Proposition~\ref{proposition-prop-sl}, the test formulae explicit what are the essential features that are expressible in $\seplogic{\separate,\magicwand}$. For $\seplogic{\separate, \reachplus}$, these
test formulae primarily speak about relationships between labelled locations and specific shapes of the heap.
%% These relationships are highlighted with a suitable abstraction, called \defstyle{support graph} and defined below, 
%% which is
%% then used to derive the definitions of the test formulae.
In order to highlight these relationships, below we introduce a class of abstract structures, 
called \defstyle{support graphs}. The semantics for test formulae shall be provided directly
on such support graphs.
\begin{definition}
\label{definition-support-graph}
Let $q \geq 1$ and $\pair{\astore}{\aheap}$ be a memory state.
Its \defstyle{support graph}
$\supportgraph{\astore,\aheap}$
is defined as the tuple
$(\gverts,\gedges,\galloc,\glabels,\gbtw,\grem)$
such that:
\begin{description}
\item[(V)] $\gverts \egdef \Labels{q}{\astore,\aheap}$ is the set of labelled locations.
\item[(E)] $\gedges$ is the (functional) binary relation on $\gverts$
      such that $\pair{\alocation}{\alocation'} \in \gedges$ $\equivdef$
      there is $L \geq 1$ such that $\aheap^{L}(\alocation) = \alocation'$ and for all $0 < L' < L$,
      it holds that $\aheap^{L'}(\alocation) \not \in \gverts$. Informally, $\gedges(\alocation,\alocation')$ if and only if there is a non-empty path from $\alocation$ to $\alocation'$ whose intermediate locations are not labelled.
  \item[(A)] $\galloc \egdef \gverts \cap \domain{\aheap}$.
  \item[(T)] $\glabels: \gverts \to \powerset{\Terms{q}}$ is a function such that for any labelled location $\alocation \in \gverts$, is the (non-empty)
  set of terms corresponding to $\alocation$. Formally,
  $\glabels(\alocation) \egdef \set{\aterm \in \Terms{q} \ \mid \ \sem{\aterm}^q_{\astore,\aheap} = \alocation}$.
  \item[(I)] $\gbtw: \gedges \to \powerset{\locations}$ is a function such that for any $\pair{\alocation}{\alocation'} \in \gedges$, is the set of intermediate
  locations of the shortest path beginning in $\alocation$ and ending with $\alocation'$. Formally,
  \begin{gather*}
  \gbtw(\alocation,\alocation') \egdef
  \left\{
    \alocation'' \in \locations \
    \middle|
  \begin{aligned}
    \text{there are } L,L' \geq 1, \ \text{such that } \aheap^{L}(\alocation) = \alocation'' \text{ and } \aheap^{L'}(\alocation'') = \alocation'\\
    \text{ and for all } L'' \in \interval{1}{L},\ \aheap^{L''}(\alocation) \not\in V
  \end{aligned}
  \right\}
  \end{gather*}
  Despite the definition of $\gedges$, the condition
  ``for all  $L'' \in \interval{1}{L},\ \aheap^{L''}(\alocation) \not\in V$'' in this definition is needed
  in order to handle the case when $\alocation$ and $\alocation'$ belongs to the same cycle.
  Moreover, notice that every location $\alocation'' \in \gbtw(\alocation,\alocation')$ is such that $\alocation'' \not\in V$ (hence, $\alocation''$ is not a labelled location).
  \item[(R)] $\grem \egdef \domain{\aheap} \setminus (\galloc \cup \bigcup_{(\alocation,\alocation') \in \gedges} \gbtw(\alocation,\alocation'))$, i.e. the set of
  locations in the domain that are not labelled locations   and that do not belong to paths between labelled locations.
\end{description}
\end{definition}
One can think of the support graph $\supportgraph{\astore,\aheap}$ as a selection of properties from $\pair{\astore}{\aheap}$.
Indeed, from its definition, it is straightforward to see that $\{\galloc,\grem\} \cup \{ \gbtw(\alocation,\alocation') \mid
(\alocation,\alocation') \in \gedges \}$ is a partition of $\domain{\aheap}$. Moreover,
for all $i,j \in \interval{1}{q}$ and $\alocation \in \locations$,
$\alocation = \astore(\avariable_i) = \astore(\avariable_j)$ iff $\{\avariable_i,\avariable_j\} \subseteq \glabels(\alocation)$.

We are now ready to introduce the test formulae.
Given $q,\alpha \geq 1$, we write $\Test(q, \alpha)$ to denote the following set of atomic \defstyle{test formulae}:
\begin{mathpar}
  \aterm = \aterm'

  \alloc{\aterm}

  \aterm \hpto \aterm'

  \sees_q(\aterm,\aterm')\geq \beta + 1

  \sizeothers_q \geq \beta,
\end{mathpar}
where $\aterm, \aterm' \in \Terms{q}$ and $\beta \in \interval{1}{\alpha}$.
It is worth noting that the  $\alloc{\aterm}$'s are not needed
for the logic $\seplogic{\separate,\reachplus}$ but it is required for extensions (see Theorem~\ref{theorem-boolean-combination}).
Let $\pair{\astore}{\aheap}$ be a memory state.
Below, we present the formal semantics for the test formulae, provided with the elements from the
support graph $\supportgraph{\astore,\aheap} = (\gverts,\gedges,\galloc,\glabels,\gbtw,\grem)$ of $\pair{\astore}{\aheap}$:
\begin{itemize}
\item $\pair{\astore}{\aheap} \models \aterm = \aterm'$ $\equivdef$ there is $\alocation \in \gverts$ such that $\{\aterm,\aterm'\} \subseteq \glabels(\alocation)$.
\item $\pair{\astore}{\aheap} \models \alloc{\aterm}$ $\equivdef$ there is $\alocation \in \galloc$ such that $\aterm \in \glabels(\alocation)$.
\item $\pair{\astore}{\aheap} \models \aterm \hpto \aterm'$ $\equivdef$ $\aterm \in \glabels(\alocation)$, $\aterm' \in \glabels(\alocation')$, $\card{\gbtw(\alocation,\alocation')} = 0$ for some $(\alocation,\alocation') \in \gedges$.
\item $\pair{\astore}{\aheap} \models \sees_q(\aterm,\aterm') \geq \beta\!+\!1$ $\equivdef$ there  is
$\pair{\alocation}{\alocation'}\!\in\!\gedges$ such that
$\aterm\!\in\!\glabels(\alocation)$,  $\aterm'\!\in\!\glabels(\alocation')$ and
$\card{\gbtw(\alocation,\alocation')} \geq \beta$.
\item $\pair{\astore}{\aheap} \models \sizeothers_q \geq \beta$ $\equivdef$ $\card{\grem} \geq \beta$.
\end{itemize}
The semantics is deliberately defined with the help of support graphs
but of course, the clauses could be formulated  with $\astore$ and $\aheap$ only.
For instance, the formula $\sees_q(m_q(\avariable_i,\avariable_j),\avariable_k) \geq 4$ is satisfied whenever $\sem{m_q(\avariable_i,\avariable_j)}^q_{\astore,\aheap}$
is defined and there is a path of length at least $4$ beginning in this location, ending in $\astore(\avariable_k)$ and whose intermediate locations are not labelled.
Similarly, $\sizeothers_q \geq \beta$ holds true whenever the number of allocated locations that are neither between two locations interpreted
by terms nor the interpretation of a term, is at least $\beta$.
Notice that there is no need for test formulae of the form $\sees_q(v,v') \geq 1$ since it is equivalent to
$v \hpto v' \lor \sees_q(v,v') \geq 2$. In the sequel, occurrences of $\sees_q(v,v') \geq 1$ are understood as abbreviations.
One can check whether $\sem{m_q(\avariable_i,\avariable_j)}^q_{\astore,\aheap}$ is defined thanks to the satisfaction
of the formula
$m_q(\avariable_i,\avariable_j) = m_q(\avariable_i,\avariable_j)$.
The satisfaction of $\sees_q(v,v')\geq \beta + 1$ entails the exclusion of labelled locations in the witness path,
which is reminiscent to  atomic formulae of the form $T \step{h \backslash T''} T'$ in the logic GRASS~\cite{Piskac&Wies&Zufferey13}.
Indeed, by way of example, $\avariable_1 \step{\aheap \backslash \avariable_3} \avariable_2$ states that there is a path in the graph
of the heap $\aheap$ between $\astore(\avariable_1)$ and $\astore(\avariable_2)$ without passing via $\astore(\avariable_3)$.
The satisfaction of $\sees_q(v,v')\geq \beta + 1$ requires to exclude labelled locations strictly between the interpretation of
the terms $v$ and $v'$.
Our test formulae
are quite expressive since they capture the atomic formulae from $\seplogic{\separate, \reachplus}$ and the test formulae
for $\seplogic{\separate,\magicwand}$ (introduced in Proposition~\ref{proposition-prop-sl}).

\begin{lemma}
\label{lemma-atomic-formulae-test}
Let $\alpha, q \geq 1$, $i,j \in \interval{1}{q}$. Any atomic formula among %$\ls(\avariable_i, \avariable_j)$,
%$\reach(\avariable_i, \avariable_j)$,
$\reachplus(\avariable_i, \avariable_j)$, $\emp$ and $\size \geq \beta$ (with $\beta \leq \alpha$),
is equivalent to a Boolean combination of test formulae from  $\Test(q, \alpha)$.
\end{lemma}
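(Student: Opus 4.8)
The plan is to prove that the truth of each of the three formulae at a memory state $\pair{\astore}{\aheap}$ is determined by its support graph, and then to read off an explicit equivalent Boolean combination of test formulae. Two facts stated just after Definition~\ref{definition-support-graph} would be used repeatedly: every labelled location $\alocation \in \gverts$ is named by some term (its set $\glabels(\alocation)$ always contains a variable $\avariable_i$ or a meet-point term $m_q(\avariable_i,\avariable_j)$), and $\{\galloc,\grem\} \cup \{\gbtw(\alocation,\alocation') \mid (\alocation,\alocation')\in\gedges\}$ partitions $\domain{\aheap}$. I would also recall that $\sees_q(\aterm,\aterm')\geq 1$, abbreviating $\aterm \hpto \aterm' \lor \sees_q(\aterm,\aterm')\geq 2$, holds exactly when there is an edge $(\alocation,\alocation')\in\gedges$ with $\aterm\in\glabels(\alocation)$ and $\aterm'\in\glabels(\alocation')$, and that edge formulae through an undefined meet-point term are simply false.

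For $\reachplus(\avariable_i,\avariable_j)$ the central claim is that a non-empty $\aheap$-path from $\astore(\avariable_i)$ to $\astore(\avariable_j)$ exists if and only if there is a path along $\gedges$ between the corresponding vertices. One direction is immediate, since each $\gedges$-edge is by definition a non-empty $\aheap$-path and concatenation preserves this. For the other direction I would take a witnessing $\aheap$-path and cut it at every labelled location it traverses: since $\astore(\avariable_i)$ and $\astore(\avariable_j)$ are labelled and consecutive cut points enclose no labelled location, each resulting segment is exactly a $\gedges$-edge whose endpoints are named by terms. Because $\gedges$ is functional and $\card{\gverts} \leq \card{\Terms{q}}$, the unique forward $\gedges$-trajectory from $\avariable_i$ either reaches $\avariable_j$ within $\card{\Terms{q}}$ steps or never does, so paths of bounded length suffice. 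Hence
\[
\reachplus(\avariable_i,\avariable_j) \ \equiv \
\bigvee_{m=1}^{\card{\Terms{q}}} \
\bigvee_{\substack{\aterm_0 = \avariable_i,\ \aterm_m = \avariable_j\\ \aterm_1,\ldots,\aterm_{m-1} \in \Terms{q}}} \
\bigwedge_{k=0}^{m-1} \sees_q(\aterm_k,\aterm_{k+1}) \geq 1,
\]
a Boolean combination of test formulae from $\Test(q,\alpha)$ (using only $\alpha \geq 1$).

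For $\size \geq \beta$ (with $\beta \leq \alpha$) I would count through the partition, writing $\card{\domain{\aheap}} = \card{\galloc} + \sum_{(\alocation,\alocation')\in\gedges}\card{\gbtw(\alocation,\alocation')} + \card{\grem}$. The first summand is the number of \emph{distinct} allocated labelled locations, a function of the truth values of the formulae $\alloc{\aterm}$ and $\aterm = \aterm'$ (group terms into locations via $=$, then count those whose group is allocated); the middle summand is controlled, edge by edge, by the formulae $\sees_q(\aterm,\aterm')\geq k$ (again identifying coinciding endpoints via $=$ so that each functional edge is counted once); and the last summand is read directly from $\sizeothers_q \geq k$. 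Since $\beta \leq \alpha$, the threshold $\card{\domain{\aheap}} \geq \beta$ is met iff it can be split as $a+b+c = \beta$ with the three parts respectively at least $a$, $b$, $c$, all of which are $\leq \alpha$ and hence individually expressible; taking the finite disjunction over all such splittings and over the finitely many consistent equality/allocation patterns yields the required Boolean combination. Finally $\emp$ is just $\neg(\size \geq 1)$, which is covered by the case $\beta = 1 \leq \alpha$.

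I expect the $\reachplus$ case to be the main obstacle: the counting for $\size$ is mechanical once the partition is invoked, whereas the reachability equivalence requires the structural argument that heap-reachability between labelled locations is faithfully captured by bounded paths in the functional support graph. The delicate points there are that the cut-point decomposition must name every intermediate labelled location (which is exactly why meet-point terms were added to $\Terms{q}$), and that functional cycles create no extra difficulty because the bound $\card{\Terms{q}}$ on the trajectory length already accounts for a labelled location being revisited.
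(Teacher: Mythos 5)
Your proposal is correct and follows essentially the same route as the paper's proof: $\reachplus(\avariable_i,\avariable_j)$ is encoded as a finite disjunction of chains of $\sees_q(\cdot,\cdot)\geq 1$ through terms (the paper bounds the chains by requiring pairwise-distinct intermediate terms rather than by an explicit length bound, which is equivalent), and $\size\geq\beta$ is obtained by splitting $\beta$ over the support-graph partition of $\domain{\aheap}$, with each part capped at $\alpha$ (the paper merely groups each allocated labelled location with its outgoing $\gbtw$ set via the auxiliary $\sizeV_q$ formulae instead of counting $\galloc$ separately). Your treatment of $\emptyconstant$ as $\neg(\size\geq 1)$ is a harmless shortcut for the paper's direct formula $\lnot\,\sizeothers_q\geq 1\land\bigwedge_i\lnot\alloc{\avariable_i}$.
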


\begin{proof}
Let $q,\alpha \geq 1$ and consider the family of test formulae $\Test(q,\alpha)$. Let $\pair{\astore}{\aheap}$ be a memory state and
$
\supportgraph{\astore,\aheap} = (\gverts,\gedges,\galloc,\glabels,\gbtw,\grem)
$
be its support graph.

\begin{itemize}
\item
We show that $\emp$ is equivalent to the Boolean combination of test formulae ${\lnot \sizeothers_q \geq 1} \land \bigwedge_{i \in [1,q]} \lnot \alloc{\avariable_i}$.
Suppose $\pair{\astore}{\aheap} \models \emp$.
Then trivially, for all $i\in[1,q]$ $\pair{\astore}{\aheap} \models \lnot \alloc{\avariable_i}$ and $\pair{\astore}{\aheap} \models {\lnot \sizeothers_q \geq 1}$.

Conversely, suppose $\pair{\astore}{\aheap} \models {\lnot \sizeothers_q \geq 1} \land \bigwedge_{i \in [1,q]} \lnot \alloc{\avariable_i}$.
As for all $i \in [1,q]$ $\avariable_i$ does not correspond to a location in $\domain{\aheap}$, it holds that $\pair{\astore}{\aheap} \models \bigwedge_{\aterm,\aterm' \in \Terms{q}} \lnot \sees_q(\aterm,\aterm') \geq 1$.
Therefore, directly from its definition, $\grem = \domain{\aheap}$. The emptiness of $\domain{\aheap}$ then follows from $\pair{\astore}{\aheap} \models {\lnot \sizeothers_q \geq 1}$.

\item 
The formula $\reachplus(\avariable_i, \avariable_j)$ can be shown equivalent to
$$
\aformula \egdef \bigvee_{\mathclap{\substack{\aterm_1, \ldots, \aterm_n \in \Terms{q}, \\ 
{\rm pairwise \ distinct} \ \aterm_1, \ldots, \aterm_{n-1}, \\ \avariable_i = \aterm_1, \avariable_j = \aterm_n}}}
%% \textstyle
\ \ \ \ \ \ \ \ \ \ \ \bigwedge_{\delta \in \interval{1}{n-1}} \sees_q(\aterm_{\delta},\aterm_{\delta+1}) \geq 1.
$$
In the expression above, $n$ is arbitrary in the disjunction as soon as the other constraints
are satisfied (whence, $n \leq \card{ \Terms{q}}$). 
First, suppose $\pair{\astore}{\aheap} \models \reachplus(\avariable_i,\avariable_j)$. Then, there exists $L \geq 1$ such that $\aheap^L(\astore(\avariable_i)) = \astore(\avariable_j)$. Let $\alocation_1 = \astore(\avariable_i)$, $\alocation_n = \astore(\avariable_j)$ and let $\alocation_2,\dots,\alocation_{n-1} \in \Labels{q}{\astore,\aheap}$ be all labelled locations in the (minimal) path witnessing $\pair{\astore}{\aheap} \models \reachplus(\avariable_i,\avariable_j)$,
such that $\aheap^{L_\delta}(\alocation_\delta) = \alocation_{\delta+1}$, where $L_\delta \geq 1$, and there are no labelled locations between $\alocation_\delta$ and $\alocation_{\delta+1}$, $1 \leq \delta \leq n-1 \leq \card{\Labels{q}{\astore,\aheap}}$. The path from $\astore(\avariable_i)$ to $\astore(\avariable_j)$ is therefore uniquely split into $n-1$ 
subpaths starting and ending with a labelled location and without labelled locations in between.
Let $\aterm_1 = \avariable_i$, $\aterm_n = \avariable_j$ and $\aterm_2,\dots,\aterm_{n-1} \in \Terms{q}$ be such that
$\sem{\aterm_\delta}^q_{\astore,\aheap} = \alocation_\delta$ for $\delta \in \interval{2}{n-1}$.
From the definition of $\alocation_2,\dots,\alocation_{n-1}$, we conclude that
$\pair{\astore}{\aheap}$ satisfies $\bigwedge_{1 \leq \delta \leq n-1} \sees_q(\aterm_{\delta},\aterm_{\delta+1}) \geq 1$.
The minimality of the path is needed to guarantee that all the $v_i$'s are distinct. 

\cut{
Conversely, suppose
\[
\pair{\astore}{\aheap} \models \bigvee_{\mathclap{\substack{\aterm_1, \ldots, \aterm_n \in \Terms{q}, \\
{\rm pairwise \ distinct} \ \aterm_1, \ldots, \aterm_{n-1}, \\ \avariable_i = \aterm_1, \avariable_j = \aterm_n}}}
%%  \textstyle 
\ \ \ \ \ \ \ \ \ \ \  \bigwedge_{\delta \in \interval{1}{n-1}} \sees_q(\aterm_{\delta},\aterm_{\delta+1}) \geq 1.
\]
}
Conversely, suppose $\pair{\astore}{\aheap} \models \aformula$. 
Then, there are $\aterm_1,\dots,\aterm_n \in \Terms{q}$ such that $\aterm_1 = \avariable_i$, $\aterm_n = \avariable_j$,
$\aterm_1$, \ldots, $\aterm_{n-1}$ are pairwise distinct 
 and for all ${\delta \in [1,n-1]}$ $\pair{\astore}{\aheap} \models \sees_q(\aterm_{\delta},\aterm_{\delta+1}) \geq 1$.
From the semantics of $\sees$, this implies that there are $L_1,\dots,L_{n-1} \geq 1$ such that $\aheap^{L_\delta}(\sem{\aterm_\delta}^q_{\astore,\aheap}) = \sem{\aterm_{\delta
+1}}^q_{\astore,\aheap}$ for all $\delta \in \interval{1}{n-1}$.
Therefore $\aheap(\astore(\avariable_i))^{\sum_\delta L_\delta} = \astore(\avariable_j)$, where $\sum_\delta L_\delta \geq 1$ and $\reachplus(\avariable_i, \avariable_j)$ is satisfied.

\item
%% 
%% Lastly, we define $\size \geq \beta$, where $\beta \in [1,\alpha]$, that holds when the heap has at least $\beta$ memory cells. 
%% 
% Notice that,
% as $\emp$ is shown definable by a Boolean combination of test formulae, from Theorem~\ref{theorem-quantifier-elimination} we can already conclude that there is a
% Boolean combination of test formulae equivalent to $\size \geq \beta$.
We define ${\size \geq \beta}$, where $\beta \leq \alpha$.
First of all, we introduce the formula $\sizeV_q(\aterm) \geq \beta$,
with $\aterm \in \Terms{q}$, below:
\begin{mathpar}
  \sizeV_q(\aterm) \geq 0 \egdef \true

  \sizeV_q(\aterm) \geq 1 \egdef \alloc{v}

  \sizeV_q(\aterm) \geq \beta + 1 \egdef \bigvee_{\mathclap{\aterm' \in \Terms{q}}} \sees_q(\aterm,\aterm') \geq \beta + 1\qquad\text{ for } \beta \in \interval{1}{\alpha}
\end{mathpar}
$\sizeV_q(\aterm) \geq 0$ is always true,
$\sizeV_q(\aterm) \geq 1$ holds in a memory state $\pair{\astore}{\aheap}$ if and only if
$\pair{\astore}{\aheap} \models \alloc{v}$, whereas $\sizeV_q(\aterm) \geq \beta + 1$ holds if and only if
there exists $\aterm' \in \Terms{q}$ such that  $\pair{\astore}{\aheap} \models \sees_q(\aterm,\aterm') \geq \beta + 1$.
As such, $\sizeV_q(\aterm) \geq \beta+1$ holds whenever for all $L \in \interval{0}{\beta}$, $\aheap^L(\sem{\aterm}^q_{\astore,\aheap}) \in \domain{\aheap}$, 
$L \geq 1$ implies $\aheap^L(\sem{\aterm}^q_{\astore,\aheap}) \not\in \Labels{q}{\astore,\aheap}$, and there is $L' > \beta$ such that $\aheap^{L'}(\sem{\aterm}^q_{\astore,\aheap}) = \sem{\aterm'}^q_{\astore,\aheap}$, for some~$\aterm' \in \Terms{q}$. 
From the definition of $\sees$ and meet-points, it follows that the locations considered for the satisfaction of $\sizeV_q(\aterm) \geq \beta$ do not play any role in the satisfaction of $\sizeV_q(\aterm') \geq \beta'$, where $\sem{\aterm}^q_{\astore,\aheap} \neq \sem{\aterm'}^q_{\astore,\aheap}$.
Therefore, it is easy to prove that if $\pair{\astore}{\aheap} \models {\sizeV_q(\aterm) \geq \beta} \land \sizeV_q(\aterm') \geq \beta' \land \aterm \neq \aterm'$ then $\card{\domain{\aheap}} \geq \beta + \beta'$.
Similarly, the locations considered for the satisfaction of this formula are not in 
%% $\Others^{q}_{\astore,\aheap}$ 
$\grem$
and therefore $\pair{\astore}{\aheap} \models {\sizeV_q(\aterm) \geq \beta} \land \sizeothers_q \geq \beta'$ implies $\card{\domain{\aheap}} \geq \beta + \beta'$. We can then use this formula to define $\size \geq \beta$ as follows, where we write $\sizeothers_q \geq 0$ instead of $\avariable_1 = \avariable_1$ (any tautological Boolean combination
of test formulae would be fine).
%%
%%  lets extend $\sizeothers_q \geq \beta$ to be defined also for $\beta = 0$ by defining it's extension $\overline{\sizeothers_q} \geq \beta$ as
%% $\true$ whenever $\beta = 0$ and as $\sizeothers_q \geq \beta$ whenever $\beta \in \interval{1}{\alpha}$.
%%
\[
  \bigvee_{\mathclap{\substack{V \subseteq \Terms{q}\\
  \beta \leq \beta_R + \sum_{\aterm \in V} \beta_{\aterm}\\ \quad\beta_R \in \interval{0}{\alpha}\ \forall \aterm \in V \beta_{\aterm} \in \interval{0}{\alpha+1}}}}
 \Big(\sizeothers_q \geq \beta_R \land \textstyle \bigwedge_{\aterm \in V}( \sizeV_q(\aterm) \geq \beta_{\aterm} \land \bigwedge_{v'\in V\setminus\{v\}} v \not= v')\Big)
\]
Suppose that this formula is satisfied by $\pair{\astore}{\aheap}$. Then there exists a subset of terms $V$ such that for all $\aterm,\aterm' \in V$, if $\aterm \neq \aterm'$ then $\sem{\aterm}^q_{\astore,\aheap} \neq \sem{\aterm'}^q_{\astore,\aheap}$ also follows (from the last conjunct of the formula). From the property just stated about $\sizeV_q(\aterm) \geq \beta$, it must therefore hold that $\card{\domain{\aheap}} \geq \beta_R + \sum_{\aterm \in V} \beta_\aterm \geq \beta$.

Conversely, suppose $\card{\domain{\aheap}} \geq \beta$.
We can define a partition $\aheap_R + \sum_{\alocation \in \Labels{q}{\astore,\aheap}} \aheap_\alocation = \aheap$ 
%% (see Appendix \ref{appendix-support-graph}) 
such that each subheap $\aheap_\alocation$ of the partition  contains exactly the locations of $\domain{\aheap}$ considered for the satisfaction of the
test formulae $\sizeV_q(\aterm) \geq \beta'$, $\beta' \in [0,\alpha+1]$,
for a specific labelled location $\alocation = \sem{\aterm}^q_{\astore,\aheap}$,
whereas $\aheap_R$ contains all the locations considered for the satisfaction of $\sizeothers_{q} \geq \beta'$, $\beta' \in [0,\alpha]$.
Consider a representative $\aterm \in \Terms{q}$ for each subheap $\aheap_\alocation$, where $\sem{\aterm}^q_{\astore,\aheap} = \alocation$. Let $V$ be the set of these representatives and let
$\beta_R = \min(\alpha,\card{\domain{h_R}})$, $\beta_\aterm = \min(\alpha+1,\card{\domain{\aheap_{\sem{\aterm}^q_{\astore,\aheap}}}})$.
Since $\beta \leq \alpha$ and $\card{\domain{\aheap}} = \card{\domain{\aheap_R}} + \sum_{\aterm \in V}  \geq \beta$, it follows that $\beta_R + \sum_{\aterm \in V} \beta_v \geq \beta$.
Moreover, for each $\aterm \in V$, $\beta_\aterm \in [0,\alpha+1]$, whereas $\beta_R \in [0,\alpha]$.
From the definition of $V$, it immediately holds that $\pair{\astore}{\aheap} \models \bigwedge_{\aterm\neq\aterm' \in V} \aterm \neq \aterm'$.
Lastly, from their definition, it holds that $\pair{\astore}{\aheap} \models \sizeothers_q \geq \beta_R$ and for all $\aterm \in V$ $\pair{\astore}{\aheap} \models \sizeV_q(\aterm) \geq \beta_{\aterm}$.
We conclude that the formula defining $\size \geq \beta$ is satisfied.\qedhere
\end{itemize}
\end{proof}

%%
% \OnlyForVL{
% The set  $\Test(q,\alpha)$
% is not minimal in terms of expressivity and $\alloc{m_q(\avariable_i,\avariable_j)}$ can be shown logically equivalent to
% $$
% m_q(\avariable_i,\avariable_j) = m_q(\avariable_i,\avariable_j) \wedge
% \bigvee_{k \in \interval{1}{q}} (m_q(\avariable_i,\avariable_j) = \avariable_k  \wedge \alloc{\avariable_k})
% \vee
% \bigwedge_{k \in \interval{1}{q}} m_q(\avariable_i,\avariable_j) \neq \avariable_k
% $$%%
% }
%%
\subsection{Expressive power and the small heap property}

Now, we show that the sets of test formulae $\Test(q,\alpha)$ are sufficient to capture the expressive power of
$\seplogic{\separate,\reachplus}$ (as shown below, Theorem~\ref{theorem-quantifier-elimination}) and deduce the small
heap property of this logic (Theorem~\ref{theorem-finite-heap-property}).
We introduce an indistinguishability relation $\approx_\alpha^q$
between memory states based on test formulae, see analogous relations
in~\cite{Lozes04bis,David&Kroening&Lewis15,Demrietal17}.

\begin{definition}
Let  $q,\alpha \geq 1$ and, $\pair{\astore}{\aheap}$ and $\pair{\astore'}{\aheap'}$ be memory states.
 $\pair{\astore}{\aheap} \approx_\alpha^q \pair{\astore'}{\aheap'}$
$\equivdef$ $\pair{\astore}{\aheap} \models \aformulabis$ iff $\pair{\astore'}{\aheap'} \models \aformulabis$, for all $\aformulabis \in \Test(q,\alpha)$.
\end{definition}

Forthcoming Theorem~\ref{theo:test_sl_equiv} states that
if $\pair{\astore}{\aheap} \approx_\alpha^q \pair{\astore'}{\aheap'}$, then the  two memory states cannot be distinguished
by formulae whose syntactic resources are bounded in some way by $q$ and $\alpha$
(details will follow).
%%
%% see the definition for $\msize(\aformula)$).
%%
The following technical lemma lifts the relationship $\approx^q_\alpha$ to an equivalence between support graphs, consolidating this idea of indistinguishable memory states.

\begin{lemma}\label{lemma-test-graph}
  Let $q,\alpha \geq 1$, and $\pair{\astore}{\aheap}$, $\pair{\astore'}{\aheap'}$ be two memory states with support graphs respectively
$\supportgraph{\astore,\aheap} = (\gverts,\gedges,\galloc,\glabels,\gbtw,\grem)$
 and
$\supportgraph{\astore',\aheap'} = (\gverts',\gedges',\galloc',\glabels',\gbtw',\grem')$.
We have $\pair{\astore}{\aheap} \approx^q_\alpha \pair{\astore'}{\aheap'}$ iff there is a map $\amap:  \gverts \to \gverts'$
such that
  \begin{enumerate}[label=\normalfont{\textbf{(A\arabic*)}}]
  \item\label{a1} $\amap$ is a graph isomorphism between $\pair{\gverts}{\gedges}$ and $\pair{\gverts'}{\gedges'}$;
  \item\label{a2} for all $\alocation \in \gverts$, we have $\alocation \in \galloc$ iff  $\amap(\alocation) \in \galloc'$;
  \item\label{a3} for all $\alocation \in \gverts$, we have $\glabels(\alocation) = \glabels'(\amap(\alocation))$;
  \item\label{a4} for all  $\pair{\alocation}{\alocation'} \in \gedges$, we have
       $\min(\alpha,\card{\gbtw(\alocation,\alocation')}) = \min(\alpha,\card{\gbtw'(\amap(\alocation),\amap(\alocation'))})$;
   \item\label{a5}  $\min(\alpha, \card{\grem}) = \min(\alpha, \card{\grem'})$.
 \end{enumerate}
\end{lemma}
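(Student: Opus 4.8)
The plan is to prove both implications, establishing throughout that the truth value of each test formula in $\Test(q,\alpha)$ is completely determined by the corresponding piece of support-graph data, and conversely that each such piece of data is probed by a suitable test formula. The crucial bookkeeping fact, used repeatedly, is that every vertex $\alocation \in \gverts$ carries a non-empty label set $\glabels(\alocation)$, so that picking any $\aterm \in \glabels(\alocation)$ pins down $\alocation = \sem{\aterm}^q_{\astore,\aheap}$ uniquely; hence every existential ``$\exists\,\alocation$ with $\aterm \in \glabels(\alocation)$'' occurring in the semantics of the test formulae is in fact deterministic.

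For the \emph{(if) direction}, I would assume a map $\amap$ satisfying (A1)--(A5) and check each family of test formulae in turn. For $\aterm = \aterm'$ and $\alloc{\aterm}$ the argument is immediate from (A3) together with (A1) (bijectivity) and (A2). For $\aterm \hpto \aterm'$ one transports the edge via (A1) and observes that $\card{\gbtw(\alocation,\alocation')}=0$ iff $\min(\alpha,\card{\gbtw(\alocation,\alocation')})=0$ (as $\alpha \geq 1$), which is preserved by (A4). For $\sees_q(\aterm,\aterm')\geq\beta+1$ and $\sizeothers_q\geq\beta$ the point is that, since $\beta \leq \alpha$, the threshold test $\card{\cdot}\geq\beta$ is equivalent to $\min(\alpha,\card{\cdot})\geq\beta$, and the latter is preserved by (A4) and (A5) respectively. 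This direction is a routine case analysis.

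For the \emph{(only if) direction}, I would build $\amap$ from the equality data. First, a term $\aterm$ is defined in $\pair{\astore}{\aheap}$ exactly when $\pair{\astore}{\aheap}\models \aterm=\aterm$ (variables always, meet-points by the remark following the test-formula semantics); since the two states agree on all formulae $\aterm=\aterm'$, they have the same set of defined terms and the same partition of it into $=$-classes. As the labelling map $\alocation\mapsto\glabels(\alocation)$ identifies $\gverts$ with precisely these classes (labels are non-empty and each term denotes a unique location), I define $\amap$ to send $\alocation$ to the unique $\alocation'\in\gverts'$ with $\glabels'(\alocation')=\glabels(\alocation)$. This is a bijection and yields (A3) by construction, while (A2) follows from agreement on $\alloc{\aterm}$ for any $\aterm\in\glabels(\alocation)$. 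Condition (A1) is the heart of the argument: the relation $\gedges$ is captured \emph{exactly} by the abbreviation $\sees_q(\aterm,\aterm')\geq 1 \;\equiv\; \aterm\hpto\aterm' \lor \sees_q(\aterm,\aterm')\geq 2$, because functionality of $\gedges$ and uniqueness of $\sem{\aterm}^q_{\astore,\aheap}$ force the witnessing edge to be $(\sem{\aterm}^q_{\astore,\aheap},\sem{\aterm'}^q_{\astore,\aheap})$; agreement on this abbreviation thus makes $\amap$ preserve and reflect edges. Finally (A4) follows by reading off $\card{\gbtw(\alocation,\alocation')}=0$ from $\aterm\hpto\aterm'$ and the thresholds $\card{\gbtw(\alocation,\alocation')}\geq\beta$ from $\sees_q(\aterm,\aterm')\geq\beta+1$ for $\beta\in\interval{1}{\alpha}$, which together determine $\min(\alpha,\card{\gbtw(\alocation,\alocation')})$; and (A5) follows from agreement on $\sizeothers_q\geq\beta$.

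The main obstacle I expect is (A1): one must argue carefully that the edge relation $\gedges$ — defined via non-empty paths avoiding intermediate labelled locations — is neither coarser nor finer than what the abbreviated $\sees_q(\cdot,\cdot)\geq 1$ detects. The resolution is that $\gedges$ is functional and that each term denotes a unique labelled location, so the existential witness in the semantics of $\sees$ and $\hpto$ is forced, and the ``no labelled location strictly in between'' clause in the definition of $\gedges$ matches exactly the path underlying $\sees$. A secondary point to handle with care is the uniform use of the inequality $\beta \leq \alpha$ to pass between $\card{\cdot}\geq\beta$ and $\min(\alpha,\card{\cdot})\geq\beta$, which is what lets the finitely many available thresholds recover the truncated cardinalities demanded by (A4) and (A5).
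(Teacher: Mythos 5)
Your proposal is correct and follows essentially the same route as the paper's proof: the backward direction is the same case analysis over the five families of test formulae using (A1)--(A5) together with the observation that $\beta\leq\alpha$ lets thresholds pass through the $\min(\alpha,\cdot)$ truncation, and the forward direction constructs the same map (the paper defines $\amap$ via a witnessing term and proves well-definedness from agreement on $\aterm=\aterm'$, which is just another phrasing of your identification of $\gverts$ with the $=$-classes of defined terms), recovering edges from $\aterm\hpto\aterm'$ and $\sees_q(\aterm,\aterm')\geq 2$ exactly as you describe. No gaps.
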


\begin{proof}
Suppose $\pair{\astore}{\aheap} \approx^q_\alpha \pair{\astore'}{\aheap'}$.
Let $\amap: \gverts \to \gverts'$ be the map such that for all locations $\alocation \in \gverts$,
we have $\amap(\alocation) \egdef \alocation'$ if and only if there is $\aterm \in \Terms{q}$ such that
$\sem{\aterm}^q_{\astore,\aheap} = \alocation$ and  $\sem{\aterm}^q_{\astore',\aheap'} = \alocation'$.
Let us show that $\amap$ is well-defined. To do so, assume that there are $\aterm,\aterm'$ such that
$\sem{\aterm}^q_{\astore,\aheap} = \sem{\aterm'}^q_{\astore,\aheap} = \alocation$,
$\sem{\aterm}^q_{\astore',\aheap'} = \alocation'$
and $\sem{\aterm'}^q_{\astore',\aheap'} = \alocation''$.
Since $\pair{\astore}{\aheap} \approx^q_\alpha \pair{\astore'}{\aheap'}$, we have that
$\pair{\astore}{\aheap} \models \aterm = \aterm'$ iff $\pair{\astore'}{\aheap'} \models \aterm = \aterm'$.
Therefore,
($\sem{\aterm}^q_{\astore,\aheap}$, $\sem{\aterm'}^q_{\astore,\aheap}$ are defined and
$\sem{\aterm}^q_{\astore,\aheap} = \sem{\aterm'}^q_{\astore,\aheap}$) iff ($\sem{\aterm}^q_{\astore',\aheap'}$, $\sem{\aterm'}^q_{\astore',\aheap'}$ are defined
and
$\sem{\aterm}^q_{\astore',\aheap'} = \sem{\aterm'}^q_{\astore',\aheap'}$). Consequently, $\alocation' = \alocation''$ and $\amap$ is well-defined.
Actually, the equivalence above induced by $\pair{\astore}{\aheap} \approx^q_\alpha \pair{\astore'}{\aheap'}$ allows us to show in a similar way
that $\amap$ is a bijection from $\gverts$ to $\gverts'$ and that the condition~\ref{a3} holds true. Indeed, the statements below are equivalent:
\begin{itemize}

\item $\set{\aterm,\aterm'} \subseteq \glabels(\alocation)$,
\item $\pair{\astore}{\aheap} \models \aterm = \aterm'$ and $\sem{\aterm}^q_{\astore,\aheap} = \sem{\aterm'}^q_{\astore,\aheap} = \alocation$
      (by definition of $\models$ and $\supportgraph{\astore,\aheap}$),
\item $\pair{\astore'}{\aheap'} \models \aterm = \aterm'$ and $\sem{\aterm}^q_{\astore',\aheap'} = \sem{\aterm'}^q_{\astore',\aheap'} = \alocation'$
      for some $\alocation'$
      (by $\pair{\astore}{\aheap} \approx^q_\alpha \pair{\astore'}{\aheap'}$),
\item $\pair{\astore'}{\aheap'} \models \aterm = \aterm'$ and $\sem{\aterm}^q_{\astore',\aheap'} = \sem{\aterm'}^q_{\astore',\aheap'} = \amap(\alocation)$
      (by definition of $\amap$),
\item $\set{\aterm,\aterm'} \subseteq \glabels'(\amap(\alocation))$
      (by definition of $\supportgraph{\astore',\aheap'}$).
\end{itemize}
Consequently,  the condition~\ref{a3} holds true. A similar reasoning allows us to establish~\ref{a2}
and it is omitted below. 
%%
%% SD 18/07/2019
%% Each case removed, following the suggestion of a reviewer.
%% 
\cut{
Indeed,
the statements below are equivalent:
\begin{itemize}

\item $\alocation \in \galloc$,
\item $\pair{\astore}{\aheap} \models \alloc{\aterm}$  for some $\aterm$ such that $\sem{\aterm}^q_{\astore,\aheap} = \alocation$
      (by definition of $\models$ and $\supportgraph{\astore,\aheap}$),
\item $\pair{\astore'}{\aheap'} \models \alloc{\aterm}$ for some $\aterm$ such that $\sem{\aterm}^q_{\astore',\aheap'} = \alocation'$
      for some $\alocation'$
      (by $\pair{\astore}{\aheap} \approx^q_\alpha \pair{\astore'}{\aheap'}$),
\item $\pair{\astore'}{\aheap'} \models \alloc{\aterm}$ for some $\aterm$ such that $\sem{\aterm}^q_{\astore',\aheap'} = \amap(\alocation)$
      (by definition of $\amap$),
\item $\amap(\alocation) \in \galloc'$
      (by definition of $\supportgraph{\astore',\aheap'}$).
\end{itemize}
}
In order to conclude the first part of the proof, first we show~\ref{a5} and then we focus on~\ref{a1} and~\ref{a4}.
Let us first establish~\ref{a5}. As $\pair{\astore}{\aheap} \approx^q_\alpha \pair{\astore'}{\aheap'}$, we have
($\dag$) for all $\beta \in \interval{1}{\alpha}$, $\pair{\astore}{\aheap} \models \sizeothers_q \geq \beta$
iff $\pair{\astore'}{\aheap'} \models \sizeothers_q \geq \beta$ and ($\dag$) is equivalent to the statements below:
\begin{itemize}
%%
%% SD 08/08/2018
%%
%% \item for all $\beta \in \interval{1}{\alpha}$, $\card{\Others^q_{\astore,\aheap}} \geq \beta$ iff
%%      $\card{\Others^q_{\astore',\aheap'}} \geq \beta$ (by definition of $\models$),
%%
\item  for all $\beta \in \interval{1}{\alpha}$, $\card{\grem} \geq \beta$ iff
      $\card{\grem'} \geq \beta$ (by definition of $\models$),
\item $\min(\alpha, \card{\grem}) = \min(\alpha, \card{\grem'})$ (by a simple arithmetical reasoning).
\end{itemize}
Consequently,  the condition~\ref{a5} holds true. Now, let us show~\ref{a1} and~\ref{a4}.
First, the statements below are equivalent:
\begin{itemize}
\item $\pair{\alocation}{\alocation'} \in \gedges$ and $\gbtw(\alocation,\alocation') = \emptyset$,
\item $\pair{\astore}{\aheap} \models \aterm \hpto \aterm'$, $\sem{\aterm}^q_{\astore,\aheap} = \alocation$ and $\sem{\aterm'}^q_{\astore,\aheap} = \alocation'$
      for some $\aterm, \aterm' \in \Terms{q}$
      (by definition of $\models$),
\item  $\pair{\astore'}{\aheap'} \models \aterm \hpto \aterm'$, $\sem{\aterm}^q_{\astore',\aheap'} = \alocation''$ and $\sem{\aterm'}^q_{\astore',\aheap'} = 
      \alocation'''$
       for some $\aterm, \aterm' \in \Terms{q}$, for some locations $\alocation''$, $\alocation'''$,
      (by $\pair{\astore}{\aheap} \approx^q_\alpha \pair{\astore'}{\aheap'}$),
\item  $\pair{\astore'}{\aheap'} \models \aterm \hpto \aterm'$, $\sem{\aterm}^q_{\astore',\aheap'} = \amap(\alocation)$ and $\sem{\aterm'}^q_{\astore',\aheap'} =
       \amap(\alocation')$ for some $\aterm, \aterm' \in \Terms{q}$
      (by definition of $\amap$),
\item $\pair{\amap(\alocation)}{\amap(\alocation')} \in \gedges'$ and $\gbtw'(\amap(\alocation),\amap(\alocation')) = \emptyset$
       (by definition of $\models$ and $\supportgraph{\astore',\aheap'}$).
\end{itemize}
Similarly, we have the following equivalences, where $\beta \in \interval{1}{\alpha}$:
\begin{itemize}

\item $\pair{\alocation}{\alocation'} \in \gedges$ and $\card{\gbtw(\alocation,\alocation')} \geq \beta$,
\item $\pair{\astore}{\aheap} \models \sees_q(\aterm,\aterm') \geq \beta + 1$,
       $\sem{\aterm}^q_{\astore,\aheap} = \alocation$, $\sem{\aterm'}^q_{\astore,\aheap} = \alocation'$
      for some $\aterm, \aterm' \in \Terms{q}$
      (by definition of $\models$),
\item  $\pair{\astore'}{\aheap'} \models \sees_q(\aterm,\aterm') \geq \beta + 1$,
       $\sem{\aterm}^q_{\astore',\aheap'} = \alocation''$ and $\sem{\aterm'}^q_{\astore',\aheap'} = \alocation'''$
       for some $\aterm, \aterm' \in \Terms{q}$, for some locations $\alocation''$, $\alocation'''$,
      (by $\pair{\astore}{\aheap} \approx^q_\alpha  \pair{\astore'}{\aheap'}$),
\item  $\pair{\astore'}{\aheap'} \models \sees_q(\aterm,\aterm')\!\geq\!\beta\!+\!1$,
        $\sem{\aterm}^q_{\astore',\aheap'}\!= \amap(\alocation)$, $\sem{\aterm'}^q_{\astore',\aheap'}\!=
       \amap(\alocation')$ for some $\aterm, \aterm'\!\in\!\Terms{q}$,
      (by $\amap$),
\item $\pair{\amap(\alocation)}{\amap(\alocation')} \in \gedges'$ and $\card{\gbtw'(\amap(\alocation),\amap(\alocation'))} \geq \beta$
       (by definition of $\models$ and $\supportgraph{\astore',\aheap'}$).
\end{itemize}
Consequently, we get~\ref{a1} and~\ref{a4}.
We omit below the proof of the other direction as it is similar to the first direction.
\cut{
Let us show the other direction. Suppose $\amap$ is a map satisfying \ref{a1}--\ref{a5}.
\begin{itemize}
\item Let us consider the test formula $\aterm = \aterm'$. The statements below are equivalent:
\begin{itemize}
\item $\pair{\astore}{\aheap} \models \aterm = \aterm'$,
\item $\sem{\aterm}^q_{\astore,\aheap}, \sem{\aterm'}^q_{\astore,\aheap} \in V$ and
 $\set{\aterm, \aterm'} \subseteq \glabels(\sem{\aterm}^q_{\astore,\aheap})$ (by definition of $\models$ and $\supportgraph{\astore,\aheap}$),
\item $\amap(\sem{\aterm}^q_{\astore,\aheap}), \amap(\sem{\aterm'}^q_{\astore,\aheap}) \in V'$ and
 $\set{\aterm, \aterm'} \subseteq \glabels(\amap(\sem{\aterm}^q_{\astore,\aheap}))$ (by~\ref{a1}~and~\ref{a3}),
\item $\sem{\aterm}^q_{\astore',\aheap'}, \sem{\aterm'}^q_{\astore',\aheap'} \in V'$ and
 $\set{\aterm, \aterm'} \subseteq \glabels(\sem{\aterm}^q_{\astore',\aheap'})$ (by~\ref{a1}~and~\ref{a3}),
\item $\pair{\astore'}{\aheap'} \models \aterm = \aterm '$  (by definition of $\models$ and $\supportgraph{\astore',\aheap'}$).
\end{itemize}
\item Let us consider the test formula $\alloc{\aterm}$. The statements below are equivalent:
\begin{itemize}
\item $\pair{\astore}{\aheap} \models \alloc{\aterm}$,
\item $\sem{\aterm}^q_{\astore,\aheap} \in \galloc$ and $\aterm \in \glabels(\sem{\aterm}^q_{\astore,\aheap})$
      (by definition of $\models$ and $\supportgraph{\astore,\aheap}$),
\item $\amap(\sem{\aterm}^q_{\astore,\aheap}) \in \galloc'$ and $\aterm \in \glabels'(\amap(\sem{\aterm}^q_{\astore,\aheap}))$
      (by~\ref{a2} and~\ref{a3}),
\item $\sem{\aterm}^q_{\astore',\aheap'} \in \galloc'$ and $\aterm \in \glabels'(\sem{\aterm}^q_{\astore',\aheap'})$
      (by definition of $\amap$),
\item $\pair{\astore'}{\aheap'} \models \alloc{\aterm}$ (by definition of $\models$ and $\supportgraph{\astore',\aheap'}$).
\end{itemize}
\item Let us consider the test formula $\aterm \hpto \aterm'$. The statements below are equivalent:
\begin{itemize}
\item $\pair{\astore}{\aheap} \models \aterm \hpto \aterm'$,
\item $\alocation = \sem{\aterm}^q_{\astore,\aheap}$, $\alocation' =
\sem{\aterm'}^q_{\astore,\aheap}
 \in V$ and $\pair{ \alocation}{ \alocation'} \in \gedges$ and
        $\card{\gbtw( \alocation, \alocation')} = 0$,
      (by definition of $\models$ and $\supportgraph{\astore,\aheap}$),
\item  $\amap(\alocation) = \amap(\sem{\aterm}^q_{\astore,\aheap})$, $\amap(\alocation') =
\amap(\sem{\aterm'}^q_{\astore,\aheap})
 \in V'$ and $\pair{\amap(\alocation)}{\amap(\alocation')} \in \gedges'$ and \\
        $\card{\gbtw'(\amap(\alocation), \amap(\alocation'))} = 0$,
      (by~\ref{a1},~\ref{a4}),
\item $\amap(\alocation) = \sem{\aterm}^q_{\astore',\aheap'}$, $\amap(\alocation') =
\sem{\aterm'}^q_{\astore',\aheap'}
 \in V'$ and $\pair{\amap(\alocation)}{\amap(\alocation')} \in \gedges'$ and \\
        $\card{\gbtw'(\amap(\alocation), \amap(\alocation'))} = 0$,
      (by~\ref{a3}),
\item  $\pair{\astore'}{\aheap'} \models \aterm \hpto \aterm'$,
 (by definition of $\models$ and $\supportgraph{\astore',\aheap'}$).
\end{itemize}
\item Let us consider the test formula $\sees_q(\aterm,\aterm') \geq \beta +1$. The statements below are equivalent:
\begin{itemize}
\item $\pair{\astore}{\aheap} \models \sees_q(\aterm,\aterm') \geq \beta +1$,
\item $\alocation = \sem{\aterm}^q_{\astore,\aheap}$,
 $\alocation' = \sem{\aterm'}^q_{\astore,\aheap} \in V$  and
 $\pair{ \alocation}{ \alocation'} \in \gedges$  and $\card{\gbtw( \alocation, \alocation')} \geq \beta$
      (by definition of $\models$ and $\supportgraph{\astore,\aheap}$),
\item  $\amap(\alocation) = \amap(\sem{\aterm}^q_{\astore,\aheap})$,
 $\amap(\alocation') = \amap(\sem{\aterm'}^q_{\astore,\aheap}) \in V$  and
 $\pair{\amap(\alocation)}{\amap(\alocation')} \in \gedges'$  and \\ $\card{\gbtw'(\amap(\alocation),\amap(\alocation'))} \geq \beta$
(by~\ref{a1},~\ref{a4}),
\item $\amap(\alocation) = \sem{\aterm}^q_{\astore',\aheap'}$, $\amap(\alocation') =
\sem{\aterm'}^q_{\astore',\aheap'}
 \in V'$ and $\pair{\amap(\alocation)}{\amap(\alocation')} \in \gedges'$ and \\
        $\card{\gbtw'(\amap(\alocation), \amap(\alocation'))} \geq \beta$,
      (by~\ref{a3}),
\item  $\pair{\astore'}{\aheap'} \models \sees_q(\aterm,\aterm') \geq \beta +1$,
 (by definition of $\models$ and $\supportgraph{\astore',\aheap'}$).
\end{itemize}
\item Let us consider the test formula $\sizeothers_q \geq \beta$. The statements below are equivalent:
\begin{itemize}
\item $\pair{\astore}{\aheap} \models \sizeothers_q \geq \beta$,
\item $\card{\grem} \geq \beta$
      (by definition of $\models$ and $\supportgraph{\astore,\aheap}$),
\item $\card{\grem'} \geq \beta$
      (by~\ref{a5}),
\item  $\pair{\astore'}{\aheap'} \models \sizeothers_q \geq \beta$,
 (by definition of $\models$ and $\supportgraph{\astore',\aheap'}$).\qedhere
\end{itemize}
\end{itemize}
}
\end{proof}

Now, we  state the key intermediate result of the section that can be viewed as a distributivity lemma.
The expressive  power of the test formulae allows us to mimic the separation between two equivalent
memory states with respect to the relation $\approx^q_\alpha$. Separating conjunction
can therefore be eliminated from the logic in favour of test formulae, which is essential in
the proof of Theorem~\ref{theo:test_sl_equiv}.

\begin{lemma}\label{lemma-star}
 Let $q,\alpha,\alpha_1,\alpha_2 \geq 1$ with $\alpha = \alpha_1 + \alpha_2$
and $\pair{\astore}{\aheap}$, $\pair{\astore'}{\aheap'}$ be  memory states such that $\pair{\astore}{\aheap} \approx^q_\alpha
\pair{\astore'}{\aheap'}$.
 For all heaps $\aheap_1$, $\aheap_2$ such that $\aheap = \aheap_1 + \aheap_2$, there are heaps
$\aheap'_1$, $\aheap'_2$ such that $\aheap' = \aheap'_1 + \aheap'_2$,
  $\pair{\astore}{\aheap_1}  \approx^q_{\alpha_1} \pair{\astore'}{\aheap'_1}$ and
$\pair{\astore}{\aheap_2}  \approx^q_{\alpha_2} \pair{\astore'}{\aheap'_2}$.
\end{lemma}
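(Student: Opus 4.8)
The plan is to reduce the statement to a purely combinatorial manipulation of support graphs, invoking Lemma~\ref{lemma-test-graph} in both directions. First I would apply that lemma to the hypothesis $\pair{\astore}{\aheap} \approx^q_\alpha \pair{\astore'}{\aheap'}$ to obtain a map $\amap : \gverts \to \gverts'$ between $\supportgraph{\astore,\aheap}$ and $\supportgraph{\astore',\aheap'}$ satisfying \ref{a1}--\ref{a5}. The given decomposition $\aheap = \aheap_1 + \aheap_2$ amounts to a two-colouring of $\domain{\aheap}$, and since the components $\galloc$, the between-sets $\gbtw(\alocation,\alocation')$ for $(\alocation,\alocation') \in \gedges$, and $\grem$ partition $\domain{\aheap}$, it suffices to transport this colouring across $\amap$ component by component, thereby defining a colouring $\domain{\aheap'_1} \uplus \domain{\aheap'_2} = \domain{\aheap'}$; the induced $\aheap'_1, \aheap'_2$ then automatically satisfy $\aheap' = \aheap'_1 + \aheap'_2$.

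For the transport I would proceed as follows. For each allocated labelled location $\alocation \in \galloc$, I put $\amap(\alocation)$ on the same side as $\alocation$; by \ref{a2} this colours exactly the cells of $\galloc'$. For each edge $(\alocation,\alocation') \in \gedges$ I colour the cells of the $\aheap'$-path from $\amap(\alocation)$ to $\amap(\alocation')$ so as to mirror the colouring of the $\aheap$-path, splitting into two cases governed by \ref{a4}. If the between-set is \emph{short}, i.e. $\card{\gbtw(\alocation,\alocation')} < \alpha$, then \ref{a4} forces $\card{\gbtw'(\amap(\alocation),\amap(\alocation'))} = \card{\gbtw(\alocation,\alocation')}$ exactly, so the two paths have equal length and I copy the colouring cell by cell. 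If the between-set is \emph{long} ($\geq \alpha$, hence so is its image), I colour the initial cell of the $\aheap'$-path like that of the $\aheap$-path (preserving the allocation status of $\alocation$), I keep the $\aheap'$-path monochromatic of colour $c$ exactly when the $\aheap$-path is, and otherwise I distribute the colours so that the numbers of colour-$1$ and colour-$2$ cells match those on the $\aheap$-side up to the caps $\alpha_1$ and $\alpha_2$ respectively — realisable precisely because the image length is $\geq \alpha = \alpha_1 + \alpha_2$. Finally I split the remainder $\grem'$ so that $\min(\alpha_1,\card{\grem'_1})$ and $\min(\alpha_2,\card{\grem'_2})$ match the corresponding quantities for $\aheap$, again using \ref{a5} and the same budget inequality.

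To conclude I would verify, via Lemma~\ref{lemma-test-graph} applied in the reverse direction at levels $\alpha_1$ and $\alpha_2$, that $\pair{\astore}{\aheap_1} \approx^q_{\alpha_1} \pair{\astore'}{\aheap'_1}$ and $\pair{\astore}{\aheap_2} \approx^q_{\alpha_2} \pair{\astore'}{\aheap'_2}$. The witnessing correspondence is the restriction of $\amap$ to the labelled locations surviving in $\aheap_1$ (resp. $\aheap_2$); by Lemma~\ref{lemma-labels} these form a subset of $\gverts$, and because the construction keeps an edge fully colour-$1$ on the $\aheap'$-side exactly when the corresponding edge is fully colour-$1$ on the $\aheap$-side, reachability inside $\aheap_1$ and inside $\aheap'_1$ corresponds under $\amap$. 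Hence the same labelled locations are retained on both sides (variables always survive, a meet-point survives iff its defining reachabilities do), and \ref{a1}--\ref{a3} transfer.

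The main obstacle I expect is the bookkeeping of \ref{a4} and \ref{a5} for $\aheap_1$ (symmetrically $\aheap_2$), because passing to a subheap can both \emph{merge} edges — when an intermediate vertex of $\gverts$ ceases to be labelled in $\aheap_1$, its two incident monochromatic edges coalesce into one whose between-set is the concatenation of the two plus the dead vertex — and \emph{break} them, dumping the colour-$1$ cells of a non-monochromatic path into $\grem_1$. I would control these through the dichotomy above: concatenations of short edges have exactly matching lengths, whereas any concatenation or remainder contribution involving a long edge already exceeds $\alpha \geq \alpha_1$ on both sides and is therefore capped identically; and since $\grem_1$ is a single global quantity, the freedom in distributing long-path cells and genuine remainder cells, underwritten by the budget $\alpha_1 + \alpha_2 = \alpha$, suffices to equalise $\min(\alpha_1,\card{\grem_1})$ across the two models. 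The fact that no \emph{new} labelled location can appear, guaranteed by Lemma~\ref{lemma-labels}, is exactly what keeps this accounting finite and lets $\amap$ be reused throughout.
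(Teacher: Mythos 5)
Your proposal follows essentially the same route as the paper's proof: obtain $\amap$ from Lemma~\ref{lemma-test-graph}, transport the two-colouring of $\domain{\aheap}$ across $\amap$ component by component (allocated labelled cells by $\amap$ directly, each between-set and the remainder split so that the $\min(\alpha_k,\cdot)$ quantities are preserved, using the budget $\alpha=\alpha_1+\alpha_2$), and then verify the two resulting equivalences by restricting $\amap$ to the surviving labelled locations. The one place where the paper does substantially more work than your sketch suggests is the transfer of meet-point labels: a term $m_q(\avariable_i,\avariable_j)$ does not merely survive or die in a subheap but can be re-realised at a \emph{different} labelled location of the original heap (this is the paper's five-case taxonomy of meet-points), although your key observation — that label membership in the subheap is determined by which original-heap paths survive, and that survival is mirrored on both sides by construction — is exactly what makes that case analysis go through.
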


The proof of Lemma~\ref{lemma-star} is rather long as it first constructs the
subheaps $\aheap'_1$ and $\aheap_2'$, and then check that
$\pair{\astore}{\aheap_i}  \approx^q_{\alpha_i} \pair{\astore'}{\aheap'_i}$ holds (for both $i \in \{1,2\}$)
by verifying the conditions~\ref{a1}--\ref{a5}.
Moreover, the verification of some of the conditions requires first to establish
additional preliminary properties. Though the principle of the proof structure
is quite simple, checking carefully each property is quite lengthy. We believe that having
all the material in a single proof is helpful to emphasize  the proof structure,
or alternatively to skip the proof in a first reading of the document (see also Appendix~\ref{appendix-star}).

\begin{proof}
  Let $q$, $\alpha$, $\alpha_1$, $\alpha_2$, $\pair{\astore}{\aheap}$, $\pair{\astore'}{\aheap'}$, $\aheap_1$ and $\aheap_2$ be
  defined as in the statement. Let
  \begin{itemize}
  \item $\supportgraph{\astore,\aheap} = (\gverts,\gedges,\galloc,\glabels,\gbtw,\grem)$ and
  \item $\supportgraph{\astore',\aheap'} = (\gverts',\gedges',\galloc',\glabels',\gbtw',\grem')$
  \end{itemize}
  be the support graphs
of $\pair{\astore}{\aheap}$ and $\pair{\astore'}{\aheap'}$ respectively, with respect to $q$.
  As $\pair{\astore}{\aheap} \approx^q_\alpha
  \pair{\astore'}{\aheap'}$, let $\amap: \gverts \to \gverts'$ be a map satisfying \ref{a1}--\ref{a5} from
   Lemma~\ref{lemma-test-graph}.
  Below, for the sake of conciseness, let $k \in \{1,2\}$ and let
  $\supportgraph{\astore,\aheap_k} = (\gverts_k,\gedges_k,\galloc_k,\glabels_k,\gbtw_k,\grem_k)$
  be the support graph of $\pair{\astore}{\aheap_k}$.

  The proof is rather long and can be summed up with the following steps.
  \begin{enumerate}
  \item First, we define a strategy to split $\aheap'$ into $\aheap_1'$ and $\aheap_2'$ by closely following the way that $\aheap$ is split into $\aheap_1$ and $\aheap_2$. To do this, we look at the support graphs. For instance, suppose that $\grem$ is split into two sets $R_1 \subseteq \domain{\aheap_1}$ and $R_2 \subseteq \domain{\aheap_2}$.
  By definition, it is quite easy to see that the sets $R_1$ and $R_2$ must be subsets of $\grem_1$ and $\grem_2$, respectively. 
Then, following Lemma~\ref{lemma-test-graph}, to obtain
  $\pair{\astore}{\aheap_k} \approx^q_{\alpha_k} \pair{\astore'}{\aheap_k'}$,
  we are required to split $\grem'$ into
  $R_1' \subseteq \domain{\aheap_1'}$ and
  $R_2' \subseteq \domain{\aheap_2'}$ so that $\min(\alpha_k, \card{R_k}) = \min(\alpha_k, \card{R_k'})$. Indeed, otherwise the equisatisfaction of
  the test formulae of the form $\sizeothers_q \geq \beta$ is not ensured (details on this are formalised later).
  \item After defining $\aheap_1'$ and $\aheap_2'$, we show that $\pair{\astore}{\aheap_k} \approx^q_{\alpha_k} \pair{\astore'}{\aheap_k'}$.
  To do so, again, we  follow Lemma~\ref{lemma-test-graph} and we show that we can find suitable bijections from labelled locations of $\aheap_k$ to the ones of $\aheap_k'$ satisfying \ref{a1}--\ref{a5}.
  \end{enumerate}
  According to the summary above, let us first define explicitly $\aheap_1'$ and $\aheap_2'$ via an iterative process that consists in adding locations
  to $\domain{\aheap'_1}$ or to $\domain{\aheap'_2}$. Whenever we enforce that $\alocation \in \domain{\aheap_k'}$,
  implicitly we have $\aheap'_k(\alocation) \egdef \aheap'(\alocation)$ as $\aheap'_k$ is intended to be a subheap of $\aheap'$.
  \begin{description}
    \item[\descLab{(CA)}{c1-body}] For all $\alocation \in \galloc'$, $\alocation \in \domain{\aheap_k'} \equivdef \amap^{-1}(\alocation) \in \domain{\aheap_k}$.
    This step of the construction, as well as its usefulness, should be self-explanatory.
    For example, if $\pair{\astore}{\aheap_k} \models \alloc{\avariable_i}$ then, by relying on \ref{a3}, this step allows us to conclude that $\pair{\astore'}{\aheap_k'} \models \alloc{\avariable_i}$ (independently on how the definition of $\aheap_k'$ will be completed in the next steps of the construction).
    \item[\descLab{(CR)}{c3-body}] The heaps $\aheap_1'$ and $\aheap_2'$ are further populated depending on $\grem$. Let $R_k = \grem \cap \domain{\aheap_k}$.
    By definition, we have $R_1 \uplus R_2 = \grem$.
    Below, we partition $\grem'$ into two sets $R_1'$ and $R_2'$ so that by definition $R_k' \subseteq \domain{\aheap_k'}$.
    The strategy for defining the partition is split into three cases:
    \begin{center}
    \begin{enumerate}[align=left]
    \item[\descLab{(CR.C1)}{c31-body}] If $\card{R_1} < \alpha_1$ then $R_1'$ is a set of $\card{R_1}$ locations from $\grem'$ and $R_2' \egdef \grem' \setminus R_1'$.
    \item[\descLab{(CR.C2)}{c32-body}] Otherwise, if $\card{R_2} < \alpha_2$ then $R_2'$ is a set of 
    $\card{R_2}$ locations of $\grem'$ and $R_1' \egdef \grem' \setminus R_2'$.
    \item[\descLab{(CR.C3)}{c33-body}] Otherwise we have  $\card{R_1} \geq \alpha_1$ and $\card{R_2} \geq \alpha_2$. Then, $R_1'$ is a set of $\alpha_1$ locations
    from $\grem'$ and
    $R_2' \egdef \grem' \setminus R_1'$.
    \end{enumerate}
    \end{center}
    It is easy to show that the construction satisfies the following property (where $k \in \{1,2\}$):
    \begin{equation}
      \tag*{\textbf{(CR.P1)}}
      % \item[\descLab{(CR.P1)}{crp1}] Suppose $\aheap''$ such that $\aheap'' \sqsubseteq \aheap'$ and $R_k' \subseteq \domain{\aheap'_k}$.
      % Let $\supportgraph{\astore',\aheap''} = (\gverts'',\gedges'', \galloc'', \glabels'', \gbtw'', \grem'')$ be the
      % support graph of $\pair{\astore'}{\aheap''}$.
      % Then, $R_k' \subseteq \grem''$;
      \label{crp2-body} \min(\alpha_k,\card{R_k}) = \min(\alpha_k,\card{R_k'})
    \end{equation}
    % (Proof of \ref{crp1}) The first property holds since $R_k' \subseteq \grem'$.
    % We recall that the locations in $\grem'$ are characterised as being the ones
    % in $\domain{\aheap'}$ such that they are not in $\Labels{q}{\astore',\aheap'}$, nor they belongs to paths between locations in $\Labels{q}{\astore',\aheap'}$.
    % By Lemma~\ref{lemma-labels}, recall that given $\aheap_k \sqsubseteq \aheap'$ we have $\Labels{q}{\astore', \aheap_k'} \subseteq \Labels{q}{\astore',\aheap'}$.
    % Moreover, again from $\aheap_k' \sqsubseteq \aheap'$, we have that every
    % path between locations in $\Labels{q}{\astore', \aheap_k'}$
    % is a path between locations in $\Labels{q}{\astore',\aheap'}$.
    % Hence, every location in $R_k'$ cannot be in  $\Labels{q}{\astore', \aheap_k'}$,
    % nor it can belong to a path between locations in $\Labels{q}{\astore',\aheap_k'}$ (as otherwise the location does not belong to $\grem'$, a contradiction).
    % We then conclude, by definition of $\grem_k'$, that $R_k' \subseteq \grem_k'$.

    The property \ref{crp2-body} directly follows from the
     property~\ref{a5} satisfied by $\amap$. The proof (that can be applied also for the next step of the construction, see~\ref{cip2-body}), works as follows.

  First, suppose that the sets of remaining locations in the heap domain are small, i.e.
     \[
     \min(\alpha,\card{\grem}) = \min(\alpha,\card{\grem'}) < \alpha_1 + \alpha_2.
     \]
     So, $\card{\grem} = \card{\grem'}$ and therefore $\card{R_1} + \card{R_2} = \card{R_1'} + \card{R_2'}$.
    By definition, $\card{R_1} = \card{R_1'}$ and $\card{R_2} = \card{R_2'}$ trivially hold for the cases~\ref{c31-body}
    and~\ref{c32-body}, whereas the case~\ref{c33-body}
     ($\card{R_1} \geq \alpha_1$ and $\card{R_2} \geq \alpha_2$) can never be applied since
    $\card{R_1} + \card{R_2} < \alpha_1 + \alpha_2$. We conclude that $\min(\alpha_k,\card{R_k}) = \min(\alpha_k,\card{R_k'})$.

  Second, suppose instead
  \[\min(\alpha,\card{\grem}) = \min(\alpha,\card{\grem'}) = \alpha_1 + \alpha_2.\]
  If the first case~\ref{c31-body} applies, i.e.\ $\card{R_1} < \alpha_1$, then $\card{R_2} \geq \alpha_2$ and by definition $\card{R_1'} = \card{R_1}$.
  Then, $\card{R_2'} \geq \alpha_2$ trivially follows from $\card{R_1'} + \card{R_2'} \geq \alpha_1 + \alpha_2$.
  Symmetrically, $\min(\alpha_k,\card{R_k}) = \min(\alpha_k,\card{R_k'})$ holds when the second case~\ref{c32-body}  applies ($\card{R_2} < \alpha_2$).
     Lastly, suppose $\card{R_1} \geq \alpha_1$ and $\card{R_2} \geq \alpha_2$. Then, the third case~\ref{c33-body} applies and by definition $\card{R_1'} = \alpha_1$. Again, we conclude that $\min(\alpha_k,\card{R_k}) = \min(\alpha_k,\card{R_k'})$ since
    $\card{R_2'} \geq \alpha_2$ trivially follows from $\card{R_1'} + \card{R_2'} \geq \alpha_1 + \alpha_2$.
    %%
    %% Indeed this result directly follows from the property \[\min(\alpha, \card{\grem}) = \min(\alpha, \card{\grem'}).\]
    %%

    \item[\descLab{(CI)}{c2-body}] Lastly, the heaps $\aheap_1'$ and $\aheap_2'$ are further populated with respect to the memory cells in $\gbtw(\alocation,\alocation')$.
    For all $\pair{\alocation}{\alocation'} \in \gedges$, let
     $L_k \egdef \gbtw(\alocation,\alocation') \cap \domain{\aheap_k}$. We have
    $L_1 \uplus L_2 = \gbtw(\alocation,\alocation')$.
    Below, we partition $\gbtw'(\amap(\alocation),\amap(\alocation'))$ into $L_1'$ and $L_2'$ so that by definition
    $L_k' \subseteq \domain{\aheap_k'}$:
    \begin{description}
    \item[\descLab{(CI.C1)}{c21-body}] If $\card{L_1} < \alpha_1$ then $L_1'$ is a set of $\card{L_1}$ locations from $\gbtw'(\amap(\alocation),\amap(\alocation'))$, whereas $L_2' \egdef
     \gbtw'(\amap(\alocation),\amap(\alocation')) \setminus L_1'$.
    \item[\descLab{(CI.C2)}{c22-body}] Else, if $\card{L_2} < \alpha_2$ then $L_2'$ is a set of $\card{L_2}$ locations from $\gbtw'(\amap(\alocation),\amap(\alocation'))$,
    whereas  $L_1' \egdef \gbtw'(\amap(\alocation),\amap(\alocation')) \setminus L_2'$.
    \item[\descLab{(CI.C3)}{c23-body}] Otherwise, we have  $\card{L_1} \geq \alpha_1$ and $\card{L_2} \geq \alpha_2$. Then $L_1'$ is a set of $\alpha_1$ locations
    from $\gbtw'(\amap(\alocation),\amap(\alocation'))$ and
    $L_2' \egdef \gbtw'(\amap(\alocation),\amap(\alocation')) \setminus L_1'$.
    \end{description}
    It is easy to show that the construction satisfies the following properties (where $k \in \{1,2\}$):
    \begin{equation}
      \tag*{\textbf{(CI.P1)}}\label{cip3-body}
    \text{if } L_k' = \emptyset \text{ then } \gbtw'(\amap(\alocation),\amap(\alocation')) \subseteq \domain{\aheap_{3-k}'}
    \end{equation}
    \begin{equation}
      \tag*{\textbf{(CI.P2)}}\label{cip2-body}
    \min(\alpha_k,\card{L_k}) = \min(\alpha_k,\card{L_k'})
    \end{equation}
    % \begin{description}
    % % \item[\descLab{(CI.P2)}{cip1}]
    % % Suppose $L_1' \neq \emptyset$ and $L_2' \neq \emptyset$.
    % % Let $\aheap_1'$, $\aheap_2'$ be two possible constructions such that $\aheap'_1 + \aheap_2' = \aheap'$ and $ \forall k \in \{1,2\}$, $L_k' \subseteq \domain{\aheap'_k}$.
    % % Let $\supportgraph{\astore',\aheap_k'} = (\gverts_k',\gedges_k',\galloc_k',\glabels_k',\gbtw_k',\grem_k')$ be the
    % % support graph of $\pair{\astore'}{\aheap_k'}$.
    % % Then, $L_k' \subseteq \grem_k'$;
    % \item[\descLab{(CI.P3)}{cip2-body}] $\min(\alpha_k,\card{L_k}) = \min(\alpha_k,\card{L_k'})$.
    % \end{description}
    (Proof of~\ref{cip3-body}) The first property trivially holds from the cases~\ref{c21-body} and~\ref{c22-body} of the construction.
    Notice that given $k \in \{1,2\}$, $3-k$ corresponds to the index in $\{1,2\}$ that is different from $k$.

    (Proof of~\ref{cip2-body}) The second property directly follows from~\ref{a4} (which is satisfied by $\amap$) and is proved as done for~\ref{crp2-body}.
    %%
    %% \[\min(\alpha,\card{\gbtw(\alocation,\alocation')}) = \min(\alpha,\card{\gbtw(f(\alocation),f(\alocation'))})\] guaranteed by $f$.
    %%

    % Moreover, this means that a path between two labelled locations $\alocation, \alocation' \in \Labels{q}{\astore,\aheap}$ is
    % preserved in $\aheap_k$ (meaning that all of its locations are assigned to $\aheap_k$) if and only if the path from $\amap(\alocation)$ to $\amap(\alocation')$ is also preserved in $\aheap_k'$.

  \end{description}

This ends the construction of $\aheap'_1$ and $\aheap'_2$ as
any location in $\domain{\aheap'}$ has been assigned to one of the two heaps.
Indeed, $\{\galloc',\grem'\} \cup \{ \gbtw'(\alocation,\alocation') \mid
(\alocation,\alocation') \in \gedges' \}$ is a partition of $\domain{\aheap'}$.
As  $\pair{\astore}{\aheap} \approx^q_\alpha
\pair{\astore'}{\aheap'}$,
the support graphs  $\supportgraph{\astore,\aheap}$ and  $\supportgraph{\astore',\aheap'}$ witness
the existence of a map $\amap:  \gverts \to \gverts'$ satisfying~\ref{a1}--\ref{a5} and therefore
there is an underlying isomorphism between these structures satisfying quantitative properties up to
the value $\alpha$. The construction above can be understood as a way to split $\aheap'$
into $\aheap_1'$ and $\aheap_2'$ mimicking the splitting of $\aheap$ into $\aheap_1$ and $\aheap_2$.
It remains to show below that this is done in a way that guarantees that
 $\pair{\astore}{\aheap_k} \approx^q_{\alpha_k}
\pair{\astore'}{\aheap'_k}$ ($k \in \set{1,2}$).

In the following, we denote the support graphs of $\pair{\astore}{\aheap_k}$ and $\pair{\astore'}{\aheap_k'}$ respectively as
\begin{itemize}
\item $\supportgraph{\astore,\aheap_k} = (\gverts_k,\gedges_k,\galloc_k,\glabels_k,\gbtw_k,\grem_k)$
and,
\item  $\supportgraph{\astore',\aheap_k'} = (\gverts_k',\gedges_k',\galloc_k',\glabels_k',\gbtw_k',\grem_k')$.
\end{itemize}
  %%
  %% between the two support graphs $(\gverts_k,\gedges_k,\glabels_k,\gbtw_k,\grem_k)$ and $(\gverts_k',\gedges_k',\glabels_k',
  %% \gbtw_k',\grem_k')$ satisfying
  %%
  %% \begin{itemize}
  %%  \item for each $\alocation \in \gverts_k$, $\glabels_k(\alocation) = \glabels_k'(f_k(\alocation))$ and $\alocation \in \galloc_k \iff f_k(\alocation) \in \galloc_k'$;
  %%  \item for each $\pair{\alocation}{\alocation'} \in \gedges_k$, \[\min(\alpha_k,\card{\gbtw_k(\pair{\alocation}{\alocation'})}) =
  %% \min(\alpha_k,\card{\gbtw_k'(\pair{f_k(\alocation)}{f_k(\alocation')})})\]
  %% \end{itemize}
  %% and to prove that $\min(\alpha_k, \card{\grem_k}) = \min(\alpha_k, \card{\grem_k'})$.

  First, let us formalise an essential property of the construction of $\aheap_1'$ and $\aheap_2'$.
  \begin{enumerate}[align=left]
  \item[\descLab{(Paths)}{pathproperty-body}] Let $k \in \{1,2\}$ and let $\alocation,\alocation' \in \gverts$ be two labelled locations w.r.t. $\pair{\astore}{\aheap}$.
  $\aheap_k$ witnesses a non-empty path from $\alocation$ to $\alocation'$ if and only if $\aheap_k'$ witnesses a non-empty path from $\amap(\alocation)$ to $\amap(\alocation')$.
  \vspace{2pt}
  \item[(Proof of \ref{pathproperty-body})]
  The proof mainly relies on the properties \ref{cip3-body} and \ref{cip2-body} of the construction.
  Recall that $\amap : \gverts \to \gverts'$ is a bijection satisfying~\ref{a1}--\ref{a5} w.r.t.\ $\pair{\astore}{\aheap}$ and $\pair{\astore'}{\aheap'}$.

  ($\Rightarrow$) Let $\alocation,\alocation' \in \gverts$ be such that
  $\aheap_k$ witnesses a non-empty path from $\alocation$ to $\alocation'$.
  Since $\aheap_k \sqsubseteq \aheap$, then $\aheap$ also witnesses a non-empty path from $\alocation$ to $\alocation'$.
  In particular, by Definition~\ref{definition-support-graph}, this path corresponds to a path
  in the support graph $\supportgraph{\astore,\aheap}$:
  \begin{center}
       \begin{tikzpicture}[baseline, node distance=1.8cm]
         \node[dot,label=above:{$\alocation$}] (zero) at (0,0) {};
         \node[dot,label=above:{$\alocation_1$}] (one) [right of=zero] {};
         \node[dot,label=above:{$\alocation_2$}] (two) [right of=one] {};
         \node[dot,label=above:{$\alocation_{n-1}$}] (n1) [right of=two] {};
         \node[dot,label=above:{$\alocation_{n}$}] (n) [right of=n1] {};
         \node[dot,label=above right:{$\alocation'$}] (np) [right of=n] {};

          \draw[pto] (zero) edge node [above] {$\gedges$} (one);
          \draw[pto] (one) edge node [above] {$\gedges$} (two);
          \draw[pto] (n1) edge node [above] {$\gedges$} (n);
          \draw[pto] (n) edge node [above] {$\gedges$} (np);
          \draw[pto] (one) edge node [above] {$\gedges$} (two);

          \path (two) edge [draw=none] node {$\dots$} (n1);

       \end{tikzpicture}
  \end{center}
  Let us define $\alocation_0 \egdef \alocation$  and $\alocation_{n+1} \egdef \alocation'$.
  In particular, following the picture above, the support graph $\supportgraph{\astore,\aheap}$
  witnesses a path $\{\pair{\alocation_0}{\alocation_1},\dots,\pair{\alocation_n}{\alocation_{n+1}}\} \subseteq \gedges$ from $\alocation_0$ to $\alocation_{n+1}$.
  Since $\amap$ is a graph isomorphism from $\pair{\gverts}{\gedges}$ to $\pair{\gverts'}{\gedges'}$ (by~\ref{a1}), $\supportgraph{\astore',\aheap'}$ witnesses a similar structure, as depicted below:
  \begin{center}
   \scalebox{1}{
       \begin{tikzpicture}[baseline, node distance=1.8cm]
         \node[dot,label=above left:{$\alocation$=$\alocation_0$}] (zero) at (0,0) {};
         \node[dot,label=above:{$\alocation_1$}] (one) [right of=zero] {};
         \node[dot,label=above:{$\alocation_2$}] (two) [right of=one] {};
         \node[dot,label=above:{$\alocation_{n-1}$}] (n1) [right of=two] {};
         \node[dot,label=above:{$\alocation_{n}$}] (n) [right of=n1] {};
         \node[dot,label=above right:{$\alocation'$=$\alocation_{n+1}$}] (np) [right of=n] {};
         \node[dot,label=below:{$\amap(\alocation)$}] (Fzero) [below of=zero] {};
         \node[dot,label=below:{$\amap(\alocation_1)$}] (Fone) [right of=Fzero] {};
         \node[dot,label=below:{$\amap(\alocation_2)$}] (Ftwo) [right of=Fone] {};
         \node[dot,label=below:{$\amap(\alocation_{n-1})$}] (Fn1) [right of=Ftwo] {};
         \node[dot,label=below:{$\amap(\alocation_{n})$}] (Fn) [right of=Fn1] {};
         \node[dot,label=below:{$\amap(\alocation')$}] (Fnp) [right of=Fn] {};

          \draw[pto] (zero) edge node [above] {$\gedges$} (one);
          \draw[pto] (one) edge node [above] {$\gedges$} (two);
          \draw[pto] (n1) edge node [above] {$\gedges$} (n);
          \draw[pto] (n) edge node [above] {$\gedges$} (np);
          \draw[pto] (one) edge node [above] {$\gedges$} (two);
          \draw[pto] (Fzero) edge node [below] {$\gedges'$} (Fone);
          \draw[pto] (Fone) edge node [below] {$\gedges'$} (Ftwo);
          \draw[pto] (Fn1) edge node [below] {$\gedges'$} (Fn);
          \draw[pto] (Fn) edge node [below] {$\gedges'$} (Fnp);
          \draw[pto] (Fone) edge node [below] {$\gedges'$} (Ftwo);

          \path (two) edge [draw=none] node {$\dots$} (n1);
          \path (Ftwo) edge [draw=none] node {$\dots$} (Fn1);

          \path[dotted,->] (zero) edge node [left] {$\amap$} (Fzero);
          \path[dotted,->] (one) edge node [left] {$\amap$} (Fone);
          \path[dotted,->] (np) edge node [left] {$\amap$} (Fnp);
          \path[dotted,->] (n) edge node [left] {$\amap$} (Fn);
          \path[dotted,->] (n1) edge node [left] {$\amap$} (Fn1);
          \path[dotted,->] (two) edge node [left] {$\amap$} (Ftwo);
       \end{tikzpicture}
   }
  \end{center}
  Let us consider $i \in \interval{0}{n}$. Since the path belongs to $\aheap_k$, it must hold that
  $\alocation_i \in \domain{\aheap_k}$ and $\gbtw(\alocation_i,\alocation_{i+1}) \subseteq \domain{\aheap_k}$. We show that then $\amap(\alocation_i) \in \domain{\aheap_k'}$ and $\gbtw'(\amap(\alocation_i),\amap(\alocation_{i+1})) \subseteq \domain{\aheap_k'}$,
  which entails that $\aheap_k'$ witnesses a path from $\amap(\alocation)$ to $\amap(\alocation')$, concluding the proof.
  \begin{itemize}[align = left]
    \item From $\pair{\alocation_i}{\alocation_{i+1}} \in \gedges$, by Definition~\ref{definition-support-graph} we conclude that $\alocation_i \in \gverts$.
    Since $\alocation_i \in \domain{\aheap}$, again by Definition~\ref{definition-support-graph},
    we have $\alocation_i \in \galloc$ and therefore by~\ref{a2}, $\amap(\alocation_i) \in \galloc'$.
    By~\ref{c1-body} together with the fact that $\alocation_i \in \domain{\aheap_k}$, we then conclude that $\amap(\alocation_i) \in \domain{\aheap_k'}$.
    \item If $\gbtw(\alocation_i,\alocation_{i+1})$ is empty then by~\ref{a4} $\gbtw'(\amap(\alocation_i),\amap(\alocation_{i+1}))$ is also empty and the inclusion w.r.t. $\domain{\aheap_k'}$ trivially holds.
    Suppose now $\gbtw(\alocation_i,\alocation_{i+1})$ is non-empty.
    From $\gbtw(\alocation_i,\alocation_{i+1}) \subseteq \domain{\aheap_k}$ and the fact that $\aheap_k$ and $\aheap_{3-k}$ are disjoint we conclude that $\gbtw(\alocation_i,\alocation_{i+1}) \cap \domain{\aheap_{3-k}} = \emptyset$. Hence, by \ref{cip2-body} we conclude that
    $\gbtw'(\amap(\alocation_i),\amap(\alocation_{i+1})) \cap \domain{\aheap_{3-k}'} = \emptyset$
    (notice that in \ref{cip2-body}, $L_{3-k}$ corresponds to $\gbtw(\alocation_i,\alocation_{i+1}) \cap \domain{\aheap_{3-k}}$ whereas $L_{3-k}'$ corresponds to $\gbtw'(\amap(\alocation_i),\amap(\alocation_{i+1})) \cap \domain{\aheap_{3-k}'}$).
    Since $\gbtw'(\amap(\alocation_i),\amap(\alocation_{i+1})) \cap \domain{\aheap_{3-k}'} = \emptyset$,
    by  \ref{cip3-body} we then conclude that
    $\gbtw'(\amap(\alocation_i),\amap(\alocation_{i+1})) \subseteq \domain{\aheap_{k}'}$.
  \end{itemize}

  ($\Leftarrow$) The right-to-left direction is analogous (thanks to the fact that $\amap^{-1}$ is a graph isomorphism from $\pair{\gverts'}{\gedges'}$ to $\pair{\gverts}{\gedges}$). \\
  \end{enumerate}

  Here is the last step of the proof. 
  Given $k \in \{1,2\}$, let $\amap_k$ be the restriction of $\amap$ to $\gverts_k$ and
  $\gverts_k'$.
  We prove that $\amap_k$ satisfies \ref{a1}--\ref{a5} w.r.t. the memory states $\pair{\astore}{\aheap_k}$ and $\pair{\astore'}{\aheap_k'}$ and $\alpha_k$.
  Thanks to Lemma~\ref{lemma-test-graph}, this implies $\pair{\astore}{\aheap_1} \approx^q_{\alpha_1} \pair{\astore'}{\aheap_1'}$ and
  $\pair{\astore}{\aheap_2} \approx^q_{\alpha_2} \pair{\astore}{\aheap_2'}$, ending the proof.
  Because of lack of space, below we prove~\ref{a3} and the proofs for~\ref{a2}, \ref{a1}, \ref{a4} and \ref{a5} can be found
  in Appendix~\ref{appendix-star}. \\

  \begin{enumerate}[label=\textbf{(P\arabic*)}, align = left]
  \item[\textbf{$\amap_k$ satisfies \descLab{(A3)}{mapksatA3-body}}:]
    We prove that for every $\alocation \in \gverts$, the set of terms corresponding to $\alocation$ in $\pair{\astore}{\aheap_k}$ is equal to the set of terms corresponding to $\amap(\alocation)$ in $\pair{\astore'}{\aheap_k'}$. Formally:
    \begin{center}
      \begin{minipage}{0.9\linewidth}
        for every $\alocation \in \gverts$,
        \begin{enumerate}[label=(\alph*)]
            \item $\alocation \in \gverts_k$ iff $\amap(\alocation) \in \gverts_k'$;
            \item if $\alocation \in \gverts_k$, $\glabels_k(\alocation) = \glabels_k'(\amap(\alocation))$.
        \end{enumerate}
      \end{minipage}
    \end{center}
    Notice that (a) implies that $\amap_k$ (which we recall being the restriction of $\amap$ to $\gverts_k$ and $\gverts_k'$) is well-defined and it is a bijection from the labelled locations of $\pair{\astore}{\aheap_k}$ (i.e. $\gverts_k$) to the labelled locations of $\pair{\astore'}{\aheap_k'}$ (i.e. $\gverts_k'$).
This is due to the fact that
   $\amap$ is a bijection from $\gverts$ to $\gverts'$ and by Lemma~\ref{lemma-labels}, 
we have $\gverts_k \subseteq \gverts$ and $\gverts_k' \subseteq \gverts'$.
    Again by $\gverts_k \subseteq \gverts$ (Lemma~\ref{lemma-labels}), (b) is then equivalent
    to \ref{a3}.

    We prove (a) and (b) together, by showing that
    for every $\alocation \in \gverts$ the set of terms corresponding to $\alocation$ in $\pair{\astore}{\aheap_k}$ is equivalent to the set of terms corresponding to $\amap(\alocation)$
    in $\pair{\astore'}{\aheap_k'}$.
    We first show the result for program variables, and then for meet-points.

    (Program variables) Let $\alocation \in \gverts$ and  $i \in \interval{1}{q}$.
    It holds that $\avariable_i \in \glabels_k(\alocation)$
     if and only if $\astore(\avariable_i) = \alocation$, or equivalently
     $\avariable_i \in \glabels(\alocation)$ which, by \ref{a3} in Lemma~\ref{lemma-test-graph}, holds whenever
     $\avariable_i \in \glabels'(\amap(\alocation))$. The latter is equivalent to $\astore'(\avariable_i) =
      \amap(\alocation)$, or equivalently $\avariable_i \in \glabels_k'(\amap(\alocation))$.

    (Meet-points)
    In order to conclude the proof, we show that for all $i,j \in \interval{1}{q}$ and $\alocation \in \gverts$,
    \begin{center}
     $m_q(\avariable_i,\avariable_j) \in \glabels_k(\alocation)$
    if and only if $m_q(\avariable_i,\avariable_j) \in \glabels_k'(\amap(\alocation))$.
    \end{center}
    %% SD 28/11/2019
    %% Consider $\aheap$ and $\aheap'$. 
    %% this seems useless here.
    %% 
    If $\sem{m_q(\avariable_i,\avariable_j)}^q_{\astore,\aheap}$ is undefined, then so is $\sem{m_q(\avariable_j,\avariable_i)}^q_{\astore,\aheap}$ (by def.)
    and by ${\pair{\astore}{\aheap} \approx^q_\alpha \pair{\astore'}{\aheap'}}$, so are $\sem{m_q(\avariable_i,\avariable_j)}^q_{\astore',\aheap'}$ and $\sem{m_q(\avariable_j,\avariable_i)}^q_{\astore',\aheap'}$.
    By Lemma~\ref{lemma-labels}, if $\sem{m_q(\avariable_i,\avariable_j)}^q_{\astore,\aheap}$ is undefined, then so are
    $\sem{m_q(\avariable_i,\avariable_j)}^q_{\astore,\aheap_k}$, $\sem{m_q(\avariable_j,\avariable_i)}^q_{\astore,\aheap_k}$,
    $\sem{m_q(\avariable_i,\avariable_j)}^q_{\astore',\aheap'_k}$ and $\sem{m_q(\avariable_j,\avariable_i)}^q_{\astore',\aheap'_k}$.
    %% the same holds for $\aheap_k$ and $\aheap_k'$.
    Otherwise, if $\sem{m_q(\avariable_i,\avariable_j)}^q_{\astore,\aheap} = \alocation$ then by definition $\sem{m_q(\avariable_j,\avariable_i)}^q_{\astore,\aheap} = \alocation'$ for some $\alocation' \in \gverts$ and
    moreover $\sem{m_q(\avariable_i,\avariable_j)}^q_{\astore',\aheap'} = \amap(\alocation)$ and
    $\sem{m_q(\avariable_j,\avariable_i)}^q_{\astore',\aheap'} = \amap(\alocation')$ by $\pair{\astore}{\aheap} \approx^q_\alpha \pair{\astore'}{\aheap'}$.
    Let $\avariableter^i$ (resp. $\avariableter^j$) be the program variable in $\{\avariable_1,\dots,\avariable_q\}$ such that $\astore(\avariableter^i)$ is the first location corresponding to a program variable that is reachable from $\sem{m_q(\avariable_i,\avariable_j)}^q_{\astore,\aheap}$ (resp. $\sem{m_q(\avariable_j,\avariable_i)}^q_{\astore,\aheap}$), itself included.
    The characterisation of such a program variable $\avariableter^i$
    can be captured by the formula
    $\firstvar(m_q(\avariable_i,\avariable_j),\avariableter^i)$ defined as the following Boolean combination of test formulae:
    $$
    m_q(\avariable_i,\avariable_j) = \avariableter^i \vee \ \ \ \
    \bigvee_{\mathclap{\substack{\aterm_1, \ldots, \aterm_n \in \Terms{q}, n>1 \\
    {\rm pairwise \ distinct} \ \aterm_1, \ldots, \aterm_{n-1}, \\ \aterm_1 = m_q(\avariable_i,\avariable_j), \aterm_n = \avariableter^i}}}
    \textstyle\bigwedge_{\delta \in [1,n-1]} \sees_q(\aterm_{\delta},\aterm_{\delta+1}) \geq 1 \land \bigwedge_{\substack{m < n\hphantom{m.}\\ k \in [1,q]}} \avariable_k \neq \aterm_m$$
    Indeed, this formula is satisfied only by memory states where there is a (possibly empty) path from the location corresponding to $m_q(\avariable_i,\avariable_j)$ to the location $\overline{\alocation}$ corresponding to $\avariableter^i$ so that each labelled location in the path, apart from $\overline{\alocation}$, does not correspond to the interpretation of  such program variables.
    Now,  we  recall the taxonomy of meet-points, where the rightmost case from Figure~\ref{figure-taxonomy} is split into three cases, highlighting additional cases depending on $\avariableter^i$ and $\avariableter^j$.
    \begin{center}
       \begin{tabular*}{\linewidth}{c @{\extracolsep{\fill}}  cccc}
         \scalebox{0.68}{
           \begin{tikzpicture}[baseline]
             \node[dot,label=above:$\avariable_i$] (i) at (0,0) {};
             \node[dot,label=left:{$m_q$($\avariable_i$,$\avariable_j$)\\$m_q$($\avariable_j$,$\avariable_i$)}] (m) [below right = 1.5cm and 0.5cm of i] {};
             \node[dot,label=above:$\avariable_j$] (j) [above right=1.5cm and 0.5cm of m] {};
             \node[dot,label=below:{$\avariableter^i=\avariableter^j$\\$\avariableter^i$ not inside a loop}] (k) [below of=m] {};

             \draw[reach] (i) -- (m);
             \draw[reach] (j) -- (m);
             \draw[reach] (m) -- (k);
           \end{tikzpicture}
         }
       &
        \scalebox{0.68}{
           \begin{tikzpicture}[baseline]
             \node[dot,label=above:$\avariable_i$] (i) at (0,0) {};
             \node[dot,label=left:{$m_q$($\avariable_i$,$\avariable_j$)\\$m_q$($\avariable_j$,$\avariable_i$)}] (m) [below right = 1.5cm and 0.5cm of i] {};
             \node[dot,label=above:$\avariable_j$] (j) [above right=1.5cm and 0.5cm of m] {};
             \node[dot] (mid) [below=0.8cm of m] {};
             \node[dot,label=below:{$\avariableter^i=\avariableter^j$}] (k) [below=1.65cm of m] {};

             \draw[reach] (i) -- (m);
             \draw[reach] (j) -- (m);
             \draw[reach] (m) -- (mid);
             \draw[reach] (mid) to [out=-35,in=35] (k);
             \draw[pto] (k) edge [out=155,in=-155] node [left] {$+$} (mid);
           \end{tikzpicture}
        }
       &
        \scalebox{0.68}{
           \begin{tikzpicture}[baseline]
             \node[dot,label=above:$\avariable_i$] (i) at (0,0) {};
             \node[dot,label=left:{$m_q$($\avariable_i$,$\avariable_j$)}] (m1) [below right = 1.5cm and 0.4cm of i] {};
             \node[dot,label=below:{$m_q$($\avariable_j$,$\avariable_i$)}] (m2) [below right = 1.7cm and 0.4cm of m1] {};
             \node[dot,label=above:$\avariable_j$] (j) [above right=1.5cm and 0.5cm of m2] {};
             \node[dot,label=left:{$\avariableter^i=\avariableter^j$}] (k) [below right= 2.7cm and 0.1 of i] {};

             \draw[reach] (i) -- (m1);
             \draw[pto] (m1) edge node [above right] {$+$} (m2);
             \draw[reach] (j) -- (m2);
             \draw[reach] (m2) to [out=180,in=-75] (k);
             \draw[pto] (k) edge [bend left=30] node [left] {$+$} (m1);
           \end{tikzpicture}
        }
       &
        \scalebox{0.68}{
           \begin{tikzpicture}[baseline]
             \node[dot,label=above:$\avariable_j$] (i) at (0,0) {};
             \node[dot,label=right:{$m_q$($\avariable_j$,$\avariable_i$)}] (m1) [below left = 1.5cm and 0.4cm of i] {};
             \node[dot,label=below:{$m_q$($\avariable_i$,$\avariable_j$)}] (m2) [below left = 1.7cm and 0.4cm of m1] {};
             \node[dot,label=above:$\avariable_i$] (j) [above left=1.5cm and 0.5cm of m2] {};
             \node[dot,label=right:{$\avariableter^i=\avariableter^j$}] (k) [below left= 2.7cm and 0.1 of i] {};

             \draw[reach] (i) -- (m1);
             \draw[pto] (m1) edge node [above left] {$+$} (m2);
             \draw[reach] (j) -- (m2);
             \draw[reach] (m2) to [out=0,in=-105] (k);
             \draw[pto] (k) edge [bend right=30] node [right] {$+$} (m1);
           \end{tikzpicture}
        }
       &
        \scalebox{0.68}{
           \begin{tikzpicture}[baseline]
             \node[dot,label=above:$\avariable_i$] (i) at (0,0) {};
             \node[dot,label=left:{$m_q$($\avariable_i$,$\avariable_j$)}] (m1) [below right = 1.5cm and 0.4cm of i] {};
             \node[dot,label=right:{$m_q$($\avariable_j$,$\avariable_i$)}] (m2) [right = 1.8cm of m1] {};
             \node[dot,label=above:$\avariable_j$] (j) [above right=1.5cm and 0.5cm of m2] {};
             \node[dot,label=below:{$\avariableter^j$}] (k) [below right= 0.5cm and 0.9cm of m1] {};
             \node[dot,label=above:{$\avariableter^i$}] (z) [above right = 0.5 and 0.9cm of m1] {};

             \draw[reach] (i) -- (m1);
             \draw[reach] (j) -- (m2);
             \draw[reach] (m2) to [out=-115,in=0] (k);
             \draw[pto] (k) edge [bend left=30] node [below left] {$+$} (m1);
             \draw[reach] (m1) to [out=65,in=180] (z);
             \draw[pto] (z) edge [bend left=30] node [above right] {$+$} (m2);
           \end{tikzpicture}
        }\\
        \\
       (i)
       &
       (ii)
       &
       (iii)
       &
       (iv)
       &
       (v)
     \end{tabular*}
    \end{center}%%
    Distinct structures satisfy different test formulae, though they all satisfy the formula
    \begin{center}
      $\firstvar(m_q(\avariable_i,\avariable_j),\avariableter^i) \land \firstvar(m_q(\avariable_j,\avariable_i),\avariableter^j)$.
    \end{center}
    For instance, (i) is the only form not satisfying $\reach^+(\avariableter^i,\avariableter^i)$ (recall that
    this form can be expressed as a Boolean combination of test formulae, as shown in Lemma~\ref{lemma-atomic-formulae-test}), whereas (ii) can be distinguished as the only form satisfying
    both $\reach^+(\avariableter^i,\avariableter^i)$ and $m_q(\avariable_i,\avariable_j) = m_q(\avariable_j,\avariable_i)$.
    %% Indeed, the second structure is different from the first one, since $\reach^+(\avariable_k,\avariable_k)$ holds in the second one but
    %% not in the first one and moreover the formula $\reach^+(\avariable_k,\avariable_k)$ can be expressed as a combination of test formulae, as
    %% shown in Lemma~\ref{lemma-atomic-formulae-test}.
    Moreover, the last structure is the only one satisfying $\avariableter^i \neq \avariableter^j$ whereas (iii) and (iv) can be distinguished with a formula, similar to $\firstvar(m_q(\avariable_i,\avariable_j),\avariableter^i)$, stating that from the location corresponding to $\avariableter^i$, it is possible to reach the location corresponding to $m_q(\avariable_i,\avariable_j)$ without reaching the location corresponding to $m_q(\avariable_j,\avariable_i)$:
    $$
    \bigvee_{\mathclap{\substack{\aterm_1, \ldots, \aterm_n \in \Terms{q},\ n > 1\\
    {\rm pairwise \ distinct} \ \aterm_1, \ldots, \aterm_{n-1}, \\ \aterm_1 = \avariableter^i, \aterm_n = m_q(\avariable_i,\avariable_j)}}}
    \textstyle\bigwedge_{\delta \in [1,n-1]} \sees_q(\aterm_{\delta},\aterm_{\delta+1}) \geq 1 \land \bigwedge_{1 < m < n} m_q(\avariable_j,\avariable_i) \neq \aterm_m.
    $$
    Differently from (iii), the structure (iv) does not satisfy this formula.
    Since $\pair{\astore}{\aheap} \approx^q_\alpha \pair{\astore'}{\aheap'}$, the heaps
    $\aheap$ and $\aheap'$ agree on the structure of every meet-point.
    The proof that for all $i,j \in \interval{1}{q}$ and $\alocation \in \gverts$,
    $m_q(\avariable_i,\avariable_j) \in \glabels_k(\alocation)$ iff
    $m_q(\avariable_i,\avariable_j) \in \glabels_k'(\amap(\alocation))$, essentially relies on \ref{pathproperty-body}. To show the result we need to proceed by cases, according to the taxonomy.
    \begin{enumerate}[align = left]
    \item[\textbf{Case: $\aheap$ witnesses (i), (ii) or (iv).}]
    Since the heaps
    $\aheap$ and $\aheap'$ agree on the structure of every meet-point,
    $\aheap'$ also witnesses the same form  among (i), (ii) and (iv) as $\aheap$.
    Regarding $\aheap_k$, one of the following holds:
    \begin{itemize}
      \item The path from $\astore(\avariable_i)$ to $\astore(\avariableter^i)$ and the path from $\astore(\avariable_j)$ to $\astore(\avariableter^j)$ are both preserved in $\aheap_k$.
      Then $\aheap_k$ witnesses (i) (ii) or (iv). By~\ref{pathproperty-body} the same holds for $\aheap_k'$.
      In every case ((i), (ii) and (iv)), we conclude that $\sem{m_q(\avariable_i,\avariable_j)}^q_{\astore,\aheap_k} = \sem{m_q(\avariable_i,\avariable_j)}^q_{\astore,\aheap}$
      and
      $\sem{m_q(\avariable_i,\avariable_j)}^q_{\astore',\aheap_k'} = \sem{m_q(\avariable_i,\avariable_j)}^q_{\astore',\aheap'} = \amap(\sem{m_q(\avariable_i,\avariable_j)}^q_{\astore,\aheap})$.
      Then,
      \begin{align*}
      m_q(\avariable_i,\avariable_j) &\in  \glabels_k(\sem{m_q(\avariable_i,\avariable_j)}^q_{\astore,\aheap})\\ m_q(\avariable_i,\avariable_j) &\in  \glabels_k'(\amap(\sem{m_q(\avariable_i,\avariable_j)}^q_{\astore,\aheap})).
      \end{align*}
      \item The path from $\astore(\avariable_i)$ to $\astore(\avariableter^i)$ or the path from $\astore(\avariable_j)$ to $\astore(\avariableter^j)$ are not preserved in $\aheap_k$.
      Then, again by~\ref{pathproperty-body}, the same holds for $\aheap_k'$ with respect to $\pair{\astore'}{\aheap'}$.
     By definition of meet-points, both $\sem{m_q(\avariable_i,\avariable_j)}^q_{\astore,\aheap_k}$ and $\sem{m_q(\avariable_i,\avariable_j)}^q_{\astore',\aheap_k'}$ are therefore not defined.
    \end{itemize}
    \item[\textbf{Case: $\aheap$ witnesses (iii).}]
    If instead $\aheap$ and $\aheap'$ witness (iii) then one of the following holds.
    \begin{itemize}
    \item $\aheap_k$ also witnesses (iii), meaning that the path from $\astore(\avariable_i)$ to $\sem{m_q(\avariable_j,\avariable_i)}^q_{\astore,\aheap}$ and the
    path from $\astore(\avariable_j)$ to $\sem{m_q(\avariable_i,\avariable_j)}^q_{\astore,\aheap}$ are both preserved in $\aheap_k$. Then by \ref{pathproperty-body}
    the heap $\aheap_k'$ also witnesses (iii).
    We have 
    $\sem{m_q(\avariable_i,\avariable_j)}^q_{\astore,\aheap_k} = \sem{m_q(\avariable_i,\avariable_j)}^q_{\astore,\aheap}$ and $\sem{m_q(\avariable_i,\avariable_j)}^q_{\astore',\aheap_k'} = \sem{m_q(\avariable_i,\avariable_j)}^q_{\astore',\aheap'} = \amap(\sem{m_q(\avariable_i,\avariable_j)}^q_{\astore,\aheap})$.
    We conclude that
    \begin{align*}
    m_q(\avariable_i,\avariable_j) &\in  \glabels_k(\sem{m_q(\avariable_i,\avariable_j)}^q_{\astore,\aheap_k})\\
    m_q(\avariable_i,\avariable_j) &\in  \glabels_k'(\amap(\sem{m_q(\avariable_i,\avariable_j)}^q_{\astore,\aheap_k})).
    \end{align*}
    \item The path from $\astore(\avariable_i)$ to $\astore(\avariableter^i)$ and the path from $\astore(\avariable_j)$ to $\astore(\avariableter^i)$ are preserved in $\aheap_k$, whereas the path from $\astore(\avariableter^i)$ to $\sem{m_q(\avariable_i,\avariable_j)}^q_{\astore,\aheap}$ is not preserved in $\aheap_k$ (i.e. at least one of its
    locations is assigned to the other
    heap $\aheap_{3-k}$).
    Then $\aheap_k$ witnesses (i) and by definition of meet-points, it holds that
    \[
      \sem{m_q(\avariable_i,\avariable_j)}^q_{\astore,\aheap_k} = \sem{m_q(\avariable_j,\avariable_i)}^q_{\astore,\aheap_k} = \sem{m_q(\avariable_j,\avariable_i)}^q_{\astore,\aheap}
    \]
    By~\ref{pathproperty-body}, $\aheap_k'$ also witnesses (i). Then,
    \[
      \sem{m_q(\avariable_i,\avariable_j)}^q_{\astore',\aheap_k'} = \sem{m_q(\avariable_j,\avariable_i)}^q_{\astore',\aheap_k'} = \sem{m_q(\avariable_j,\avariable_i)}^q_{\astore',\aheap'} = \amap(\sem{m_q(\avariable_j,\avariable_i)}^q_{\astore,\aheap}).
    \]
    Then, we conclude that
    \begin{align*}
    m_q(\avariable_i,\avariable_j) &\in  \glabels_k(\sem{m_q(\avariable_j,\avariable_i)}^q_{\astore,\aheap})\\ m_q(\avariable_i,\avariable_j) &\in  \glabels_k'(\amap(\sem{m_q(\avariable_j,\avariable_i)}^q_{\astore,\aheap})).
    \end{align*}
    \item The path from $\astore(\avariable_i)$ to $\astore(\avariableter^i)$ or the path from $\astore(\avariable_j)$ to $\astore(\avariableter^i)$ are not preserved in $\aheap_k$.
    Then, by~\ref{pathproperty-body}, the same holds for $\aheap_k'$ with respect to $\pair{\astore'}{\aheap'}$.
    By definition of meet-points, neither $\sem{m_q(\avariable_i,\avariable_j)}^q_{\astore,\aheap_k}$ nor $\sem{m_q(\avariable_i,\avariable_j)}^q_{\astore',\aheap_k'}$ is defined.
    \end{itemize}
    \item[\textbf{Case: $\aheap$ witnesses (v).}]
    Lastly, suppose  that $\aheap$ and $\aheap'$ witness (v). One of the following holds.
    \begin{itemize}
      \item The path from $\astore(\avariable_i)$ to $\astore(\avariableter^i)$ and the path from $\astore(\avariable_j)$ to $\astore(\avariableter^i)$ are both preserved in $\aheap_k$. Then, depending on whether or not the path from $\astore(\avariableter^i)$ to $\sem{m_q(\avariable_j,\avariable^i)}^q_{\astore,\aheap}$ is also preserved, $\aheap_k$ witnesses (i) or (v).
      From~\ref{pathproperty-body}, the same holds for $\aheap_k'$ (where $\aheap_k'$ witnesses (i) iff $\aheap_k$ witnesses (i)).
      In both cases ((i) and (v)), it holds that $\sem{m_q(\avariable_i,\avariable_j)}^q_{\astore,\aheap_k} = \sem{m_q(\avariable_i,\avariable_j)}^q_{\astore,\aheap}$
      and
      $\sem{m_q(\avariable_i,\avariable_j)}^q_{\astore',\aheap_k'} = \sem{m_q(\avariable_i,\avariable_j)}^q_{\astore',\aheap'} = \amap(\sem{m_q(\avariable_i,\avariable_j)}^q_{\astore,\aheap})$.
      Then,
      \begin{align*}
      m_q(\avariable_i,\avariable_j) &\in  \glabels_k(\sem{m_q(\avariable_i,\avariable_j)}^q_{\astore,\aheap})\\ m_q(\avariable_i,\avariable_j) &\in  \glabels_k'(\amap(\sem{m_q(\avariable_i,\avariable_j)}^q_{\astore,\aheap})).
      \end{align*}
      \item The path from $\astore(\avariable_i)$ to $\astore(\avariableter^j)$ and the path from $\astore(\avariable_j)$ to $\astore(\avariableter^j)$ are preserved in $\aheap_k$, whereas the path from $\astore(\avariableter^j)$ to $\sem{m_q(\avariable_i,\avariable_j)}^q_{\astore,\aheap}$ is not preserved in $\aheap_k$.
      Then $\aheap_k$ witnesses (i) and by definition of meet-points, it holds that
      \[
        \sem{m_q(\avariable_i,\avariable_j)}^q_{\astore,\aheap_k} = \sem{m_q(\avariable_j,\avariable_i)}^q_{\astore,\aheap_k} = \sem{m_q(\avariable_j,\avariable_i)}^q_{\astore,\aheap}.
      \]
      By~\ref{pathproperty-body}, the heap $\aheap_k'$ also witnesses (i). Then,
      \[
        \sem{m_q(\avariable_i,\avariable_j)}^q_{\astore',\aheap_k'} = \sem{m_q(\avariable_j,\avariable_i)}^q_{\astore',\aheap_k'} = \sem{m_q(\avariable_j,\avariable_i)}^q_{\astore',\aheap'} = \amap(\sem{m_q(\avariable_j,\avariable_i)}^q_{\astore,\aheap})
      \]
      Then, we conclude that
      \begin{align*}
      m_q(\avariable_i,\avariable_j) &\in  \glabels_k(\sem{m_q(\avariable_j,\avariable_i)}^q_{\astore,\aheap})\\ m_q(\avariable_i,\avariable_j) &\in  \glabels_k'(\amap(\sem{m_q(\avariable_j,\avariable_i)}^q_{\astore,\aheap})).
      \end{align*}
      \item The path from $\astore(\avariable_i)$ to $\astore(\avariableter^j)$ or the path from $\astore(\avariable_j)$ to $\astore(\avariableter^j)$ is
       not preserved in $\aheap_k$. Then, by~\ref{pathproperty-body}, the same holds for $\aheap_k'$ with respect to $\pair{\astore'}{\aheap'}$.
      By definition of meet-points, both $\sem{m_q(\avariable_i,\avariable_j)}^q_{\astore,\aheap_k}$ and $\sem{m_q(\avariable_i,\avariable_j)}^q_{\astore',\aheap_k'}$ are
      undefined.
    \end{itemize}
    \end{enumerate}

      We conclude that for every $\alocation \in \gverts$, $m_q(\avariable_i,\avariable_j) \in \glabels_k(\alocation)$ if and only if $m_q(\avariable_i,\avariable_j) \in \glabels_k'(\amap(\alocation))$.
\cut{
  \item[\textbf{$\amap_k$ satisfies \descLab{(A2)}{mapksatA2}}:] We prove that $\amap_k$ is such that
  \begin{center}
    for every $\alocation \in \gverts_k$, $\alocation \in \galloc_k$ iff $\amap_k(\alocation) \in \galloc_k'$.
  \end{center}
  From (a) in the proof that $\amap_k$ satisfies \ref{mapksatA3-body}, we know that for every $\alocation \in \gverts_k$, $\amap_k(\alocation) = \amap(\alocation) \in \gverts_k'$.
  ($\Rightarrow$) For the left-to-right direction, let $\alocation \in \gverts_k$ such that $\alocation \in \galloc_k$. By definition we have $\alocation \in \domain{\aheap_k}$ and therefore $\alocation \in \domain{\aheap}$ (from $\aheap_k \sqsubseteq \aheap$).
  Since $\gverts_k \subseteq \gverts$ (Lemma~\ref{lemma-labels}) we have $\alocation \in \gverts$ and hence $\alocation \in \galloc$.
  From
  $\pair{\astore}{\aheap} \approx^q_\alpha \pair{\astore'}{\aheap'}$
  we obtain $\amap(\alocation) \in \galloc'$. Hence, from the construction of $\aheap_k'$
  done in \ref{c1-body} we have $\amap(\alocation) \in \aheap_k'$.
  Moreover, since $\amap_k(\alocation) = \amap(\alocation) \in \gverts_k'$,
  we conclude $\amap_k(\alocation) \in \galloc_k'$.

  ($\Leftarrow$) The right-to-left direction follows by using a similar argument.

  \item[\textbf{$\amap_k$ satisfies \descLab{(A1)}{mapksatA1}}:] We prove that $\amap_k$ is a graph isomorphism between $\pair{\gverts_k}{\gedges_k}$ and $\pair{\gverts_k'}{\gedges_k'}$.
  From (a) in the proof that $\amap_k$ satisfies \ref{mapksatA3-body} we already know that $\amap_k$ is a bijection from $\gverts_k$ to $\gverts_k'$. Hence, we only need to show that
    for all $\alocation,\alocation' \in \gverts_k$, $\pair{\alocation}{\alocation'} \in \gedges_k$ $\iff$ $\pair{\amap_k(\alocation)}{\amap_k(\alocation')} \in \gedges_k'$.

  ($\Rightarrow$) By definition of $\supportgraph{\astore,\aheap_k}$, we have that $\pair{\alocation}{\alocation'} \in \gedges_k$ if and only if
  $\alocation,\alocation' \in \gverts_k$ and $\aheap_k$ witnesses a non-empty (minimal) path from $\alocation$ to $\alocation'$ such that
  none of the intermediate locations of this path are in $\gverts_k$.
  From (a) in the proof that $\amap_k$ satisfies \ref{mapksatA3-body} we have
  \begin{center}
    $\alocation \in \gverts_k$ iff $\amap_k(\alocation) \in \gverts_k'$; \qquad $\alocation' \in \gverts_k$ iff $\amap_k(\alocation') \in \gverts_k'$.
  \end{center}
  As moreover $\gverts_k \subseteq \gverts$ and $\gverts_k' \subseteq \gverts'$ (by Lemma~\ref{lemma-labels}), by~\ref{pathproperty-body} it holds that $\aheap_k$ witnesses a non-empty path from $\alocation$ to $\alocation'$ if and only if $\aheap_k'$ witnesses a non-empty path from $\amap_k(\alocation)$ to $\amap_k(\alocation')$.

  Therefore, to conclude the proof, it
  remains to show that there is a non-empty path from $\amap_k(\alocation)$ to $\amap_k(\alocation')$ in $\aheap_k'$ such that no intermediate locations of this path are in $\gverts_k'$.
  If this property is satisfied, then it is satisfied by the minimal non-empty path  from $\amap_k(\alocation)$ to $\amap_k(\alocation')$, which we suppose being of length $L \geq 1$.
  Let us denote this path with $\sigma(\amap_k(\alocation),\amap_k(\alocation'))$.
  If $L = 1$ then the property is trivially satisfied (as there are no intermediate locations between $\amap_k(\alocation)$ and $\amap_k(\alocation')$).
  Otherwise,
  {\em ad absurdum}, suppose there exists $\alocation'' \in \gverts_k'$, different from $\amap_k(\alocation)$ and $\amap_k(\alocation')$,
  such that for some $L_1,L_2 \geq 1$ such that $L = L_1 + L_2$ we have ${\aheap_k'}^{L_1}(\amap_k(\alocation)) = \alocation''$
  and ${\aheap_k'}^{L_2}(\alocation'') = \amap_k(\alocation')$.
  Then, we show that $\amap_k^{-1}(\alocation'')$ (which by~\ref{mapksatA3-body} is in $\gverts_k$) is an intermediate location on the minimal non-empty path from $\alocation$ to $\alocation'$, in $\aheap_k$, in contradiction with $\pair{\alocation}{\alocation'} \in \gedges_k$.
  To do so, we start by considering the set $A = \{\alocation_1,\dots,\alocation_n\}$ of locations in $\gbtw_k'(\amap_k(\alocation),\amap_k(\alocation'))$ that are labelled locations w.r.t.\ $\pair{\astore'}{\aheap'}$, i.e. $A = \gbtw_k'(\amap_k(\alocation),\amap_k(\alocation')) \cap \gverts'$.
  Since by Lemma~\ref{lemma-labels} we have $\alocation'' \in \gverts'$, we conclude $\alocation'' \in A$.
  Without loss of generality, we assume $\alocation_i$ and $\alocation_{i+1}$ ($i \in \interval{1}{n-1}$) to be two consecutive labelled locations in the minimal path from $\amap_k(\alocation)$ to $\amap_k(\alocation')$,
  i.e. none of the intermediate locations in the minimal path from $\alocation_i$ to $\alocation_{i+1}$ is labelled. Then, by minimality of $\sigma(\amap_k(\alocation),\amap_k(\alocation'))$ and from the definition of support graph, $\supportgraph{\astore',\aheap'}$ witness the following linear structure, where the arrow between two locations $\alocation_i$ and $\alocation_{i+1}$ labelled by $\gedges'$ means that $\gedges'(\alocation_i,\alocation_{i+1})$.
 \begin{center}
      \begin{tikzpicture}[baseline, node distance=1.8cm]
        \node[dot,label=above:{$\amap(\alocation)$}] (zero) at (0,0) {};
        \node[dot,label=above:{$\alocation_1$}] (one) [right of=zero] {};
        \node[dot,label=above:{$\alocation_2$}] (two) [right of=one] {};
        \node[dot,label=above:{$\alocation_{n-1}$}] (n1) [right of=two] {};
        \node[dot,label=above:{$\alocation_{n}$}] (n) [right of=n1] {};
        \node[dot,label=above right:{$\amap(\alocation')$}] (np) [right of=n] {};

         \draw[pto] (zero) edge node [above] {$\gedges'$} (one);
         \draw[pto] (one) edge node [above] {$\gedges'$} (two);
         \draw[pto] (n1) edge node [above] {$\gedges'$} (n);
         \draw[pto] (n) edge node [above] {$\gedges'$} (np);
         \draw[pto] (one) edge node [above] {$\gedges'$} (two);

         \path (two) edge [draw=none] node {$\dots$} (n1);

      \end{tikzpicture}
 \end{center}
 Since $\amap$ is a graph isomorphism from $\pair{\gverts}{\gedges}$ to $\pair{\gverts'}{\gedges'}$ (by~\ref{a1}), $\supportgraph{\astore,\aheap}$ witnesses a similar structure, as depicted below:

 \begin{center}
  \scalebox{1}{
      \begin{tikzpicture}[baseline, node distance=1.8cm]
        \node[dot,label=above:{$\amap(\alocation)$}] (zero) at (0,0) {};
        \node[dot,label=above:{$\alocation_1$}] (one) [right of=zero] {};
        \node[dot,label=above:{$\alocation_2$}] (two) [right of=one] {};
        \node[dot,label=above:{$\alocation_{n-1}$}] (n1) [right of=two] {};
        \node[dot,label=above:{$\alocation_{n}$}] (n) [right of=n1] {};
        \node[dot,label=above right:{$\amap(\alocation')$}] (np) [right of=n] {};
        \node[dot,label=below:{$\alocation$}] (Fzero) [below of=zero] {};
        \node[dot,label=below:{$\amap^{-1}(\alocation_1)$}] (Fone) [right of=Fzero] {};
        \node[dot,label=below:{$\amap^{-1}(\alocation_2)$}] (Ftwo) [right of=Fone] {};
        \node[dot,label=below:{$\amap^{-1}(\alocation_{n-1})$}] (Fn1) [right of=Ftwo] {};
        \node[dot,label=below:{$\amap^{-1}(\alocation_{n})$}] (Fn) [right of=Fn1] {};
        \node[dot,label=below:{$\alocation'$}] (Fnp) [right of=Fn] {};

         \draw[pto] (zero) edge node [above] {$\gedges'$} (one);
         \draw[pto] (one) edge node [above] {$\gedges'$} (two);
         \draw[pto] (n1) edge node [above] {$\gedges'$} (n);
         \draw[pto] (n) edge node [above] {$\gedges'$} (np);
         \draw[pto] (one) edge node [above] {$\gedges'$} (two);
         \draw[pto] (Fzero) edge node [below] {$\gedges$} (Fone);
         \draw[pto] (Fone) edge node [below] {$\gedges$} (Ftwo);
         \draw[pto] (Fn1) edge node [below] {$\gedges$} (Fn);
         \draw[pto] (Fn) edge node [below] {$\gedges$} (Fnp);
         \draw[pto] (Fone) edge node [below] {$\gedges$} (Ftwo);

         \path (two) edge [draw=none] node {$\dots$} (n1);
         \path (Ftwo) edge [draw=none] node {$\dots$} (Fn1);

         \path[dotted,->] (zero) edge node [left] {$\amap^{-1}$} (Fzero);
         \path[dotted,->] (one) edge node [left] {$\amap^{-1}$} (Fone);
         \path[dotted,->] (np) edge node [left] {$\amap^{-1}$} (Fnp);
         \path[dotted,->] (n) edge node [left] {$\amap^{-1}$} (Fn);
         \path[dotted,->] (n1) edge node [left] {$\amap^{-1}$} (Fn1);
         \path[dotted,->] (two) edge node [left] {$\amap^{-1}$} (Ftwo);

      \end{tikzpicture}
  }
 \end{center}
 Since $\amap$ is a bijection, for all distinct $i,j \in \interval{1}{n}$, we have $\amap^{-1}(\alocation_1) \neq \amap^{-1}(\alocation_n)$.
 Thus, every $\amap^{-1}(\alocation_i)$ ($i \in \interval{1}{n}$) is in the minimal path from $\alocation$ to $\alocation'$ in $\aheap$, and therefore from $\pair{\alocation}{\alocation'} \in \gedges_k$ we obtain that
 $\{\amap^{-1}(\alocation_1),\dots,\amap^{-1}(\alocation_n)\} \subseteq \gbtw_k(\alocation,\alocation')$.
 This leads to a contradiction. 
 Indeed, from $\alocation'' \in A$ we conclude $\amap^{-1}(\alocation'') \in \gbtw_k(\alocation,\alocation')$. Then, as 
$\amap_k^{-1}(\alocation'') \in \gverts_k$ (by~\ref{mapksatA3-body}), we conclude that $\aheap_k$ witnesses a labelled intermediate 
location in the minimal non-empty path from $\alocation$ to $\alocation'$
 in contradiction with $\pair{\alocation}{\alocation'} \in \gedges_k$.
 Hence, in $\aheap_k'$ there are no labelled locations in the minimal path from $\amap_k(\alocation)$ to $\amap_k(\alocation')$, allowing us to conclude that $\pair{\amap_k(\alocation)}{\amap_k(\alocation')} \in \gedges'$.

 ($\Leftarrow$) The right-to-left direction is analogous (thanks to the fact that $\amap$ and $\amap_k$ are bijections).

It remains to check the satisfaction of the conditions  \descLab{(A4)}{mapksatA4-body} and  \descLab{(A5)}{mapksatA5-body} that
involve arithmetical constraints. 

 \item[\textbf{$\amap_k$ satisfies \descLab{(A4)}{mapksatA4-body}}:]
   We show that for every $\pair{\alocation}{\alocation'} \in \gedges_k$ we have
   \[\min(\alpha_k,\card{\gbtw_k(\alocation,\alocation')}) = \min(\alpha_k,\card{\gbtw_k'(\amap_k(\alocation),\amap_k(\alocation'))}).\]

   First, we show the following intermediate result.

   \begin{enumerate}[label=\textbf{($\amap_k$-A4.\Roman*)}, align = left]
     \item\label{path-prop1} Let $\pair{\alocation}{\alocation'} \in \gedges_k$.
     There is a (possibly empty) set $\{\alocation_1,\dots,\alocation_n\} \subseteq \gverts_k$ such that
     \vspace{2pt}
     \begin{enumerate}[label=(\alph*), align = left]
       \setlength{\itemsep}{3pt}
       \item\label{path-inter-1}
        By defining $\alocation_0 \egdef \alocation$ and $\alocation_{n+1} \egdef \alocation'$, we have
        that for every $i \in \interval{0}{n}$, $\pair{\alocation_i}{\alocation_{i+1}} \in \gedges$;
       \item\label{path-inter-break}
        $\{\alocation_1,\dots,\alocation_n\} = \gbtw_k(\alocation,\alocation') \cap \gverts$ and $\{\amap(\alocation_1),\dots,\amap(\alocation_n)\} = \gbtw_k'(\amap(\alocation),\amap(\alocation'))  \cap \gverts'$;
       \item\label{path-inter-2}
        $\gbtw_k(\alocation,\alocation') = \{\alocation_1,\dots,\alocation_n\} \cup \bigcup_{i \in \interval{0}{n}} \gbtw(\alocation_i,\alocation_{i+1})$;
       \item\label{path-inter-3}
        $\gbtw_k'(\amap(\alocation),\amap(\alocation')) = \{\amap(\alocation_1),\dots,\amap(\alocation_n)\} \cup \bigcup_{i \in \interval{0}{n}} \gbtw'(\amap(\alocation_i),\amap(\alocation_{i+1}))$.
     \end{enumerate}
     \vspace{4pt}
     \item[(Proof of \ref{path-prop1})]
       The proof follows rather closely the arguments used for the proof of \ref{mapksatA1}.
       Suppose $\pair{\alocation}{\alocation'} \in \gedges_k$. Since $\amap_k$ satisfies \ref{mapksatA1} we have $\pair{\amap_k(\alocation)}{\amap_k(\alocation')} \in \gedges_k'$.
       By~Lemma~\ref{lemma-labels}, $\gverts_k \subseteq \gverts$ and $\gverts_k' \subseteq \gverts'$. Informally, this means that a subheap cannot contain new labelled locations w.r.t.\ the original heap.
       It can however be the case that the set $\gbtw_k(\alocation,\alocation')$ contains locations that are
       labelled w.r.t. $\pair{\astore}{\aheap}$ (i.e. locations in $\gverts$) and that are not labelled for $\pair{\astore}{\aheap_k}$ (by definition of $\gbtw_k(\alocation,\alocation')$).
       More precisely, these locations correspond to meet-points of $\aheap$.
       Hence, let us consider the set
       $
       \{\alocation_1,\dots,\alocation_n \} =  \gbtw_k(\alocation,\alocation') \cap \gverts
       $
       (as required by~\ref{path-inter-break}).
       Let us define $\alocation_0 \egdef \alocation$, $\alocation_{n+1} \egdef \alocation'$.
       Since $\gbtw_k(\alocation,\alocation')$ describes the set of locations of the minimal path from $\alocation$ to $\alocation'$ in $\aheap_k$, and moreover $\aheap_k \sqsubseteq \aheap$,
       there is an ordering on the locations $\alocation_1,\dots,\alocation_n$,
       w.l.o.g.\ say $\alocation_1 < \dots < \alocation_n$ such that
      for every $i \in \interval{0}{n}$ $\pair{\alocation_i}{\alocation_{i+1}} \in \gedges$ (property~\ref{path-inter-1} of \ref{path-prop1}).
      So, $\{\alocation_1,\dots,\alocation_n \}$ is a set of locations in $\gbtw_k(\alocation,\alocation')$ that correspond to meet-points of $\pair{\astore}{\aheap}$.
        Now, as done in the proof of~\ref{mapksatA1}, $\amap$ is a graph isomorphism from $\pair{\gverts}{\gedges}$ to $\pair{\gverts'}{\gedges'}$ and therefore these two structures witness the following correspondence

        \begin{center}
         \scalebox{1}{
             \begin{tikzpicture}[baseline, node distance=1.8cm]
               \node[dot,label=above left:{$\alocation={\alocation_0}$}] (zero) at (0,0) {};
               \node[dot,label=above:{$\alocation_1$}] (one) [right of=zero] {};
               \node[dot,label=above:{$\alocation_2$}] (two) [right of=one] {};
               \node[dot,label=above:{$\alocation_{n-1}$}] (n1) [right of=two] {};
               \node[dot,label=above:{$\alocation_{n}$}] (n) [right of=n1] {};
               \node[dot,label=above right:{$\alocation'={\alocation_{n+1}}$}] (np) [right of=n] {};
               \node[dot,label=below:{$\amap(\alocation)$}] (Fzero) [below of=zero] {};
               \node[dot,label=below:{$\amap(\alocation_1)$}] (Fone) [right of=Fzero] {};
               \node[dot,label=below:{$\amap(\alocation_2)$}] (Ftwo) [right of=Fone] {};
               \node[dot,label=below:{$\amap(\alocation_{n-1})$}] (Fn1) [right of=Ftwo] {};
               \node[dot,label=below:{$\amap(\alocation_{n})$}] (Fn) [right of=Fn1] {};
               \node[dot,label=below:{$\amap(\alocation')$}] (Fnp) [right of=Fn] {};

                \draw[pto] (zero) edge node [above] {$\gedges$} (one);
                \draw[pto] (one) edge node [above] {$\gedges$} (two);
                \draw[pto] (n1) edge node [above] {$\gedges$} (n);
                \draw[pto] (n) edge node [above] {$\gedges$} (np);
                \draw[pto] (one) edge node [above] {$\gedges$} (two);
                \draw[pto] (Fzero) edge node [below] {$\gedges'$} (Fone);
                \draw[pto] (Fone) edge node [below] {$\gedges'$} (Ftwo);
                \draw[pto] (Fn1) edge node [below] {$\gedges'$} (Fn);
                \draw[pto] (Fn) edge node [below] {$\gedges'$} (Fnp);
                \draw[pto] (Fone) edge node [below] {$\gedges'$} (Ftwo);

                \path (two) edge [draw=none] node {$\dots$} (n1);
                \path (Ftwo) edge [draw=none] node {$\dots$} (Fn1);

                \path[dotted,->] (zero) edge node [left] {$\amap$} (Fzero);
                \path[dotted,->] (one) edge node [left] {$\amap$} (Fone);
                \path[dotted,->] (np) edge node [left] {$\amap$} (Fnp);
                \path[dotted,->] (n) edge node [left] {$\amap$} (Fn);
                \path[dotted,->] (n1) edge node [left] {$\amap$} (Fn1);
                \path[dotted,->] (two) edge node [left] {$\amap$} (Ftwo);

             \end{tikzpicture}
         }
        \end{center}
        \descLab{($\star$)}{prop-star-prime}: where in particular for every $i \in \interval{0}{n}$,
        $\pair{\amap(\alocation_i)}{\amap(\alocation_{i+1})} \in \gedges'$.
        Notice that,
        since $\amap$ is a bijection, for all two distinct $i,j \in \interval{1}{n}$, we have $\amap(\alocation_i) \neq \amap(\alocation_j)$.
        Most importantly, as $\alocation,\alocation' \not\in \gbtw_k(\alocation,\alocation')$ (this holds by definition of this set), we conclude $\alocation,\alocation' \not\in\{\alocation_1,\dots,\alocation_n\}$ and therefore
        $\amap(\alocation),\amap(\alocation') \not\in\{\amap(\alocation_1),\dots,\amap(\alocation_n)\}$.
        Thanks to this property, $\{\amap(\alocation_1),\dots,\amap(\alocation_n)\}$ is the set of labelled locations in the minimal path from $\amap(\alocation)$ to $\amap(\alocation')$, in $\aheap'$.
        Equivalently, $\{\amap(\alocation_1),\dots,\amap(\alocation_n)\} = \gbtw_k'(\amap(\alocation),\amap(\alocation'))  \cap \gverts'$ (property~\ref{path-inter-break} of~\ref{path-prop1}).
        Lastly, by \ref{path-inter-1} and \ref{path-inter-break},
        from the fact that
        heaps are functional and
        $\aheap_k$ witnesses a path from $\alocation$ to $\alocation'$, it follows that
        for every $i \in \interval{0}{n}$ $\gbtw{(\alocation_i,\alocation_{i+1}}) \subseteq \gbtw(\alocation,\alocation')$.
        Similarly, for every $i \in \interval{0}{n}$ $\gbtw{(\amap(\alocation_i),\amap(\alocation_{i+1}})) \subseteq \gbtw(\amap(\alocation),\amap(\alocation'))$.
        Therefore, we conclude that
        \vspace{3pt}
        \begin{itemize}
        \item
         $\gbtw_k(\alocation,\alocation') = \{\alocation_1,\dots,\alocation_n\} \cup \bigcup_{i \in \interval{0}{n}} \gbtw(\alocation_i,\alocation_{i+1})$;
        \item
         $\gbtw_k'(\amap(\alocation),\amap(\alocation')) = \{\amap(\alocation_1),\dots,\amap(\alocation_n)\} \cup \bigcup_{i \in \interval{0}{n}} \gbtw'(\amap(\alocation_i),\amap(\alocation_{i+1}))$.
       \end{itemize}
       \vspace{4pt}
       ending the proof of~\ref{path-prop1}.
       Notice that the two properties \ref{path-inter-2} and \ref{path-inter-3} (now proved) are well depicted by the figure above, as it shows the structure of the minimal path from $\alocation$ to $\alocation'$ in $\aheap$ (and $\aheap_k$), and the structure of the minimal path from $\amap(\alocation)$ to $\amap(\alocation')$ in $\aheap'$ (and $\aheap_k'$).
   \end{enumerate}

   We are now ready to show that $\amap_k$ satisfies \ref{a4}. Let $\pair{\alocation}{\alocation'} \in \gedges_k$ and let $\{\alocation_1,\dots,\alocation_n\} \subseteq \gverts_k$ satisfying the four properties in \ref{path-prop1}.
   Let us define $\alocation_0 \egdef \alocation$, $\alocation_{n+1} \egdef \alocation'$.
   As already stated in the proof of~\ref{path-prop1},
   $\amap$ is a bijection and therefore for all two distinct $i,j \in \interval{1}{n}$, we have $\amap(\alocation_1) \neq \amap(\alocation_n)$.
   Hence,
   \[
     \card{\{\alocation_1,\dots,\alocation_n\}} = \card{\{\amap(\alocation_1),\dots,\amap(\alocation_n)\}} = n.
    \]
  Moreover, by recalling (from Definition~\ref{definition-support-graph}) that
  \begin{itemize}
    \item $\{\galloc,\grem\} \cup \{ \gbtw(\alocation,\alocation') \mid
    (\alocation,\alocation') \in \gedges \}$ is a partition of $\domain{\aheap}$;
    \item $\{\galloc',\grem'\} \cup \{ \gbtw'(\alocation,\alocation') \mid
    (\alocation,\alocation') \in \gedges' \}$ is a partition of $\domain{\aheap'}$,
  \end{itemize}
   we conclude that for every $i \in \interval{0}{n}$,
   \begin{itemize}
    \item $\{\alocation_1,\dots,\alocation_n\} \cap \gbtw(\alocation_i,\alocation_{i+1}) = \emptyset$ and
    for all $j \in \interval{0}{n} \setminus \{i\}$, $\gbtw(\alocation_i,\alocation_{i+1}) \cap \gbtw(\alocation_j,\alocation_{j+1}) = \emptyset$;
    \item $\{\amap(\alocation_1),\dots,\amap(\alocation_n)\} \cap \gbtw'(\amap(\alocation_i),\amap(\alocation_{i+1})) = \emptyset$ and
    for all $j \in \interval{0}{n} \setminus \{i\}$,
    $\gbtw'(\amap(\alocation_i),\amap(\alocation_{i+1})) \cap \gbtw'(\amap(\alocation_j),\amap(\alocation_{j+1})) = \emptyset$.
   \end{itemize}
   Thus, the cardinalities of $\gbtw_k(\alocation,\alocation')$ and $\gbtw_k'(\amap_k(\alocation),\amap_k(\alocation'))$ can be written respectively as
   \begin{align*}
     \card{\gbtw_k(\alocation,\alocation')} = n + &\sum_{\mathclap{i \in \interval{0}{n}}} \card{\gbtw(\alocation_i,\alocation_{i+1})},\\
     \card{\gbtw_k'(\amap_k(\alocation),\amap_k(\alocation'))} = n + &\sum_{\mathclap{i \in \interval{0}{n}}} \card{\gbtw(\amap(\alocation_i),\amap(\alocation_{i+1}))}.
   \end{align*}
 Now, $\amap$ satisfies~\ref{a4} w.r.t.\ the memory states $\pair{\astore}{\aheap}$ and $\pair{\astore'}{\aheap'}$, and therefore for every $i \in \interval{0}{n}$
  \begin{equation}
    \tag*{\textbf{($\amap$-A4)}}\label{amap-A4}
    \min(\alpha,\card{\gbtw(\alocation_i,\alocation_{i+1})}) =
    \min(\alpha,\card{\gbtw'(\amap(\alocation_i),\amap(\alocation_{i+1}))}).
  \end{equation}
Now, we distinguish two cases. 
 \begin{itemize}
     \item If there exists $i \in \interval{0}{n}$ such that $\card{\gbtw(\alocation_i,\alocation_{i+1})} \geq \alpha$, then
     by \ref{amap-A4} we have $\card{\gbtw'(\amap(\alocation_i),\amap(\alocation_{i+1}))} \geq \alpha$.
     Since $\alpha_k \leq \alpha$, we  conclude
     \[
      \card{\gbtw_k(\alocation,\alocation')} \geq \alpha_k; \qquad
      \card{\gbtw_k'(\amap_k(\alocation),\amap_k(\alocation'))} \geq \alpha_k.
     \]
     \item Otherwise (for all $i \in \interval{0}{n}, \card{\gbtw(\alocation_i,\alocation_{i+1})} < \alpha$)
     from \ref{amap-A4} we conclude that for all $i \in \interval{0}{n}$,
     \[\card{\gbtw(\alocation_i,\alocation_{i+1})} = \card{\gbtw'(\amap(\alocation_i),\amap(\alocation_{i+1}))},\]
     which allows us to conclude that $\card{\gbtw_k(\alocation,\alocation')} = \card{\gbtw_k'(\amap_k(\alocation),\amap_k(\alocation'))}$.
  \end{itemize}
 In both cases we have shown that
 \[\min(\alpha_k,\card{\gbtw_k(\alocation,\alocation')}) = \min(\alpha_k,\card{\gbtw_k'(\amap_k(\alocation),\amap_k(\alocation'))}).\]
 concluding the proof of \ref{a4}.

  \item[\textbf{$\amap_k$ satisfies \descLab{(A5)}{mapksatA5-body}}:]
    Lastly, we show that
    \[
    \min(\alpha_k, \card{\grem_k}) = \min(\alpha_k, \card{\grem_k'}).
    \]
    We take advantage of the following five intermediate results.
    \begin{enumerate}[label=\textbf{(\Roman*)}, align = left]
    \item\label{rem-prop1} For every $\alocation \in \gverts \cap \domain{\aheap_k}$,
      $\alocation \in \grem_k$ if and only if $\amap(\alocation) \in \grem_k'$.

    \item[(Proof of \ref{rem-prop1})]
    ($\Rightarrow$) Suppose $\alocation \in \gverts$.
    By Definition~\ref{definition-support-graph},
    %% i.e. the set of locations that are not between the interpretations of two program variables,
    it holds that $\alocation \in \grem_k$ if and only if
    \begin{enumerate}[label=(\roman*), align = left]
      \item\label{rem-prop-l1} $\alocation \not\in \gverts_k$ and $\alocation \in \domain{\aheap_k}$;
      \item\label{rem-prop-l2} there is no $\pair{\alocation'}{\alocation''} \in \gedges_k$ such that $\alocation \in \gbtw_k(\alocation',\alocation'')$.
    \end{enumerate}
    From (a) in the proof that $\amap_k$ satisfies \ref{mapksatA3-body}, we have $\amap(\alocation) = \amap_k(\alocation) \not\in \gverts_k'$. By $\alocation \in \domain{\aheap_k}$ and $\alocation \in \grem_k$, from the construction step~\ref{c1-body} we conclude $\amap(\alocation) \in \domain{\aheap_k'}$.

    Besides, it cannot be that there is
    $\pair{\alocation_1}{\alocation_2} \in \gedges_k'$ such that $\amap(\alocation) \in \gbtw_k'(\alocation_1,\alocation_2)$. Indeed, if that was the case, we can
    apply the proof ad absurdum showed in the left-to-right direction of the proof that
    $\amap_k$ satisfies \ref{mapksatA1}, and derive that 
    $\alocation \in \gbtw_k'(\amap^{-1}_k(\alocation_1),\amap^{-1}_k(\alocation_2))$, in contradiction
    with~\ref{rem-prop-l2}.
    Hence, as we proved that \ref{rem-prop-l1} and \ref{rem-prop-l2} hold for $\amap(\alocation)$, by Definition~\ref{definition-support-graph}, we conclude that $\amap(\alocation) \in \grem_k'$.

    ($\Leftarrow$) The right-to-left direction follows by using similar arguments.

    \item\label{rem-prop-break}  $\card{\grem_k \cap \gverts} = \card{\grem_k' \cap \gverts'}$ and $\card{\grem_k \cap \galloc} = \card{\grem_k' \cap \galloc'}$.
    \item[(Proof of \ref{rem-prop-break})]
    The first equivalence follows directly from \ref{rem-prop1} and the fact that $\amap$ is a bijection.
    Then, since $\galloc \subseteq \gverts$, $\galloc' \subseteq \gverts'$ (by definition) and
    $\alocation \in \galloc\ \Leftrightarrow\ \amap(\alocation) \in \galloc'$ (from the property~\ref{a2} of $\amap$), the second equivalence also holds.

    \item\label{rem-prop2} For all $\pair{\alocation}{\alocation'} \in \gedges$,
    \begin{center}
      $\gbtw(\alocation,\alocation') \cap \domain{\aheap_k} \subseteq \grem_k$ if and only if
     $\gbtw'(\amap(\alocation),\amap(\alocation')) \cap \domain{\aheap_k'} \subseteq \grem_k'$.
    \end{center}
    \item[(Proof of \ref{rem-prop2})]
    We first notice that, directly from \ref{cip2-body},
    \begin{center}
      $\gbtw(\alocation,\alocation') \cap \domain{\aheap_k} = \emptyset$ if and only if
      $\gbtw'(\amap(\alocation),\amap(\alocation')) \cap \domain{\aheap_k'} = \emptyset$,
    \end{center}
    as by definition
    $L_k = \gbtw(\alocation,\alocation') \cap \domain{\aheap_k}$
    and $L_k' = \gbtw'(\amap(\alocation),\amap(\alocation')) \cap \domain{\aheap_k'}$ in the case~\ref{c2-body} of the construction (where we consider $\gbtw(\alocation,\alocation')$).
    Therefore,~\ref{rem-prop2} trivially holds when $\gbtw(\alocation,\alocation') \cap \domain{\aheap_k}$ is empty and in the following, we assume
    $\gbtw(\alocation,\alocation') \cap \domain{\aheap_k}$ and $\gbtw'(\amap(\alocation),\amap(\alocation')) \cap \domain{\aheap_k'}$ both non-empty.

    $(\Rightarrow)$
    Let $\pair{\alocation}{\alocation'} \in \gedges$ and suppose
    $\gbtw(\alocation,\alocation') \cap \domain{\aheap_k} \subseteq \grem_k$.
    Let us consider a location $\tilde{\alocation} \in \gbtw'(\amap(\alocation),\amap(\alocation')) \cap \domain{\aheap_k'}$.
    We show that $\tilde{\alocation} \in \grem_k'$.
    As shown in the proof of \ref{rem-prop1}, $\tilde{\alocation} \in \grem_k'$ holds if and only if
    \begin{enumerate}[label=(\roman*), align = left]
      \item\label{rem-prop-l3} $\tilde{\alocation} \not\in \gverts_k'$ and $\tilde{\alocation} \in \domain{\aheap_k'}$;
      \item\label{rem-prop-l4} there is no $\pair{\tilde{\alocation_1}}{\tilde{\alocation_2}} \in \gedges_k'$ such that $\tilde{\alocation} \in \gbtw_k'(\tilde{\alocation_1},\tilde{\alocation_2})$.
    \end{enumerate}
    The proof of~\ref{rem-prop-l3} is straightforward:
    by $\tilde{\alocation} \in \gbtw'(\amap(\alocation),\amap(\alocation')) \cap \domain{\aheap_k'}$, we directly conclude that $\tilde{\alocation} \in \domain{\aheap_k'}$ (first condition in~\ref{rem-prop-l3}) and
    $\tilde{\alocation} \in \gbtw'(\amap(\alocation),\amap(\alocation'))$.
    From the latter membership and
    by Definition~\ref{definition-support-graph}, we conclude $\tilde{\alocation} \not\in \gverts'$.
    Lastly, by Lemma~\ref{lemma-labels}, $\gverts_k' \subseteq \gverts'$ and therefore $\tilde{\alocation} \not\in \gverts_k'$ (second condition in~\ref{rem-prop-l3}).
    Let us now prove~\ref{rem-prop-l4}.
    \emph{Ad absurdum}, suppose that there is $\pair{\tilde{\alocation_1}}{\tilde{\alocation_2}} \in \gedges_k'$ such that $\tilde{\alocation} \in \gbtw_k'(\tilde{\alocation_1},\tilde{\alocation_2})$.
    Then, as we have shown that $\amap_k$ satisfies \ref{mapksatA4-body}, $\pair{\amap^{-1}_k(\tilde{\alocation_1})}{\amap^{-1}_k(\tilde{\alocation_n})} \in \gedges_k$.
    We apply \ref{path-prop1}: there is a set $\{\alocation_1,\dots,\alocation_n\} \subseteq \gverts_k$ such that
    \vspace{2pt}
    \begin{enumerate}[label=(\alph*), align = left]
      \setlength{\itemsep}{3pt}
      \item\label{rem-path-inter-1}
       By defining $\alocation_0 \egdef \amap^{-1}_k(\tilde{\alocation_1})$ and $\alocation_{n+1} \egdef \amap^{-1}_k(\tilde{\alocation_2})$, for every $i \in \interval{0}{n}$, $\pair{\alocation_i}{\alocation_{i+1}} \in \gedges$;
      \item\label{rem-path-inter-break}
       $\{\alocation_1,\dots,\alocation_n\} = \gbtw_k(\amap^{-1}_k(\tilde{\alocation_1}),\amap^{-1}_k(\tilde{\alocation_2})) \cap \gverts$ and
       $\{\amap(\alocation_1),\dots,\amap(\alocation_n)\} = \gbtw_k'(\tilde{\alocation_1},\tilde{\alocation_2})  \cap \gverts'$;
      \item\label{rem-path-inter-2}
       $\gbtw_k(\amap^{-1}_k(\tilde{\alocation_1}),\amap^{-1}_k(\tilde{\alocation_2})) = \{\alocation_1,\dots,\alocation_n\} \cup \bigcup_{i \in \interval{0}{n}} \gbtw(\alocation_i,\alocation_{i+1})$;
      \item\label{rem-path-inter-3}
       $\gbtw_k'(\tilde{\alocation_1},\tilde{\alocation_2}) = \{\amap(\alocation_1),\dots,\amap(\alocation_n)\} \cup \bigcup_{i \in \interval{0}{n}} \gbtw'(\amap(\alocation_i),\amap(\alocation_{i+1}))$.
    \end{enumerate}
    \vspace{4pt}
    Recalling that
    $\{\galloc',\grem'\} \cup \{ \gbtw'(\alocation,\alocation') \mid
      (\alocation,\alocation') \in \gedges' \}$ is a partition of $\domain{\aheap'}$
    (by Definition~\ref{definition-support-graph}),
    from \ref{rem-path-inter-3} together with $\tilde{\alocation} \in \gbtw'(\amap(\alocation),\amap(\alocation'))$ and $\tilde{\alocation} \in \gbtw_k'(\tilde{\alocation_1},\tilde{\alocation_2})$, we conclude that
    there is $i \in \interval{0}{n}$ such that $\pair{\amap(\alocation)}{\amap(\alocation')} = \pair{\amap(\alocation_i)}{\amap(\alocation_{i+1})}$.
    From \ref{rem-path-inter-2}, we conclude that
    $\gbtw(\alocation,\alocation') \subseteq \gbtw_k(\amap_k^{-1}(\tilde{\alocation_1}),\amap_k^{-1}(\tilde{\alocation_2})).$
    Notice that, by Definition~\ref{definition-support-graph},
    this inclusion also entails $\gbtw(\alocation,\alocation') \cap \domain{\aheap_k} = \gbtw(\alocation,\alocation')$, i.e. every element in $\gbtw(\alocation,\alocation')$ is a memory cell of $\aheap_k$.
    Hence,
    $\gbtw(\alocation,\alocation') \cap \domain{\aheap_k} \subseteq \gbtw_k(\amap_k^{-1}(\tilde{\alocation_1}),\amap_k^{-1}(\tilde{\alocation_2}))$, in contradiction with $\gbtw(\alocation,\alocation') \cap \domain{\aheap_k} \subseteq \grem_k$.
    Indeed, we assumed $\gbtw(\alocation,\alocation') \cap \domain{\aheap_k}$ non-empty, and
    by Definition~\ref{definition-support-graph} $\grem_k \cap \gbtw_k(\amap_k^{-1}(\tilde{\alocation_1}),\amap_k^{-1}(\tilde{\alocation_2})) = \emptyset$.
    Therefore, it cannot be that there is $\pair{\tilde{\alocation_1}}{\tilde{\alocation_2}} \in \gedges_k'$ such that $\tilde{\alocation} \in \gbtw_k'(\tilde{\alocation_1},\tilde{\alocation_2})$.
    As \ref{rem-prop-l3} and \ref{rem-prop-l4} hold, we conclude that
    $\tilde{\alocation} \in \grem_k'$.

    ($\Leftarrow$) The right-to-left direction follows by using similar arguments.

    \item\label{rem-prop-aux2} For every $\pair{\alocation}{\alocation'} \in \gedges$, $\grem_k \cap \gbtw(\alocation,\alocation') = \emptyset$ or $\gbtw(\alocation,\alocation') \cap \domain{\aheap_k} \subseteq \grem_k$.
    Similarly, for every $\pair{\alocation}{\alocation'} \in \gedges'$, $\grem_k' \cap \gbtw'(\alocation,\alocation') = \emptyset$ or $\gbtw'(\alocation,\alocation') \cap \domain{\aheap_k'} \subseteq \grem_k'$.

    \item[(Proof of~\ref{rem-prop-aux2})] Let us prove the first statement (the second one is proven analogously). By Definition~\ref{definition-support-graph}, we have
    \begin{enumerate}[label = (\alph*), align = left]
      \item\label{last-rem-1-aux} $\{\galloc,\grem\} \cup \{ \gbtw(\alocation,\alocation') \mid
      (\alocation,\alocation') \in \gedges \}$ is a partition of $\domain{\aheap}$;
      \item\label{last-rem-2-aux} $\{\galloc_k,\grem_k\} \cup \{ \gbtw_k(\alocation,\alocation') \mid
      (\alocation,\alocation') \in \gedges_k \}$ is a partition of $\domain{\aheap_k}$.
    \end{enumerate}
    Let $\pair{\alocation}{\alocation'} \in \gedges$.
    \emph{Ad absurdum}, suppose that $\grem_k \cap \gbtw(\alocation,\alocation') \neq \emptyset$ and there is $\tilde{\alocation} \in (\gbtw(\alocation,\alocation') \cap \domain{\aheap_k}) \setminus \grem_k$.
    Since $\tilde{\alocation} \in \gbtw(\alocation,\alocation')$, 
    by Definition~\ref{definition-support-graph}, $\tilde{\alocation}$ is not a labelled location w.r.t. $\pair{\astore}{\aheap}$, i.e.\ $\tilde{\alocation} \not \in \gverts$.
    Moreover, from $\tilde{\alocation} \in \domain{\aheap_k}$ and $\tilde{\alocation} \not\in \grem_k$, by
    \ref{last-rem-2-aux} we conclude that $\tilde{\alocation} \in \galloc_k$ or there is $\pair{\alocation_1'}{\alocation_2'} \in \gedges_k$ such that $\tilde{\alocation} \in \gbtw_k(\alocation_1',\alocation_2')$.
    \begin{itemize}
      \item If $\tilde{\alocation} \in \galloc_k$, then $\tilde{\alocation} \in \gverts_k$ (by Definition~\ref{definition-support-graph}). Recall that,
      by Lemma~\ref{lemma-labels}, $\gverts_k \subseteq \gverts$.
      Therefore, we conclude $\tilde{\alocation} \in \gverts$ (contradiction).
      \item Otherwise there is $\pair{\alocation_1'}{\alocation_2'} \in \gedges_k$ such that $\alocation \in \gbtw_k(\alocation_1',\alocation_2')$. By~\ref{path-prop1}, there is a set $\{\alocation_1,\dots,\alocation_n\} \subseteq \gverts_k$ such that
      \vspace{2pt}
      \begin{enumerate}[label=(\alph*), align = left, start = 3]
        \setlength{\itemsep}{3pt}
        \item\label{rem-4-path-1}
         By defining $\alocation_0 \egdef \alocation_1'$ and $\alocation_{n+1} \egdef \alocation_2'$, we have
         that for every $i \in \interval{0}{n}$, $\pair{\alocation_i}{\alocation_{i+1}} \in \gedges$;
        \item\label{rem-4-path-2}
         $\{\alocation_1,\dots,\alocation_n\} = \gbtw_k(\alocation_1',\alocation_2') \cap \gverts$;
        \item\label{rem-4-path-3}
         $\gbtw_k(\alocation_1',\alocation_2') = \{\alocation_1,\dots,\alocation_n\} \cup \bigcup_{i \in \interval{0}{n}} \gbtw(\alocation_i,\alocation_{i+1})$.
      \end{enumerate}
      Since $\tilde{\alocation} \not\in \gverts$, from~\ref{rem-4-path-3} we conclude that there is $i \in \interval{0}{n}$ such that $\tilde{\alocation} \in \gbtw(\alocation_i,\alocation_{i+1})$.
      Then, by \ref{last-rem-1-aux} and $\alocation \in \gbtw(\alocation,\alocation')$, it must be that $\pair{\alocation_i}{\alocation_{i+1}} = \pair{\alocation}{\alocation'}$.
      Hence, from \ref{rem-4-path-3}, $\gbtw(\alocation,\alocation') \subseteq \gbtw_k(\alocation_1',\alocation_2')$.
      By~\ref{last-rem-2-aux} $\grem_k \cap \gbtw_k(\alocation_1',\alocation_2') = \emptyset$ and therefore we conclude that $\grem_k \cap \gbtw(\alocation,\alocation') = \emptyset$, in contradiction with the hypothesis $\grem_k \cap \gbtw(\alocation,\alocation') \neq \emptyset$.
    \end{itemize}
    As in both cases we reached a contradiction, it must be that
    $\grem_k \cap \gbtw(\alocation,\alocation') = \emptyset$ or $\gbtw(\alocation,\alocation') \cap \domain{\aheap_k} \subseteq \grem_k$, concluding the proof.

    \item\label{rem-prop3} It holds that $\grem \cap \domain{\aheap_k} \subseteq \grem_k$ and $\grem' \cap \domain{\aheap_k'} \subseteq \grem_k'$.
    \item[(Proof of \ref{rem-prop3})] The proof is rather immediate. Let us prove that
     $\grem \cap \domain{\aheap_k} \subseteq \grem_k$ (the proof of $\grem' \cap \domain{\aheap_k'} \subseteq \grem_k'$ is analogous).
     By Definition~\ref{definition-support-graph}, we have
     \begin{enumerate}[label = (\alph*), align = left]
       \item\label{last-rem-1} $\{\galloc,\grem\} \cup \{ \gbtw(\alocation,\alocation') \mid
       (\alocation,\alocation') \in \gedges \}$ is a partition of $\domain{\aheap}$;
       \item\label{last-rem-2} $\{\galloc_k,\grem_k\} \cup \{ \gbtw_k(\alocation,\alocation') \mid
       (\alocation,\alocation') \in \gedges_k \}$ is a partition of $\domain{\aheap_k}$.
     \end{enumerate}
     Suppose $\alocation \in \grem \cap \domain{\aheap_k}$.
     \emph{Ad absurdum}, suppose $\alocation \not \in \grem_k$.
     Then, as $\alocation \in \domain{\aheap_k}$, from \ref{last-rem-2} it holds that
     $\alocation \in \galloc_k$ or there is $(\alocation_1,\alocation_2) \in \gedges_k$ such that
     $\alocation \in \gbtw_k(\alocation_1,\alocation_2)$.
     \begin{itemize}
     \item
       If $\alocation \in \galloc_k$ then $\alocation \in \gverts_k$ (by Definition~\ref{definition-support-graph}).
       By Lemma~\ref{lemma-labels} $\gverts_k \subseteq \gverts$ and therefore $\alocation \in \gverts$.
       Together with $\aheap_k \sqsubseteq \aheap$ (hence $\domain{\aheap_k} \subseteq \domain{\aheap}$), the fact that $\alocation \in \gverts$ implies $\alocation \in \galloc$ (again by Definition~\ref{definition-support-graph}).
       From~\ref{last-rem-1} we then obtain $\alocation \not \in \grem$, a contradiction.
     \item
      Otherwise, let $(\alocation_1,\alocation_2) \in \gedges_k$ such that
      $\alocation \in \gbtw_k(\alocation_1,\alocation_2)$.
      Therefore, by Definition~\ref{definition-support-graph}
      there is $L_1,L_2 \geq 1$ such that $\aheap_k^{L_1}(\alocation_1) = \alocation$ and $\aheap_k^{L_2}(\alocation) = \alocation_2$.
      Since $\aheap_k \sqsubseteq \aheap$, we conclude that
      there is $L_1,L_2 \geq 1$ such that $\aheap^{L_1}(\alocation_1) = \alocation$ and $\aheap^{L_2}(\alocation) = \alocation_2$.
      Now, notice that $(\alocation_1,\alocation_2) \in \gedges_k$ implies $\alocation_1,\alocation_2 \in \gverts_k$ (by Definition~\ref{definition-support-graph}) and therefore
      $\alocation_1,\alocation_2 \in \gverts$ (since by Lemma~\ref{lemma-labels} $\gverts_k \subseteq \gverts$).
      We  conclude that $\alocation$ is an intermediate location in the path of $\aheap$ from
      the labelled location $\alocation_1$ to the labelled location $\alocation_2$.
      Hence, by Definition~\ref{definition-support-graph}
      there are $\tilde{\alocation_1}$ and $\tilde{\alocation_2}$ such that $\alocation \in \gbtw(\tilde{\alocation_1},\tilde{\alocation_2})$.
      From~\ref{last-rem-1} we then obtain $\alocation \not \in \grem$, a contradiction.
     \end{itemize}
     In both cases we conclude that $\alocation \not\in \grem_k$ cannot hold, ending the proof of \ref{rem-prop3}.
  \end{enumerate}

    By taking advantagr of \ref{rem-prop1}--\ref{rem-prop3}, we are now 
    ready to show that $\amap_k$ satisfies \ref{a5}, i.e.
    \[
      \min(\alpha_k, \card{\grem_k}) = \min(\alpha_k, \card{\grem_k'}).
    \]
    First of all, we recall that by Definition~\ref{definition-support-graph} we have
    \begin{enumerate}[label = (\alph*), align = left]
      \item\label{a5-rem-L} $\{\galloc,\grem\} \cup \{ \gbtw(\alocation,\alocation') \mid
      (\alocation,\alocation') \in \gedges \}$ is a partition of $\domain{\aheap}$;
      \item\label{a5-rem-R} $\{\galloc',\grem'\} \cup \{ \gbtw'(\alocation,\alocation') \mid
      (\alocation,\alocation') \in \gedges' \}$ is a partition of $\domain{\aheap'}$.
    \end{enumerate}
    Since $\grem_k \subseteq \domain{\aheap_k} \subseteq \domain{\aheap}$ and $\grem_k' \subseteq \domain{\aheap_k'} \subseteq \domain{\aheap'}$, we can use~\ref{a5-rem-L} and~\ref{a5-rem-R} to decompose $\grem_k$ and $\grem_k'$ as follows:
    \begin{itemize}[align = left]
      \item $\grem_k = (\grem_k \cap \galloc) \cup (\grem_k \cap \grem) \cup \bigcup_{{\pair{\alocation}{\alocation'} \in \gedges}} \big( \grem_k \cap \gbtw(\alocation,\alocation')\big)$,
      \item $\grem_k' = (\grem_k' \cap \galloc') \cup (\grem_k' \cap \grem') \cup \bigcup_{{\pair{\alocation}{\alocation'} \in \gedges'}} \big( \grem_k' \cap \gbtw'(\alocation,\alocation')\big)$,
    \end{itemize}
    where all unions are performed on disjoint sets.
    By \ref{rem-prop-aux2},
    $\bigcup_{{\pair{\alocation}{\alocation'} \in \gedges}} ( \grem_k \cap \gbtw(\alocation,\alocation'))$
    and
    $\bigcup_{{\pair{\alocation}{\alocation'} \in \gedges'}} ( \grem_k' \cap \gbtw'(\alocation,\alocation'))$
    are respectively equivalent to
    \[
      \qquad\qquad
      \bigcup_{\mathclap{\substack{\pair{\alocation}{\alocation'} \in \gedges\\ \gbtw(\alocation,\alocation') \cap \domain{\aheap_k} \subseteq \grem_k}}}
      \big(\gbtw(\alocation,\alocation') \cap \domain{\aheap_k}\big)
      \qquad\qquad\qquad\qquad
      \bigcup_{\mathclap{\substack{\pair{\alocation}{\alocation'} \in \gedges'\\ \gbtw'(\alocation,\alocation') \cap \domain{\aheap_k'} \subseteq \grem_k'}}}
      \big(\gbtw'(\alocation,\alocation') \cap \domain{\aheap_k'}\big)
    \]
    Since $\grem_k \subseteq \domain{\aheap_k}$ and $\grem_k' \subseteq \domain{\aheap_k'}$,
    by~\ref{rem-prop3} we have $\grem_k \cap \grem = \grem \cap \domain{\aheap_k}$ and $\grem_k' \cap \grem' = \grem' \cap \domain{\aheap_k'}$. Hence, the previous equivalences can be rewritten as
    \begin{align*}
      \grem_k = \ & (\grem_k \cap \galloc) \cup (\grem \cap \domain{\aheap_k}) \cup \bigcup_{\mathclap{\substack{\pair{\alocation}{\alocation'} \in \gedges\\ \gbtw(\alocation,\alocation') \cap \domain{\aheap_k} \subseteq \grem_k}}}
      \big(\gbtw(\alocation,\alocation') \cap \domain{\aheap_k}\big)\\
      \grem_k' = \ & (\grem_k' \cap \galloc') \cup (\grem' \cap \domain{\aheap_k'}) \cup \bigcup_{\mathclap{\substack{\pair{\alocation}{\alocation'} \in \gedges'\\ \gbtw'(\alocation,\alocation') \cap \domain{\aheap_k'} \subseteq \grem_k'}}}
      \big(\gbtw'(\alocation,\alocation') \cap \domain{\aheap_k'}\big).
    \end{align*}
    Since all unions are performed on disjoint sets (by \ref{a5-rem-L} and \ref{a5-rem-R}),
    we conclude that $\card{\grem_k}$ and $\card{\grem_k'}$ satisfy the following equivalences:
    \begin{align*}
      \qquad \
      \card{\grem_k} = \ & \card{\grem_k \cap \galloc} + \card{\grem \cap \domain{\aheap_k}} +  \!\!\!\sum_{\mathclap{\substack{\pair{\alocation}{\alocation'} \in \gedges\\ \gbtw(\alocation,\alocation') \cap \domain{\aheap_k} \subseteq \grem_k}}}
      \!\card{\gbtw(\alocation,\alocation') \cap \domain{\aheap_k}}\\
      \card{\grem_k'} = \ & \card{\grem_k'\cap \galloc'} + \card{\grem'\!\cap \domain{\aheap_k}} + \!\!\!\sum_{\mathclap{\substack{\pair{\alocation}{\alocation'} \in \gedges'\\ \gbtw'(\alocation,\alocation') \cap \domain{\aheap_k'} \subseteq \grem_k'}}}
      \!\card{\gbtw'(\alocation,\alocation') {\cap} \domain{\aheap_k'}}
    \end{align*}
    Now, we are able to compare the values $\min(\alpha_k, \card{\grem_k})$ and 
    $\min(\alpha_k, \card{\grem_k'})$.
    By using the two equivalences above and recalling that $\min(A,B+C) = \min(A,B + \min(A,C))$,
    the following set of equivalences holds for $\min(\alpha_k, \card{\grem_k})$:
    \begin{align*}
        \min(\alpha_k, \card{\grem_k}) &= \min(\alpha_k, \card{\grem_k \cap \galloc} + R)\\
        R &= \min(\alpha_k, \card{\grem \cap \domain{h_k}} + I)\\
        I &= \min(\alpha_k, \sum_{\mathclap{\substack{\pair{\alocation}{\alocation'} \in \gedges\\ \gbtw(\alocation,\alocation') \cap \domain{\aheap_k} \subseteq \grem_k}}} I_{\alocation,\alocation'})\\
        I_{\alocation,\alocation'} &= \min(\alpha_k,\card{\gbtw(\alocation,\alocation') \cap \domain{\aheap_k}})
    \end{align*}
    The same can be done for $\min(\alpha_k, \card{\grem_k'})$:
    \begin{align*}
        \min(\alpha_k, \card{\grem_k'}) &= \min(\alpha_k, \card{\grem_k' \cap \galloc'} + R')\\
        R' &= \min(\alpha_k, \card{\grem' \cap \domain{h_k'}} + I')\\
        I' &= \min(\alpha_k, \sum_{\mathclap{\substack{\pair{\alocation}{\alocation'} \in \gedges'\\ \gbtw'(\alocation,\alocation') \cap \domain{\aheap_k'} \subseteq \grem_k'}}} I_{\alocation,\alocation'}')\\
        I_{\alocation,\alocation'}' &= \min(\alpha_k,\card{\gbtw'(\alocation,\alocation') \cap \domain{\aheap_k'}}).
    \end{align*}
    By \ref{cip2-body}, for every $\pair{\alocation}{\alocation'} \in \gedges$ we have $I_{\alocation,\alocation'} = I_{\amap(\alocation),\amap(\alocation')}'$.
    From \ref{rem-prop2} together with the fact that $\amap$ witnesses a graph isomorphism between $\pair{\astore}{\aheap}$ and $\pair{\astore'}{\aheap'}$
    we conclude that
    \begin{align*}
      &\{\pair{\alocation}{\alocation'} \in \gedges' \mid \gbtw'(\alocation,\alocation') \cap \domain{\aheap_k'} \subseteq \grem_k' \}\\
      = \ &
      \{\pair{\amap(\alocation)}{\amap(\alocation')} \mid \pair{\alocation}{\alocation'} \in \gedges, \gbtw(\alocation,\alocation') \cap \domain{\aheap_k} \subseteq \grem_k \},
    \end{align*}
    which allows us to conclude that $I = I'$.
    Furthermore, by \ref{crp2-body}, it follows also that $R = R'$.
    Lastly, by \ref{rem-prop-break} we conclude:
    \[\min(\alpha_k, \card{\grem_k}) = \min(\alpha_k, \card{\grem_k'}).\]
}
\end{enumerate}

%%
%% for $(\gverts_k,\gedges_k,\glabels_k,\gbtw_k,\grem_k)$ and $(\gverts_k',\gedges_k',\glabels_k',\gbtw_k',\grem_k')$ that satisfy the
%% additional conditions of Lemma~\ref{lemma-test-graph}.
%%

This concludes the proof: for the support graphs of $h_k$ and $h_k'$, $\amap$ restricted to the domain $\gverts_k$ and the codomain $\gverts_k'$ satisfies all conditions of Lemma~\ref{lemma-test-graph} with respect to $\alpha_k$.
Therefore, it holds that $\pair{\astore}{\aheap_1}  \approx^q_{\alpha_1} \pair{\astore'}{\aheap'_1}$ and
$\pair{\astore}{\aheap_2}  \approx^q_{\alpha_2} \pair{\astore'}{\aheap'_2}$.
\end{proof}

Thanks to Lemma~\ref{lemma-star}, we can now characterise every formula of
$\seplogic{\separate,\reachplus}$ by a Boolean combination of test formulae in $\Test(q,\alpha)$, where $\alpha$ is related to the \defstyle{memory size}
$\msize(\aformula)$ of a formula $\aformula$ in $\seplogic{\separate,\reachplus}$ defined as follows (see also~\cite{Yang01}):%%
\begin{multicols}{2}
\begin{itemize}
\item $\msize(\aatomicformula) \egdef 1$ for any atomic formula $\aatomicformula$,
\item $\msize(\aformulabis \land \aformulabis') \egdef \max(\msize(\aformulabis),\msize(\aformulabis'))$,
\item $\msize(\lnot \aformulabis) \egdef \msize(\aformulabis)$,
\item $\msize(\aformulabis \sepcnj \aformulabis') \egdef \msize(\aformulabis) + \msize(\aformulabis')$.
\end{itemize}
\end{multicols}
We have $1 \leq \msize(\aformula) \leq \length{\aformula}$, where $\length{\aformula}$ is the size of the syntax tree 
for $\aformula$.
Below, we establish the characterisation of $\seplogic{\separate,\reachplus}$ formulae in terms of test formulae.

% \begin{theorem}
% \label{theorem-quantifier-elimination}
% Let $\aformula$ be in $\seplogic{\separate,\reachplus}$ built over the variables in
% $\avariable_1$, \ldots, $\avariable_q$.
% \begin{enumerate}
% \item For all $\alpha \geq 1$ such that $\msize(\aformula)  \leq \alpha$ and for all memory states $\pair{\astore}{\aheap}$,
% $\pair{\astore'}{\aheap'} $ such that $\pair{\astore}{\aheap} \approx^q_\alpha \pair{\astore'}{\aheap'}$,
% we have $\pair{\astore}{\aheap} \models \aformula$ iff $\pair{\astore'}{\aheap'} \models \aformula$.
% \item $\aformula$ is logically equivalent to a Boolean combination of test formulae from $\testformulae{q}{\msize(\aformula)}$.
% \end{enumerate}
% \end{theorem}%%

\begin{theorem}\label{theo:test_sl_equiv}
Let $\aformula$ be in $\seplogic{\separate,\reachplus}$ built over the variables $\avariable_1,\dots,\avariable_q$.
For all $\alpha \geq \msize(\aformula)$ and all memory states $\pair{\astore}{\aheap}$,
$\pair{\astore'}{\aheap'} $ such that $\pair{\astore}{\aheap} \approx^q_\alpha \pair{\astore'}{\aheap'}$,
we have $\pair{\astore}{\aheap} \models \aformula$ iff $\pair{\astore'}{\aheap'} \models \aformula$.
\end{theorem}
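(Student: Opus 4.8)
The plan is to proceed by structural induction on $\aformula$, with the statement phrased so that the induction hypothesis is quantified over \emph{all} admissible parameters: for every subformula $\aformulabis$, for every $\alpha \geq \msize(\aformulabis)$, and for every pair of memory states related by $\approx^q_\alpha$, the two states agree on $\aformulabis$. Since $\approx^q_\alpha$ is patently symmetric, in each case it suffices to prove one direction, say $\pair{\astore}{\aheap} \models \aformula \Rightarrow \pair{\astore'}{\aheap'} \models \aformula$. For the base cases, the atoms of $\seplogic{\separate,\reachplus}$ are $\avariable_i = \avariable_j$, $\avariable_i \hpto \avariable_j$, $\reachplus(\avariable_i,\avariable_j)$ and $\emp$. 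The first two coincide with test formulae of $\Test(q,\alpha)$ once variables are read as terms (a direct check against the support-graph semantics), while by Lemma~\ref{lemma-atomic-formulae-test} both $\reachplus(\avariable_i,\avariable_j)$ and $\emp$ are equivalent to Boolean combinations of test formulae from $\Test(q,\alpha)$; each atom has memory size $1 \leq \alpha$, so the relevant test formulae are indeed available. As $\pair{\astore}{\aheap} \approx^q_\alpha \pair{\astore'}{\aheap'}$ means precisely that the two states agree on every formula of $\Test(q,\alpha)$, they agree on each atom, settling the base cases.

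The Boolean cases reduce to the induction hypothesis after a small memory-size bookkeeping: for $\neg\aformulabis$ one has $\msize(\neg\aformulabis)=\msize(\aformulabis)$, so $\alpha$ still bounds $\msize(\aformulabis)$; for $\aformulabis \land \aformulabis'$ one has $\alpha \geq \max(\msize(\aformulabis),\msize(\aformulabis'))$, so $\alpha$ bounds both conjuncts and the hypothesis applies to each. The separating conjunction $\aformulabis \separate \aformulabis'$ is the crux, and it is exactly here that the distributivity Lemma~\ref{lemma-star} carries the argument. Assuming $\pair{\astore}{\aheap} \models \aformulabis \separate \aformulabis'$, fix a splitting $\aheap = \aheap_1 + \aheap_2$ with $\pair{\astore}{\aheap_1} \models \aformulabis$ and $\pair{\astore}{\aheap_2} \models \aformulabis'$. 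Set $\alpha_1 \egdef \msize(\aformulabis)$ and $\alpha_2 \egdef \alpha - \alpha_1$; since $\alpha \geq \msize(\aformulabis\separate\aformulabis') = \msize(\aformulabis)+\msize(\aformulabis')$, we obtain $\alpha_1,\alpha_2 \geq 1$, $\alpha_2 \geq \msize(\aformulabis')$, and $\alpha = \alpha_1 + \alpha_2$, exactly the hypotheses of Lemma~\ref{lemma-star}. The lemma supplies a splitting $\aheap' = \aheap'_1 + \aheap'_2$ with $\pair{\astore}{\aheap_1} \approx^q_{\alpha_1} \pair{\astore'}{\aheap'_1}$ and $\pair{\astore}{\aheap_2} \approx^q_{\alpha_2} \pair{\astore'}{\aheap'_2}$. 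Applying the induction hypothesis to $\aformulabis$ at parameter $\alpha_1$ and to $\aformulabis'$ at parameter $\alpha_2$ yields $\pair{\astore'}{\aheap'_1} \models \aformulabis$ and $\pair{\astore'}{\aheap'_2} \models \aformulabis'$, whence $\pair{\astore'}{\aheap'} \models \aformulabis \separate \aformulabis'$.

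The genuine difficulty is thus entirely concentrated in Lemma~\ref{lemma-star}, which is already established; what remains for this theorem is the routine induction together with the arithmetic of splitting $\alpha$ into $\alpha_1 + \alpha_2$ so that each sub-parameter still dominates the memory size of the corresponding subformula. I expect no further obstacle, the one point requiring attention being that the induction hypothesis must be stated uniformly over all $\alpha$ rather than for a single fixed value, precisely because the $\separate$ case invokes it at the strictly smaller parameters $\alpha_1$ and $\alpha_2$.
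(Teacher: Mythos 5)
Your proposal is correct and follows essentially the same route as the paper's proof: structural induction with the hypothesis quantified over all admissible $\alpha$, base cases discharged via Lemma~\ref{lemma-atomic-formulae-test}, and the separating conjunction handled by splitting $\alpha$ into $\alpha_1+\alpha_2$ and invoking Lemma~\ref{lemma-star}. The only cosmetic difference is that you fix $\alpha_1 = \msize(\aformulabis)$ explicitly where the paper merely asserts the existence of a suitable split.
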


\begin{proof}
Assume that $\aformula$ is a formula with $\msize(\aformula) \leq \alpha$ and $\pair{\astore}{\aheap} \approx^q_\alpha \pair{\astore'}{\aheap'}$.
By structural induction we show that
 $\pair{\astore}{\aheap} \models \aformula$ iff $\pair{\astore'}{\aheap'} \models \aformula$.
It is sufficient to establish one direction of the equivalence thanks to its symmetry.
First, note that $\pair{\astore}{\aheap} \approx^q_\alpha \pair{\astore'}{\aheap'}$ implies that
$\pair{\astore}{\aheap}$ and $\pair{\astore'}{\aheap'}$ agree on the satisfaction of
Boolean combinations built over atomic formulae in $\Test(q, \alpha)$.
Indeed, at the atomic level, this is exactly what $\pair{\astore}{\aheap} \approx^q_\alpha \pair{\astore'}{\aheap'}$ means,
whereas for Boolean connectives $\neg$ and $\wedge$, this is also straighforward by induction.

The basic cases for the atomic formulae $\avariable_i \hpto \avariable_j$ and $\avariable_i = \avariable_j$  are immediate since these are
test formulae. For $\emp$ and $\reachplus(\avariable_i,\avariable_j)$ we use directly Lemma~\ref{lemma-atomic-formulae-test}.
Suppose $\aformulabis = \emp$. Then $\msize(\aformulabis) = 1 \leq \alpha$ and we can express $\emp$ with the Boolean
combination of test formulae
$$
{\lnot \sizeothers_q \geq 1} \land \bigwedge_{\mathclap{i \in \interval{1}{q}}} \lnot \alloc{\avariable_i},
$$
as shown in the proof of Lemma~\ref{lemma-atomic-formulae-test}. Notice that all the test formulae appearing in the formula above are from $\Test(q,1)$. Thus,
from $\pair{\astore}{\aheap} \approx^q_\alpha \pair{\astore'}{\aheap'}$
we conclude that  $\pair{\astore}{\aheap} \models \aformulabis$ iff $\pair{\astore'}{\aheap'} \models \aformulabis$.

Similarly, suppose that $\aformulabis = \reachplus(\avariable_i,\avariable_j)$.
Then $\msize(\aformulabis) = 1 \leq \alpha$ and we can express $\aformulabis$  with the Boolean combination of test formulae:
\[
\bigvee_{\mathclap{\substack{\aterm_1, \ldots, \aterm_n \in \Terms{q}, \\
{\rm pairwise \ distinct} \ \aterm_1, \ldots, \aterm_{n-1}, \\
\avariable_i = \aterm_1, \avariable_j = \aterm_n}}}
\textstyle\bigwedge_{\delta \in [1,n-1]} \sees_q(\aterm_{\delta},\aterm_{\delta+1}) \geq 1,
\]
as shown in the proof of Lemma~\ref{lemma-atomic-formulae-test}. As in the case for the formula $\emp$, notice that all the test formulae appearing in the formula above are from $\Test(q,1)$. Again,
from $\pair{\astore}{\aheap} \approx^q_\alpha \pair{\astore'}{\aheap'}$
we conclude that  $\pair{\astore}{\aheap} \models \aformulabis$ iff $\pair{\astore'}{\aheap'} \models \aformulabis$.

We omit the obvious cases with the Boolean connectives. Let us consider the last case $\aformulabis = \aformulabis_1 \separate \aformulabis_2$.
Suppose that $\pair{\astore}{\aheap} \models \aformulabis_1 \separate \aformulabis_2$ and
$\msize(\aformulabis_1 \separate \aformulabis_2) \leq \alpha$.
There are heaps $\aheap_1$ and $\aheap_2$ such that
$\aheap = \aheap_1 + \aheap_2$,
$\pair{\astore}{\aheap_1} \models \aformulabis_1$
and $\pair{\astore}{\aheap_2} \models \aformulabis_2$.
As $\alpha \geq \msize(\aformulabis_1 \separate \aformulabis_2) = \msize(\aformulabis_1) + \msize(\aformulabis_2)$,
there exist $\alpha_1$ and $\alpha_2$ such that $\alpha = \alpha_1 + \alpha_2$,
$\alpha_1 \geq \msize(\aformulabis_1)$ and $\alpha_2 \geq \msize(\aformulabis_2)$.
By  Lemma~\ref{lemma-star},
there exist heaps $\aheap'_1$ and $\aheap'_2$ such that
      $\aheap'=\aheap'_1 + \aheap'_2$,
      $\pair{\astore}{\aheap_1} \approx^q_{\alpha_1}  \pair{\astore'}{\aheap'_1}$ and
      $\pair{\astore}{\aheap_2} \approx^q_{\alpha_2}  \pair{\astore'}{\aheap'_2}$.
      By the induction hypothesis, we get  $\pair{\astore'}{\aheap_1'} \models \aformulabis_1$
and $\pair{\astore'}{\aheap_2'} \models \aformulabis_2$. Consequently, we obtain
$\pair{\astore'}{\aheap'} \models \aformulabis_1 \separate \aformulabis_2$.
\end{proof}

As an example, we can apply this result to the memory states  from Figure~\ref{figure-fourms}, page~\pageref{figure-fourms}.
We have already shown how we can distinguish $\pair{\astore_1}{\aheap_1}$ from $\pair{\astore_2}{\aheap_2}$
using a formula with only one separating conjunction.
Theorem~\ref{theo:test_sl_equiv} ensures that these two memory states do not satisfy the same set of test formulae for $\alpha \geq 2$.
Indeed, only $\pair{\astore_1}{\aheap_1}$ satisfies $\sees_q(\avariable_i,\avariable_j) \geq 2$. The same argument can be used with
$\pair{\astore_3}{\aheap_3}$ and $\pair{\astore_4}{\aheap_4}$: only $\pair{\astore_3}{\aheap_3}$ satisfies the test formula $m_q(\avariable_i,\avariable_j) \hpto
 m_q(\avariable_j,\avariable_i)$.
As a result, we can relate separation logic with classical logic, as advocated in the
works~\cite{Lozes04,Calcagno&Gardner&Hague05,Demrietal17,Echenim&Iosif&Peltier20} and shown with the following theorem.

\begin{theorem}
\label{theorem-quantifier-elimination}
Let $\aformula$ be a formula in $\seplogic{\separate,\reachplus}$ built over the variables in $\avariable_1$, \ldots, $\avariable_q$.
The formula $\aformula$ is logically equivalent to a Boolean combination of test formulae from $\testformulae{q}{\msize(\aformula)}$.
\end{theorem}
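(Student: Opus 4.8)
The plan is to leverage Theorem~\ref{theo:test_sl_equiv} together with the finiteness of $\Test(q,\alpha)$, converting the semantic invariance of $\aformula$ under $\approx^q_{\msize(\aformula)}$ into a syntactic description by a finite Boolean combination of test formulae. Fix $\alpha = \msize(\aformula)$. First I would observe that $\Test(q,\alpha)$ is finite: the terms range over the finite set $\Terms{q}$ (of cardinality $q + q^2$) and the thresholds $\beta$ range over $\interval{1}{\alpha}$, so each of the five families $\aterm = \aterm'$, $\alloc{\aterm}$, $\aterm \hpto \aterm'$, $\sees_q(\aterm,\aterm') \geq \beta+1$ and $\sizeothers_q \geq \beta$ contributes only finitely many formulae. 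Consequently the relation $\approx^q_\alpha$, being ``satisfies the same formulae of $\Test(q,\alpha)$'', has finite index.

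Next, to each memory state $\pair{\astore}{\aheap}$ I would associate its \emph{type}
\[
\tau_{\pair{\astore}{\aheap}} \egdef \bigwedge_{\substack{\aformulabis \in \Test(q,\alpha) \\ \pair{\astore}{\aheap} \models \aformulabis}} \aformulabis \ \land\ \bigwedge_{\substack{\aformulabis \in \Test(q,\alpha) \\ \pair{\astore}{\aheap} \not\models \aformulabis}} \lnot \aformulabis,
\]
which is a finite conjunction of test formulae and negations of test formulae, hence a Boolean combination over $\Test(q,\alpha)$. By construction, $\pair{\astore'}{\aheap'} \models \tau_{\pair{\astore}{\aheap}}$ holds if and only if $\pair{\astore'}{\aheap'} \approx^q_\alpha \pair{\astore}{\aheap}$, and since $\approx^q_\alpha$ has finite index only finitely many distinct types arise, one per equivalence class.

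I would then claim $\aformula \equiv \bigvee \{ \tau_{\pair{\astore}{\aheap}} \mid \pair{\astore}{\aheap} \models \aformula \}$, where the a priori infinite disjunction collapses to a finite one because only finitely many distinct types exist. For soundness, if $\pair{\astore'}{\aheap'} \models \aformula$ then it satisfies its own type, which appears in the disjunction. For completeness, if $\pair{\astore'}{\aheap'} \models \tau_{\pair{\astore}{\aheap}}$ for some model $\pair{\astore}{\aheap}$ of $\aformula$, then $\pair{\astore'}{\aheap'} \approx^q_\alpha \pair{\astore}{\aheap}$; as $\alpha = \msize(\aformula)$, Theorem~\ref{theo:test_sl_equiv} yields $\pair{\astore'}{\aheap'} \models \aformula$. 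The right-hand side is a Boolean combination of formulae from $\Test(q,\alpha) = \testformulae{q}{\msize(\aformula)}$, as required.

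The argument is essentially a standard type-elimination collapse, so I do not expect genuine difficulty here: all the real work is already carried by Lemma~\ref{lemma-star} and Theorem~\ref{theo:test_sl_equiv}. The only point demanding care is the finiteness of $\Test(q,\alpha)$, and hence of the index of $\approx^q_\alpha$, since this is precisely what guarantees that the disjunction of types is finite and therefore a legitimate Boolean combination; I would make sure to record the bound on $\card{\Terms{q}}$ and the restriction of the thresholds to $\interval{1}{\alpha}$ explicitly.
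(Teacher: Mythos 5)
Your proposal is correct and follows essentially the same route as the paper: the paper likewise forms, for each model of $\aformula$, the conjunction of satisfied test formulae and negations of unsatisfied ones (its $\literals{\astore,\aheap}$, your type $\tau_{\pair{\astore}{\aheap}}$), takes the disjunction over all models of $\aformula$, collapses it to a finite Boolean combination by finiteness of $\Test(q,\msize(\aformula))$, and invokes Theorem~\ref{theo:test_sl_equiv} for the completeness direction.
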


\begin{proof} The proof is rather standard.
Let $\alpha = \msize(\aformula)$.
Given a memory state $\pair{\astore}{\aheap}$, we write $\literals{\astore,\aheap}$ to denote the following
set of literals:
$$
\set{\aformulabis \in \Test(q,\alpha) \ \mid \ \pair{\astore}{\aheap} \models \aformulabis}
\cup
\set{\neg \aformulabis  \mid \ \pair{\astore}{\aheap} \not \models \aformulabis \ {\rm with} \ \aformulabis \in
\Test(q,\alpha)}.
$$
Since  $\Test(q,\alpha)$ is a finite set, $\literals{\astore,\aheap}$ is finite too and let us consider the well-defined
formula $\bigwedge_{\aformulabis \in\literals{\astore,\aheap}} \aformulabis$. We have the following equivalence:
$$
\pair{\astore'}{\aheap'} \models \bigwedge_{\mathclap{\aformulabis \in\literals{\astore,\aheap}}} \aformulabis  \quad \text{ iff } \quad
\pair{\astore}{\aheap} \indist{q}{\alpha} \pair{\astore'}{\aheap'}.
$$
The expression
$
\aformulabis' \egdef
\bigvee_{\pair{\astore}{\aheap}  \models \aformula} (\bigwedge_{\aformulabis \in\literals{\astore,\aheap}} \aformulabis)
$
is equivalent to a Boolean combination $\aformula'$ of formulae from $\Test(q,\alpha)$
because $\literals{\astore,\aheap}$ ranges over the finite
set of elements from $\Test(q,\alpha)$ (just select a finite amount of disjuncts).
Though the number of memory states is infinite, the number of formulae of the form 
$(\bigwedge_{\aformulabis \in\literals{\astore,\aheap}} \aformulabis)$ is finite and therefore, we understand $\aformulabis'$
as a finite disjunction. This is not a constructive way to define $\aformulabis'$ (which can be also done by some other means)
but this is sufficient for the existence. 
By Theorem~\ref{theo:test_sl_equiv}, the formula $\aformula$
is logically equivalent to $\aformulabis'$, which concludes the proof.
Indeed, suppose that $\pair{\astore}{\aheap} \models \aformula$. Obviously, we get $\pair{\astore}{\aheap} \models
\bigwedge_{\aformulabis \in\literals{\astore,\aheap}} \aformulabis$ and therefore $\pair{\astore}{\aheap} \models
 \aformulabis'$. Conversely, suppose that  $\pair{\astore}{\aheap} \models  \aformulabis'$.
This means that there is a memory state  $\pair{\astore'}{\aheap'}$ such that $\pair{\astore'}{\aheap'} \models \aformula$
and $\pair{\astore}{\aheap} \models \bigwedge_{\aformulabis \in\literals{\astore',\aheap'}} \aformulabis$.
Thus $\pair{\astore}{\aheap} \indist{q}{\alpha} \pair{\astore'}{\aheap'}$, $\threshold{\aformula} \leq \alpha$
and since $\pair{\astore'}{\aheap'} \models \aformula$,
by Theorem~\ref{theo:test_sl_equiv}
we get  $\pair{\astore}{\aheap} \models  \aformula$.
\end{proof}

It is now possible to establish a small heap  property of $\seplogic{\separate,\reachplus}$ by inheriting it from the small heap property for
Boolean combinations of test formulae, which is analogous to the small model property for other theories of singly linked lists,
see e.g.~\cite{Ranise&Zarba06,David&Kroening&Lewis15}. Indeed, following Lemma~\ref{lemma-test-graph}, now it is straightforward  to
derive an upper bound on the size of  a small model satisfying a formula in $\seplogic{\separate,\reachplus}$.

Let $\acompmap(q,n)$ be the polynomial $(q^2 + q) \cdot (n + 1) + n$ used in the sequel.

\begin{theorem}
\label{theorem-finite-heap-property}
Let $\aformula$ be a satisfiable $\seplogic{\separate,\reachplus}$ formula built over  $\avariable_1$, \ldots, $\avariable_q$.
There is  $\pair{\astore}{\aheap}$ such that $\pair{\astore}{\aheap} \models \aformula$
and
%% $\card{\domain{\aheap}} \leq (q^2 + q) \cdot (\length{\aformula} + 1) + \length{\aformula}$.
 $\card{\domain{\aheap}} \leq \acompmap(q,\length{\aformula})$.
\end{theorem}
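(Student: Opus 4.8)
The plan is to take any model $\pair{\astore}{\aheap}$ of $\aformula$ and extract from it an $\approx^q_\alpha$-equivalent model with few memory cells, where $\alpha = \msize(\aformula) \le \length{\aformula}$; Theorem~\ref{theo:test_sl_equiv} then guarantees that this smaller model still satisfies $\aformula$. The guiding observation is that $\approx^q_\alpha$-equivalence is entirely governed by the support-graph data of Lemma~\ref{lemma-test-graph}: two memory states are equivalent as soon as their support graphs are isomorphic with the between-sets and the $\grem$-counters compared only up to the threshold $\alpha$ (conditions~\ref{a4} and~\ref{a5}). Hence everything that exceeds $\alpha$ intermediate locations along a between-path, or inside $\grem$, is redundant and can be discarded without affecting the satisfaction of test formulae.

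Concretely, first I would fix a model $\pair{\astore}{\aheap}$ (which exists by satisfiability) and consider its support graph $(\gverts,\gedges,\galloc,\glabels,\gbtw,\grem)$ with respect to $q$. Then I would prune it by iterating two operations as long as they apply: (i) if some edge $\pair{\alocation}{\alocation'}\in\gedges$ has $\card{\gbtw(\alocation,\alocation')}>\alpha$, splice out one intermediate (non-labelled) location of the corresponding path by redirecting its predecessor's pointer to its successor; (ii) if $\card{\grem}>\alpha$, delete one cell whose location lies in $\grem$. Each operation strictly decreases $\card{\domain{\aheap}}$, so the process terminates, and I would argue that each step preserves $\approx^q_\alpha$. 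For this, the identity map on the (unchanged) set of labelled locations is shown to satisfy~\ref{a1}--\ref{a5} of Lemma~\ref{lemma-test-graph}: operation (i) leaves $\gverts,\gedges,\galloc,\glabels,\grem$ untouched and lowers exactly one between-count from a value $>\alpha$ to a value $\ge\alpha$, so the $\min(\alpha,\cdot)$ of~\ref{a4} is preserved; operation (ii) leaves the labelled structure and every between-set untouched and lowers $\card{\grem}$ from a value $>\alpha$ to $\ge\alpha$, preserving the $\min(\alpha,\cdot)$ of~\ref{a5}.

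When no operation applies, the resulting state $\pair{\astore'}{\aheap'}$ satisfies $\pair{\astore}{\aheap}\approx^q_\alpha\pair{\astore'}{\aheap'}$ (by transitivity of $\approx^q_\alpha$) and has $\card{\gbtw'(\alocation,\alocation')}\le\alpha$ for every edge and $\card{\grem'}\le\alpha$. It then remains to count, using the partition of $\domain{\aheap'}$ into $\galloc'$, the sets $\gbtw'(\alocation,\alocation')$ for $\pair{\alocation}{\alocation'}\in\gedges'$, and $\grem'$ noted right after Definition~\ref{definition-support-graph}. The labelled locations are the $q$ variables together with the at most $q^2$ meet-points $m_q(\avariable_i,\avariable_j)$, so $\card{\gverts'}\le q^2+q$ and $\galloc'\subseteq\gverts'$; since $\gedges'$ is functional there are at most $q^2+q$ edges, each contributing at most $\alpha$ intermediate cells. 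Altogether $\card{\domain{\aheap'}}\le (q^2+q)+(q^2+q)\alpha+\alpha=(q^2+q)(\alpha+1)+\alpha$, which is at most $\acompmap(q,\length{\aformula})$ because $\alpha\le\length{\aformula}$.

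The main obstacle is the correctness of the two pruning operations, namely verifying that neither of them creates or destroys labelled locations and that they leave the support-graph structure on labelled locations intact. For operation (i) one must check that redirecting a pointer merely shortens the unique non-labelled segment between two consecutive labelled locations while preserving all reachabilities among variables and meet-points; the key structural fact is that distinct reachability paths can only merge at a location that is itself a meet-point, hence labelled, so spliced locations genuinely lie on a single segment and their removal promotes no new meet-point. For operation (ii) one must check that a $\grem$ location lies on no path between two labelled locations and is itself no meet-point, so that removing its cell alters no reachability relevant to the test formulae. Once these invariants are established, the polynomial bound follows from the elementary counting above.
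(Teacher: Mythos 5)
Your proposal is correct and follows essentially the same route as the paper: both prune a given model down to the structure dictated by the test formulae --- keeping all labelled locations, truncating each between-path and the set $\grem$ at the threshold $\alpha = \msize(\aformula)$ --- and then invoke Theorem~\ref{theo:test_sl_equiv} to conclude, with the same count $(q^2+q)(\alpha+1)+\alpha$. The only difference is presentational: the paper builds the small heap in one pass (selecting $\min(\alpha,\cdot)$ locations per between-set and per $\grem$ and rewiring), whereas you remove one cell at a time and re-verify conditions \ref{a1}--\ref{a5} of Lemma~\ref{lemma-test-graph} after each step, which amounts to the same verification.
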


%Let $\alpha = \msize(\aformula) \leq \length{\aformula}$. Here is the idea of the proof.
%Assuming that $\pair{\astore}{\aheap} \models \aformula$, we construct a subheap $\aheap'$
%such that $\pair{\astore}{\aheap'} \models \aformula$. For the  satisfaction of some test formula $v \hpto v'$ and $\sees_q(v,v')\geq \beta + 1$
%for some maximal $\beta$ among $\interval{1}{\alpha}$, we do not
%require more than $\beta + 1$ memory cells (since $\card{\Terms{q}} = q^2 + q$, in total we may need $(q^2 + q)^2 \times (\alpha + 1)$
%memory cells). Similarly, for the satisfaction of  $\sizeothers_q \geq \beta$ for some maximal $\beta$ among $\interval{1}{\alpha}$,
%we may require at most $\beta$ memory cells (in the worst case, $\beta = \alpha$). We may need also at most $q$ memory cells for the
%satisfaction of the test formulae $\alloc{v}$. Note that in the computation of the upper bound there is no need to take into account $\alloc{v}$, since the part of the upper bound dedicated to test formulae of the form $v \hpto v'$ is sufficient. Consequently, the finite heap property for $\seplogic{\separate,\reachplus}$ is inherited from
%the finite heap property for the Boolean combinations of test formulae, which is analogous to the finite
%model property for other theories of singled-linked lists, see e.g.~\cite{Ranise&Zarba06,David&Kroening&Lewis15}.

\begin{proof}
  Let $\aformula$ be a formula built over $\avariable_1,\dots,\avariable_q$ with $\alpha = \msize(\aformula) \leq \length{\aformula}$ and
  let $\pair{\astore}{\aheap}$ be a memory state satisfying $\aformula$.
  Let $(\gverts,\gedges,\galloc,\glabels,\gbtw,\grem)$ be equal to $\supportgraph{\astore,\aheap}$ with respect to $q$
  and $\alocation^*$ be an arbitrary location not in the domain of
  $\aheap$.
  We construct a heap $\aheap'$ such that 
  %% $\card{\domain{\aheap'}} \leq (q^2 + q) \cdot (\length{\aformula} + 1) + \length{\aformula}$ 
  $\card{\domain{\aheap'}} \leq \acompmap(q,\length{\aformula})$ 
and
  $\pair{\astore}{\aheap'} \models \aformula$.
  \begin{enumerate}
  
    \item Let $R \subseteq \grem$ be a set of $\min(\alpha,\card{\grem})$ locations.
     For all $\alocation \in \grem$, $\alocation \in \domain{\aheap'} \equivdef \alocation \in R$ and
     for all $\alocation \in R$, we have $\aheap'(\alocation) \egdef \alocation^*$.
     As it will be soon clear, these locations in $R$ will happen to be the only ones in $\grem'$ (from $\aheap'$).
     In that way,  $\pair{\astore}{\aheap}$ and $\pair{\astore}{\aheap'}$ shall agree on all test formulae of the form
     $\sizeothers_q \geq \beta$ with $\beta \in \interval{1}{\alpha}$.

    \item For all $\pair{\alocation}{\alocation'} \in \gedges$,
     let
     $L = \{\alocation_1,\dots,\alocation_A\}$ be a set of $A = \min(\alpha,\card{\gbtw(\alocation,\alocation')})$ locations in $\gbtw(\alocation,\alocation')$. 
     Notice that if $\card{\gbtw(\alocation,\alocation')} \geq \alpha$, then $A = \alpha$, else $A = \card{\gbtw(\alocation,\alocation')}$.
     Then, for all $\bar{\alocation} \in \gbtw(\alocation,\alocation')$, $\bar{\alocation} \in \domain{\aheap'} \equivdef \bar{\alocation} \in L$.
     Moreover $\aheap'(\alocation) \egdef \alocation_1$ (and therefore $\alocation \in \domain{\aheap'}$), $\aheap'(\alocation_\alpha) \egdef \alocation'$
     and for all $i \in \interval{1}{\alpha-1}$, $\aheap'(\alocation_i) \egdef \alocation_{i+1}$.
     Since all the locations in $\gbtw(\alocation,\alocation')$ are not labelled locations and we preserve the existence of paths between labelled locations,
     this step guarantees  that for all $\aterm \in \Terms{q}$, we have $\sem{\aterm}^q_{\astore,\aheap} = \sem{\aterm}^q_{\astore,\aheap'}$.
     This implies that $\pair{\astore}{\aheap}$ and $\pair{\astore}{\aheap'}$ satisfy the same set of test formulae of the form $\aterm = \aterm'$,
     where $\aterm,\aterm' \in \Terms{q}$.
     Furthermore, the path from $\alocation$ to $\alocation'$ in $\aheap'$  has length $\min(\alpha,\card{\gbtw(\alocation,\alocation')}) + 1$.
     Consequently, for all $\beta \in \interval{1}{\alpha}$ and for all $\aterm,\aterm'\in \Terms{q}$,
     $\pair{\astore}{\aheap'} \models \sees_q(\aterm,\aterm') \geq \beta + 1$ if and only if $\pair{\astore}{\aheap} \models \sees_q(\aterm,\aterm') \geq \beta + 1$,
     and these two memory states  agree on the test formulae of the form $\aterm \hpto \aterm'$.
     As $\card{\Terms{q}} = q^2 + q$, with this construction, each path between two labelled locations  has at most length $\alpha + 1$.
     Additionally, the heap graphs are functional, and therefore $\pair{\astore}{\aheap} \models  \sees_q(\aterm,\aterm_1) \geq \beta_1 +1 \wedge
     \sees_q(\aterm,\aterm_2) \geq \beta_2 +1$ implies $\aterm_1 = \aterm_2$, which entails that this step adds less than
     $(q^2 + q)(\length{\aformula}+1)$ locations to  $\domain{\aheap'}$.
     \item Lastly, for all $\alocation \in \galloc$, $\aheap'(\alocation) \egdef \aheap(\alocation)$ and therefore $\alocation \in \domain{\aheap'}$.
      So, for all $\aterm \in \Terms{q}$ $\pair{\astore}{\aheap'} \models \alloc{\aterm}$ if and only if $\pair{\astore}{\aheap} \models \alloc{\aterm}$.
      After this step, this implies that  the two memory states  satisfy the same set of test formulae. Note that in the computation of the upper bound,
      there is no need to take into account the location introduced in this step, since the upper bound mentioned in the previous step already includes this case.
  \end{enumerate}
  It follows that $\aheap'$ is a heap such that $\card{\domain{\aheap'}} \leq (q^2 + q) \cdot (\length{\aformula} + 1) + \length{\aformula}$ and
  $\pair{\astore}{\aheap'} \models \aformula$.
\end{proof}

\subsection{Complexity  upper bounds}\label{section:complexity-upper-bounds}
Let us draw some consequences of Theorem~\ref{theorem-finite-heap-property}.
First, for the logic $\seplogic{\separate,\reachplus}$, we get a \pspace upper bound, which matches
the lower bound for $\seplogic{\separate}$~\cite{Calcagno&Yang&OHearn01}.
Indeed, $\seplogic{\separate}$ is a syntactic fragment of $\seplogic{\separate,\reachplus}$
and the satisfiability problem for $\seplogic{\separate}$ is shown \pspace-hard in~\cite{Calcagno&Yang&OHearn01}.

\begin{theorem}
\label{theorem-pspace}
The satisfiability problem for $\seplogic{\separate, \reachplus}$ is \pspace-complete.
\end{theorem}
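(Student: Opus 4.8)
The plan is to establish the two bounds separately. The \pspace-hardness is inherited directly: since $\seplogic{\separate}$ is a syntactic fragment of $\seplogic{\separate,\reachplus}$ and its satisfiability problem is already \pspace-hard~\cite{Calcagno&Yang&OHearn01}, the lower bound follows immediately, and the identity embedding suffices.

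For the \pspace upper bound, I would exploit the small heap property of Theorem~\ref{theorem-finite-heap-property}. First I would observe that, by the isomorphism-invariance of the logic (Lemma~\ref{lemma-equivalence-isomorphism}), whenever $\aformula$ is satisfiable it admits a model $\pair{\astore}{\aheap}$ whose heap domain contains at most $\acompmap(q,\length{\aformula})$ cells and in which, after relabelling, all relevant locations are drawn from an initial segment of $\LOC = \Nat$ of polynomial size. Such a memory state can be written down in space polynomial in $\length{\aformula}$. The satisfiability algorithm then nondeterministically guesses such a memory state and verifies $\pair{\astore}{\aheap} \models \aformula$ by a model-checking procedure.

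The heart of the argument is thus to show that the model-checking problem for $\seplogic{\separate,\reachplus}$ is in \pspace. I would proceed by recursion on the structure of $\aformula$, keeping the current memory state on the work tape. The atomic cases are cheap: $\avariable = \avariablebis$ and $\avariable \hpto \avariablebis$ are immediate lookups, $\emp$ tests emptiness of the domain, and $\reachplus(\avariable,\avariablebis)$ amounts to a reachability query in the functional graph of $\aheap$, decidable in polynomial time on a polynomially sized heap. Boolean connectives recurse directly. The only delicate case is $\aformulabis_1 \separate \aformulabis_2$: here the procedure must range over all decompositions $\aheap = \aheap_1 + \aheap_2$ of the domain, recursing on $\aformulabis_1$ with $\aheap_1$ and on $\aformulabis_2$ with $\aheap_2$. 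Although the number of such decompositions is exponential, each is a subset of the polynomially sized domain and is representable in polynomial space; by enumerating them one at a time and reusing the work space between iterations, no more than polynomial space is consumed at each level.

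The main obstacle is the space accounting for the separating conjunction. The recursion depth is bounded by $\length{\aformula}$, and at each level one stores a heap of polynomial size together with a pointer into the enumeration of subheaps, so the total space stays polynomial in $\length{\aformula}$. This yields a nondeterministic polynomial-space decision procedure for satisfiability, and by Savitch's theorem nondeterministic polynomial space coincides with \pspace, giving the upper bound. Combining with the hardness result establishes \pspace-completeness.
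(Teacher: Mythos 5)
Your proposal is correct and follows essentially the same route as the paper: \pspace-hardness inherited from $\seplogic{\separate}$, the small heap property of Theorem~\ref{theorem-finite-heap-property} to bound the guessed model, a recursive polynomial-space model-checking procedure (with the separating conjunction handled by enumerating decompositions one at a time), and Savitch's theorem to conclude. The extra detail you give on the space accounting for $\separate$ and on relabelling locations into an initial segment of $\Nat$ is consistent with, and merely makes explicit, what the paper's proof leaves implicit.
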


%%
%% SD 04092018 (definition provided a bit earlier)
%%
%% Let $\acompmap(q,n)$ be the polynomial $(q^2 + q) \cdot (n + 1) + n$ used in the sequel.
%%

\begin{proof} Let $\aformula$ be a formula in  $\seplogic{\separate, \reachplus}$ built over $\avariable_1,\dots,\avariable_q$.
By Theorem~\ref{theorem-finite-heap-property},
$\aformula$ is satisfiable if and only if there is a memory state satisfying $\aformula$ with
$\card{\domain{\aheap}} \leq \acompmap(q,\length{\aformula})$. The non-deterministic polynomial-space
algorithm (leading to the \pspace upper bound by Savitch's Theorem~\cite{Savitch70}) works as follows. First, guess a heap $\aheap$ with
$\card{\domain{\aheap}} \leq \acompmap(q,\length{\aformula})$ and
$\domain{\aheap} \cup \range{\aheap} \subseteq  \interval{0}{2 \times \acompmap(q,\length{\aformula})}$
 and a store restricted
to $\avariable_1, \ldots, \avariable_q$ such that $\range{\astore} \subseteq \interval{0}{2 \times \acompmap(q,\length{\aformula}) + q}$
(in the worst case, all the variables have different values, and the memory cells have different values too).

Then, checking whether $\pair{\astore}{\aheap} \models \aformula$ can be done in polynomial-space as for
the standard $\seplogic{\separate}$  by using a recursive algorithm that internalises the semantics (see e.g.~\cite{Calcagno&Yang&OHearn01}): the recursive depth
is linear and at each call, the algorithm uses at most linear space in the size of $\pair{\astore}{\aheap}$ and $\aformula$,
which is polynomial in $\length{\aformula}$. We only need to guarantee that $\pair{\astore}{\aheap} \models \reachplus(\avariable,\avariablebis)$
can be checked in polynomial space, actually this can be done in polynomial time in the size of $\pair{\astore}{\aheap}$, that is
therefore in polynomial space in $\length{\aformula}$.
\end{proof}

Besides, we may consider restricting the usage of Boolean connectives.
Let us briefly define the symbolic heap fragment formulae $\aformula$ as conjunctions
$\apureformula \wedge \aspatialformula$
where $\apureformula$ is a
\defstyle{pure formula} and $\aspatialformula$ is a \defstyle{spatial formula} $\aspatialformula$:
\begin{align*}
\apureformula &::= \perp  \ \mid \ \top  \ \mid \ \avariable_i = \avariable_j \ \mid \  \neg (\avariable_i = \avariable_j) \ \mid \
\apureformula \wedge \apureformula\\
\aspatialformula &::= \emptyconstant \ \mid \ \top \ \mid \ \avariable_i \mapsto \avariable_j \ \mid \ \ls(\avariable_i,\avariable_j) \ \mid \
\aspatialformula \separate \aspatialformula
\end{align*}
As usual,  $\avariable_i \mapsto \avariable_j$ is interpreted as the exact points-to relation.
We  write $\mathsf{Bool(SHF)}$ to denote the set of  Boolean combinations
of  formulae from the symbolic heap fragment~\cite{Berdine&Calcagno&OHearn04}.
A \ptime upper bound
for the entailment/satisfiability problem for the symbolic heap fragment is successfully solved in~\cite{Cooketal11,Haaseetal13},
%% (and the lower bound is still open),
whereas the satisfiability problem for a slight variant of $\mathsf{Bool(SHF)}$
is shown in \np in~\cite[Theorem 4]{Piskac&Wies&Zufferey13}.
This \np upper bound can be obtained as a by-product of
Theorem~\ref{theorem-finite-heap-property}.

\begin{corollary}
\label{corollary-bool-shf}
The satisfiability problem for $\mathsf{Bool(SHF)}$
is \np-complete.
\end{corollary}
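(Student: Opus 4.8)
The plan is to derive the \np upper bound directly from the small-heap property (Theorem~\ref{theorem-finite-heap-property}), the matching lower bound being already available from the result recalled in Section~\ref{section-preliminaries} (item (III), due to~\cite{Cooketal11,Piskac&Wies&Zufferey13}). First I would observe that every formula of $\mathsf{Bool(SHF)}$ is, up to a linear-size rewriting, a formula of $\seplogic{\separate,\reachplus}$: the pure atoms already belong to the language, the exact points-to $\avariable_i \mapsto \avariable_j$ is the abbreviation $(\avariable_i \hpto \avariable_j) \wedge \size = 1$ from Section~\ref{section-preliminaries}, and $\ls$ is definable from $\reachplus$ via the equivalence
\[
\ls(\avariable,\avariablebis) \equiv (\avariable = \avariablebis \wedge \emp) \vee (\reachplus(\avariable,\avariablebis) \wedge \neg(\neg \emp \separate \reachplus(\avariable,\avariablebis)))
\]
recalled in the same section. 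Consequently, a $\mathsf{Bool(SHF)}$ formula $\aformula$ over $\avariable_1,\dots,\avariable_q$ is satisfiable iff its (linear-size) $\seplogic{\separate,\reachplus}$ translation is, and by Theorem~\ref{theorem-finite-heap-property} this holds iff $\aformula$ admits a model $\pair{\astore}{\aheap}$ with $\card{\domain{\aheap}} \leq \acompmap(q,\length{\aformula})$, i.e. of polynomially bounded size.

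This immediately suggests a guess-and-check procedure: nondeterministically guess a memory state $\pair{\astore}{\aheap}$ whose domain and range are bounded (as in the proof of Theorem~\ref{theorem-pspace}) by twice $\acompmap(q,\length{\aformula})$, with $\astore$ restricted to $\avariable_1,\dots,\avariable_q$, and then verify $\pair{\astore}{\aheap} \models \aformula$ deterministically. The crux of the argument is that, for formulae of $\mathsf{Bool(SHF)}$, this verification runs in deterministic \emph{polynomial} time, rather than merely \pspace as for general $\seplogic{\separate,\reachplus}$ (Theorem~\ref{theorem-pspace}). Since the outer Boolean structure is handled by a straightforward recursion, it suffices to model-check a single symbolic heap $\apureformula \wedge \aspatialformula$ in polynomial time; the pure part $\apureformula$ is decided immediately.

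The main obstacle is therefore model-checking the spatial part $\aspatialformula = A_1 \separate \cdots \separate A_k$, since a separating conjunction a priori ranges over exponentially many partitions of $\aheap$. The key observation is that every spatial atom except $\top$ is \emph{precise}: given $\pair{\astore}{\aheap}$, the atom $\emp$ is satisfied only by the empty subheap, $\avariable \mapsto \avariablebis$ only by the singleton $\set{\astore(\avariable) \mapsto \astore(\avariablebis)}$, and $\ls(\avariable,\avariablebis)$ only by the unique simple path obtained by following the pointers of $\aheap$ from $\astore(\avariable)$ until $\astore(\avariablebis)$ is reached (any subheap $\aheap_i \sqsubseteq \aheap$ satisfying $\ls$ must coincide with $\aheap$ on its domain, so the segment is forced). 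Hence each precise atom carves out a uniquely determined forced subheap of $\aheap$, computable in polynomial time. I would then check that these forced subheaps are pairwise disjoint and included in $\aheap$, and finally that the cells of $\aheap$ left uncovered can be absorbed by the $\top$-atoms, which is possible exactly when some $\top$ occurs in $\aspatialformula$ or no cell remains (recall $\top$ also holds on the empty heap, so several $\top$-atoms cause no combinatorial blow-up). Altogether the satisfaction of $\apureformula \wedge \aspatialformula$, and thus of $\aformula$, is decided in polynomial time, giving membership in \np. Combined with the \np-hardness recalled in Section~\ref{section-preliminaries}, this yields \np-completeness.
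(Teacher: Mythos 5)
Your proposal is correct and follows essentially the same route as the paper: reduce to $\seplogic{\separate,\reachplus}$ via the same rewritings, invoke the small-heap property of Theorem~\ref{theorem-finite-heap-property} to bound the guessed model polynomially, and verify the guess in deterministic polynomial time. The only difference is that where the paper delegates the polynomial-time model checking of spatial formulae to \cite[Lemma 1]{Berdine&Calcagno&OHearn04}, you supply the underlying precision argument explicitly, which is a correct justification of that cited fact rather than a different proof strategy.
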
%%
\begin{proof}
The \np-hardness is obtained thanks to the presence of equalities and Boolean connectives.
Indeed, let us show how to simply reduce SAT to the satisfiability problem for $\mathsf{Bool(SHF)}$.
Let $\aformula$ be a formula from the propositional calculus built over the propositional variables
$\avarprop_1$, \ldots, $\avarprop_n$. Let us define the translation $\atranslation$
such that $\atranslation(\avarprop_i) \egdef \avariable_i = \avariablebis_i$
($\avariable_i$ and $\avariablebis_i$ are program variables dedicated to $\avarprop_i$) and
$\atranslation$ is homomorphic for Boolean connectives. It is easy to prove that 
$\aformula$ is satisfiable iff $\atranslation(\aformula)$ is satisfiable.

As far as the complexity upper bound is concerned,
let $\aformula$ be a Boolean combination of pure or spatial formulae built over $\avariable_1,\dots,\avariable_q$.
By Theorem~\ref{theorem-finite-heap-property},
$\aformula$ is satisfiable iff there is $\pair{\astore}{\aheap}$ satisfying $\aformula$ with
$\card{\domain{\aheap}} \leq \acompmap(q,\length{\aformulabis})$, where $\aformulabis$ is the translation of $\aformula$ where
\begin{itemize}
  \item every occurrences of $\avariable_i \pto \avariable_j$ is rewritten as the equivalent formula $\avariable_i \hpto \avariable_j \land \size = 1$, and
  \item every $\ls(\avariable_i,\avariable_j)$ is rewritten as the equivalent formula
  \[
    (\avariable_i = \avariable_j \land \emp) \lor (\avariable_i \neq \avariable_j \land \reachplus(\avariable_i,\avariable_j) \land \lnot (\lnot \emp \sepcnj \reachplus(\avariable_i,\avariable_j))).
  \]
\end{itemize}
This technicality is introduced as Theorem~\ref{theo:test_sl_equiv} requires $\aformula$ to be in $\seplogic{\separate,\reachplus}$ and, as such, the formula $\aformulabis$ is not then used to check for satisfiability.

The non-deterministic polynomial-time
algorithm  works as follows. Similarly to the proof of Theorem~\ref{theorem-pspace}, guess a heap $\aheap$ with
$\card{\domain{\aheap}} \leq \acompmap(q,\length{\aformulabis})$,
$\domain{\aheap} \cup \range{\aheap} \subseteq  \interval{0}{2 \times \acompmap(q,\length{\aformulabis})}$
and a store restricted to $\avariable_1,\dots,\avariable_q$ and such that  $\range{\astore} \subseteq \interval{0}{2 \times \acompmap(q,\length{\aformulabis}) + q}$.
%Guess also a valuation for the pure and spatial formulae and check whether that valuation satisfies $\aformula$.
Then, confirming that the valuation is correct can be done in \ptime: checking whether a pure formula is satisfied by
$\pair{\astore}{\aheap}$ can be done in linear time. Similarly, checking whether a spatial formula $\aspatialformula$
is satisfied by $\pair{\astore}{\aheap}$ can be done in \ptime~\cite[Lemma 1]{Berdine&Calcagno&OHearn04}.
So, the satisfiability problem for Boolean combinations of symbolic heap fragment formulae can be solved in \np.
\end{proof}
%%
%%
%%
%% \subsection{Another \pspace fragment of $\seplogic{\separate, \magicwand, \reachplus}$ }

We have seen that we can take advantage of the small heap property to derive complexity results for fragments of
$\seplogic{\separate,\reachplus}$. However,  it is also possible to push further the \pspace upper bound
by allowing occurrences of $\magicwand$ in a controlled way
(as unrestricted use of the magic wand leads to
undecidability, see Theorem~\ref{theorem-main-undecidability}).
The following result can be shown thanks to Proposition~\ref{proposition-prop-sl} and
Lemma~\ref{lemma-atomic-formulae-test}.
\begin{lemma}
\label{lemma-psl-to-test}
Let $\aformula$ be in $\seplogic{\separate,\magicwand}$ built over  $\avariable_1$, \ldots, $\avariable_q$.
The formula $\aformula$ is logically equivalent to a Boolean combination of test formulae from $\testformulae{q}{2\length{\aformula}}$.
\end{lemma}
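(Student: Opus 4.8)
The plan is to compose the classical normal form for $\seplogic{\separate,\magicwand}$ supplied by Proposition~\ref{proposition-prop-sl} with the translation of the reachability/size atoms furnished by Lemma~\ref{lemma-atomic-formulae-test}. First I would apply Proposition~\ref{proposition-prop-sl} to rewrite $\aformula$ as a Boolean combination $\aformulabis$ of formulae among $\size \geq \beta$, $\alloc{\avariable_i}$, $\avariable_i \hpto \avariable_j$ and $\avariable_i = \avariable_j$, with $i,j \in \interval{1}{q}$ and $\beta \geq 0$. Three of these four families are already atomic test formulae: since every program variable belongs to $\Terms{q}$, the formulae $\avariable_i = \avariable_j$, $\alloc{\avariable_i}$ and $\avariable_i \hpto \avariable_j$ are literally members of $\Test(q,\alpha)$ for every $\alpha \geq 1$. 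I would check that their support-graph semantics from Definition~\ref{definition-support-graph} coincides with their ordinary separation-logic semantics; this is immediate, since each $\astore(\avariable_i)$ is a labelled location, so $\galloc$, $\glabels$ and the length-one edges of $\gedges$ record exactly membership of $\astore(\avariable_i)$ in $\domain{\aheap}$, equality of the interpretations, and the relation $\aheap(\astore(\avariable_i)) = \astore(\avariable_j)$.

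It then remains to treat the size constraints $\size \geq \beta$. The case $\beta = 0$ is the tautology $\true$, and for $\beta \geq 1$ Lemma~\ref{lemma-atomic-formulae-test} provides, whenever $\beta \leq \alpha$, a Boolean combination of test formulae from $\Test(q,\alpha)$ logically equivalent to $\size \geq \beta$. Substituting each such equivalence, together with the identifications above, into $\aformulabis$ yields a Boolean combination of test formulae from $\Test(q,\alpha)$ equivalent to $\aformula$. Because replacing atoms by Boolean combinations of test formulae inside a Boolean combination again produces a Boolean combination of test formulae, this is exactly the shape required by the statement.

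The one point that genuinely needs care, and which I expect to be the main obstacle, is the choice of threshold $\alpha$: every occurrence of $\size \geq \beta$ in $\aformulabis$ must satisfy $\beta \leq \alpha$ for Lemma~\ref{lemma-atomic-formulae-test} to apply, while the statement commits to $\alpha = \length{\aformula}$. The excerpt records only the rough bound $\beta \leq 2\length{\aformula}$ after Proposition~\ref{proposition-prop-sl}, so I would need to verify that a threshold of $\length{\aformula}$ already suffices. Sharpening the count behind Proposition~\ref{proposition-prop-sl} should give this: the worst case is an iterated $\neg\emp \separate \cdots \separate \neg\emp$ with $k$ conjuncts, whose tree size is $2k-1$ and which forces only $\size \geq k$, so the largest expressible threshold grows like $\length{\aformula}/2$ rather than $2\length{\aformula}$. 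Once it is confirmed that all thresholds $\beta$ appearing in the normal form lie in $\interval{1}{\length{\aformula}}$, the substitution lands inside $\testformulae{q}{\length{\aformula}}$ and the proof is complete.
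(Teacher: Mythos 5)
Your proposal follows exactly the paper's proof: apply Proposition~\ref{proposition-prop-sl} to rewrite $\aformula$ as a Boolean combination of $\avariable_i = \avariable_j$, $\alloc{\avariable_i}$, $\avariable_i \hpto \avariable_j$ and $\size \geq \beta$ (the first three being already test formulae), then replace each $\size \geq \beta$ by the equivalent Boolean combination from Lemma~\ref{lemma-atomic-formulae-test}. Your concern about the threshold is legitimate and well spotted --- the paper's own three-line proof silently assumes $\beta \leq \length{\aformula}$ even though the remark after Proposition~\ref{proposition-prop-sl} only records the rough bound $2 \times \length{\aformula}$ --- but the discrepancy is harmless for the intended use in Theorem~\ref{theorem-boolean-combination}, where only a polynomial bound on the threshold matters.
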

\begin{proof}
First, translate $\aformula$ into a Boolean combination of formulae from
\begin{mathpar}
  \avariable_i = \avariable_j

  \alloc{\avariable_i}

  \avariable_i \hpto \avariable_j

  \size \geq \beta,
\end{mathpar}
as stated in Proposition~\ref{proposition-prop-sl} and $\beta \leq 2\length{\aformula}$.
Then, rewrite every occurrence of $\size \geq \beta$ into the equivalent Boolean combination of test formulae shown in Lemma~\ref{lemma-atomic-formulae-test}. The resulting formula is in $\testformulae{q}{2\length{\aformula}}$.
\end{proof}
Let $\seplogic{\separate, \reachplus, \bigcup_{q,\alpha} \Test(q,\alpha)}$ be the extension
of $\seplogic{\separate, \reachplus}$ augmented with the test formulae. The memory size function is extended as follows.
\begin{multicols}{2}
\begin{itemize}
\item $\msize(\alloc{\aterm}) \egdef 1$,
\item $\msize(\sees_q(\aterm,\aterm') \geq \beta + 1) \egdef \beta + 1$,
\item $\msize(\aterm \hpto \aterm') \egdef 1$,
\item $\msize(\sizeothers \geq \beta) \egdef \beta$.
\end{itemize}
\end{multicols}
When formulae are encoded as trees, we have $1 \leq \msize(\aformula) \leq \length{\aformula} \alpha_{\aformula}$ where
$\alpha_{\aformula}$ is the maximal constant in $\aformula$.
Theorem~\ref{theorem-quantifier-elimination} admits a counterpart for
$\seplogic{\separate, \reachplus, \bigcup_{q,\alpha} \Test(q,\alpha)}$
and consequently,
any formula built over  $\avariable_1$, \ldots, $\avariable_q$
can be shown
%% logically
equivalent to a Boolean combination of test formulae from $\testformulae{q}{ \length{\aformula} \alpha_{\aformula}}$.
By Theorem~\ref{theorem-finite-heap-property}, any satisfiable formula has therefore a model with
$
\card{\domain{\aheap}} \leq (q^2 + q)\cdot (\length{\aformula} \alpha_{\aformula} + 1) + \length{\aformula} \alpha_{\aformula}$.
Hence, the satisfiability problem for $\seplogic{\separate, \reachplus, \bigcup_{q,\alpha} \Test(q,\alpha)}$
is  in \pspace when the constants are encoded in unary.
Then, we conclude by stating the new \pspace upper bound for Boolean combinations of formulae from
$\seplogic{\separate, \magicwand} \cup \seplogic{\separate, \reachplus}$.

\begin{theorem}
\label{theorem-boolean-combination}
The satisfiability problem for  Boolean combinations of formulae from
$\seplogic{\separate, \magicwand} \cup \seplogic{\separate, \reachplus}$
is \pspace-complete.
\end{theorem}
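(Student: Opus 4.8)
The plan is to reuse the test-formula machinery of the previous subsection and reduce the whole problem to a guess-and-check over a small model. The \pspace lower bound is immediate, since $\seplogic{\separate}$ is a syntactic fragment of this logic and is already \pspace-hard by~\cite{Calcagno&Yang&OHearn01}; so the work lies entirely in the \pspace upper bound.

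For the upper bound, I would first fix the program variables $\avariable_1, \ldots, \avariable_q$ occurring in the input formula $\aformula$ and set $\alpha \egdef \length{\aformula}$. The key observation is that every Boolean-leaf subformula of $\aformula$ is a formula either of $\seplogic{\separate,\magicwand}$ or of $\seplogic{\separate,\reachplus}$, and in both cases it is logically equivalent to a Boolean combination of test formulae drawn from $\testformulae{q}{\alpha}$: for the $\seplogic{\separate,\magicwand}$ leaves this is Lemma~\ref{lemma-psl-to-test}, and for the $\seplogic{\separate,\reachplus}$ leaves this is Theorem~\ref{theorem-quantifier-elimination}, using $\msize(\cdot) \leq \length{\cdot} \leq \alpha$ together with the inclusion $\testformulae{q}{\beta} \subseteq \testformulae{q}{\alpha}$ for $\beta \leq \alpha$. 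Substituting these equivalences into the Boolean skeleton shows that $\aformula$ is logically equivalent to a Boolean combination $\aformula^{\star}$ of test formulae from $\testformulae{q}{\alpha}$. I stress that $\aformula^{\star}$ is never built explicitly, as it may be exponentially large; it is used only semantically, its mere existence telling us that the truth of $\aformula$ at a memory state depends on that state only through the equivalence relation $\indist{q}{\alpha}$ (immediate from the definition of $\indist{q}{\alpha}$, since test formulae and Boolean combinations thereof are preserved).

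Next I would derive a polynomial small-model property. Because $\aformula$ is invariant under $\indist{q}{\alpha}$, I can rerun verbatim the construction from the proof of Theorem~\ref{theorem-finite-heap-property}: starting from any model $\pair{\astore}{\aheap}$ of $\aformula$, that construction yields a memory state $\pair{\astore}{\aheap'}$ with $\pair{\astore}{\aheap} \indist{q}{\alpha} \pair{\astore}{\aheap'}$ and $\card{\domain{\aheap'}} \leq \acompmap(q,\alpha)$, whence $\pair{\astore}{\aheap'} \models \aformula$. Thus $\aformula$ is satisfiable if and only if it has a model whose heap domain has size at most $\acompmap(q,\length{\aformula})$, a polynomial bound. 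The nondeterministic polynomial-space algorithm then guesses a store on $\avariable_1,\dots,\avariable_q$ and a heap of this polynomial size exactly as in the proof of Theorem~\ref{theorem-pspace}, and model-checks $\aformula$ on it: the Boolean skeleton is handled directly, and each leaf is model-checked in polynomial space, namely $\seplogic{\separate,\magicwand}$ leaves by the standard recursive procedure internalising $\separate$ and $\magicwand$ over subheaps, and $\seplogic{\separate,\reachplus}$ leaves by the procedure already described in Theorem~\ref{theorem-pspace}, where $\reachplus(\avariable,\avariablebis)$ is decided in polynomial time. Savitch's theorem gives the \pspace bound.

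The main obstacle is the bookkeeping in the small-model step: one must ensure that the equivalence $\aformula \equiv \aformula^{\star}$ is invoked purely to justify $\indist{q}{\alpha}$-invariance and is never realised syntactically, and that the single uniform parameter $\alpha = \length{\aformula}$ simultaneously dominates the $\alpha$-thresholds demanded by all leaves, so that one application of the construction underlying Theorem~\ref{theorem-finite-heap-property} shrinks the model for the entire Boolean combination at once rather than for each leaf separately.
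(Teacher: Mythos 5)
Your proposal is correct and follows essentially the same route as the paper: translate the leaves into Boolean combinations of test formulae (Lemma~\ref{lemma-psl-to-test} for the $\seplogic{\separate,\magicwand}$ leaves, Theorem~\ref{theorem-quantifier-elimination} for the $\seplogic{\separate,\reachplus}$ leaves), use this only to derive a polynomial small-heap property via the construction behind Theorem~\ref{theorem-finite-heap-property}, and then guess a polynomial-size memory state and model-check each leaf in polynomial space. The only cosmetic difference is that you argue $\indist{q}{\length{\aformula}}$-invariance of the whole formula directly with one uniform threshold, whereas the paper routes through the extended logic $\seplogic{\separate, \reachplus, \bigcup_{q,\alpha} \Test(q,\alpha)}$ and its memory-size function, yielding the slightly coarser (but still polynomial) bound $\acompmap(q,\length{\aformula}\,\alpha_{\aformula'})$.
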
%%
\begin{proof}
Let $\aformula$ be a Boolean combination of formulae from $\seplogic{\separate, \magicwand} \cup \seplogic{\separate, \reachplus}$.
First, replace every maximal subformula $\aformulabis$ in $\seplogic{\separate, \magicwand}$ by an equivalent Boolean
combination of test formulae from $\testformulae{q}{2\length{\aformulabis}}$, as shown by Lemma~\ref{lemma-psl-to-test}.
This replacement may require exponential time in the worst case but this is still fine to establish
the \pspace upper bound as we aim at showing a small heap property.
We obtain a formula $\aformula'$ in $\seplogic{\separate, \reachplus, \bigcup_{q,\alpha} \Test(q,\alpha)}$   with
$\alpha_{\aformula'} \leq 2\length{\aformula}$. So, $\aformula$ is satisfiable iff $\aformula$ has a model $\pair{\astore}{\aheap}$
with
$
\card{\domain{\aheap}} \leq \acompmap(q,2\length{\aformula} \alpha_{\aformula'})$,
which is still polynomial in $\length{\aformula}$.
Again, the non-deterministic polynomial-space
algorithm works as follows. First, guess a small heap $\aheap$ with
$\card{\domain{\aheap}} \leq \acompmap(q,2\length{\aformula} \alpha_{\aformula'})$,
$\domain{\aheap} \cup \range{\aheap} \subseteq \interval{0}{ 2 \times \acompmap(q,2\length{\aformula} \alpha_{\aformula'})}$
and a store restricted to $\avariable_1,\dots,\avariable_q$ and such that  $\range{\astore} \subseteq \interval{0}{2 \times 
\acompmap(q,2\length{\aformula} \alpha_{\aformula'}) + q}$.
Since the respective model-checking problem for $\seplogic{\separate, \magicwand}$ and $\seplogic{\separate, \reachplus}$
are both in \pspace (see~\cite{Calcagno&Yang&OHearn01} and Theorem~\ref{theorem-pspace})
and $\pair{\astore}{\aheap}$ is of polynomial size in $\length{\aformula}$,
checking whether $\pair{\astore}{\aheap} \models \aformula$ can be done in \pspace by performing
several instances of the model-checking problem with maximal subformulae $\aformulabis$ from either
 $\seplogic{\separate, \magicwand}$ or $\seplogic{\separate, \reachplus}$.
\end{proof}

The fragment from Theorem~\ref{theorem-boolean-combination} forbids formulae with
$\ls$ in the scope of the separating implication $\magicwand$.
A fragment with $\ls$ in the scope of $\magicwand$ is considered in~~\cite{Thakur&Breck&Reps14bis}
but decidability is still open. By contrast, a recent work~\cite{Mansutti18} has established a \pspace upper bound
for fragments with $\ls$ in the scope of $\magicwand$ but restrictions apply (full proofs soon available
in~\cite{Mansutti20}).

% To conclude, let us introduce the largest fragment including $\magicwand$ and $\ls$
% for which decidability can be established so far.
% %%  (we get an \expspace upper bound).
%
% \begin{theorem}
% \label{theorem-last-decidability}
% The satisfiability problem for the fragment of $\seplogic{\separate, \magicwand, \reachplus}$
% in which $\reachplus$ is not in the scope of $\magicwand$ is decidable.
% \end{theorem}
%
% \OnlyForVL{We provide below a proof sketch leading to a decision procedure that runs in exponential space.}%%
% %%
% \OnlyForVL{\input{proof-theorem-last-decidability}}%%
% %%
% \OnlyForVL{
% Our current proof leads to an \expspace upper bound and
% it is open whether the optimal upper bound is \pspace.
% }%% end of OnlyForVL
%%
%%
%%
%%
\section{Conclusion}%%
We studied the effects of adding  $\ls$ to $\seplogic{\separate, \magicwand}$, giving us the opportunity to consider
several variants.
$\seplogic{\separate, \magicwand,\ls}$ is shown undecidable (Theorem~\ref{theorem-main-undecidability}) and non-finitely
axiomatisable, which remains
quite unexpected since there are no first-order quantifications. This result is strengthened to even weaker extensions
of $\seplogic{\separate, \magicwand}$ such as the one
 augmented with  $n(\avariable) = n(\avariablebis)$, $n(\avariable) \hpto n(\avariablebis)$ and
$\allocback{\avariable}$, or the one
  augmented with
  $\reach(\avariable,\avariablebis) = 2$ and $\reach(\avariable,\avariablebis) = 3$.
If the magic wand is discarded, we have established that the satisfiability problem for
$\seplogic{\separate,\ls}$ is \pspace-complete by introducing a class of test formulae that captures
the expressive power of $\seplogic{\separate,\ls}$ and that leads to a small heap property. Such a logic contains the Boolean combinations of symbolic heaps
and our proof technique allows us to get an \np upper bound for such formulae. Moreover, we have shown
that the satisfiability problem for $\seplogic{\separate, \magicwand, \reachplus}$
restricted to Boolean combination of formulae from $\seplogic{\separate,\magicwand}$ and $\seplogic{\separate,\reachplus}$ is also \pspace-complete.
So,  we have provided proof techniques to establish undecidability when $\separate$, $\magicwand$ and $\ls$ are present
and to establish decidability based on test formulae. This paves the way  to investigate the decidability status
of $\seplogic{\magicwand, \ls}$ as well as of the positive fragment of  $\seplogic{\separate, \septraction, \ls}$
from~\cite{Thakur14,Thakur&Breck&Reps14bis}.

{\bf Acknowledgments.} We would like to thank the anonymous reviewers for their numerous suggestions and
remarks that help us to improve the quality of the document. 
\bibliographystyle{ACM-Reference-Format}
\bibliography{biblio-sl}%%
%%
%% \newpage
%% \appendix
%% \input{appendix}
%%

\newpage
\appendix
\section{Electronic Appendix}
\label{appendix-star}

In this appendix, we present the full proof of Lemma~\ref{lemma-star}. 
The material about the proof included in the body of the paper
is  a subset of the material below.

\begin{proof}
  Let $q$, $\alpha$, $\alpha_1$, $\alpha_2$, $\pair{\astore}{\aheap}$, $\pair{\astore'}{\aheap'}$, $\aheap_1$ and $\aheap_2$ be
  defined as in the statement. Let
  \begin{itemize}
  \item $\supportgraph{\astore,\aheap} = (\gverts,\gedges,\galloc,\glabels,\gbtw,\grem)$ and
  \item $\supportgraph{\astore',\aheap'} = (\gverts',\gedges',\galloc',\glabels',\gbtw',\grem')$
  \end{itemize}
  be the support graphs
of $\pair{\astore}{\aheap}$ and $\pair{\astore'}{\aheap'}$ respectively, with respect to $q$.
  As $\pair{\astore}{\aheap} \approx^q_\alpha
  \pair{\astore'}{\aheap'}$, let $\amap: \gverts \to \gverts'$ be a map satisfying \ref{a1}--\ref{a5} from
   Lemma~\ref{lemma-test-graph}.
  Below, for the sake of conciseness, let $k \in \{1,2\}$ and 
  $\supportgraph{\astore,\aheap_k} = (\gverts_k,\gedges_k,\galloc_k,\glabels_k,\gbtw_k,\grem_k)$
  be the support graph of $\pair{\astore}{\aheap_k}$.

  The proof is rather long and can be summed up with the following steps.
  \begin{enumerate}
  \item First, we define a strategy to split $\aheap'$ into $\aheap_1'$ and $\aheap_2'$ by closely following the way that $\aheap$ is split into $\aheap_1$ and $\aheap_2$. To do this, we look at the support graphs. For instance, suppose that $\grem$ is split into two sets $R_1 \subseteq \domain{\aheap_1}$ and $R_2 \subseteq \domain{\aheap_2}$.
  By definition, it is quite easy to see that the sets $R_1$ and $R_2$ must be subsets of $\grem_1$ and $\grem_2$, respectively. 
Then, following Lemma~\ref{lemma-test-graph}, to obtain
  $\pair{\astore}{\aheap_k} \approx^q_{\alpha_k} \pair{\astore'}{\aheap_k'}$,
  we are required to split $\grem'$ into
  $R_1' \subseteq \domain{\aheap_1'}$ and
  $R_2' \subseteq \domain{\aheap_2'}$ so that $\min(\alpha_k, \card{R_k}) = \min(\alpha_k, \card{R_k'})$. Indeed, otherwise the equisatisfaction of
  the test formulae of the form $\sizeothers_q \geq \beta$ is not ensured (details on this are formalised later).
  \item After defining $\aheap_1'$ and $\aheap_2'$, we show that $\pair{\astore}{\aheap_k} \approx^q_{\alpha_k} \pair{\astore'}{\aheap_k'}$.
  To do so, again, we  follow Lemma~\ref{lemma-test-graph} and we show that we can find suitable bijections from labelled locations of $\aheap_k$ to the ones of $\aheap_k'$ satisfying \ref{a1}--\ref{a5}.
  \end{enumerate}
  According to the summary above, let us first define explicitly $\aheap_1'$ and $\aheap_2'$ via an iterative process that consists in adding locations
  to $\domain{\aheap'_1}$ or to $\domain{\aheap'_2}$. Whenever we enforce that $\alocation \in \domain{\aheap_k'}$,
  implicitly we have $\aheap'_k(\alocation) \egdef \aheap'(\alocation)$ as $\aheap'_k$ is intended to be a subheap of $\aheap'$.
  \begin{description}
    \item[\descLab{(CA)}{c1}] For all $\alocation \in \galloc'$, $\alocation \in \domain{\aheap_k'} \equivdef \amap^{-1}(\alocation) \in \domain{\aheap_k}$.
    This step of the construction, as well as its usefulness, should be self-explanatory.
    For example, if $\pair{\astore}{\aheap_k} \models \alloc{\avariable_i}$ then, by relying on \ref{a3}, this step allows us to conclude that $\pair{\astore'}{\aheap_k'} \models \alloc{\avariable_i}$ (independently on how the definition of $\aheap_k'$ will be completed in the next steps of the construction).
    \item[\descLab{(CR)}{c3}] The heaps $\aheap_1'$ and $\aheap_2'$ are further populated depending on $\grem$. Let $R_k = \grem \cap \domain{\aheap_k}$.
    By definition, we have $R_1 \uplus R_2 = \grem$.
    Below, we partition $\grem'$ into two sets $R_1'$ and $R_2'$ so that by definition $R_k' \subseteq \domain{\aheap_k'}$.
    The strategy for defining the partition is split into three cases:
    \begin{center}
    \begin{enumerate}[align=left]
    \item[\descLab{(CR.C1)}{c31}] If $\card{R_1} < \alpha_1$ then $R_1'$ is a set of $\card{R_1}$ locations from $\grem'$ and $R_2' \egdef \grem' \setminus R_1'$.
    \item[\descLab{(CR.C2)}{c32}] Otherwise, if $\card{R_2} < \alpha_2$ then $R_2'$ is a set of 
    $\card{R_2}$ locations of $\grem'$ and $R_1' \egdef \grem' \setminus R_2'$.
    \item[\descLab{(CR.C3)}{c33}] Otherwise we have  $\card{R_1} \geq \alpha_1$ and $\card{R_2} \geq \alpha_2$. Then, $R_1'$ is a set of $\alpha_1$ locations
    from $\grem'$ and
    $R_2' \egdef \grem' \setminus R_1'$.
    \end{enumerate}
    \end{center}
    It is easy to show that the construction satisfies the following property (where $k \in \{1,2\}$):
    \begin{equation}
      \tag*{\textbf{(CR.P1)}}
      % \item[\descLab{(CR.P1)}{crp1}] Suppose $\aheap''$ such that $\aheap'' \sqsubseteq \aheap'$ and $R_k' \subseteq \domain{\aheap'_k}$.
      % Let $\supportgraph{\astore',\aheap''} = (\gverts'',\gedges'', \galloc'', \glabels'', \gbtw'', \grem'')$ be the
      % support graph of $\pair{\astore'}{\aheap''}$.
      % Then, $R_k' \subseteq \grem''$;
      \label{crp2} \min(\alpha_k,\card{R_k}) = \min(\alpha_k,\card{R_k'})
    \end{equation}
    % (Proof of \ref{crp1}) The first property holds since $R_k' \subseteq \grem'$.
    % We recall that the locations in $\grem'$ are characterised as being the ones
    % in $\domain{\aheap'}$ such that they are not in $\Labels{q}{\astore',\aheap'}$, nor they belongs to paths between locations in $\Labels{q}{\astore',\aheap'}$.
    % By Lemma~\ref{lemma-labels}, recall that given $\aheap_k \sqsubseteq \aheap'$ we have $\Labels{q}{\astore', \aheap_k'} \subseteq \Labels{q}{\astore',\aheap'}$.
    % Moreover, again from $\aheap_k' \sqsubseteq \aheap'$, we have that every
    % path between locations in $\Labels{q}{\astore', \aheap_k'}$
    % is a path between locations in $\Labels{q}{\astore',\aheap'}$.
    % Hence, every location in $R_k'$ cannot be in  $\Labels{q}{\astore', \aheap_k'}$,
    % nor it can belong to a path between locations in $\Labels{q}{\astore',\aheap_k'}$ (as otherwise the location does not belong to $\grem'$, a contradiction).
    % We then conclude, by definition of $\grem_k'$, that $R_k' \subseteq \grem_k'$.

    The property \ref{crp2} directly follows from the
     property~\ref{a5} satisfied by $\amap$. The proof (that can be applied also for the next step of the construction, see~\ref{cip2}), works as follows.

  First, suppose that the sets of remaining locations in the heap domain are small, i.e.
     \[
     \min(\alpha,\card{\grem}) = \min(\alpha,\card{\grem'}) < \alpha_1 + \alpha_2.
     \]
     So, $\card{\grem} = \card{\grem'}$ and therefore $\card{R_1} + \card{R_2} = \card{R_1'} + \card{R_2'}$.
    By definition, $\card{R_1} = \card{R_1'}$ and $\card{R_2} = \card{R_2'}$ trivially hold for the cases~\ref{c31}
    and~\ref{c32}, whereas the case~\ref{c33}
     ($\card{R_1} \geq \alpha_1$ and $\card{R_2} \geq \alpha_2$) can never be applied since
    $\card{R_1} + \card{R_2} < \alpha_1 + \alpha_2$. We conclude that $\min(\alpha_k,\card{R_k}) = \min(\alpha_k,\card{R_k'})$.

  Second, suppose instead
  \[\min(\alpha,\card{\grem}) = \min(\alpha,\card{\grem'}) = \alpha_1 + \alpha_2.\]
  If the first case~\ref{c31} applies, i.e.\ $\card{R_1} < \alpha_1$, then $\card{R_2} \geq \alpha_2$ and by definition $\card{R_1'} = \card{R_1}$.
  Then, $\card{R_2'} \geq \alpha_2$ trivially follows from $\card{R_1'} + \card{R_2'} \geq \alpha_1 + \alpha_2$.
  Symmetrically, $\min(\alpha_k,\card{R_k}) = \min(\alpha_k,\card{R_k'})$ holds when the second case~\ref{c32}  applies ($\card{R_2} < \alpha_2$).
     Lastly, suppose $\card{R_1} \geq \alpha_1$ and $\card{R_2} \geq \alpha_2$. Then, the third case~\ref{c33} applies and by definition $\card{R_1'} = \alpha_1$. Again, we conclude that $\min(\alpha_k,\card{R_k}) = \min(\alpha_k,\card{R_k'})$ since
    $\card{R_2'} \geq \alpha_2$ trivially follows from $\card{R_1'} + \card{R_2'} \geq \alpha_1 + \alpha_2$.
    %%
    %% Indeed this result directly follows from the property \[\min(\alpha, \card{\grem}) = \min(\alpha, \card{\grem'}).\]
    %%

    \item[\descLab{(CI)}{c2}] Lastly, the heaps $\aheap_1'$ and $\aheap_2'$ are further populated with respect to the memory cells in $\gbtw(\alocation,\alocation')$.
    For all $\pair{\alocation}{\alocation'} \in \gedges$, let
     $L_k \egdef \gbtw(\alocation,\alocation') \cap \domain{\aheap_k}$. We have
    $L_1 \uplus L_2 = \gbtw(\alocation,\alocation')$.
    Below, we partition $\gbtw'(\amap(\alocation),\amap(\alocation'))$ into $L_1'$ and $L_2'$ so that by definition
    $L_k' \subseteq \domain{\aheap_k'}$:
    \begin{description}
    \item[\descLab{(CI.C1)}{c21}] If $\card{L_1} < \alpha_1$ then $L_1'$ is a set of $\card{L_1}$ locations from $\gbtw'(\amap(\alocation),\amap(\alocation'))$, whereas $L_2' \egdef
     \gbtw'(\amap(\alocation),\amap(\alocation')) \setminus L_1'$.
    \item[\descLab{(CI.C2)}{c22}] Else, if $\card{L_2} < \alpha_2$ then $L_2'$ is a set of $\card{L_2}$ locations from $\gbtw'(\amap(\alocation),\amap(\alocation'))$,
    whereas  $L_1' \egdef \gbtw'(\amap(\alocation),\amap(\alocation')) \setminus L_2'$.
    \item[\descLab{(CI.C3)}{c23}] Otherwise, we have  $\card{L_1} \geq \alpha_1$ and $\card{L_2} \geq \alpha_2$. Then $L_1'$ is a set of $\alpha_1$ locations
    from $\gbtw'(\amap(\alocation),\amap(\alocation'))$ and
    $L_2' \egdef \gbtw'(\amap(\alocation),\amap(\alocation')) \setminus L_1'$.
    \end{description}
    It is easy to show that the construction satisfies the following properties (where $k \in \{1,2\}$):
    \begin{equation}
      \tag*{\textbf{(CI.P1)}}\label{cip3}
    \text{if } L_k' = \emptyset \text{ then } \gbtw'(\amap(\alocation),\amap(\alocation')) \subseteq \domain{\aheap_{3-k}'}
    \end{equation}
    \begin{equation}
      \tag*{\textbf{(CI.P2)}}\label{cip2}
    \min(\alpha_k,\card{L_k}) = \min(\alpha_k,\card{L_k'})
    \end{equation}
    % \begin{description}
    % % \item[\descLab{(CI.P2)}{cip1}]
    % % Suppose $L_1' \neq \emptyset$ and $L_2' \neq \emptyset$.
    % % Let $\aheap_1'$, $\aheap_2'$ be two possible constructions such that $\aheap'_1 + \aheap_2' = \aheap'$ and $ \forall k \in \{1,2\}$, $L_k' \subseteq \domain{\aheap'_k}$.
    % % Let $\supportgraph{\astore',\aheap_k'} = (\gverts_k',\gedges_k',\galloc_k',\glabels_k',\gbtw_k',\grem_k')$ be the
    % % support graph of $\pair{\astore'}{\aheap_k'}$.
    % % Then, $L_k' \subseteq \grem_k'$;
    % \item[\descLab{(CI.P3)}{cip2}] $\min(\alpha_k,\card{L_k}) = \min(\alpha_k,\card{L_k'})$.
    % \end{description}
    (Proof of~\ref{cip3}) The first property trivially holds from the cases~\ref{c21} and~\ref{c22} of the construction.
    Notice that given $k \in \{1,2\}$, $3-k$ corresponds to the index in $\{1,2\}$ that is different from $k$.

    (Proof of~\ref{cip2}) The second property directly follows from~\ref{a4} (which is satisfied by $\amap$) and is proved as done for~\ref{crp2}.
    %%
    %% \[\min(\alpha,\card{\gbtw(\alocation,\alocation')}) = \min(\alpha,\card{\gbtw(f(\alocation),f(\alocation'))})\] guaranteed by $f$.
    %%

    % Moreover, this means that a path between two labelled locations $\alocation, \alocation' \in \Labels{q}{\astore,\aheap}$ is
    % preserved in $\aheap_k$ (meaning that all of its locations are assigned to $\aheap_k$) if and only if the path from $\amap(\alocation)$ to $\amap(\alocation')$ is also preserved in $\aheap_k'$.

  \end{description}

This ends the construction of $\aheap'_1$ and $\aheap'_2$ as
any location in $\domain{\aheap'}$ has been assigned to one of the two heaps.
Indeed, $\{\galloc',\grem'\} \cup \{ \gbtw'(\alocation,\alocation') \mid
(\alocation,\alocation') \in \gedges' \}$ is a partition of $\domain{\aheap'}$.
As  $\pair{\astore}{\aheap} \approx^q_\alpha
\pair{\astore'}{\aheap'}$,
the support graphs  $\supportgraph{\astore,\aheap}$ and  $\supportgraph{\astore',\aheap'}$ witness
the existence of a map $\amap:  \gverts \to \gverts'$ satisfying~\ref{a1}--\ref{a5} and therefore
there is an underlying isomorphism between these structures satisfying quantitative properties up to
the value $\alpha$. The construction above can be understood as a way to split $\aheap'$
into $\aheap_1'$ and $\aheap_2'$ mimicking the splitting of $\aheap$ into $\aheap_1$ and $\aheap_2$.
It remains to show below that this is done in a way that guarantees that
 $\pair{\astore}{\aheap_k} \approx^q_{\alpha_{k}}
\pair{\astore'}{\aheap'_k}$ ($k \in \set{1,2}$).

In the following, we denote the support graphs of $\pair{\astore}{\aheap_k}$ and $\pair{\astore'}{\aheap_k'}$ respectively as
\begin{itemize}
\item $\supportgraph{\astore,\aheap_k} = (\gverts_k,\gedges_k,\galloc_k,\glabels_k,\gbtw_k,\grem_k)$
and,
\item  $\supportgraph{\astore',\aheap_k'} = (\gverts_k',\gedges_k',\galloc_k',\glabels_k',\gbtw_k',\grem_k')$.
\end{itemize}
  %%
  %% between the two support graphs $(\gverts_k,\gedges_k,\glabels_k,\gbtw_k,\grem_k)$ and $(\gverts_k',\gedges_k',\glabels_k',
  %% \gbtw_k',\grem_k')$ satisfying
  %%
  %% \begin{itemize}
  %%  \item for each $\alocation \in \gverts_k$, $\glabels_k(\alocation) = \glabels_k'(f_k(\alocation))$ and $\alocation \in \galloc_k \iff f_k(\alocation) \in \galloc_k'$;
  %%  \item for each $\pair{\alocation}{\alocation'} \in \gedges_k$, \[\min(\alpha_k,\card{\gbtw_k(\pair{\alocation}{\alocation'})}) =
  %% \min(\alpha_k,\card{\gbtw_k'(\pair{f_k(\alocation)}{f_k(\alocation')})})\]
  %% \end{itemize}
  %% and to prove that $\min(\alpha_k, \card{\grem_k}) = \min(\alpha_k, \card{\grem_k'})$.

  First, let us formalise an essential property of the construction of $\aheap_1'$ and $\aheap_2'$.
  \begin{enumerate}[align=left]
  \item[\descLab{(Paths)}{pathproperty}] Let $k \in \{1,2\}$ and let $\alocation,\alocation' \in \gverts$ be two labelled locations w.r.t. $\pair{\astore}{\aheap}$.
  $\aheap_k$ witnesses a non-empty path from $\alocation$ to $\alocation'$ if and only if $\aheap_k'$ witnesses a non-empty path from $\amap(\alocation)$ to $\amap(\alocation')$.
  \vspace{2pt}
  \item[(Proof of \ref{pathproperty})]
  The proof mainly relies on the properties \ref{cip3} and \ref{cip2} of the construction.
  Recall that $\amap : \gverts \to \gverts'$ is a bijection satisfying~\ref{a1}--\ref{a5} w.r.t.\ $\pair{\astore}{\aheap}$ and $\pair{\astore'}{\aheap'}$.

  ($\Rightarrow$) Let $\alocation,\alocation' \in \gverts$ be such that
  $\aheap_k$ witnesses a non-empty path from $\alocation$ to $\alocation'$.
  Since $\aheap_k \sqsubseteq \aheap$, then $\aheap$ also witnesses a non-empty path from $\alocation$ to $\alocation'$.
  In particular, by Definition~\ref{definition-support-graph}, this path corresponds to a path
  in the support graph $\supportgraph{\astore,\aheap}$:
  \begin{center}
       \begin{tikzpicture}[baseline, node distance=1.8cm]
         \node[dot,label=above:{$\alocation$}] (zero) at (0,0) {};
         \node[dot,label=above:{$\alocation_1$}] (one) [right of=zero] {};
         \node[dot,label=above:{$\alocation_2$}] (two) [right of=one] {};
         \node[dot,label=above:{$\alocation_{n-1}$}] (n1) [right of=two] {};
         \node[dot,label=above:{$\alocation_{n}$}] (n) [right of=n1] {};
         \node[dot,label=above right:{$\alocation'$}] (np) [right of=n] {};

          \draw[pto] (zero) edge node [above] {$\gedges$} (one);
          \draw[pto] (one) edge node [above] {$\gedges$} (two);
          \draw[pto] (n1) edge node [above] {$\gedges$} (n);
          \draw[pto] (n) edge node [above] {$\gedges$} (np);
          \draw[pto] (one) edge node [above] {$\gedges$} (two);

          \path (two) edge [draw=none] node {$\dots$} (n1);

       \end{tikzpicture}
  \end{center}
  Let us define $\alocation_0 \egdef \alocation$  and $\alocation_{n+1} \egdef \alocation'$.
  In particular, following the picture above, the support graph $\supportgraph{\astore,\aheap}$
  witnesses a path $\{\pair{\alocation_0}{\alocation_1},\dots,\pair{\alocation_n}{\alocation_{n+1}}\} \subseteq \gedges$ from $\alocation_0$ to $\alocation_{n+1}$.
  Since $\amap$ is a graph isomorphism from $\pair{\gverts}{\gedges}$ to $\pair{\gverts'}{\gedges'}$ (by~\ref{a1}), $\supportgraph{\astore',\aheap'}$ witnesses a similar structure, as depicted below:
  \begin{center}
   \scalebox{1}{
       \begin{tikzpicture}[baseline, node distance=1.8cm]
         \node[dot,label=above left:{$\alocation$=$\alocation_0$}] (zero) at (0,0) {};
         \node[dot,label=above:{$\alocation_1$}] (one) [right of=zero] {};
         \node[dot,label=above:{$\alocation_2$}] (two) [right of=one] {};
         \node[dot,label=above:{$\alocation_{n-1}$}] (n1) [right of=two] {};
         \node[dot,label=above:{$\alocation_{n}$}] (n) [right of=n1] {};
         \node[dot,label=above right:{$\alocation'$=$\alocation_{n+1}$}] (np) [right of=n] {};
         \node[dot,label=below:{$\amap(\alocation)$}] (Fzero) [below of=zero] {};
         \node[dot,label=below:{$\amap(\alocation_1)$}] (Fone) [right of=Fzero] {};
         \node[dot,label=below:{$\amap(\alocation_2)$}] (Ftwo) [right of=Fone] {};
         \node[dot,label=below:{$\amap(\alocation_{n-1})$}] (Fn1) [right of=Ftwo] {};
         \node[dot,label=below:{$\amap(\alocation_{n})$}] (Fn) [right of=Fn1] {};
         \node[dot,label=below:{$\amap(\alocation')$}] (Fnp) [right of=Fn] {};

          \draw[pto] (zero) edge node [above] {$\gedges$} (one);
          \draw[pto] (one) edge node [above] {$\gedges$} (two);
          \draw[pto] (n1) edge node [above] {$\gedges$} (n);
          \draw[pto] (n) edge node [above] {$\gedges$} (np);
          \draw[pto] (one) edge node [above] {$\gedges$} (two);
          \draw[pto] (Fzero) edge node [below] {$\gedges'$} (Fone);
          \draw[pto] (Fone) edge node [below] {$\gedges'$} (Ftwo);
          \draw[pto] (Fn1) edge node [below] {$\gedges'$} (Fn);
          \draw[pto] (Fn) edge node [below] {$\gedges'$} (Fnp);
          \draw[pto] (Fone) edge node [below] {$\gedges'$} (Ftwo);

          \path (two) edge [draw=none] node {$\dots$} (n1);
          \path (Ftwo) edge [draw=none] node {$\dots$} (Fn1);

          \path[dotted,->] (zero) edge node [left] {$\amap$} (Fzero);
          \path[dotted,->] (one) edge node [left] {$\amap$} (Fone);
          \path[dotted,->] (np) edge node [left] {$\amap$} (Fnp);
          \path[dotted,->] (n) edge node [left] {$\amap$} (Fn);
          \path[dotted,->] (n1) edge node [left] {$\amap$} (Fn1);
          \path[dotted,->] (two) edge node [left] {$\amap$} (Ftwo);
       \end{tikzpicture}
   }
  \end{center}
  Let us consider $i \in \interval{0}{n}$. Since the path belongs to $\aheap_k$, it must hold that
  $\alocation_i \in \domain{\aheap_k}$ and $\gbtw(\alocation_i,\alocation_{i+1}) \subseteq \domain{\aheap_k}$. We show that then $\amap(\alocation_i) \in \domain{\aheap_k'}$ and $\gbtw'(\amap(\alocation_i),\amap(\alocation_{i+1})) \subseteq \domain{\aheap_k'}$,
  which entails that $\aheap_k'$ witnesses a path from $\amap(\alocation)$ to $\amap(\alocation')$, concluding the proof.
  \begin{itemize}[align = left]
    \item From $\pair{\alocation_i}{\alocation_{i+1}} \in \gedges$, by Definition~\ref{definition-support-graph} we conclude that $\alocation_i \in \gverts$.
    Since $\alocation_i \in \domain{\aheap}$, again by Definition~\ref{definition-support-graph},
    we have $\alocation_i \in \galloc$ and therefore by~\ref{a2}, $\amap(\alocation_i) \in \galloc'$.
    By~\ref{c1} together with the fact that $\alocation_i \in \domain{\aheap_k}$, we then conclude that $\amap(\alocation_i) \in \domain{\aheap_k'}$.
    \item If $\gbtw(\alocation_i,\alocation_{i+1})$ is empty then by~\ref{a4} $\gbtw'(\amap(\alocation_i),\amap(\alocation_{i+1}))$ is also empty and the inclusion w.r.t. $\domain{\aheap_k'}$ trivially holds.
    Suppose now $\gbtw(\alocation_i,\alocation_{i+1})$ is non-empty.
    From $\gbtw(\alocation_i,\alocation_{i+1}) \subseteq \domain{\aheap_k}$ and the fact that $\aheap_k$ and $\aheap_{3-k}$ are disjoint we conclude that $\gbtw(\alocation_i,\alocation_{i+1}) \cap \domain{\aheap_{3-k}} = \emptyset$. Hence, by \ref{cip2} we conclude that
    $\gbtw'(\amap(\alocation_i),\amap(\alocation_{i+1})) \cap \domain{\aheap_{3-k}'} = \emptyset$
    (notice that in \ref{cip2}, $L_{3-k}$ corresponds to $\gbtw(\alocation_i,\alocation_{i+1}) \cap \domain{\aheap_{3-k}}$ whereas $L_{3-k}'$ corresponds to $\gbtw'(\amap(\alocation_i),\amap(\alocation_{i+1})) \cap \domain{\aheap_{3-k}'}$).
    Since $\gbtw'(\amap(\alocation_i),\amap(\alocation_{i+1})) \cap \domain{\aheap_{3-k}'} = \emptyset$,
    by  \ref{cip3} we then conclude that
    $\gbtw'(\amap(\alocation_i),\amap(\alocation_{i+1})) \subseteq \domain{\aheap_{k}'}$.
  \end{itemize}

  ($\Leftarrow$) The right-to-left direction is analogous (thanks to the fact that $\amap^{-1}$ is a graph isomorphism from $\pair{\gverts'}{\gedges'}$ to $\pair{\gverts}{\gedges}$). \\ 
  \end{enumerate}

  Here is the last step of the proof.
  Given $k \in \{1,2\}$, let $\amap_k$ be the restriction of $\amap$ to $\gverts_k$ and
  $\gverts_k'$.
  We prove that $\amap_k$ satisfies \ref{a1}--\ref{a5} w.r.t. the memory states $\pair{\astore}{\aheap_k}$ and $\pair{\astore'}{\aheap_k'}$.
  Thanks to Lemma~\ref{lemma-test-graph}, this implies $\pair{\astore}{\aheap_1} \approx^q_{\alpha_1} \pair{\astore'}{\aheap_1'}$ and
  $\pair{\astore}{\aheap_2} \approx^q_{\alpha_2} \pair{\astore'}{\aheap_2'}$, ending the proof.
  For convenience, we prove the five properties in the following order: 
  \ref{a3}, \ref{a2}, \ref{a1}, \ref{a4} and \ref{a5} (in the body of the paper only the proof of~\ref{a3} is provided).\\

  \begin{enumerate}[label=\textbf{(P\arabic*)}, align = left]
  \item[\textbf{$\amap_k$ satisfies \descLab{(A3)}{mapksatA3}}:]
    We prove that for every $\alocation \in \gverts$, the set of terms corresponding to $\alocation$ in $\pair{\astore}{\aheap_k}$ is equivalent to the set of terms corresponding to $\amap(\alocation)$ in $\pair{\astore'}{\aheap_k'}$. Formally:
    \begin{center}
      \begin{minipage}{0.9\linewidth}
        for every $\alocation \in \gverts$,
        \begin{enumerate}[label=(\alph*)]
            \item $\alocation \in \gverts_k$ iff $\amap(\alocation) \in \gverts_k'$;
            \item if $\alocation \in \gverts_k$, $\glabels_k(\alocation) = \glabels_k'(\amap(\alocation))$.
        \end{enumerate}
      \end{minipage}
    \end{center}
    Notice that (a) implies that $\amap_k$ (which we recall being the restriction of $\amap$ to $\gverts_k$ and $\gverts_k'$) is well-defined and it is a bijection from the labelled locations of $\pair{\astore}{\aheap_k}$ (i.e. $\gverts_k$) to the labelled locations of $\pair{\astore'}{\aheap_k'}$ (i.e. $\gverts_k'$).
This is due to the fact that
   $\amap$ is a bijection from $\gverts$ to $\gverts'$ and by Lemma~\ref{lemma-labels}, 
we have $\gverts_k \subseteq \gverts$ and $\gverts_k' \subseteq \gverts'$.
    Again by $\gverts_k \subseteq \gverts$ (Lemma~\ref{lemma-labels}), (b) is then equivalent
    to \ref{a3}.

    We prove (a) and (b) together, by showing that
    for every $\alocation \in \gverts$, the set of terms corresponding to $\alocation$ in $\pair{\astore}{\aheap_k}$ is equivalent to the set of terms corresponding to $\amap(\alocation)$
    in $\pair{\astore'}{\aheap_k'}$.
    We first show the result for program variables, and then for meet-points.

    (Program variables) Let $\alocation \in \gverts$ and  $i \in \interval{1}{q}$.
    It holds that $\avariable_i \in \glabels_k(\alocation)$
     if and only if $\astore(\avariable_i) = \alocation$, or equivalently
     $\avariable_i \in \glabels(\alocation)$ which, by \ref{a3} in Lemma~\ref{lemma-test-graph}, holds whenever
     $\avariable_i \in \glabels'(\amap(\alocation))$. The latter is equivalent to $\astore'(\avariable_i) =
      \amap(\alocation)$, or equivalently $\avariable_i \in \glabels_k'(\amap(\alocation))$.

    (Meet-points)
    In order to conclude the proof, we show that for all $i,j \in \interval{1}{q}$ and $\alocation \in \gverts$,
    \begin{center}
     $m_q(\avariable_i,\avariable_j) \in \glabels_k(\alocation)$
    if and only if $m_q(\avariable_i,\avariable_j) \in \glabels_k'(\amap(\alocation))$.
    \end{center}
    %% SD 28/11/2019
    %% Consider $\aheap$ and $\aheap'$. 
    %% this seems useless here.
    %% 
    If $\sem{m_q(\avariable_i,\avariable_j)}^q_{\astore,\aheap}$ is undefined, then so is $\sem{m_q(\avariable_j,\avariable_i)}^q_{\astore,\aheap}$ (by def.)
    and by $\pair{\astore}{\aheap} \approx^q_\alpha \pair{\astore'}{\aheap'}$, so are $\sem{m_q(\avariable_i,\avariable_j)}^q_{\astore',\aheap'}$ and $\sem{m_q(\avariable_j,\avariable_i)}^q_{\astore',\aheap'}$.
    By Lemma~\ref{lemma-labels}, if $\sem{m_q(\avariable_i,\avariable_j)}^q_{\astore,\aheap}$ is undefined, then so are
    $\sem{m_q(\avariable_i,\avariable_j)}^q_{\astore,\aheap_k}$, $\sem{m_q(\avariable_j,\avariable_i)}^q_{\astore,\aheap_k}$,
    $\sem{m_q(\avariable_i,\avariable_j)}^q_{\astore',\aheap'_k}$ and $\sem{m_q(\avariable_j,\avariable_i)}^q_{\astore',\aheap'_k}$.
    %% the same holds for $\aheap_k$ and $\aheap_k'$.
    Otherwise, if $\sem{m_q(\avariable_i,\avariable_j)}^q_{\astore,\aheap} = \alocation$ then by definition $\sem{m_q(\avariable_j,\avariable_i)}^q_{\astore,\aheap} = \alocation'$ for some $\alocation' \in \gverts$ and
    moreover $\sem{m_q(\avariable_i,\avariable_j)}^q_{\astore',\aheap'} = \amap(\alocation)$ and
    $\sem{m_q(\avariable_j,\avariable_i)}^q_{\astore',\aheap'} = \amap(\alocation')$ by $\pair{\astore}{\aheap} \approx^q_\alpha \pair{\astore'}{\aheap'}$.
    Let $\avariableter^i$ (resp. $\avariableter^j$) be the program variable in $\{\avariable_1,\dots,\avariable_q\}$ such that $\astore(\avariableter^i)$ is the first location corresponding to a program variable that is reachable from $\sem{m_q(\avariable_i,\avariable_j)}^q_{\astore,\aheap}$ (resp. $\sem{m_q(\avariable_j,\avariable_i)}^q_{\astore,\aheap}$), itself included.
    The characterisation of such a program variable $\avariableter^i$
    can be captured by the formula
    $\firstvar(m_q(\avariable_i,\avariable_j),\avariableter^i)$ defined as the following Boolean combination of test formulae:
    $$
    m_q(\avariable_i,\avariable_j) = \avariableter^i \vee \ \ \ \
    \bigvee_{\mathclap{\substack{\aterm_1, \ldots, \aterm_n \in \Terms{q}, n>1 \\
    {\rm pairwise \ distinct} \ \aterm_1, \ldots, \aterm_{n-1}, \\ \aterm_1 = m_q(\avariable_i,\avariable_j), \aterm_n = \avariableter^i}}}
    \textstyle\bigwedge_{\delta \in [1,n-1]} \sees_q(\aterm_{\delta},\aterm_{\delta+1}) \geq 1 \land \bigwedge_{\substack{m < n\hphantom{m.}\\ k \in [1,q]}} \avariable_k \neq \aterm_m$$
    Indeed, this formula is satisfied only by memory states where there is a (possibly empty) path from the location corresponding to $m_q(\avariable_i,\avariable_j)$ to the location $\overline{\alocation}$ corresponding to $\avariableter^i$ so that each labelled location in the path, apart from $\overline{\alocation}$, does not correspond to the interpretation of  such program variables.
    Now,  we  recall the taxonomy of meet-points, where the rightmost case from Figure~\ref{figure-taxonomy} is split into three cases, highlighting additional cases depending on $\avariableter^i$ and $\avariableter^j$.
    \begin{center}
       \begin{tabular*}{\linewidth}{c @{\extracolsep{\fill}}  cccc}
         \scalebox{0.68}{
           \begin{tikzpicture}[baseline]
             \node[dot,label=above:$\avariable_i$] (i) at (0,0) {};
             \node[dot,label=left:{$m_q$($\avariable_i$,$\avariable_j$)\\$m_q$($\avariable_j$,$\avariable_i$)}] (m) [below right = 1.5cm and 0.5cm of i] {};
             \node[dot,label=above:$\avariable_j$] (j) [above right=1.5cm and 0.5cm of m] {};
             \node[dot,label=below:{$\avariableter^i=\avariableter^j$\\$\avariableter^i$ not inside a loop}] (k) [below of=m] {};

             \draw[reach] (i) -- (m);
             \draw[reach] (j) -- (m);
             \draw[reach] (m) -- (k);
           \end{tikzpicture}
         }
       &
        \scalebox{0.68}{
           \begin{tikzpicture}[baseline]
             \node[dot,label=above:$\avariable_i$] (i) at (0,0) {};
             \node[dot,label=left:{$m_q$($\avariable_i$,$\avariable_j$)\\$m_q$($\avariable_j$,$\avariable_i$)}] (m) [below right = 1.5cm and 0.5cm of i] {};
             \node[dot,label=above:$\avariable_j$] (j) [above right=1.5cm and 0.5cm of m] {};
             \node[dot] (mid) [below=0.8cm of m] {};
             \node[dot,label=below:{$\avariableter^i=\avariableter^j$}] (k) [below=1.65cm of m] {};

             \draw[reach] (i) -- (m);
             \draw[reach] (j) -- (m);
             \draw[reach] (m) -- (mid);
             \draw[reach] (mid) to [out=-35,in=35] (k);
             \draw[pto] (k) edge [out=155,in=-155] node [left] {$+$} (mid);
           \end{tikzpicture}
        }
       &
        \scalebox{0.68}{
           \begin{tikzpicture}[baseline]
             \node[dot,label=above:$\avariable_i$] (i) at (0,0) {};
             \node[dot,label=left:{$m_q$($\avariable_i$,$\avariable_j$)}] (m1) [below right = 1.5cm and 0.4cm of i] {};
             \node[dot,label=below:{$m_q$($\avariable_j$,$\avariable_i$)}] (m2) [below right = 1.7cm and 0.4cm of m1] {};
             \node[dot,label=above:$\avariable_j$] (j) [above right=1.5cm and 0.5cm of m2] {};
             \node[dot,label=left:{$\avariableter^i=\avariableter^j$}] (k) [below right= 2.7cm and 0.1 of i] {};

             \draw[reach] (i) -- (m1);
             \draw[pto] (m1) edge node [above right] {$+$} (m2);
             \draw[reach] (j) -- (m2);
             \draw[reach] (m2) to [out=180,in=-75] (k);
             \draw[pto] (k) edge [bend left=30] node [left] {$+$} (m1);
           \end{tikzpicture}
        }
       &
        \scalebox{0.68}{
           \begin{tikzpicture}[baseline]
             \node[dot,label=above:$\avariable_j$] (i) at (0,0) {};
             \node[dot,label=right:{$m_q$($\avariable_j$,$\avariable_i$)}] (m1) [below left = 1.5cm and 0.4cm of i] {};
             \node[dot,label=below:{$m_q$($\avariable_i$,$\avariable_j$)}] (m2) [below left = 1.7cm and 0.4cm of m1] {};
             \node[dot,label=above:$\avariable_i$] (j) [above left=1.5cm and 0.5cm of m2] {};
             \node[dot,label=right:{$\avariableter^i=\avariableter^j$}] (k) [below left= 2.7cm and 0.1 of i] {};

             \draw[reach] (i) -- (m1);
             \draw[pto] (m1) edge node [above left] {$+$} (m2);
             \draw[reach] (j) -- (m2);
             \draw[reach] (m2) to [out=0,in=-105] (k);
             \draw[pto] (k) edge [bend right=30] node [right] {$+$} (m1);
           \end{tikzpicture}
        }
       &
        \scalebox{0.68}{
           \begin{tikzpicture}[baseline]
             \node[dot,label=above:$\avariable_i$] (i) at (0,0) {};
             \node[dot,label=left:{$m_q$($\avariable_i$,$\avariable_j$)}] (m1) [below right = 1.5cm and 0.4cm of i] {};
             \node[dot,label=right:{$m_q$($\avariable_j$,$\avariable_i$)}] (m2) [right = 1.8cm of m1] {};
             \node[dot,label=above:$\avariable_j$] (j) [above right=1.5cm and 0.5cm of m2] {};
             \node[dot,label=below:{$\avariableter^j$}] (k) [below right= 0.5cm and 0.9cm of m1] {};
             \node[dot,label=above:{$\avariableter^i$}] (z) [above right = 0.5 and 0.9cm of m1] {};

             \draw[reach] (i) -- (m1);
             \draw[reach] (j) -- (m2);
             \draw[reach] (m2) to [out=-115,in=0] (k);
             \draw[pto] (k) edge [bend left=30] node [below left] {$+$} (m1);
             \draw[reach] (m1) to [out=65,in=180] (z);
             \draw[pto] (z) edge [bend left=30] node [above right] {$+$} (m2);
           \end{tikzpicture}
        }\\
        \\
       (i)
       &
       (ii)
       &
       (iii)
       &
       (iv)
       &
       (v)
     \end{tabular*}
    \end{center}%%
    Distinct structures satisfy different test formulae, though they all satisfy the formula
    \begin{center}
      $\firstvar(m_q(\avariable_i,\avariable_j),\avariableter^i) \land \firstvar(m_q(\avariable_j,\avariable_i),\avariableter^j)$.
    \end{center}
    For instance, (i) is the only form not satisfying $\reach^+(\avariableter^i,\avariableter^i)$ (recall that
    this form can be expressed as a Boolean combination of test formulae, as shown in Lemma~\ref{lemma-atomic-formulae-test}), whereas (ii) can be distinguished as the only form satisfying
    both $\reach^+(\avariableter^i,\avariableter^i)$ and $m_q(\avariable_i,\avariable_j) = m_q(\avariable_j,\avariable_i)$.
    %% Indeed, the second structure is different from the first one, since $\reach^+(\avariable_k,\avariable_k)$ holds in the second one but
    %% not in the first one and moreover the formula $\reach^+(\avariable_k,\avariable_k)$ can be expressed as a combination of test formulae, as
    %% shown in Lemma~\ref{lemma-atomic-formulae-test}.
    Moreover, the last structure is the only one satisfying $\avariableter^i \neq \avariableter^j$ whereas (iii) and (iv) can be distinguished with a formula, similar to $\firstvar(m_q(\avariable_i,\avariable_j),\avariableter^i)$, stating that from the location corresponding to $\avariableter^i$, it is possible to reach the location corresponding to $m_q(\avariable_i,\avariable_j)$ without reaching the location corresponding to $m_q(\avariable_j,\avariable_i)$:
    $$
    \bigvee_{\mathclap{\substack{\aterm_1, \ldots, \aterm_n \in \Terms{q},\ n > 1\\
    {\rm pairwise \ distinct} \ \aterm_1, \ldots, \aterm_{n-1}, \\ \aterm_1 = \avariableter^i, \aterm_n = m_q(\avariable_i,\avariable_j)}}}
    \textstyle\bigwedge_{\delta \in [1,n-1]} \sees_q(\aterm_{\delta},\aterm_{\delta+1}) \geq 1 \land \bigwedge_{1 < m < n} m_q(\avariable_j,\avariable_i) \neq \aterm_m.
    $$
    Differently from (iii), the structure (iv) does not satisfy this formula.
    Since $\pair{\astore}{\aheap} \approx^q_\alpha \pair{\astore'}{\aheap'}$, the heaps
    $\aheap$ and $\aheap'$ agree on the structure of every meet-point.
    The proof that for all $i,j \in \interval{1}{q}$ and $\alocation \in \gverts$,
    $m_q(\avariable_i,\avariable_j) \in \glabels_k(\alocation)$ iff
    $m_q(\avariable_i,\avariable_j) \in \glabels_k'(\amap(\alocation))$, essentially relies on \ref{pathproperty}. To show the result we need to proceed by cases, according to the taxonomy.
    \begin{enumerate}[align = left]
    \item[\textbf{Case: $\aheap$ witnesses (i), (ii) or (iv).}]
    Since the heaps
    $\aheap$ and $\aheap'$ agree on the structure of every meet-point,
    $\aheap'$ also witnesses the same form  among (i), (ii) and (iv) as $\aheap$.
    Regarding $\aheap_k$, one of the following holds:
    \begin{itemize}
      \item The path from $\astore(\avariable_i)$ to $\astore(\avariableter^i)$ and the path from $\astore(\avariable_j)$ to $\astore(\avariableter^j)$ are both preserved in $\aheap_k$.
      Then $\aheap_k$ witnesses (i) (ii) or (iv). By~\ref{pathproperty} the same holds for $\aheap_k'$.
      In every case ((i), (ii) and (iv)), we conclude that $\sem{m_q(\avariable_i,\avariable_j)}^q_{\astore,\aheap_k} = \sem{m_q(\avariable_i,\avariable_j)}^q_{\astore,\aheap}$
      and
      $\sem{m_q(\avariable_i,\avariable_j)}^q_{\astore',\aheap_k'} = \sem{m_q(\avariable_i,\avariable_j)}^q_{\astore',\aheap'} = \amap(\sem{m_q(\avariable_i,\avariable_j)}^q_{\astore,\aheap})$.
      Then,
      \begin{align*}
      m_q(\avariable_i,\avariable_j) &\in  \glabels_k(\sem{m_q(\avariable_i,\avariable_j)}^q_{\astore,\aheap})\\ m_q(\avariable_i,\avariable_j) &\in  \glabels_k'(\amap(\sem{m_q(\avariable_i,\avariable_j)}^q_{\astore,\aheap})).
      \end{align*}
      \item The path from $\astore(\avariable_i)$ to $\astore(\avariableter^i)$ or the path from $\astore(\avariable_j)$ to $\astore(\avariableter^j)$ are not preserved in $\aheap_k$.
      Then, again by~\ref{pathproperty}, the same holds for $\aheap_k'$ with respect to $\pair{\astore'}{\aheap'}$.
     By definition of meet-points, both $\sem{m_q(\avariable_i,\avariable_j)}^q_{\astore,\aheap_k}$ and $\sem{m_q(\avariable_i,\avariable_j)}^q_{\astore',\aheap_k'}$ are therefore not defined.
    \end{itemize}
    \item[\textbf{Case: $\aheap$ witnesses (iii).}]
    If instead $\aheap$ and $\aheap'$ witness (iii) then one of the following holds.
    \begin{itemize}
    \item $\aheap_k$ also witnesses (iii), meaning that the path from $\astore(\avariable_i)$ to $\sem{m_q(\avariable_j,\avariable_i)}^q_{\astore,\aheap}$ and the
    path from $\astore(\avariable_j)$ to $\sem{m_q(\avariable_i,\avariable_j)}^q_{\astore,\aheap}$ are both preserved in $\aheap_k$. Then by \ref{pathproperty}
    the heap $\aheap_k'$ also witnesses (iii).
    We have 
    $\sem{m_q(\avariable_i,\avariable_j)}^q_{\astore,\aheap_k} = \sem{m_q(\avariable_i,\avariable_j)}^q_{\astore,\aheap}$ and $\sem{m_q(\avariable_i,\avariable_j)}^q_{\astore',\aheap_k'} = \sem{m_q(\avariable_i,\avariable_j)}^q_{\astore',\aheap'} = \amap(\sem{m_q(\avariable_i,\avariable_j)}^q_{\astore,\aheap})$.
    We conclude that
    \begin{align*}
    m_q(\avariable_i,\avariable_j) &\in  \glabels_k(\sem{m_q(\avariable_i,\avariable_j)}^q_{\astore,\aheap_k})\\
    m_q(\avariable_i,\avariable_j) &\in  \glabels_k'(\amap(\sem{m_q(\avariable_i,\avariable_j)}^q_{\astore,\aheap_k})).
    \end{align*}
    \item The path from $\astore(\avariable_i)$ to $\astore(\avariableter^i)$ and the path from $\astore(\avariable_j)$ to $\astore(\avariableter^i)$ are preserved in $\aheap_k$, whereas the path from $\astore(\avariableter^i)$ to $\sem{m_q(\avariable_i,\avariable_j)}^q_{\astore,\aheap}$ is not preserved in $\aheap_k$ (i.e. at least one of its
    locations is assigned to the other
    heap $\aheap_{3-k}$).
    Then $\aheap_k$ witnesses (i) and by definition of meet-points, it holds that
    \[
      \sem{m_q(\avariable_i,\avariable_j)}^q_{\astore,\aheap_k} = \sem{m_q(\avariable_j,\avariable_i)}^q_{\astore,\aheap_k} = \sem{m_q(\avariable_j,\avariable_i)}^q_{\astore,\aheap}
    \]
    By~\ref{pathproperty}, $\aheap_k'$ also witnesses (i). Then,
    \[
      \sem{m_q(\avariable_i,\avariable_j)}^q_{\astore',\aheap_k'} = \sem{m_q(\avariable_j,\avariable_i)}^q_{\astore',\aheap_k'} = \sem{m_q(\avariable_j,\avariable_i)}^q_{\astore',\aheap'} = \amap(\sem{m_q(\avariable_j,\avariable_i)}^q_{\astore,\aheap}).
    \]
    Then, we conclude that
    \begin{align*}
    m_q(\avariable_i,\avariable_j) &\in  \glabels_k(\sem{m_q(\avariable_j,\avariable_i)}^q_{\astore,\aheap})\\ m_q(\avariable_i,\avariable_j) &\in  \glabels_k'(\amap(\sem{m_q(\avariable_j,\avariable_i)}^q_{\astore,\aheap})).
    \end{align*}
    \item The path from $\astore(\avariable_i)$ to $\astore(\avariableter^i)$ or the path from $\astore(\avariable_j)$ to $\astore(\avariableter^i)$ are not preserved in $\aheap_k$.
    Then, by~\ref{pathproperty}, the same holds for $\aheap_k'$ with respect to $\pair{\astore'}{\aheap'}$.
    By definition of meet-points, both $\sem{m_q(\avariable_i,\avariable_j)}^q_{\astore,\aheap_k}$ and $\sem{m_q(\avariable_i,\avariable_j)}^q_{\astore',\aheap_k'}$ are not defined.
    \end{itemize}
    \item[\textbf{Case: $\aheap$ witnesses (v).}]
    Lastly, suppose  that $\aheap$ and $\aheap'$ witness (v). One of the following holds.
    \begin{itemize}
      \item The path from $\astore(\avariable_i)$ to $\astore(\avariableter^i)$ and the path from $\astore(\avariable_j)$ to $\astore(\avariableter^i)$ are both preserved in $\aheap_k$. Then, depending on whether or not the path from $\astore(\avariableter^i)$ to $\sem{m_q(\avariable_j,\avariable^i)}^q_{\astore,\aheap}$ is also preserved, $\aheap_k'$ witnesses (i) or (v).
      From~\ref{pathproperty}, the same holds for $\aheap_k'$ (where $\aheap_k'$ witnesses (i) iff $\aheap_k$ witnesses (i)).
      In both cases ((i) and (v)), it holds that $\sem{m_q(\avariable_i,\avariable_j)}^q_{\astore,\aheap_k} = \sem{m_q(\avariable_i,\avariable_j)}^q_{\astore,\aheap}$
      and
      $\sem{m_q(\avariable_i,\avariable_j)}^q_{\astore',\aheap_k'} = \sem{m_q(\avariable_i,\avariable_j)}^q_{\astore',\aheap'} = \amap(\sem{m_q(\avariable_i,\avariable_j)}^q_{\astore,\aheap})$.
      Then,
      \begin{align*}
      m_q(\avariable_i,\avariable_j) &\in  \glabels_k(\sem{m_q(\avariable_i,\avariable_j)}^q_{\astore,\aheap})\\ m_q(\avariable_i,\avariable_j) &\in  \glabels_k'(\amap(\sem{m_q(\avariable_i,\avariable_j)}^q_{\astore,\aheap})).
      \end{align*}
      \item The path from $\astore(\avariable_i)$ to $\astore(\avariableter^j)$ and the path from $\astore(\avariable_j)$ to $\astore(\avariableter^j)$ are preserved in $\aheap_k$, whereas the path from $\astore(\avariableter^j)$ to $\sem{m_q(\avariable_i,\avariable_j)}^q_{\astore,\aheap}$ is not preserved in $\aheap_k$.
      Then $\aheap_k$ witnesses (i) and by definition of meet-points, it holds that
      \[
        \sem{m_q(\avariable_i,\avariable_j)}^q_{\astore,\aheap_k} = \sem{m_q(\avariable_j,\avariable_i)}^q_{\astore,\aheap_k} = \sem{m_q(\avariable_j,\avariable_i)}^q_{\astore,\aheap}.
      \]
      By~\ref{pathproperty}, the heap $\aheap_k'$ also witnesses (i). Then,
      \[
        \sem{m_q(\avariable_i,\avariable_j)}^q_{\astore',\aheap_k'} = \sem{m_q(\avariable_j,\avariable_i)}^q_{\astore',\aheap_k'} = \sem{m_q(\avariable_j,\avariable_i)}^q_{\astore',\aheap'} = \amap(\sem{m_q(\avariable_j,\avariable_i)}^q_{\astore,\aheap})
      \]
      Then, we conclude that
      \begin{align*}
      m_q(\avariable_i,\avariable_j) &\in  \glabels_k(\sem{m_q(\avariable_j,\avariable_i)}^q_{\astore,\aheap})\\ m_q(\avariable_i,\avariable_j) &\in  \glabels_k'(\amap(\sem{m_q(\avariable_j,\avariable_i)}^q_{\astore,\aheap})).
      \end{align*}
      \item The path from $\astore(\avariable_i)$ to $\astore(\avariableter^j)$ and the path from $\astore(\avariable_j)$ to $\astore(\avariableter^j)$ are
       not preserved in $\aheap_k$. Then, by~\ref{pathproperty}, the same holds for $\aheap_k'$ with respect to $\pair{\astore'}{\aheap'}$.
      By definition of meet-points, both $\sem{m_q(\avariable_i,\avariable_j)}^q_{\astore,\aheap_k}$ and $\sem{m_q(\avariable_i,\avariable_j)}^q_{\astore',\aheap_k'}$ are
      undefined.
    \end{itemize}
    \end{enumerate}

      We conclude that for every $\alocation \in \gverts$, $m_q(\avariable_i,\avariable_j) \in \glabels_k(\alocation)$ if and only if $m_q(\avariable_i,\avariable_j) \in \glabels_k'(\amap(\alocation))$.

  \item[\textbf{$\amap_k$ satisfies \descLab{(A2)}{mapksatA2}}:] We prove that $\amap_k$ is such that
  \begin{center}
    for every $\alocation \in \gverts_k$, $\alocation \in \galloc_k$ iff $\amap_k(\alocation) \in \galloc_k'$.
  \end{center}
  From (a) in the proof that $\amap_k$ satisfies \ref{mapksatA3}, we know that for every $\alocation \in \gverts_k$, $\amap_k(\alocation) = \amap(\alocation) \in \gverts_k'$.
  ($\Rightarrow$) For the left-to-right direction, let $\alocation \in \gverts_k$ such that $\alocation \in \galloc_k$. By definition we have $\alocation \in \domain{\aheap_k}$ and therefore $\alocation \in \domain{\aheap}$ (from $\aheap_k \sqsubseteq \aheap$).
  Since $\gverts_k \subseteq \gverts$ (Lemma~\ref{lemma-labels}) we have $\alocation \in \gverts$ and hence $\alocation \in \galloc$.
  From
  $\pair{\astore}{\aheap} \approx^q_\alpha \pair{\astore'}{\aheap'}$
  we obtain $\amap(\alocation) \in \galloc'$. Hence, from the construction of $\aheap_k'$
  done in \ref{c1} we have $\amap(\alocation) \in \domain{\aheap_k'}$.
  Moreover, since $\amap_k(\alocation) = \amap(\alocation) \in \gverts_k'$,
  we conclude $\amap_k(\alocation) \in \galloc_k'$.

  ($\Leftarrow$) The right-to-left direction follows by using a similar argument.

  \item[\textbf{$\amap_k$ satisfies \descLab{(A1)}{mapksatA1}}:] We prove that $\amap_k$ is a graph isomorphism between $\pair{\gverts_k}{\gedges_k}$ and $\pair{\gverts_k'}{\gedges_k'}$.
  From (a) in the proof that $\amap_k$ satisfies \ref{mapksatA3} we already know that $\amap_k$ is a bijection from $\gverts_k$ to $\gverts_k'$. Hence, we only need to show that
    for all $\alocation,\alocation' \in \gverts_k$, $\pair{\alocation}{\alocation'} \in \gedges_k$ $\iff$ $\pair{\amap_k(\alocation)}{\amap_k(\alocation')} \in \gedges_k'$.

  ($\Rightarrow$) By definition of $\supportgraph{\astore,\aheap_k}$, we have that $\pair{\alocation}{\alocation'} \in \gedges_k$ if and only if
  $\alocation,\alocation' \in \gverts_k$ and $\aheap_k$ witnesses a non-empty (minimal) path from $\alocation$ to $\alocation'$ such that
  none of the intermediate locations of this path are in $\gverts_k$.
  From (a) in the proof that $\amap_k$ satisfies \ref{mapksatA3} we have
  \begin{center}
    $\alocation \in \gverts_k$ iff $\amap_k(\alocation) \in \gverts_k'$; \qquad $\alocation' \in \gverts_k$ iff $\amap_k(\alocation') \in \gverts_k'$.
  \end{center}
  As moreover $\gverts_k \subseteq \gverts$ and $\gverts_k' \subseteq \gverts'$ (by Lemma~\ref{lemma-labels}), by~\ref{pathproperty} it holds that $\aheap_k$ witnesses a non-empty path from $\alocation$ to $\alocation'$ if and only if $\aheap_k'$ witnesses a non-empty path from $\amap_k(\alocation)$ to $\amap_k(\alocation')$.

  Therefore, to conclude the proof, it
  remains to show that there is a non-empty path from $\amap_k(\alocation)$ to $\amap_k(\alocation')$ in $\aheap_k'$ such that no intermediate locations of this path are in $\gverts_k'$.
  If this property is satisfied, then it is satisfied by the minimal non-empty path  from $\amap_k(\alocation)$ to $\amap_k(\alocation')$, which we suppose being of length $L \geq 1$.
  Let us denote this path with $\sigma(\amap_k(\alocation),\amap_k(\alocation'))$.
  If $L = 1$ then the property is trivially satisfied (as there are no intermediate locations between $\amap_k(\alocation)$ and $\amap_k(\alocation')$).
  Otherwise,
  {\em ad absurdum}, suppose there exists $\alocation'' \in \gverts_k'$, different from $\amap_k(\alocation)$ and $\amap_k(\alocation')$,
  such that for some $L_1,L_2 \geq 1$ such that $L = L_1 + L_2$ we have ${\aheap_k'}^{L_1}(\amap_k(\alocation)) = \alocation''$
  and ${\aheap_k'}^{L_2}(\alocation'') = \amap_k(\alocation')$.
  Then, we show that $\amap_k^{-1}(\alocation'')$ (which by~\ref{mapksatA3} is in $\gverts_k$) is an intermediate location on the minimal non-empty path from $\alocation$ to $\alocation'$, in $\aheap_k$, in contradiction with $\pair{\alocation}{\alocation'} \in \gedges_k$.
  To do so, we start by considering the set $A$ of locations of $\gverts'$ that belongs to the minimal path from $\amap_k(\alocation)$ to $\amap_k(\alocation')$ (excluding these two locations). Formally, 
  \begin{gather*}
    A \egdef \left\{
      \tilde{\alocation} \in \gverts' \
      \middle|\
    \begin{aligned}
      \text{there are } L_1',L_2' \geq 1, \ \text{such that } L = L_1' + L_2',\\ {\aheap_k'}^{L_1'}(\amap_k(\alocation)) = \tilde{\alocation} \text{ and } {\aheap_k'}^{L_2'}(\tilde{\alocation}) = \amap_k(\alocation')
    \end{aligned}
    \right\}
  \end{gather*} 
  For simplicity, let $A =  \{\alocation_1,\dots,\alocation_n\}$.
  Since by Lemma~\ref{lemma-labels} we have $\alocation'' \in \gverts'$, we conclude $\alocation'' \in A$.
  Without loss of generality, we assume $\alocation_i$ and $\alocation_{i+1}$ ($i \in \interval{1}{n-1}$) to be two consecutive labelled locations in the minimal path from $\amap_k(\alocation)$ to $\amap_k(\alocation')$,
  i.e. none of the intermediate locations in the minimal path from $\alocation_i$ to $\alocation_{i+1}$ is labelled. Then, by minimality of $\sigma(\amap_k(\alocation),\amap_k(\alocation'))$ and from the definition of support graph, $\supportgraph{\astore',\aheap'}$ witnesses the following linear structure, where the arrow between two locations $\alocation_i$ and $\alocation_{i+1}$ labelled by $\gedges'$ means that $\gedges'(\alocation_i,\alocation_{i+1})$.
 \begin{center}
      \begin{tikzpicture}[baseline, node distance=1.8cm]
        \node[dot,label=above:{$\amap(\alocation)$}] (zero) at (0,0) {};
        \node[dot,label=above:{$\alocation_1$}] (one) [right of=zero] {};
        \node[dot,label=above:{$\alocation_2$}] (two) [right of=one] {};
        \node[dot,label=above:{$\alocation_{n-1}$}] (n1) [right of=two] {};
        \node[dot,label=above:{$\alocation_{n}$}] (n) [right of=n1] {};
        \node[dot,label=above right:{$\amap(\alocation')$}] (np) [right of=n] {};

         \draw[pto] (zero) edge node [above] {$\gedges'$} (one);
         \draw[pto] (one) edge node [above] {$\gedges'$} (two);
         \draw[pto] (n1) edge node [above] {$\gedges'$} (n);
         \draw[pto] (n) edge node [above] {$\gedges'$} (np);
         \draw[pto] (one) edge node [above] {$\gedges'$} (two);

         \path (two) edge [draw=none] node {$\dots$} (n1);

      \end{tikzpicture}
 \end{center}
 Since $\amap$ is a graph isomorphism from $\pair{\gverts}{\gedges}$ to $\pair{\gverts'}{\gedges'}$ (by~\ref{a1}), $\supportgraph{\astore,\aheap}$ witnesses a similar structure, as depicted below:

 \begin{center}
  \scalebox{1}{
      \begin{tikzpicture}[baseline, node distance=1.8cm]
        \node[dot,label=above:{$\amap(\alocation)$}] (zero) at (0,0) {};
        \node[dot,label=above:{$\alocation_1$}] (one) [right of=zero] {};
        \node[dot,label=above:{$\alocation_2$}] (two) [right of=one] {};
        \node[dot,label=above:{$\alocation_{n-1}$}] (n1) [right of=two] {};
        \node[dot,label=above:{$\alocation_{n}$}] (n) [right of=n1] {};
        \node[dot,label=above right:{$\amap(\alocation')$}] (np) [right of=n] {};
        \node[dot,label=below:{$\alocation$}] (Fzero) [below of=zero] {};
        \node[dot,label=below:{$\amap^{-1}(\alocation_1)$}] (Fone) [right of=Fzero] {};
        \node[dot,label=below:{$\amap^{-1}(\alocation_2)$}] (Ftwo) [right of=Fone] {};
        \node[dot,label=below:{$\amap^{-1}(\alocation_{n-1})$}] (Fn1) [right of=Ftwo] {};
        \node[dot,label=below:{$\amap^{-1}(\alocation_{n})$}] (Fn) [right of=Fn1] {};
        \node[dot,label=below:{$\alocation'$}] (Fnp) [right of=Fn] {};

         \draw[pto] (zero) edge node [above] {$\gedges'$} (one);
         \draw[pto] (one) edge node [above] {$\gedges'$} (two);
         \draw[pto] (n1) edge node [above] {$\gedges'$} (n);
         \draw[pto] (n) edge node [above] {$\gedges'$} (np);
         \draw[pto] (one) edge node [above] {$\gedges'$} (two);
         \draw[pto] (Fzero) edge node [below] {$\gedges$} (Fone);
         \draw[pto] (Fone) edge node [below] {$\gedges$} (Ftwo);
         \draw[pto] (Fn1) edge node [below] {$\gedges$} (Fn);
         \draw[pto] (Fn) edge node [below] {$\gedges$} (Fnp);
         \draw[pto] (Fone) edge node [below] {$\gedges$} (Ftwo);

         \path (two) edge [draw=none] node {$\dots$} (n1);
         \path (Ftwo) edge [draw=none] node {$\dots$} (Fn1);

         \path[dotted,->] (zero) edge node [left] {$\amap^{-1}$} (Fzero);
         \path[dotted,->] (one) edge node [left] {$\amap^{-1}$} (Fone);
         \path[dotted,->] (np) edge node [left] {$\amap^{-1}$} (Fnp);
         \path[dotted,->] (n) edge node [left] {$\amap^{-1}$} (Fn);
         \path[dotted,->] (n1) edge node [left] {$\amap^{-1}$} (Fn1);
         \path[dotted,->] (two) edge node [left] {$\amap^{-1}$} (Ftwo);

      \end{tikzpicture}
  }
 \end{center}
 Since $\amap$ is a bijection, for all distinct $i,j \in \interval{1}{n}$, we have $\amap^{-1}(\alocation_i) \neq \amap^{-1}(\alocation_j)$.
 Thus, every $\amap^{-1}(\alocation_i)$ ($i \in \interval{1}{n}$) is in the minimal path from $\alocation$ to $\alocation'$ in $\aheap$, and therefore from $\pair{\alocation}{\alocation'} \in \gedges_k$ we obtain that
 $\{\amap^{-1}(\alocation_1),\dots,\amap^{-1}(\alocation_n)\} \subseteq \gbtw_k(\alocation,\alocation')$.
 This leads to a contradiction. 
 Indeed, from $\alocation'' \in A$ we conclude $\amap^{-1}(\alocation'') \in \gbtw_k(\alocation,\alocation')$. Then, as 
$\amap_k^{-1}(\alocation'') \in \gverts_k$ (by~\ref{mapksatA3}), we conclude that $\aheap_k$ witnesses a labelled intermediate 
location in the minimal non-empty path from $\alocation$ to $\alocation'$
 in contradiction with $\pair{\alocation}{\alocation'} \in \gedges_k$.
 Hence, in $\aheap_k'$ there are no labelled locations in the minimal path from $\amap_k(\alocation)$ to $\amap_k(\alocation')$, allowing us to conclude that $\pair{\amap_k(\alocation)}{\amap_k(\alocation')} \in \gedges'$.

 ($\Leftarrow$) The right-to-left direction is analogous (thanks to the fact that $\amap$ and $\amap_k$ are bijections).

It remains to check the satisfaction of the conditions  \ref{mapksatA4} and  \ref{mapksatA5} that
involve arithmetical constraints. 

 \item[\textbf{$\amap_k$ satisfies \descLab{(A4)}{mapksatA4}}:]
   We show that for every $\pair{\alocation}{\alocation'} \in \gedges_k$ we have
   \[\min(\alpha_k,\card{\gbtw_k(\alocation,\alocation')}) = \min(\alpha_k,\card{\gbtw_k'(\amap_k(\alocation),\amap_k(\alocation'))}).\]

   First, we show the following intermediate result.

   \begin{enumerate}[label=\textbf{($\amap_k$-A4.\Roman*)}, align = left]
     \item\label{path-prop1} Let $\pair{\alocation}{\alocation'} \in \gedges_k$.
     There is a (possibly empty) set $\{\alocation_1,\dots,\alocation_n\} \subseteq \gverts$ such that
     \vspace{2pt}
     \begin{enumerate}[label=(\alph*), align = left]
       \setlength{\itemsep}{3pt}
       \item\label{path-inter-1}
        By defining $\alocation_0 \egdef \alocation$ and $\alocation_{n+1} \egdef \alocation'$, we have
        that for every $i \in \interval{0}{n}$, $\pair{\alocation_i}{\alocation_{i+1}} \in \gedges$;
       \item\label{path-inter-break}
        $\{\alocation_1,\dots,\alocation_n\} = \gbtw_k(\alocation,\alocation') \cap \gverts$ and $\{\amap(\alocation_1),\dots,\amap(\alocation_n)\} = \gbtw_k'(\amap(\alocation),\amap(\alocation'))  \cap \gverts'$;
       \item\label{path-inter-2}
        $\gbtw_k(\alocation,\alocation') = \{\alocation_1,\dots,\alocation_n\} \cup \bigcup_{i \in \interval{0}{n}} \gbtw(\alocation_i,\alocation_{i+1})$;
       \item\label{path-inter-3}
        $\gbtw_k'(\amap(\alocation),\amap(\alocation')) = \{\amap(\alocation_1),\dots,\amap(\alocation_n)\} \cup \bigcup_{i \in \interval{0}{n}} \gbtw'(\amap(\alocation_i),\amap(\alocation_{i+1}))$.
     \end{enumerate}
     \vspace{4pt}
     \item[(Proof of \ref{path-prop1})]
       The proof follows rather closely the arguments used for the proof of \ref{mapksatA1}.
       Suppose $\pair{\alocation}{\alocation'} \in \gedges_k$. Since $\amap_k$ satisfies \ref{mapksatA1} we have $\pair{\amap_k(\alocation)}{\amap_k(\alocation')} \in \gedges_k'$.
       By~Lemma~\ref{lemma-labels}, $\gverts_k \subseteq \gverts$ and $\gverts_k' \subseteq \gverts'$. Informally, this means that a subheap cannot contain new labelled locations w.r.t.\ the original heap.
       It can however be the case that the set $\gbtw_k(\alocation,\alocation')$ contains locations that are
       labelled w.r.t. $\pair{\astore}{\aheap}$ (i.e. locations in $\gverts$) and that are not labelled for $\pair{\astore}{\aheap_k}$ (by definition of $\gbtw_k(\alocation,\alocation')$).
       More precisely, these locations correspond to meet-points of $\aheap$.
       Hence, let us consider the set
       $
       \{\alocation_1,\dots,\alocation_n \} =  \gbtw_k(\alocation,\alocation') \cap \gverts
       $
       (as required by~\ref{path-inter-break}).
       Let us define $\alocation_0 \egdef \alocation$, $\alocation_{n+1} \egdef \alocation'$.
       Since $\gbtw_k(\alocation,\alocation')$ describes the set of locations of the minimal path from $\alocation$ to $\alocation'$ in $\aheap_k$, and moreover $\aheap_k \sqsubseteq \aheap$,
       there is an ordering on the locations $\alocation_1,\dots,\alocation_n$,
       w.l.o.g.\ say $\alocation_1 < \dots < \alocation_n$ such that
      for every $i \in \interval{0}{n}$ $\pair{\alocation_i}{\alocation_{i+1}} \in \gedges$ (property~\ref{path-inter-1} of \ref{path-prop1}).
      So, $\{\alocation_1,\dots,\alocation_n \}$ is a set of locations in $\gbtw_k(\alocation,\alocation')$ that correspond to meet-points of $\pair{\astore}{\aheap}$.
        Now, as done in the proof of~\ref{mapksatA1}, $\amap$ is a graph isomorphism from $\pair{\gverts}{\gedges}$ to $\pair{\gverts'}{\gedges'}$ and therefore these two structures witness the following correspondence

        \begin{center}
         \scalebox{1}{
             \begin{tikzpicture}[baseline, node distance=1.8cm]
               \node[dot,label=above left:{$\alocation={\alocation_0}$}] (zero) at (0,0) {};
               \node[dot,label=above:{$\alocation_1$}] (one) [right of=zero] {};
               \node[dot,label=above:{$\alocation_2$}] (two) [right of=one] {};
               \node[dot,label=above:{$\alocation_{n-1}$}] (n1) [right of=two] {};
               \node[dot,label=above:{$\alocation_{n}$}] (n) [right of=n1] {};
               \node[dot,label=above right:{$\alocation'={\alocation_{n+1}}$}] (np) [right of=n] {};
               \node[dot,label=below:{$\amap(\alocation)$}] (Fzero) [below of=zero] {};
               \node[dot,label=below:{$\amap(\alocation_1)$}] (Fone) [right of=Fzero] {};
               \node[dot,label=below:{$\amap(\alocation_2)$}] (Ftwo) [right of=Fone] {};
               \node[dot,label=below:{$\amap(\alocation_{n-1})$}] (Fn1) [right of=Ftwo] {};
               \node[dot,label=below:{$\amap(\alocation_{n})$}] (Fn) [right of=Fn1] {};
               \node[dot,label=below:{$\amap(\alocation')$}] (Fnp) [right of=Fn] {};

                \draw[pto] (zero) edge node [above] {$\gedges$} (one);
                \draw[pto] (one) edge node [above] {$\gedges$} (two);
                \draw[pto] (n1) edge node [above] {$\gedges$} (n);
                \draw[pto] (n) edge node [above] {$\gedges$} (np);
                \draw[pto] (one) edge node [above] {$\gedges$} (two);
                \draw[pto] (Fzero) edge node [below] {$\gedges'$} (Fone);
                \draw[pto] (Fone) edge node [below] {$\gedges'$} (Ftwo);
                \draw[pto] (Fn1) edge node [below] {$\gedges'$} (Fn);
                \draw[pto] (Fn) edge node [below] {$\gedges'$} (Fnp);
                \draw[pto] (Fone) edge node [below] {$\gedges'$} (Ftwo);

                \path (two) edge [draw=none] node {$\dots$} (n1);
                \path (Ftwo) edge [draw=none] node {$\dots$} (Fn1);

                \path[dotted,->] (zero) edge node [left] {$\amap$} (Fzero);
                \path[dotted,->] (one) edge node [left] {$\amap$} (Fone);
                \path[dotted,->] (np) edge node [left] {$\amap$} (Fnp);
                \path[dotted,->] (n) edge node [left] {$\amap$} (Fn);
                \path[dotted,->] (n1) edge node [left] {$\amap$} (Fn1);
                \path[dotted,->] (two) edge node [left] {$\amap$} (Ftwo);

             \end{tikzpicture}
         }
        \end{center}
        \descLab{($\star$)}{prop-star-prime}: where in particular for every $i \in \interval{0}{n}$,
        $\pair{\amap(\alocation_i)}{\amap(\alocation_{i+1})} \in \gedges'$.
        Notice that,
        since $\amap$ is a bijection, for all two distinct $i,j \in \interval{1}{n}$, we have $\amap(\alocation_i) \neq \amap(\alocation_j)$.
        Most importantly, as $\alocation,\alocation' \not\in \gbtw_k(\alocation,\alocation')$ (this holds by definition of this set), we conclude $\alocation,\alocation' \not\in\{\alocation_1,\dots,\alocation_n\}$ and therefore
        $\amap(\alocation),\amap(\alocation') \not\in\{\amap(\alocation_1),\dots,\amap(\alocation_n)\}$.
        Thanks to this property, $\{\amap(\alocation_1),\dots,\amap(\alocation_n)\}$ is the set of labelled locations in the minimal path from $\amap(\alocation)$ to $\amap(\alocation')$ in $\aheap'$.
        Equivalently, $\{\amap(\alocation_1),\dots,\amap(\alocation_n)\} = \gbtw_k'(\amap(\alocation),\amap(\alocation'))  \cap \gverts'$ (property~\ref{path-inter-break} of~\ref{path-prop1}).
        Lastly, by \ref{path-inter-1} and \ref{path-inter-break},
        from the fact that
        heaps are functional and
        $\aheap_k$ witnesses a path from $\alocation$ to $\alocation'$, it follows that
        for every $i \in \interval{0}{n}$ $\gbtw{(\alocation_i,\alocation_{i+1}}) \subseteq \gbtw_k(\alocation,\alocation')$.
        Similarly, for every $i \in \interval{0}{n}$ $\gbtw'{(\amap(\alocation_i),\amap(\alocation_{i+1}})) \subseteq \gbtw_k'(\amap(\alocation),\amap(\alocation'))$.
        Therefore, we conclude that
        \vspace{3pt}
        \begin{itemize}
        \item
         $\gbtw_k(\alocation,\alocation') = \{\alocation_1,\dots,\alocation_n\} \cup \bigcup_{i \in \interval{0}{n}} \gbtw(\alocation_i,\alocation_{i+1})$,
        \item
         $\gbtw_k'(\amap(\alocation),\amap(\alocation')) = \{\amap(\alocation_1),\dots,\amap(\alocation_n)\} \cup \bigcup_{i \in \interval{0}{n}} \gbtw'(\amap(\alocation_i),\amap(\alocation_{i+1}))$,
       \end{itemize}
       \vspace{4pt}
       which ends the proof of~\ref{path-prop1}.
       Notice that the two properties \ref{path-inter-2} and \ref{path-inter-3} (now proved) are well depicted by the figure above, as it shows the structure of the minimal path from $\alocation$ to $\alocation'$ in $\aheap$ (and $\aheap_k$), and the structure of the minimal path from $\amap(\alocation)$ to $\amap(\alocation')$ in $\aheap'$ (and $\aheap_k'$).
   \end{enumerate}

   We are now ready to show that $\amap_k$ satisfies \ref{a4}. Let $\pair{\alocation}{\alocation'} \in \gedges_k$ and let $\{\alocation_1,\dots,\alocation_n\} \subseteq \gverts_k$ satisfying the four properties in \ref{path-prop1}.
   Let us define $\alocation_0 \egdef \alocation$, $\alocation_{n+1} \egdef \alocation'$.
   As already stated in the proof of~\ref{path-prop1},
   $\amap$ is a bijection and therefore for all two distinct $i,j \in \interval{1}{n}$, we have $\amap(\alocation_i) \neq \amap(\alocation_j)$.
   Hence,
   \[
     \card{\{\alocation_1,\dots,\alocation_n\}} = \card{\{\amap(\alocation_1),\dots,\amap(\alocation_n)\}} = n.
    \]
  Moreover, by recalling (from Definition~\ref{definition-support-graph}) that
  \begin{itemize}
    \item $\{\galloc,\grem\} \cup \{ \gbtw(\alocation,\alocation') \mid
    (\alocation,\alocation') \in \gedges \}$ is a partition of $\domain{\aheap}$;
    \item $\{\galloc',\grem'\} \cup \{ \gbtw'(\alocation,\alocation') \mid
    (\alocation,\alocation') \in \gedges' \}$ is a partition of $\domain{\aheap'}$,
  \end{itemize}
   we conclude that for every $i \in \interval{0}{n}$,
   \begin{itemize}
    \item $\{\alocation_1,\dots,\alocation_n\} \cap \gbtw(\alocation_i,\alocation_{i+1}) = \emptyset$ and
    for all $j \in \interval{0}{n} \setminus \{i\}$, $\gbtw(\alocation_i,\alocation_{i+1}) \cap \gbtw(\alocation_j,\alocation_{j+1}) = \emptyset$;
    \item $\{\amap(\alocation_1),\dots,\amap(\alocation_n)\} \cap \gbtw'(\amap(\alocation_i),\amap(\alocation_{i+1})) = \emptyset$ and
    for all $j \in \interval{0}{n} \setminus \{i\}$,
    $\gbtw'(\amap(\alocation_i),\amap(\alocation_{i+1})) \cap \gbtw'(\amap(\alocation_j),\amap(\alocation_{j+1})) = \emptyset$.
   \end{itemize}
   Thus, the cardinalities of $\gbtw_k(\alocation,\alocation')$ and $\gbtw_k'(\amap_k(\alocation),\amap_k(\alocation'))$ can be written respectively as
   \begin{align*}
     \card{\gbtw_k(\alocation,\alocation')} = n + &\sum_{\mathclap{i \in \interval{0}{n}}} \card{\gbtw(\alocation_i,\alocation_{i+1})},\\
     \card{\gbtw_k'(\amap_k(\alocation),\amap_k(\alocation'))} = n + &\sum_{\mathclap{i \in \interval{0}{n}}} \card{\gbtw'(\amap(\alocation_i),\amap(\alocation_{i+1}))}.
   \end{align*}
 Now, $\amap$ satisfies~\ref{a4} w.r.t.\ the memory states $\pair{\astore}{\aheap}$ and $\pair{\astore'}{\aheap'}$, and therefore for every $i \in \interval{0}{n}$
  \begin{equation}
    \tag*{\textbf{($\amap$-A4)}}\label{amap-A4}
    \min(\alpha,\card{\gbtw(\alocation_i,\alocation_{i+1})}) =
    \min(\alpha,\card{\gbtw'(\amap(\alocation_i),\amap(\alocation_{i+1}))}).
  \end{equation}
We distinguish two cases. 
 \begin{itemize}
     \item If there exists $i \in \interval{0}{n}$ such that $\card{\gbtw(\alocation_i,\alocation_{i+1})} \geq \alpha$, then
     by \ref{amap-A4} we have $\card{\gbtw'(\amap(\alocation_i),\amap(\alocation_{i+1}))} \geq \alpha$.
     Since $\alpha_k \leq \alpha$, we  conclude
     \[
      \card{\gbtw_k(\alocation,\alocation')} \geq \alpha_k; \qquad
      \card{\gbtw_k'(\amap_k(\alocation),\amap_k(\alocation'))} \geq \alpha_k.
     \]
     \item Otherwise (for all $i \in \interval{0}{n}, \card{\gbtw(\alocation_i,\alocation_{i+1})} < \alpha$)
     from \ref{amap-A4} we conclude that for all $i \in \interval{0}{n}$,
     \[\card{\gbtw(\alocation_i,\alocation_{i+1})} = \card{\gbtw'(\amap(\alocation_i),\amap(\alocation_{i+1}))},\]
     which allows us to conclude that $\card{\gbtw_k(\alocation,\alocation')} = \card{\gbtw_k'(\amap_k(\alocation),\amap_k(\alocation'))}$.
  \end{itemize}
 In both cases we have shown that
 \[\min(\alpha_k,\card{\gbtw_k(\alocation,\alocation')}) = \min(\alpha_k,\card{\gbtw_k'(\amap_k(\alocation),\amap_k(\alocation'))}),\]
 concluding the proof of \ref{a4}.

  \item[\textbf{$\amap_k$ satisfies \descLab{(A5)}{mapksatA5}}:]
    Lastly, we show that
    \[
    \min(\alpha_k, \card{\grem_k}) = \min(\alpha_k, \card{\grem_k'}).
    \]
    We take advantage of the following five intermediate results.
    \begin{enumerate}[label=\textbf{(\Roman*)}, align = left]
    \item\label{rem-prop1} For every $\alocation \in \gverts \cap \domain{\aheap_k}$,
      $\alocation \in \grem_k$ if and only if $\amap(\alocation) \in \grem_k'$.

    \item[(Proof of \ref{rem-prop1})]
    ($\Rightarrow$) Suppose $\alocation \in \gverts$.
    By Definition~\ref{definition-support-graph},
    %% i.e. the set of locations that are not between the interpretations of two program variables,
    it holds that $\alocation \in \grem_k$ if and only if
    \begin{enumerate}[label=(\roman*), align = left]
      \item\label{rem-prop-l1} $\alocation \not\in \gverts_k$ and $\alocation \in \domain{\aheap_k}$;
      \item\label{rem-prop-l2} there is no $\pair{\alocation'}{\alocation''} \in \gedges_k$ such that $\alocation \in \gbtw_k(\alocation',\alocation'')$.
    \end{enumerate}
    From (a) in the proof that $\amap_k$ satisfies \ref{mapksatA3}, we have $\amap(\alocation) = \amap_k(\alocation) \not\in \gverts_k'$. By $\alocation \in \domain{\aheap_k}$ and $\alocation \in \grem_k$, from the construction step~\ref{c1} we conclude $\amap(\alocation) \in \domain{\aheap_k'}$.

    Besides, it cannot be that there is
    $\pair{\alocation_1}{\alocation_2} \in \gedges_k'$ such that $\amap(\alocation) \in \gbtw_k'(\alocation_1,\alocation_2)$. Indeed, if that was the case, we can
    apply the proof ad absurdum showed in the left-to-right direction of the proof that
    $\amap_k$ satisfies \ref{mapksatA1}, and derive that 
    $\alocation \in \gbtw_k'(\amap^{-1}_k(\alocation_1),\amap^{-1}_k(\alocation_2))$, in contradiction
    with~\ref{rem-prop-l2}.
    Hence, as we proved that \ref{rem-prop-l1} and \ref{rem-prop-l2} hold for $\amap(\alocation)$, by Definition~\ref{definition-support-graph}, we conclude that $\amap(\alocation) \in \grem_k'$.

    ($\Leftarrow$) The right-to-left direction follows by using similar arguments.

    \item\label{rem-prop-break}  $\card{\grem_k \cap \gverts} = \card{\grem_k' \cap \gverts'}$ and $\card{\grem_k \cap \galloc} = \card{\grem_k' \cap \galloc'}$.
    \item[(Proof of \ref{rem-prop-break})]
    The first equality follows directly from \ref{rem-prop1} and the fact that $\amap$ is a bijection.
    Then, since $\galloc \subseteq \gverts$, $\galloc' \subseteq \gverts'$ (by definition) and
    $\alocation \in \galloc\ \Leftrightarrow\ \amap(\alocation) \in \galloc'$ (from the property~\ref{a2} of $\amap$), the second equality also holds.

    \item\label{rem-prop2} For all $\pair{\alocation}{\alocation'} \in \gedges$,
    \begin{center}
      $\gbtw(\alocation,\alocation') \cap \domain{\aheap_k} \subseteq \grem_k$ if and only if
     $\gbtw'(\amap(\alocation),\amap(\alocation')) \cap \domain{\aheap_k'} \subseteq \grem_k'$.
    \end{center}
    \item[(Proof of \ref{rem-prop2})]
    Recall that $\pair{\alocation}{\alocation'} \in \gedges$ implies $\pair{\amap(\alocation)}{\amap(\alocation')} \in \gedges'$.
    From \ref{cip2}, we notice that
    \begin{center}
      $\gbtw(\alocation,\alocation') \cap \domain{\aheap_k} = \emptyset$ if and only if
      $\gbtw'(\amap(\alocation),\amap(\alocation')) \cap \domain{\aheap_k'} = \emptyset$,
    \end{center}
    as by definition
    $L_k = \gbtw(\alocation,\alocation') \cap \domain{\aheap_k}$
    and $L_k' = \gbtw'(\amap(\alocation),\amap(\alocation')) \cap \domain{\aheap_k'}$ in the case~\ref{c2} of the construction (where we consider $\gbtw(\alocation,\alocation')$).
    Therefore,~\ref{rem-prop2} trivially holds when $\gbtw(\alocation,\alocation') \cap \domain{\aheap_k}$ is empty. In the following, we assume
    $\gbtw(\alocation,\alocation') \cap \domain{\aheap_k}$ and $\gbtw'(\amap(\alocation),\amap(\alocation')) \cap \domain{\aheap_k'}$ to be non-empty.

    $(\Rightarrow)$
    Let $\pair{\alocation}{\alocation'} \in \gedges$ and suppose
    $\gbtw(\alocation,\alocation') \cap \domain{\aheap_k} \subseteq \grem_k$.
    Let us consider a location $\tilde{\alocation} \in \gbtw'(\amap(\alocation),\amap(\alocation')) \cap \domain{\aheap_k'}$.
    We show that $\tilde{\alocation} \in \grem_k'$.
    As shown in the proof of \ref{rem-prop1}, $\tilde{\alocation} \in \grem_k'$ holds if and only if
    \begin{enumerate}[label=(\roman*), align = left]
      \item\label{rem-prop-l3} $\tilde{\alocation} \not\in \gverts_k'$ and $\tilde{\alocation} \in \domain{\aheap_k'}$;
      \item\label{rem-prop-l4} there is no $\pair{\tilde{\alocation_1}}{\tilde{\alocation_2}} \in \gedges_k'$ such that $\tilde{\alocation} \in \gbtw_k'(\tilde{\alocation_1},\tilde{\alocation_2})$.
    \end{enumerate}
    The proof of~\ref{rem-prop-l3} is straightforward:
    by $\tilde{\alocation} \in \gbtw'(\amap(\alocation),\amap(\alocation')) \cap \domain{\aheap_k'}$, we directly conclude that $\tilde{\alocation} \in \domain{\aheap_k'}$ (first condition in~\ref{rem-prop-l3}) and
    $\tilde{\alocation} \in \gbtw'(\amap(\alocation),\amap(\alocation'))$.
    From the latter membership and
    by Definition~\ref{definition-support-graph}, we conclude $\tilde{\alocation} \not\in \gverts'$.
    Lastly, by Lemma~\ref{lemma-labels}, $\gverts_k' \subseteq \gverts'$ and therefore $\tilde{\alocation} \not\in \gverts_k'$ (second condition in~\ref{rem-prop-l3}).
    Let us now prove~\ref{rem-prop-l4}.
    \emph{Ad absurdum}, suppose that there is $\pair{\tilde{\alocation_1}}{\tilde{\alocation_2}} \in \gedges_k'$ such that $\tilde{\alocation} \in \gbtw_k'(\tilde{\alocation_1},\tilde{\alocation_2})$.
    Then, as we have shown that $\amap_k$ satisfies \ref{mapksatA1}, $\pair{\amap^{-1}_k(\tilde{\alocation_1})}{\amap^{-1}_k(\tilde{\alocation_n})} \in \gedges_k$.
    We apply \ref{path-prop1}: there is a set $\{\alocation_1,\dots,\alocation_n\} \subseteq \gverts$ such that
    \vspace{2pt}
    \begin{enumerate}[label=(\alph*), align = left]
      \setlength{\itemsep}{3pt}
      \item\label{rem-path-inter-1}
       by defining $\alocation_0 \egdef \amap^{-1}_k(\tilde{\alocation_1})$ and $\alocation_{n+1} \egdef \amap^{-1}_k(\tilde{\alocation_2})$, for every $i \in \interval{0}{n}$, $\pair{\alocation_i}{\alocation_{i+1}} \in \gedges$;
      \item\label{rem-path-inter-break}
       $\{\alocation_1,\dots,\alocation_n\} = \gbtw_k(\amap^{-1}_k(\tilde{\alocation_1}),\amap^{-1}_k(\tilde{\alocation_2})) \cap \gverts$ and
       $\{\amap(\alocation_1),\dots,\amap(\alocation_n)\} = \gbtw_k'(\tilde{\alocation_1},\tilde{\alocation_2})  \cap \gverts'$;
      \item\label{rem-path-inter-2}
       $\gbtw_k(\amap^{-1}_k(\tilde{\alocation_1}),\amap^{-1}_k(\tilde{\alocation_2})) = \{\alocation_1,\dots,\alocation_n\} \cup \bigcup_{i \in \interval{0}{n}} \gbtw(\alocation_i,\alocation_{i+1})$;
      \item\label{rem-path-inter-3}
       $\gbtw_k'(\tilde{\alocation_1},\tilde{\alocation_2}) = \{\amap(\alocation_1),\dots,\amap(\alocation_n)\} \cup \bigcup_{i \in \interval{0}{n}} \gbtw'(\amap(\alocation_i),\amap(\alocation_{i+1}))$.
    \end{enumerate}
    \vspace{4pt}
    Recalling that
    $\{\galloc',\grem'\} \cup \{ \gbtw'(\alocation,\alocation') \mid
      (\alocation,\alocation') \in \gedges' \}$ is a partition of $\domain{\aheap'}$
    (by Definition~\ref{definition-support-graph}),
    from \ref{rem-path-inter-3} together with $\tilde{\alocation} \in \gbtw'(\amap(\alocation),\amap(\alocation'))$ and $\tilde{\alocation} \in \gbtw_k'(\tilde{\alocation_1},\tilde{\alocation_2})$, we conclude that
    there is $i \in \interval{0}{n}$ such that $\pair{\amap(\alocation)}{\amap(\alocation')} = \pair{\amap(\alocation_i)}{\amap(\alocation_{i+1})}$.
    From \ref{rem-path-inter-2}, we conclude that
    $\gbtw(\alocation,\alocation') \subseteq \gbtw_k(\amap_k^{-1}(\tilde{\alocation_1}),\amap_k^{-1}(\tilde{\alocation_2})).$
    Notice that, by Definition~\ref{definition-support-graph},
    this inclusion also entails $\gbtw(\alocation,\alocation') \cap \domain{\aheap_k} = \gbtw(\alocation,\alocation')$, i.e. every element in $\gbtw(\alocation,\alocation')$ is a memory cell of $\aheap_k$.
    Hence,
    $\gbtw(\alocation,\alocation') \cap \domain{\aheap_k} \subseteq \gbtw_k(\amap_k^{-1}(\tilde{\alocation_1}),\amap_k^{-1}(\tilde{\alocation_2}))$, in contradiction with $\gbtw(\alocation,\alocation') \cap \domain{\aheap_k} \subseteq \grem_k$.
    Indeed, we assumed $\gbtw(\alocation,\alocation') \cap \domain{\aheap_k}$ to be non-empty, and
    by Definition~\ref{definition-support-graph}, $\grem_k \cap \gbtw_k(\amap_k^{-1}(\tilde{\alocation_1}),\amap_k^{-1}(\tilde{\alocation_2})) = \emptyset$.
    Therefore, it cannot be that there is $\pair{\tilde{\alocation_1}}{\tilde{\alocation_2}} \in \gedges_k'$ such that $\tilde{\alocation} \in \gbtw_k'(\tilde{\alocation_1},\tilde{\alocation_2})$.
    As \ref{rem-prop-l3} and \ref{rem-prop-l4} hold, we conclude that
    $\tilde{\alocation} \in \grem_k'$.

    ($\Leftarrow$) The right-to-left direction follows by using similar arguments.

    \item\label{rem-prop-aux2} For every $\pair{\alocation}{\alocation'} \in \gedges$, $\grem_k \cap \gbtw(\alocation,\alocation') = \emptyset$ or $\gbtw(\alocation,\alocation') \cap \domain{\aheap_k} \subseteq \grem_k$.
    Similarly, for every $\pair{\alocation}{\alocation'} \in \gedges'$, $\grem_k' \cap \gbtw'(\alocation,\alocation') = \emptyset$ or $\gbtw'(\alocation,\alocation') \cap \domain{\aheap_k'} \subseteq \grem_k'$.

    \item[(Proof of~\ref{rem-prop-aux2})] Let us prove the first statement (the second one is proven analogously). By Definition~\ref{definition-support-graph}, we have
    \begin{enumerate}[label = (\alph*), align = left]
      \item\label{last-rem-1-aux} $\{\galloc,\grem\} \cup \{ \gbtw(\alocation,\alocation') \mid
      (\alocation,\alocation') \in \gedges \}$ is a partition of $\domain{\aheap}$;
      \item\label{last-rem-2-aux} $\{\galloc_k,\grem_k\} \cup \{ \gbtw_k(\alocation,\alocation') \mid
      (\alocation,\alocation') \in \gedges_k \}$ is a partition of $\domain{\aheap_k}$.
    \end{enumerate}
    Let $\pair{\alocation}{\alocation'} \in \gedges$.
    \emph{Ad absurdum}, suppose that $\grem_k \cap \gbtw(\alocation,\alocation') \neq \emptyset$ and there is $\tilde{\alocation} \in (\gbtw(\alocation,\alocation') \cap \domain{\aheap_k}) \setminus \grem_k$.
    Since $\tilde{\alocation} \in \gbtw(\alocation,\alocation')$, 
    by Definition~\ref{definition-support-graph}, $\tilde{\alocation}$ is not a labelled location w.r.t. $\pair{\astore}{\aheap}$, i.e.\ $\tilde{\alocation} \not \in \gverts$.
    Moreover, from $\tilde{\alocation} \in \domain{\aheap_k}$ and $\tilde{\alocation} \not\in \grem_k$, by
    \ref{last-rem-2-aux} we conclude that $\tilde{\alocation} \in \galloc_k$ or there is $\pair{\alocation_1'}{\alocation_2'} \in \gedges_k$ such that $\tilde{\alocation} \in \gbtw_k(\alocation_1',\alocation_2')$.
    \begin{itemize}
      \item If $\tilde{\alocation} \in \galloc_k$, then $\tilde{\alocation} \in \gverts_k$ (by Definition~\ref{definition-support-graph}). Recall that,
      by Lemma~\ref{lemma-labels}, $\gverts_k \subseteq \gverts$.
      Therefore, we conclude $\tilde{\alocation} \in \gverts$ (contradiction).
      \item Otherwise, there is $\pair{\alocation_1'}{\alocation_2'} \in \gedges_k$ such that $\tilde{\alocation} \in \gbtw_k(\alocation_1',\alocation_2')$. By~\ref{path-prop1}, there is a set $\{\alocation_1,\dots,\alocation_n\} \subseteq \gverts$ such that
      \vspace{2pt}
      \begin{enumerate}[label=(\alph*), align = left, start = 3]
        \setlength{\itemsep}{3pt}
        \item\label{rem-4-path-1}
         by defining $\alocation_0 \egdef \alocation_1'$ and $\alocation_{n+1} \egdef \alocation_2'$, we have
         that for every $i \in \interval{0}{n}$, $\pair{\alocation_i}{\alocation_{i+1}} \in \gedges$;
        \item\label{rem-4-path-2}
         $\{\alocation_1,\dots,\alocation_n\} = \gbtw_k(\alocation_1',\alocation_2') \cap \gverts$;
        \item\label{rem-4-path-3}
         $\gbtw_k(\alocation_1',\alocation_2') = \{\alocation_1,\dots,\alocation_n\} \cup \bigcup_{i \in \interval{0}{n}} \gbtw(\alocation_i,\alocation_{i+1})$.
      \end{enumerate}
      Since $\tilde{\alocation} \not\in \gverts$, from~\ref{rem-4-path-3} we conclude that there is $i \in \interval{0}{n}$ such that $\tilde{\alocation} \in \gbtw(\alocation_i,\alocation_{i+1})$.
      Then, by \ref{last-rem-1-aux} and $\tilde{\alocation} \in \gbtw(\alocation,\alocation')$, it must be that $\pair{\alocation_i}{\alocation_{i+1}} = \pair{\alocation}{\alocation'}$.
      Hence, from \ref{rem-4-path-3}, $\gbtw(\alocation,\alocation') \subseteq \gbtw_k(\alocation_1',\alocation_2')$.
      By~\ref{last-rem-2-aux} $\grem_k \cap \gbtw_k(\alocation_1',\alocation_2') = \emptyset$ and therefore we conclude that $\grem_k \cap \gbtw(\alocation,\alocation') = \emptyset$, in contradiction with the hypothesis $\grem_k \cap \gbtw(\alocation,\alocation') \neq \emptyset$.
    \end{itemize}
    As in both cases we reached a contradiction, it must be that
    $\grem_k \cap \gbtw(\alocation,\alocation') = \emptyset$ or $\gbtw(\alocation,\alocation') \cap \domain{\aheap_k} \subseteq \grem_k$, concluding the proof.

    \item\label{rem-prop3} It holds that $\grem \cap \domain{\aheap_k} \subseteq \grem_k$ and $\grem' \cap \domain{\aheap_k'} \subseteq \grem_k'$.
    \item[(Proof of \ref{rem-prop3})] The proof is rather immediate. Let us prove that
     $\grem \cap \domain{\aheap_k} \subseteq \grem_k$ (the proof of $\grem' \cap \domain{\aheap_k'} \subseteq \grem_k'$ is analogous).
     By Definition~\ref{definition-support-graph}, we have
     \begin{enumerate}[label = (\alph*), align = left]
       \item\label{last-rem-1} $\{\galloc,\grem\} \cup \{ \gbtw(\alocation,\alocation') \mid
       (\alocation,\alocation') \in \gedges \}$ is a partition of $\domain{\aheap}$;
       \item\label{last-rem-2} $\{\galloc_k,\grem_k\} \cup \{ \gbtw_k(\alocation,\alocation') \mid
       (\alocation,\alocation') \in \gedges_k \}$ is a partition of $\domain{\aheap_k}$.
     \end{enumerate}
     Suppose $\alocation \in \grem \cap \domain{\aheap_k}$.
     \emph{Ad absurdum}, suppose $\alocation \not \in \grem_k$.
     Then, as $\alocation \in \domain{\aheap_k}$, from \ref{last-rem-2} it holds that
     $\alocation \in \galloc_k$ or there is $(\alocation_1,\alocation_2) \in \gedges_k$ such that
     $\alocation \in \gbtw_k(\alocation_1,\alocation_2)$.
     \begin{itemize}
     \item
       If $\alocation \in \galloc_k$ then $\alocation \in \gverts_k$ (by Definition~\ref{definition-support-graph}).
       By Lemma~\ref{lemma-labels}, $\gverts_k \subseteq \gverts$ and therefore $\alocation \in \gverts$.
       Together with $\aheap_k \sqsubseteq \aheap$ (hence $\domain{\aheap_k} \subseteq \domain{\aheap}$), the fact that $\alocation \in \gverts$ implies $\alocation \in \galloc$ (again by Definition~\ref{definition-support-graph}).
       From~\ref{last-rem-1} we then obtain $\alocation \not \in \grem$, a contradiction.
     \item
      Otherwise, let $(\alocation_1,\alocation_2) \in \gedges_k$ such that
      $\alocation \in \gbtw_k(\alocation_1,\alocation_2)$.
      Therefore, by Definition~\ref{definition-support-graph}
      there are $L_1,L_2 \geq 1$ such that $\aheap_k^{L_1}(\alocation_1) = \alocation$ and $\aheap_k^{L_2}(\alocation) = \alocation_2$.
      Since $\aheap_k \sqsubseteq \aheap$, we conclude that
      there are $L_1,L_2 \geq 1$ such that $\aheap^{L_1}(\alocation_1) = \alocation$ and $\aheap^{L_2}(\alocation) = \alocation_2$.
      Now, notice that $(\alocation_1,\alocation_2) \in \gedges_k$ implies $\alocation_1,\alocation_2 \in \gverts_k$ (by Definition~\ref{definition-support-graph}) and therefore
      $\alocation_1,\alocation_2 \in \gverts$ (since by Lemma~\ref{lemma-labels} $\gverts_k \subseteq \gverts$).
      We  conclude that $\alocation$ is an intermediate location in the path of $\aheap$ from
      the labelled location $\alocation_1$ to the labelled location $\alocation_2$.
      Hence, 
      either $\alocation \in \galloc$ or, by Definition~\ref{definition-support-graph}, there is $\pair{\tilde{\alocation_1}}{\tilde{\alocation_2}} \in \gedges$ such that $\alocation \in \gbtw(\tilde{\alocation_1},\tilde{\alocation_2})$.
      In both cases,
      by~\ref{last-rem-1} we have $\alocation \not \in \grem$, a contradiction.
     \end{itemize}
     In both cases we conclude that $\alocation \not\in \grem_k$ cannot hold, ending the proof of \ref{rem-prop3}.
  \end{enumerate}

    By taking advantage of \ref{rem-prop1}--\ref{rem-prop3}, we are now 
    ready to show that $\amap_k$ satisfies \ref{a5}, i.e.
    \[
      \min(\alpha_k, \card{\grem_k}) = \min(\alpha_k, \card{\grem_k'}).
    \]
    First of all, we recall that by Definition~\ref{definition-support-graph} we have
    \begin{enumerate}[label = (\alph*), align = left]
      \item\label{a5-rem-L} $\{\galloc,\grem\} \cup \{ \gbtw(\alocation,\alocation') \mid
      (\alocation,\alocation') \in \gedges \}$ is a partition of $\domain{\aheap}$;
      \item\label{a5-rem-R} $\{\galloc',\grem'\} \cup \{ \gbtw'(\alocation,\alocation') \mid
      (\alocation,\alocation') \in \gedges' \}$ is a partition of $\domain{\aheap'}$.
    \end{enumerate}
    Since $\grem_k \subseteq \domain{\aheap_k} \subseteq \domain{\aheap}$ and $\grem_k' \subseteq \domain{\aheap_k'} \subseteq \domain{\aheap'}$, we can use~\ref{a5-rem-L} and~\ref{a5-rem-R} to decompose $\grem_k$ and $\grem_k'$ as follows:
    \begin{itemize}[align = left]
      \item $\grem_k = (\grem_k \cap \galloc) \cup (\grem_k \cap \grem) \cup \bigcup_{{\pair{\alocation}{\alocation'} \in \gedges}} \big( \grem_k \cap \gbtw(\alocation,\alocation')\big)$,
      \item $\grem_k' = (\grem_k' \cap \galloc') \cup (\grem_k' \cap \grem') \cup \bigcup_{{\pair{\alocation}{\alocation'} \in \gedges'}} \big( \grem_k' \cap \gbtw'(\alocation,\alocation')\big)$,
    \end{itemize}
    where all unions are performed on disjoint sets.
    By \ref{rem-prop-aux2},
    $\bigcup_{{\pair{\alocation}{\alocation'} \in \gedges}} ( \grem_k \cap \gbtw(\alocation,\alocation'))$
    and
    $\bigcup_{{\pair{\alocation}{\alocation'} \in \gedges'}} ( \grem_k' \cap \gbtw'(\alocation,\alocation'))$
    are respectively equal to
    \[
      \qquad\qquad
      \bigcup_{\mathclap{\substack{\pair{\alocation}{\alocation'} \in \gedges\\ \gbtw(\alocation,\alocation') \cap \domain{\aheap_k} \subseteq \grem_k}}}
      \big(\gbtw(\alocation,\alocation') \cap \domain{\aheap_k}\big)
      \qquad\qquad\qquad\qquad
      \bigcup_{\mathclap{\substack{\pair{\alocation}{\alocation'} \in \gedges'\\ \gbtw'(\alocation,\alocation') \cap \domain{\aheap_k'} \subseteq \grem_k'}}}
      \big(\gbtw'(\alocation,\alocation') \cap \domain{\aheap_k'}\big)
    \]
    Since $\grem_k \subseteq \domain{\aheap_k}$ and $\grem_k' \subseteq \domain{\aheap_k'}$,
    by~\ref{rem-prop3} we have $\grem_k \cap \grem = \grem \cap \domain{\aheap_k}$ and $\grem_k' \cap \grem' = \grem' \cap \domain{\aheap_k'}$. Hence, the previous equalities can be rewritten as
    \begin{align*}
      \grem_k = \ & (\grem_k \cap \galloc) \cup (\grem \cap \domain{\aheap_k}) \cup \bigcup_{\mathclap{\substack{\pair{\alocation}{\alocation'} \in \gedges\\ \gbtw(\alocation,\alocation') \cap \domain{\aheap_k} \subseteq \grem_k}}}
      \big(\gbtw(\alocation,\alocation') \cap \domain{\aheap_k}\big)\\
      \grem_k' = \ & (\grem_k' \cap \galloc') \cup (\grem' \cap \domain{\aheap_k'}) \cup \bigcup_{\mathclap{\substack{\pair{\alocation}{\alocation'} \in \gedges'\\ \gbtw'(\alocation,\alocation') \cap \domain{\aheap_k'} \subseteq \grem_k'}}}
      \big(\gbtw'(\alocation,\alocation') \cap \domain{\aheap_k'}\big).
    \end{align*}
    Since all unions are performed on disjoint sets (by \ref{a5-rem-L} and \ref{a5-rem-R}),
    we conclude that $\card{\grem_k}$ and $\card{\grem_k'}$ satisfy the following equalities:
    \begin{align*}
      \qquad \
      \card{\grem_k} = \ & \card{\grem_k \cap \galloc} + \card{\grem \cap \domain{\aheap_k}} +  \!\!\!\sum_{\mathclap{\substack{\pair{\alocation}{\alocation'} \in \gedges\\ \gbtw(\alocation,\alocation') \cap \domain{\aheap_k} \subseteq \grem_k}}}
      \!\card{\gbtw(\alocation,\alocation') \cap \domain{\aheap_k}}\\
      \card{\grem_k'} = \ & \card{\grem_k'\cap \galloc'} + \card{\grem'\!\cap \domain{\aheap_k}} + \!\!\!\sum_{\mathclap{\substack{\pair{\alocation}{\alocation'} \in \gedges'\\ \gbtw'(\alocation,\alocation') \cap \domain{\aheap_k'} \subseteq \grem_k'}}}
      \!\card{\gbtw'(\alocation,\alocation') {\cap} \domain{\aheap_k'}}
    \end{align*}
    Now, we are able to compare the values $\min(\alpha_k, \card{\grem_k})$ and 
    $\min(\alpha_k, \card{\grem_k'})$.
    By using the two equalities above and recalling that $\min(A,B+C) = \min(A,B + \min(A,C))$,
    the following set of equalities holds for $\min(\alpha_k, \card{\grem_k})$:
    \begin{align*}
        \min(\alpha_k, \card{\grem_k}) &= \min(\alpha_k, \card{\grem_k \cap \galloc} + R)\\
        R &= \min(\alpha_k, \card{\grem \cap \domain{h_k}} + I)\\
        I &= \min(\alpha_k, \sum_{\mathclap{\substack{\pair{\alocation}{\alocation'} \in \gedges\\ \gbtw(\alocation,\alocation') \cap \domain{\aheap_k} \subseteq \grem_k}}} I_{\alocation,\alocation'})\\
        I_{\alocation,\alocation'} &= \min(\alpha_k,\card{\gbtw(\alocation,\alocation') \cap \domain{\aheap_k}}) &\text{where}~\pair{\alocation}{\alocation'} \in \gedges.
    \end{align*}
    The same can be done for $\min(\alpha_k, \card{\grem_k'})$:
    \begin{align*}
        \min(\alpha_k, \card{\grem_k'}) &= \min(\alpha_k, \card{\grem_k' \cap \galloc'} + R')\\
        R' &= \min(\alpha_k, \card{\grem' \cap \domain{h_k'}} + I')\\
        I' &= \min(\alpha_k, \sum_{\mathclap{\substack{\pair{\alocation}{\alocation'} \in \gedges'\\ \gbtw'(\alocation,\alocation') \cap \domain{\aheap_k'} \subseteq \grem_k'}}} I_{\alocation,\alocation'}')\\
        I_{\alocation,\alocation'}' &= \min(\alpha_k,\card{\gbtw'(\alocation,\alocation') \cap \domain{\aheap_k'}})
        &\text{where}~\pair{\alocation}{\alocation'} \in \gedges'.
    \end{align*}
    By \ref{cip2}, for every $\pair{\alocation}{\alocation'} \in \gedges$ we have $I_{\alocation,\alocation'} = I_{\amap(\alocation),\amap(\alocation')}'$.
    From \ref{rem-prop2} together with the fact that $\amap$ witnesses a graph isomorphism between $\pair{\astore}{\aheap}$ and $\pair{\astore'}{\aheap'}$
    we conclude that
    \begin{align*}
      &\{\pair{\alocation}{\alocation'} \in \gedges' \mid \gbtw'(\alocation,\alocation') \cap \domain{\aheap_k'} \subseteq \grem_k' \}\\
      = \ &
      \{\pair{\amap(\alocation)}{\amap(\alocation')} \mid \pair{\alocation}{\alocation'} \in \gedges, \gbtw(\alocation,\alocation') \cap \domain{\aheap_k} \subseteq \grem_k \},
    \end{align*}
    which allows us to conclude that $I = I'$.
    Furthermore, by \ref{crp2}, it follows also that $R = R'$.
    Lastly, by \ref{rem-prop-break} we conclude:
    \[\min(\alpha_k, \card{\grem_k}) = \min(\alpha_k, \card{\grem_k'}).\]

\end{enumerate}

%%
%% for $(\gverts_k,\gedges_k,\glabels_k,\gbtw_k,\grem_k)$ and $(\gverts_k',\gedges_k',\glabels_k',\gbtw_k',\grem_k')$ that satisfy the
%% additional conditions of Lemma~\ref{lemma-test-graph}.
%%

This concludes the proof: for the support graphs of $h_k$ and $h_k'$, $\amap$ restricted to the domain $\gverts_k$ and the codomain $\gverts_k'$ satisfies all conditions of Lemma~\ref{lemma-test-graph} with respect to $\alpha_k$.
Therefore, it holds that $\pair{\astore}{\aheap_1}  \approx^q_{\alpha_1} \pair{\astore'}{\aheap'_1}$ and
$\pair{\astore}{\aheap_2}  \approx^q_{\alpha_2} \pair{\astore'}{\aheap'_2}$.
\end{proof}

\end{document}